\newcommand{\bbC}{{\mathbb{C}}}
\newcommand{\bbD}{{\mathbb{D}}}
\newcommand{\bbQ}{{\mathbb{Q}}}
\newcommand{\T}{{\mathbb{T}}}
\newcommand{\bbR}{{\mathbb{R}}}
\newcommand{\bbZ}{{\mathbb{Z}}}
\newcommand{\SL}{{\mathrm{SL}}}
\newcommand{\SU}{{\mathrm{SU}}}
\newcommand{\tr}{{\mathrm{Tr}}}
\renewcommand{\Im}{\operatorname{Im}}
\renewcommand{\Re}{\operatorname{Re}}
\newtheorem{theorem}{Theorem}[section]
\newtheorem{lemma}[theorem]{Lemma}
\theoremstyle{definition}
\newtheorem{definition}[theorem]{Definition}
\newtheorem{remark}[theorem]{Remark}
\definecolor{purple}{rgb}{.5,0,1}
\definecolor{orange}{rgb}{1,.5,0}
\definecolor{green}{rgb}{0,.4,0}
\numberwithin{equation}{section}
\newcommand{\Rmnum}[1]{\expandafter\@slowromancap\romannumeral #1@} 
\title[Spectral Transitions]{Spectral Transitions and Singular Continuous Spectrum in a New Family of Quasi-Periodic Quantum Walks}
\author[X.\ Yang]{Xinyu Yang}
\email{\href{meiyang010420@163.com}{meiyang010420@163.com}}
\address{[X.\ Yang] Chern Institute of Mathematics and LPMC, Nankai University, Tianjin 300071, China}
\author[L.\ Li]{Long Li}
\email{\href{mailto:ll106@rice.edu}{longli@tamu.edu}}
\address{[L.\ Li] Department of Mathematics, Texas A\&M University, College Station, TX 77843, USA}
\author[Q.\ Zhou]{Qi Zhou}
\email{\href{qizhou@nankai.edu.cn}{qizhou@nankai.edu.cn}}
\address{[Q.\ Zhou] Chern Institute of Mathematics and LPMC, Nankai University, Tianjin 300071, China}
\begin{document}

\maketitle
\begin{center}
\emph{Dedicated to Barry Simon on the occasion of his 80th birthday.}
\end{center}

\vspace{1em}

\begin{abstract}
This paper introduces and rigorously analyzes a new class of one-dimensional discrete-time quantum walks whose dynamics are governed by a parametrized family of extended CMV matrices. The model generalizes the unitary almost Mathieu operator (UAMO) and exhibits a richer spectral phase diagram, closely resembling the extended Harper’s model.  It provides the first example of a solvable quasi-periodic quantum walk that exhibits a stable region of purely singular continuous spectrum.  
% We consider a special model of quasi-periodic CMV matrix which comes from a two-dimensional quantum walk in a uniform magnetic field. Compared with the unitary almost-Mathieu operator researched by Christopher Cedzich and Jake Fillman, our model has a stable square region for parameters in which the operator has pure singular continuous spectrum for irrational frequency and all but countably many phases. In order to conclude it, we make use of the singularity of the operator and generalize the Aubry dual proposition from UAMO to our model. Moreover, similar to UAMO, for Diophantine frequency, our model has Anderson localization for almost every phases and pure absolutely continuous spectrum for all phases in the other regions for parameters dual to each other.

\end{abstract}

\setcounter{tocdepth}{1}

\tableofcontents

\section{Introduction}
\subsection{Quantum Walks and Quantum Search Algorithm}Quantum walks are quantum analogues of classical random walks, in which the evolution is governed by unitary operators and the probability amplitudes can interfere. This interference leads to behaviors absent in classical walks, such as ballistic transport and localization phenomena, and underlies many quantum algorithms. A prominent example is Grover's search algorithm \cite{Grover1996}, which can be interpreted as a quantum walk on a complete graph. The amplitude of a marked state is amplified through repeated applications of a unitary operator, performing amplitude amplification \cite{Brassard2002} within a 2D subspace. This illustrates a key principle: structured unitary evolution can concentrate amplitude onto target subspaces, yielding quadratic speedup over classical search, compare \cite{Shenvi2003,Childs2004, Kempe2003}.

Discrete-time quantum walks are typically described by the composition of a parametrized shift operator \(S\) acting on position space and a coin operator \(Q \in U(2)\) acting on the internal degrees of freedom. The evolution operator $W = S \cdot Q$ governs the dynamics of the system. The interplay between \(S\) and \(Q\) produces interference patterns that can enhance or suppress amplitudes at different sites, analogous to amplitude amplification in Grover-type algorithms. In particular, the spectral properties of \(W\) play a crucial role in determining the transport and localization behavior of the quantum walker.

In this work we study a special class of one-dimensional, discrete-time quantum walks in which \(S\) is a parametrized shift and \(Q\in U(2)\) is a general coin operator. We give a rigorous spectral analysis of the walk operator \(W=S\cdot Q\) and prove results characterizing the three possible spectral types: pure point spectrum (Anderson localization), purely absolutely continuous spectrum, and singular continuous spectrum. By the RAGE theorem (cf. \cite{AmreinGeorgescu1996, Ruelle1969, AmreinGeorgescu1973, Enss1978}), these spectral types determine the long-time dynamical behavior: absolutely continuous spectrum is associated with ballistic spreading and rapid exploration of the lattice, while pure point spectrum yields exponential (or strong) localization with amplitudes trapped near a finite set of sites.

Singular continuous spectrum is a subtle regime between localization and ballistic transport. Dynamically, it gives rise to anomalous diffusion characterized by slow, irregular spreading and nontrivial transport exponents. It reflects persistent coherence in physics, mathematically it often accompanies structurally complex spectral measures, often of a fractal or Cantor type. In quantum walks, this leads to an intermediate behavior where amplitude transfer is too irregular for efficient search, yet not fully localized, resulting in parameter-sensitive, incoherent transport over long times.

The spectral decomposition of 
\(S \cdot Q\) thus serves both to classify long-term behavior and to guide practical design. For applications requiring fast, reliable transport, regimes dominated by singular continuous spectrum should be avoided; conversely, such regimes can be harnessed to sustain amplitude or engineer persistent states, though their dynamics tend to be less stable and more sensitive than those in purely absolutely continuous or pure point regimes.

Remarkably, for a specific and natural choice of the shift and coin, this unitary operator finds an exact representation in the form of a CMV matrix.
 The CMV matrices named after M. J. Cantero, L. Moral, L. Vel\'azquez arose from the study of minimal matrix representations of unitary operators \cite{CMV2003} and have a natural connection with the theory of orthogonal polynomials on the unit circle (OPUC). For a detailed background including a comprehensive list of developments and open problems, we refer interested readers to Barry Simon's two monographs \cite{OPUC1, OPUC2}. These matrices are five-diagonal unitary matrices representing the multiplication operator in $L^2(\partial\mathbb{D},d\mu)$ under a suitable basis, where $\mu$ is the orthogonal probability measure supported on the unit circle. The canonical spectral measure of the CMV matrix associated to a given sequence of  Verblunsky coefficients that describe the three-term iterative relation of the orthogonal polynomials turns out to be the orthogonal probability measure on the unit circle. Moreover, by the Verblunky's Theorem (unitary analogue of the Favard's Theorem for orthogonal polynomials on the real line (OPRL)), the correspondence between the set of Verblunsky coefficients and the set of non-trivial probability measures is bijective. Indeed, the CMV matrices and their bi-infinite extensions are related to the OPUC in the same way as Jacobi matrices related to the OPRL.
 
 One of the major topics in spectral theory is to study the properties of the measure when certain information of the coefficients is detected. The most well-known example is perhaps the almost Mathieu operator (AMO), defined by
 \begin{equation}\label{amo}
(H_{\lambda,\alpha,x}u)_n = u_{n+1} + u_{n-1} + 2\lambda \cos(2\pi(n\alpha +x))u_n.
\end{equation}
Intensive activities over the past decades give rather comprehensive understanding of its spectral properties, see \cite{CEY1990,Jitomirskaya1999, Puig2004CMP, AD08, AJ09, AJ10, AJM2017, JL2018,AYZ2017,AJZ2018, AYZ2023, JL2024} and the references therein. 
 A unitary analogue of the AMO  was formulated by \cite{FOZ17} and studied thoroughly by \cite{CFO23}. Various spectral properties that have been established in \cite{FOZ17, CFO23, CFLOZ24, CF2024, CF2024a, Yang2024} are parallel to the case of AMO, justifying the name of ``unitary almost Mathieu operator" dubbed by \cite{FOZ17}. What's interesting is quite recently,  a physical realization by \cite{LCZX2025} verifies the metal-insulator transition of a discrete time quantum walk modeled by the UAMO. 

A well-known counterpart in the spectral theory of quasi-periodic operators is the extended Harper’s model (EHM), introduced by Thouless in 1983. This model is defined on  $\ell^2(\bbZ)$ by the quasi-periodic Jacobi operator:
\[(H_{\lambda,\theta,\Phi}u)(n)=v(\theta+n\Phi)u(n)+c_\lambda(\theta+n\Phi)u(n+1)+\overline{c_{\lambda}(\theta+(n-1)\Phi)}u(n-1)\]
with \[c_\lambda(\theta)=\lambda_1 e^{-2\pi i(\theta+\frac{\Phi}{2})}+\lambda_2+\lambda_3e^{2\pi i(\theta+\frac{\Phi}{2})},~v(\theta)=2\cos(2\pi\theta)\]
where we used $\lambda=(\lambda_1,\lambda_2,\lambda_3)$.
The EHM generalizes the almost Mathieu operator by including a broader range of hopping terms. Crucially, it preserves an Aubry–André duality similar to that of the AMO \cite{Fan2018, JKS2005}, More importantly, the EHM possesses a stable self-dual region in its parameter space in which the spectrum is purely singular continuous—a feature absent in the AMO\cite{AJM2017,HanJitomirskaya2017}.

This raises the central question addressed in this work: Can one construct a unitary analogue of the extended Harper's model, that exhibits a stable parameter region with purely singular continuous spectrum, thus generalizing the UAMO in the same way the EHM generalizes the AMO? We will answer this question in the present research.

\subsection{Model and result}
The unitary almost Mathieu operator (UAMO) is represented as a generalized extended CMV matrix (GECMV)—a bi-infinite unitary matrix whose entries allow the usual CMV coefficients to carry complex phases. Let us explain this in details. 

Let $\mathcal{H}$ be the Hilbert space $\ell^{2}(\mathbb{Z})\otimes\mathbb{C}^{2}$ and $\{\delta_n^s\}$ be a standard basis as following
	$$\delta_{n}^{s}:=\delta_{n}\otimes e_{s},s\in\{+,-\},$$
	where $\{\delta_{n}\}$ is the standard basis of $\ell^{2}(\mathbb{Z})$ and $e_{+}=[1,0]^{\top},e_{-}=[0,1]^{\top}$ is the basis of $\mathbb{C}^{2}$. For an element $\psi\in\mathcal{H}$, denote its coordinate of $\delta_{n}^{s}$ by $\psi_{n}^{s}$ and let $\psi_{n}:=[\psi_{n}^{+},\psi_{n}^{-}]^{\top}$.
	We are concerned with a quantum walk $W:\ell^{2}(\mathbb{Z})\otimes\mathbb{C}^{2}\to \ell^{2}(\mathbb{Z})\otimes\mathbb{C}^{2}$. It has the form $W=S_{\lambda}\cdot Q$, where $S_{\lambda}$ is the conditional shift operator defined by
\begin{equation}\label{eq.shiftOp}S_{\lambda}\delta_{n}^{\pm}=\lambda\delta_{n\pm1}^{\pm}\pm\lambda'\delta_{n}^{\mp},\, \lambda'=\sqrt{1-\lambda^{2}},\end{equation}
	and the {\it coin} operator is given by
	\begin{equation}\label{eq.coinOp}[Q\psi]_{n}=Q_{n}\psi_{n}, \, Q_{n}:=
	\begin{pmatrix}
		q_{n}^{11} & q_{n}^{12} \\
		q_{n}^{21} & q_{n}^{22}
	\end{pmatrix}\in\mathbb{U}(2).\end{equation}

  The matrix representation of $W_\lambda$ under the basis $\{\delta_n^\pm\}$ up to a reordering is a generalized extended CMV matrix (GECMV)\footnote{A GECMV differs from an extended CMV matrix (ECMV) from that the associated coefficients $\rho_n$'s can be complex. While in an ECMV, $\rho_n=\sqrt{1-|\alpha_n|^2}$ is real for all $n.$} $\mathcal{E}$ that can be built as follows.
Let $\Theta_n=\begin{pmatrix}\overline{\alpha_n}&\rho_n\\\overline{\rho_n}&-\alpha_n\end{pmatrix}, n\in \bbZ$  acting on $\ell^{2}(\{n,n+1\})$ be the building block and $\{\alpha_n\}_{n\in\bbZ}\subset \mathbb{D}$ are the  {\it Verblunsky coefficients} and $\{\rho_{n}\}_{n\in\bbZ}\subset \bbC$  satisfy the condition $|\alpha_n|^2+|\rho_n|^2=1$. Denote $\mathcal{L}=\oplus_{n\in\bbZ}\Theta_{2n}$ and $\mathcal{M}=\oplus_{n\in\bbZ}\Theta_{2n+1}$, then $\mathcal{E}=\mathcal{L}\mathcal{M}$. In the case that $\rho_n=\sqrt{1-|\alpha_n|^2}$, $\mathcal{E}$ is called by Simon \cite{OPUC2} an {\it extended CMV matrix}.  The matrix expression of $\mathcal{E}$ takes the form
\begin{equation} \label{eq:gecmv}
	\mathcal E 	= 
        \begin{pmatrix}
		\ddots & \ddots & \ddots & \ddots &&&&  \\
		& \overline{\alpha_0\rho_{-1}} & \boxed{-\overline{\alpha_0}\alpha_{-1}} & \overline{\alpha_1}\rho_0 & \rho_1\rho_0 &&&  \\
		& \overline{\rho_0\rho_{-1}} & -\overline{\rho_0}\alpha_{-1} & {-\overline{\alpha_1}\alpha_0} & -\rho_1 \alpha_0 &&&  \\
		&&  & \overline{\alpha_2\rho_1} & -\overline{\alpha_2}\alpha_1 & \overline{\alpha_3} \rho_2 & \rho_3\rho_2 & \\
		&& & \overline{\rho_2\rho_1} & -\overline{\rho_2}\alpha_1 & -\overline{\alpha_3}\alpha_2 & -\rho_3\alpha_2 &    \\
		&& && \ddots & \ddots & \ddots & \ddots &
	\end{pmatrix},
\end{equation}
where we boxed the $(0,0)$ matrix element of $\mathcal{E}$.

The GECMV $\mathcal{E}$ and the quantum walk operator $W_\lambda$ are related by
identifying $\ell^2(\bbZ)\otimes\bbC^2$ with $\ell^2(\bbZ)$ by ordering the basis $\{\delta_n^s:n\in\bbZ,s\in\{+,-\}\}$ as \begin{equation}
\dots,\delta_{-1}^-,\delta_{0}^+,\delta_{0}^-,\delta_{1}^+,\delta_{1}^-,\delta_{2}^+,\dots,
\end{equation}
and by setting
\begin{equation}\label{GECMV}
	Q_{n}:=\begin{pmatrix}
		\overline{\rho_{2n-1}}&-\alpha_{2n-1}\\
		\overline{\alpha_{2n-1}}&\rho_{2n-1}
	\end{pmatrix}, \qquad (\alpha_{2n},\rho_{2n}):=(\lambda',\lambda).
\end{equation}

Such model was considered by \cite{FOZ17,CFO23} with the coin specified by \[Q_n=\begin{pmatrix}
    \lambda_2\cos(2\pi(\theta+n\Phi))+i\lambda_2'&-\lambda_2\sin(2\pi(\theta+n\Phi))\\\lambda_2\sin(2\pi(\theta+n\Phi))&\lambda_{2}\cos(2\pi(\theta+n\Phi))-i\lambda_2'
\end{pmatrix}\]
and was dubbed ``unitary almost Mathieu operator" (UAMO) by \cite{FOZ17}.

	Let $\theta\in\T$ and $\Phi$ be irrational.  We say $\Phi\in DC$ or satisfies the {\it Diophantine condition} if 
    there exists $\kappa>0,\tau=\tau(\Phi)>1$ such that for all nonzero $n\in\bbZ$
    \begin{equation}\label{eq.dio}\inf_{j\in\bbZ}\Vert j-n\Phi\Vert_{\bbR/\bbZ}\geq \frac{\kappa}{|n|^\tau}.
    \end{equation}
We say $\theta$ is irrational with respect to $\Phi$ if for all $j\in\mathbb{Z}$, $2\theta+j\Phi\notin\mathbb{Z}$.
    
    Consider a quasi-periodic example of quantum walk $W_{\lambda_1,\lambda_2,\theta,\Phi}=S_{\lambda_{1}}Q_{\lambda_{2}},\lambda_{1},\lambda_{2}\in(0,1)$ with the coin operator specified as the following 	\[q_{n}^{11}=f(\theta+n\Phi),q_{n}^{12}=-g(\theta+n\Phi),\]
	where
	\begin{equation}\label{eq.FuncF}f(x)=\frac{1}{k}(t\lambda_{2}e^{2\pi ix}+t\lambda_{2}e^{-2\pi ix}+(t^{2}-1)\lambda_{2}')=\frac{1}{k}(2t\lambda_{2}\cos{2\pi x}+(t^{2}-1)\lambda_{2}'),\end{equation}
    \begin{equation}\label{eq.FuncG}g(x)=\frac{1}{k}(-t^{2}\lambda_{2}e^{2\pi ix}+\lambda_{2}e^{-2\pi ix}+2t\lambda_{2}')\end{equation}
	 are functions defined on $\mathbb{T}$ and $|t|<1$, $k=-1-t^{2}$, $\lambda_{2}^{2}+\lambda_{2}'^{2}=1$.  
    
    The main purpose of this work is to present spectral properties of this model that resembles more the extended Harper's model considered by \cite{AJM2017} than the AMO. To be more specific, let us introduce our main results.
    
\begin{theorem}\label{thm.main}
Consider the quantum walk $W_{\lambda_1,\lambda_2,\theta,\Phi}$ associated with the conditional shift $S_{\lambda_1}$ and dynamically defined coins \eqref{eq.coinOp} with sampling functions \eqref{eq.FuncF} and \eqref{eq.FuncG}. Assume $t\neq 0$. Then the following hold:
\begin{enumerate}
\item If $\lambda_{1},\lambda_{2}\in\!\left[\frac{|1-t^{2}|}{1+t^{2}},\,1\right)$, then $W_{\lambda_1,\lambda_2,\theta,\Phi}$ has purely singular continuous spectrum for every irrational $\Phi$ and all but countably many $\theta$.
\item If $\lambda_{1}=\lambda_{2}$, then $W_{\lambda_1,\lambda_2,\theta,\Phi}$ has purely singular continuous spectrum for every irrational $\Phi$ and all but countably many $\theta$.
\end{enumerate}

If either $\lambda_{1}$ or $\lambda_{2}$ lies in $\left(0,\frac{|1-t^{2}|}{1+t^{2}}\right)$ and $\Phi\in DC$, then:
\begin{enumerate}[label=(\alph*)]
\item \label{ItemLocalization}If $\lambda_{1}<\lambda_{2}$, the quantum walk $W_{\lambda_1,\lambda_2,\theta,\Phi}$ exhibits Anderson localization and has pure point spectrum for almost every $\theta$.
\item \label{ItemAc} If $\lambda_{1}>\lambda_{2}$, the operator $W_{\lambda_1,\lambda_2,\theta,\Phi}$ has purely absolutely continuous spectrum for all $\theta$.
\end{enumerate}
\end{theorem}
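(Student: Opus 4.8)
The backbone of the proof is a reduction of the generalized eigenvalue equation $\mathcal{E}\psi=z\psi$, $z\in\partial\bbD$, for the GECMV $\mathcal{E}=\mathcal{L}\mathcal{M}$ attached to $W_{\lambda_1,\lambda_2,\theta,\Phi}$ — where $(\alpha_{2n},\rho_{2n})=(\lambda_1',\lambda_1)$ and $(\alpha_{2n-1},\rho_{2n-1})$ is read off the quasi-periodic coin \eqref{eq.coinOp} via \eqref{eq.FuncF}--\eqref{eq.FuncG} — to a transfer-matrix recursion $v_{n+1}=A(z;\theta+n\Phi)v_n$. Because $f$ and $g$ are degree-one trigonometric polynomials, $A(z;\cdot)$ extends holomorphically in the phase to a strip around $\bbR/\bbZ$; normalizing by $\sqrt{\det A}$ yields an analytic one-frequency cocycle $(\Phi,\tilde A(z;\cdot))$ over the rotation by $\Phi$, and a diagonal $\pm1$-gauge first turns $\mathcal{E}$ into an honest extended CMV matrix so that this $\SU(1,1)$-valued (hence, after a fixed conjugation, $\SL(2,\bbR)$-valued) cocycle lies within the scope of Avila's global theory. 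The standard dictionary then governs the spectral type: $z\notin\Sigma$ iff the cocycle is uniformly hyperbolic; subcriticality on $\Sigma$ forces purely absolutely continuous spectrum; criticality throughout $\Sigma$ forces $|\Sigma|=0$, hence the absence of absolutely continuous spectrum; and supercriticality on $\Sigma$, together with an arithmetic non-resonance condition, forces Anderson localization. Equivalently — and this is why the phase diagram resembles the EHM — a Schur-complement elimination of the ``$-$'' components recasts the problem as a quasi-periodic Jacobi operator with a $\cos$-type potential (built from $f$ and $\lambda_1$) and three-term hopping (built from $g$): a $t$-deformed extended Harper's model.

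The analytic heart is the Lyapunov exponent. I would complexify the phase, $\theta\mapsto\theta+\iop\varepsilon$, and study $\varepsilon\mapsto L_\varepsilon(z)$, which by Avila's global theory is convex and piecewise affine in $\varepsilon$ with integer acceleration. Evaluating $L_\varepsilon$ for $|\varepsilon|$ large, where $\tilde A$ degenerates to an explicit matrix depending only on $\lambda_1,\lambda_2,t$, and combining this with a Herman-type subharmonicity lower bound (and the value of $L_0$ at one conveniently chosen $z\in\Sigma$) should determine $L_0$ on all of $\Sigma$ as well as the locations of the acceleration jumps. I expect the outcome to be exactly: $\tfrac{|1-t^2|}{1+t^2}$ is the value of $\lambda_i$ at which an acceleration jump reaches $\varepsilon=0$; in the regimes $\lambda_1,\lambda_2\in[\tfrac{|1-t^2|}{1+t^2},1)$ and $\lambda_1=\lambda_2$ the cocycle is critical for every $z\in\Sigma$ (so $L_0\equiv0$ on $\Sigma$ but $L_\varepsilon>0$ for $\varepsilon>0$); and if some $\lambda_i<\tfrac{|1-t^2|}{1+t^2}$ it is supercritical on $\Sigma$ when $\lambda_1<\lambda_2$ and subcritical on $\Sigma$ when $\lambda_1>\lambda_2$. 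This is the step where the new parameter $t$ enters nontrivially, and I expect the sharp identification of the critical surface and of the self-dual locus $\lambda_1=\lambda_2$ — together with checking that the residual complex phases of the GECMV contribute only an explicit scalar, leaving convexity and the acceleration untouched — to be the main obstacle.

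Granting the classification, the two ``asymmetric'' conclusions are standard. For the absolutely continuous assertion, subcriticality on $\Sigma$ and $\Phi\in DC$ give almost reducibility of $(\Phi,\tilde A(z;\cdot))$ for every $z\in\Sigma$ by Avila's almost reducibility theorem, and almost reducibility yields purely absolutely continuous spectrum for all $\theta$ through the CMV/OPUC analogue of the Gilbert--Pearson--Last--Simon subordinacy theory. For the Anderson localization assertion, I would first establish an Aubry--Andr\'e duality for this family — an explicit unitary, a Fourier transform in the phase composed with a gauge, conjugating $\int_\T^{\oplus}W_{\lambda_1,\lambda_2,\theta,\Phi}\,\rmd\theta$ to $\int_\T^{\oplus}W_{\hat\lambda_1,\hat\lambda_2,\theta,\Phi}\,\rmd\theta$ with $(\hat\lambda_1,\hat\lambda_2)$ subcritical precisely when $(\lambda_1,\lambda_2)$ is supercritical and $\lambda_1<\lambda_2$. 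Almost reducibility of the dual cocycle, fed into the usual non-resonance argument, then produces a complete orthonormal system of exponentially decaying eigenfunctions of $W_{\lambda_1,\lambda_2,\theta,\Phi}$ for a.e.\ $\theta$ (the excluded $\theta$ are those violating a Diophantine-type condition relative to $\Phi$); the asymmetric roles played by $\lambda_1$ and $\lambda_2$ in the two conclusions must be matched by the precise form of this duality.

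Finally, for the purely singular continuous assertions, criticality of the cocycle throughout $\Sigma$ forces $|\Sigma|=0$, since the critical set of a one-frequency analytic cocycle has zero Lebesgue measure; hence $W_{\lambda_1,\lambda_2,\theta,\Phi}$ has no absolutely continuous spectrum for every irrational $\Phi$ and every $\theta$. It remains to exclude eigenvalues for all but countably many $\theta$, uniformly in the irrational $\Phi$. For $\Phi$ well approximable by rationals this comes from a Gordon-type two-block estimate on the transfer matrices; for the complementary $\Phi$ one invokes the self-duality above — an $\ell^2$ eigenfunction occurring on a positive-measure set of $\theta$ would, via the duality transform, feed an absolutely continuous component into $W_{\lambda_1,\lambda_2,\cdot,\Phi}$, contradicting $|\Sigma|=0$ — and, as in the critical almost Mathieu operator, the countable exceptional set is exactly $\{\theta:2\theta+j\Phi\in\bbZ\text{ for some }j\in\bbZ\}$, i.e.\ the $\theta$ that fail to be irrational with respect to $\Phi$. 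Combining the two halves yields purely singular continuous spectrum in both regimes (1) and (2).
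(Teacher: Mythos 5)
Your overall scaffolding—reduce to a Szeg\H{o} cocycle in $\SU(1,1)$ after a gauge conjugation, complexify the phase and compute the Lyapunov exponent via the quantized acceleration and the large-$\varepsilon$ degeneration, and use the subcritical/critical/supercritical trichotomy to drive the spectral classification—is exactly the spine of the paper's argument, and you correctly identify $\tfrac{|1-t^2|}{1+t^2}$ as the threshold and $\lambda_1=\lambda_2$ as the self-dual locus. However there is one genuine gap and two points where you take a different (but viable) route.

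\textbf{The gap.} For regime (1), i.e.\ $\lambda_1,\lambda_2\in\bigl[\tfrac{|1-t^2|}{1+t^2},1\bigr)$, you claim the cocycle is \emph{critical} on $\Sigma$ and deduce $|\Sigma|=0$ from Avila's theorem that the critical locus of an analytic one-frequency cocycle has zero Lebesgue measure. But in this regime $f$ has zeros on $\T$, so $\rho_{2n-1}$ vanishes and the Szeg\H{o} cocycle $S_z^+$ is \emph{singular}: it is not analytic (indeed not even continuous) on $\T$, the notion of acceleration at $\varepsilon=0$ is not defined, and Avila's classification into subcritical/critical/supercritical does not apply. The paper computes $L_0=0$ there only by a limit of regularized cocycles, and never asserts criticality or $|\Sigma|=0$ in regime (1). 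Instead it rules out absolutely continuous spectrum by an elementary Rakhmanov-type observation: the zeros of $f$ plus minimality of the irrational rotation force $\limsup_n|\alpha_n|=1$, which kills a.c.\ spectrum for \emph{every} $\theta$ and \emph{every} irrational $\Phi$—stronger, and with no measure-of-the-spectrum input. Your route would require proving $|\Sigma|=0$ in a singular regime, which is not established and is not needed.

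\textbf{Different but viable routes.} For Anderson localization you propose the duality-plus-almost-reducibility (Avila--You--Zhou) mechanism; the paper instead runs a direct Jitomirskaya-style argument (Green's functions via characteristic polynomials, the reflection symmetry making $P^{\beta_1,\beta_2}_{z,[1,4n-2]}$ a polynomial in $\cos 2\pi(\theta+n\Phi)$, $\epsilon$-uniformity and resonance exclusion). Both are standard; the paper's version avoids invoking quantitative reducibility of the dual and gives the non-resonance set explicitly. For excluding eigenvalues in the singular-continuous regimes you propose Gordon plus a positive-measure duality argument; the paper instead uses, in regime (1), the explicit duality conjugacy together with the antisymmetry $R^{-1}A_z^{-1}(x)R=A_z(-x)$ and the fact that the dual cocycle map fails to be $L^1(\T)$ (since $f_{\lambda_1}$ has a zero there), and on the self-dual line a trace/Fourier-coefficient argument in the spirit of Avila--Jitomirskaya--Marx showing $\liminf_n\|\Psi_n\|_{L^1}=0$ contradicts a lower bound on the top coefficient. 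The paper's arguments pin down the exceptional set exactly as $\{\theta:2\theta+j\Phi\in\bbZ\}$ without a measure-theoretic detour.
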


The theorem gives the phase diagram demonstrated in FIGURE 1 below.

\begin{remark}
    The primary novelty of this work is the identification of the parameter region $[\frac{\left|1-t^{2}\right|}{1+t^{2}},1) \times [\frac{\left|1-t^{2}\right|}{1+t^{2}},1)$ (for $t\neq 0$) where the spectrum is purely singular continuous. This constitutes a stable critical parameters region, absent in the previously studied UAMO.
\end{remark}

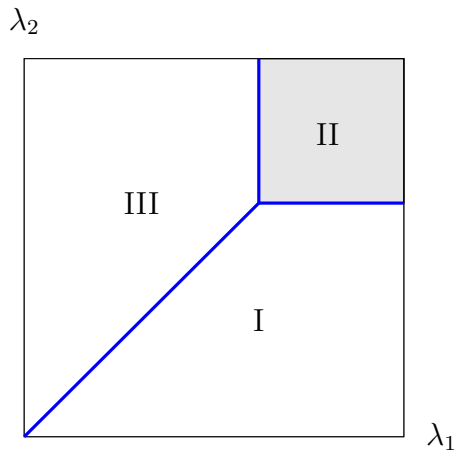
\begin{figure}\label{pict}
\centering
\begin{tikzpicture}[scale=5]

% Large square (1cm side)
\draw (0,0) rectangle (1,1);

% Side length of small square
\def\s{0.618}

% Shade the exterior region (large square minus small square)
\draw[fill=white] (0,0) rectangle (1,1);
\draw[fill=gray!20] (\s,\s) rectangle (1,1);

% Draw small square boundary
\draw (\s,\s) rectangle (1,1);

% Highlight top and right edges of small square in blue, bold
\draw[blue, very thick] (\s,\s) -- (1,\s); % top edge
\draw[blue, very thick] (\s,\s) -- (\s,1); % right edge

% Draw diagonal of smaller square in blue, bold
\draw[blue, very thick] (0,0) -- (\s,\s);

% --- Labels inside small square at centroids ---
\node at ({\s},{\s/2}) {\Rmnum{1}}; % lower triangle centroid
\node at ({\s/2},{\s}) {\Rmnum{3}}; % upper triangle centroid

% Labels outside small square but within large square
\node at (0.8,0.8) {\Rmnum{2}};
\node at (1.1,0) {$\lambda_1$};
\node at (0,1.1) {$\lambda_2$};
\label{Fig1}
\end{tikzpicture}

\caption{In area \Rmnum{1}, the spectral type is purely absolutely continuous. In area \Rmnum{2}  and on the blue lines, the spectral type is purely singular continuous. In area \Rmnum{3}, the spectral type is purely point.}
\end{figure}

The case $t=0$ is of special interest. Moreover, the Lyapunov exponent can be computed (c. f. Corollary \ref{cor.simpleLE}) as \[\max\left\{\log\left(\frac{\lambda_2}{\lambda_1}\frac{\lambda_1'}{\lambda_2'}\right),0\right\}.\]
The quantity $\lambda=\frac{\lambda_2}{\lambda_1}\frac{\lambda_1'}{\lambda_2'}$ plays the role of the coupling constant as in the potential $2\lambda\cos 2\pi\theta$ of the AMO. 
\begin{theorem}\label{main2}
For $t=0$, that is, when $f(x)=\lambda_2'$ and $g(x)=-\lambda_{2} e^{-2\pi i x}$, the following statements hold:
\begin{enumerate}
\item If $0<\lambda_{1}=\lambda_{2}<1$, then $W_{\lambda_1,\lambda_2,\theta,\Phi}$ has purely singular continuous spectrum for every irrational $\Phi$ and all but countably many $\theta$.
\item \label{ItemLocalization2} If $0<\lambda_{1}<\lambda_{2}<1$, then $W_{\lambda_1,\lambda_2,\theta,\Phi}$ exhibits Anderson localization and has pure point spectrum for all $\Phi\in DC$ and for almost every $\theta$.
\item \label{ItemAc2} If $1>\lambda_{1}>\lambda_{2}>0$, then $W_{\lambda_1,\lambda_2,\theta,\Phi}$ has purely absolutely continuous spectrum for all $\Phi\in DC$ and all $\theta$.
\end{enumerate}
\end{theorem}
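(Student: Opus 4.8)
The plan is to exploit that at $t=0$ the walk degenerates into a spectral transform of the almost Mathieu operator \eqref{amo} with coupling constant $\lambda=\frac{\lambda_2}{\lambda_1}\frac{\lambda_1'}{\lambda_2'}$, and then to import the by-now classical metal--insulator--singular-continuous trichotomy for the AMO. One first writes down the transfer-matrix cocycle of the GECMV $\mathcal E$ at $t=0$ for a spectral parameter $z\in\partial\bbD$: since the $\alpha_{2n}=\lambda_1'$ are constant and the $\alpha_{2n-1}$ carry only the single frequency $\Phi$, folding the two half-steps $\mathcal L,\mathcal M$ produces an $\SL(2,\bbC)$-cocycle over the rotation by $\Phi$. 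A diagonal gauge conjugation then brings this cocycle, at $z=e^{2\pi i\eta}$, into the form of the classical AMO transfer cocycle at energy $E=E(z)$, where $E(\cdot)$ is an explicit analytic function affine in $z+z^{-1}$, with $x=x(\theta)$ an explicit analytic degree-one change of variables. Consequently one identifies $\sigma\big(W_{\lambda_1,\lambda_2,\theta,\Phi}\big)=E^{-1}\big(\sigma(H_{\lambda,\Phi,x(\theta)})\big)$ and, crucially, the spectral decomposition is preserved: on each spectral arc $E$ is a local diffeomorphism, so it carries absolutely continuous, pure point and singular continuous parts to parts of the same type, while the two band-edge points $z=\pm1$ (the critical points of $E$) carry no spectral weight and may be discarded. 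The reduction is consistent with Corollary~\ref{cor.simpleLE}, since the AMO Lyapunov exponent on its spectrum equals $\max\{\log\lambda,0\}$.

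Granting this reduction, the three regimes follow by transferring the corresponding AMO facts; note that $\lambda_1>\lambda_2\Leftrightarrow\lambda<1$, $\lambda_1=\lambda_2\Leftrightarrow\lambda=1$, $\lambda_1<\lambda_2\Leftrightarrow\lambda>1$, since $s\mapsto s/\sqrt{1-s^2}$ is increasing on $(0,1)$. For item~\ref{ItemAc2}, when $\lambda<1$ the AMO has purely absolutely continuous spectrum for all $x$ and all $\Phi\in DC$ (Avila's resolution of the absolutely-continuous-spectrum conjecture), so the walk inherits it. For item~\ref{ItemLocalization2}, when $\lambda>1$ the AMO is Anderson localized for a.e.\ $x$ and all $\Phi\in DC$ \cite{Jitomirskaya1999} (in the sharp arithmetic form of \cite{AYZ2017}); since $\theta\mapsto x(\theta)$ is an analytic diffeomorphism of $\T$, it sends full-measure sets to full-measure sets, giving Anderson localization for a.e.\ $\theta$. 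For the critical case $\lambda=1$, one combines $(i)$ $\Leb\big(\sigma(H_{1,\Phi,x})\big)=0$ for every irrational $\Phi$ (Avila--Krikorian for $\Phi\in DC$; cf.\ also \cite{AJM2017, HanJitomirskaya2017} for the role of the self-dual line), which excludes any absolutely continuous component, with $(ii)$ absence of eigenvalues whenever $2x+j\Phi\notin\bbZ$ for all $j$---an Aubry-duality/Gordon-type argument at the self-dual point---which excludes pure point spectrum; hence the spectrum is purely singular continuous for all but countably many $\theta$.

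The main obstacle is the reduction of the first paragraph: pinning down the precise gauge transformation and the energy map $E(z)$, and then checking that the resulting unitary equivalence genuinely intertwines the spectral types (handling spectral multiplicity, and the band edges $z=\pm1$ where $E$ is not a local diffeomorphism). A secondary delicate point is the critical case: transporting the sharpest AMO non-localization results through the reduction so as to obtain a \emph{countable}---rather than merely measure-zero---exceptional set of $\theta$, which forces one to track the set $\{2\theta+j\Phi\in\bbZ\}$ across the transformation, exactly the arithmetic condition ``$\theta$ irrational with respect to $\Phi$'' in the statement. Should a self-contained proof be preferred that bypasses the explicit reduction, one can instead run Avila's global theory directly for the unitary cocycle of $\mathcal E$ at $t=0$: the acceleration is $0$ when $\lambda<1$ (subcriticality, hence almost reducibility by \cite{AJ10,AJ09}, hence purely absolutely continuous spectrum), equals $1$ on the spectrum when $\lambda=1$ (criticality), while Aubry duality $\lambda\leftrightarrow\lambda^{-1}$ combined with almost reducibility of the dual cocycle delivers Anderson localization when $\lambda>1$ \cite{CFO23}; the critical case still rests on the zero-measure and no-eigenvalue inputs above.
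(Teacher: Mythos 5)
Your first (and main) approach hinges on an unproven claim: that at $t=0$ a diagonal gauge conjugation plus an analytic energy map $E(z)$ turns the two-step Szeg\H{o} cocycle of $W_{\lambda_1,\lambda_2,\theta,\Phi}$ into the transfer cocycle of the almost Mathieu operator $H_{\lambda,\Phi,x}$. I do not believe this holds, and no argument is offered. The obstruction is that at $t=0$ the odd Verblunsky coefficients are $\alpha_{2n-1}=-\lambda_2 e^{-2\pi i(\theta+n\Phi)}$: they have constant modulus but an essential quasiperiodic \emph{phase}, and a gauge (diagonal) conjugation of a GECMV fixes the $\alpha_n$ and only acts on the $\rho_n$ (Lemma~\ref{lem.gaugeTransform}). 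Likewise, the Szeg\H{o} mapping that would take a CMV matrix to a Jacobi matrix requires real (or conjugate-symmetric) Verblunsky coefficients; one checks that no rotation $\alpha_n\mapsto e^{-i(n+1)\eta}\alpha_n$ of the spectral measure kills the phase $e^{-2\pi i(\theta+n\Phi)}$ for irrational $\Phi$. The fact that the Lyapunov exponent $\max\{\log\lambda,0\}$ has the same \emph{formula} as for the AMO (Lemma~\ref{cor.simpleLE}) is a numerical coincidence, not an isospectral transform; indeed the paper's remark after the statement stresses that the $t=0$ model is \emph{structurally distinct} from the UAMO, which itself is not the AMO. Without the reduction, items (2) and (3) cannot simply be imported from the AMO literature, and the exceptional-$\theta$ bookkeeping in item (1) cannot be ``tracked across the transformation''.

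Your ``Plan B'' is closer in spirit to what the paper actually does, but it too has gaps. The paper computes the Lyapunov exponent and acceleration directly for the combined Szeg\H{o} cocycle $(\Phi,S^+_z)$ (Lemma~\ref{cor.simpleLE}), then proves: pure a.c.\ in the subcritical region from subcriticality plus Avila's ARC in CMV form (Theorem~\ref{thmac}, via \cite{LDZ22TAMS,Avila23}); Anderson localization in the supercritical region by a self-contained Jitomirskaya-style scheme run on the ECMV itself (Section~7, Theorem~\ref{lem10.2}), using the alternating-reflection symmetry to show $P^{\beta_1,\beta_2}_{z,[1,4n-2]}$ is a polynomial in $\cos 2\pi(\theta+n\Phi)$ (Lemma~\ref{lem10.10}); and purely s.c.\ on the self-dual line by combining zero-measure spectrum from monotonicity/Kotani theory for the Szeg\H{o} cocycle (Lemmas~\ref{lem9.2}--\ref{lem9.5}) with an Aubry-duality trace argument excluding eigenvalues (Lemma~\ref{lem9.6}, giving the countable exceptional set $\{\theta:2\theta+j\Phi\in\bbZ\}$). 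In particular, the localization step cannot be cited from \cite{CFO23}: that paper treats the UAMO, whose $\alpha_{2n-1}=\lambda_2\sin 2\pi(\theta+n\Phi)$ is real, whereas here $\alpha_{2n-1}$ is genuinely complex, so the argument must be (and in the paper is) redone for the present model. If you pursue Plan B, you need to supply the analogues of Lemmas~\ref{lem9.5}, \ref{lem9.6} and Theorem~\ref{lem10.2} directly for the ECMV at $t=0$ rather than citing AMO/UAMO results.
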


\begin{remark}
 In the special case 
$t=0$, the model 
$W_{\lambda_1,\lambda_2,\theta,\Phi}$	
  displays spectral transitions analogous to those of the UAMO, yet it is structurally distinct. Notably, the resulting operator takes the simpler form of an extended CMV matrix rather than a generalized extended CMV matrix.
\end{remark}

We recall the well-known Rakhmanov lemma concerning the absence of an absolutely continuous component in the spectral measure for CMV matrices. It states that if  
$\limsup_{n\to\infty} |\alpha_n| = 1,$  
then the associated CMV matrix has no absolutely continuous spectrum. This result can be found in Simon's book [see Theorems 4.3.4 and 10.9.7 in~\cite{OPUC2}].
The analogous version for Jacobi matrices, proved by \cite{Dombrowski1978}, states that for a singular $(\liminf_{n\to\infty} |a_n| = 0)$ Jacobi operator of the form  
\[
(Hu)_n = a_n u_{n+1} + \overline{a_{n-1}} u_{n-1} + v_n u_n,
\]   
implies the absence of absolutely continuous spectrum.
Our Theorem~\ref{thm.main} remains of interest in this context. In contrast to the typical setting where the conditional shift operator $S_{\lambda}$ is identified with the Laplacian term (which often introduces zeros in the hopping terms), our result does not rely on such vanishing coefficients. This distinction may be attributed to the additional freedom provided by the internal degree of freedom at each site.

\section{Preliminaries}
\subsection{Quasiperiodic Cocycle}
Given an irrational number $\Phi$ and a continuous map $M:\mathbb{T}\to M(2,\mathbb{C})$, we define the quasi-periodic \emph{cocycle}
\[
(\Phi,M):\mathbb{T}\times\mathbb{C}^{2}\to\mathbb{T}\times\mathbb{C}^{2},\qquad
(x,v)\mapsto (x+\Phi,\, M(x)v).
\]
The $n$-step iterates of this cocycle are given by $
(\Phi,M)^{n}=(n\Phi,\, M^{n}),$
where $M^0=\mathrm{id}$, and
\[
M^{n}(x):=
\begin{cases}
    \displaystyle\prod_{j=n-1}^{0} M(x+j\Phi), & n\ge 1,\\
    \displaystyle\prod_{j=n}^{-1} M^{-1}(x+j\Phi), & n\le -1.
\end{cases}
\]
The \emph{Lyapunov exponent} of the cocycle $(\Phi,M)$ is defined by
\[
L(\Phi,M)=\lim_{n\to\infty}\frac{1}{n}\int_{\mathbb{T}}
\ln \|M^{n}(x)\|\,dx.
\]
This limit is known to exist by Kingman's Subadditive Ergodic Theorem.
Let $C_\delta^\omega(\T,\mathbb{M}(2,\bbC))$ be the set of cocycle maps that have analytic extensions to $\{x+iy:x\in\T, |y|<\delta\}$ for some $\delta>0$ and $C^\omega(\T,\mathbb{M}(2,\bbC))=\cup_{\delta>0}C^\omega_\delta(\T,\mathbb{M}(2,\bbC))$. We call $(\Phi,M)$ an analytic cocycle if $\Phi\in\bbR\setminus\bbQ$ and $M\in C^{\omega}(\T,\mathbb{M}(2,\bbC))$.
\begin{theorem}[\cite{JM12}, \cite{BourgainJito}]\label{lem6.2}
Let $\Phi\in\bbR\setminus\bbQ$ be irrational. Then the Lyapunov exponent
\[
L(\Phi,M): C^{\omega}(\mathbb{T},\mathbb{M}(2,\mathbb{C}))\longrightarrow \mathbb{R}\cup\{-\infty\}
\]
is continuous in $M\in C^\omega(\T,\mathbb{M}(2,\bbC))$.
\end{theorem}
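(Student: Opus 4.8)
The plan is to reproduce the argument of Bourgain--Jitomirskaya \cite{BourgainJito} in the $\SL(2,\bbC)$ (hence $\mathrm{GL}(2,\bbC)$) category and then invoke the refinement of Jitomirskaya--Marx \cite{JM12} to reach the full $\mathbb{M}(2,\bbC)$ setting, the extra content of the latter being that $\det M$ is allowed to vanish. Continuity is understood on each fixed strip: one fixes $\delta>0$, works in the Banach space $C^\omega_\delta(\T,\mathbb{M}(2,\bbC))$ with the supremum norm over $\{|\Im z|<\delta\}$, and proves continuity there.

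First I would dispatch upper semicontinuity, which is soft. By Fekete's lemma the defining limit is an infimum, $L(\Phi,M)=\inf_{n\ge 1}\frac1n\int_\T\ln\|M^n(x)\|\,\rmd x$. For each fixed $n$ the map $M\mapsto M^n$ is continuous on $C^\omega_\delta$; the integrand $\ln\|M^n(x)\|$ is upper semicontinuous in $M$, bounded above locally uniformly (by $n\sup_{|\Im z|\le\delta/2}\ln^+\|M(z)\|$), and locally dominated, so $M\mapsto\frac1n\int_\T\ln\|M^n\|$ is continuous. An infimum of continuous functions is upper semicontinuous, which gives $\limsup_{M\to M_0}L(\Phi,M)\le L(\Phi,M_0)$.

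The substance is lower semicontinuity. The point is that $u_n(z):=\frac1n\ln\|M^n(z)\|$ extends to a $\bbZ$-periodic \emph{subharmonic} function on $\{|\Im z|<\delta\}$, bounded above locally uniformly in $M$; irrationality of $\Phi$ enters through Kingman's theorem, which gives $u_n\to L(\Phi,M)$ a.e.\ on $\T$. The harmonic-analytic engine (Bourgain--Jitomirskaya, after Goldstein--Schlag) is that a $\bbZ$-periodic subharmonic function on such a strip with prescribed upper bound $B$ has Fourier coefficients obeying $|\widehat{u_n}(k)|\le C(B,\delta)/|k|$ uniformly in $n$; exploiting the shift-invariance of Lebesgue measure to average $u_n(\cdot+m\Phi)$ over $m$ and feeding this into the subadditivity relation $(m+n)\int_\T u_{m+n}\le m\int_\T u_m+n\int_\T u_n$, one extracts a rate of convergence $\frac1n\int_\T\ln\|M^n\|\to L(\Phi,M)$ that is uniform for $M$ near $M_0$. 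A uniform rate upgrades the pointwise limit of the continuous functions $M\mapsto\frac1n\int_\T\ln\|M^n\|$ to a continuous limit, yielding lower semicontinuity at $M_0$.

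The hard part is the genuinely $\mathbb{M}(2,\bbC)$ case, where $d:=\det M$ may vanish on $\T$ or in the strip, so the $u_n$ need no longer be bounded below and the normalization to an $\SL(2,\bbC)$-cocycle is not immediate. Following \cite{JM12}, I would split off the scalar part, $L(\Phi,M)=\tfrac12\int_\T\ln|d(x)|\,\rmd x+L(\Phi,\widehat M)$, the first term being continuous in $M$ because $M\mapsto d$ is continuous and $\ln|\cdot|$ is $L^1$-stable for analytic functions with locally bounded zero count, and $\widehat M$ being an $\SL(2,\bbC)$-cocycle obtained by stripping off the analytic scalar factor that carries the zeros of $d$ (available, possibly after a finite covering of $\T$ to accommodate the parity of the zero count), to which the previous step applies; equivalently one argues directly with the now-unbounded-below subharmonic functions $u_n$, whose negative parts remain uniformly $L^1$-controlled precisely because they are governed by $\frac1n\ln|\det M^n|$. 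The delicate point throughout is that these decompositions and estimates must stay stable as the zeros of $\det M$ are perturbed along $M\to M_0$ --- possibly migrating to the boundary of the strip or colliding --- and controlling those degenerations is the technical core of \cite{JM12}. Assembling the three ingredients gives continuity of $M\mapsto L(\Phi,M)$ on $C^\omega_\delta$, and hence on $C^\omega(\T,\mathbb{M}(2,\bbC))$.
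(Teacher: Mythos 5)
The paper does not prove Theorem~\ref{lem6.2}; it is imported as a black box from \cite{BourgainJito} and \cite{JM12} and used only as a tool (e.g.\ in Lemma~\ref{lem5.5.1} and Theorem~\ref{thm6.11}). So there is no internal proof to compare against --- what you can be judged on is whether your sketch is a faithful reconstruction of the cited arguments, and whether it has gaps.

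Your outline of the two easy pieces is correct. Fekete gives $L(\Phi,M)=\inf_n \tfrac1n\int_\T\ln\|M^n\|$, each term is upper semicontinuous in $M$ on a fixed strip (one needs only the locally uniform upper bound $\ln\|M^n\|\le n\sup\ln^+\|M\|$, since $\ln\|\cdot\|$ is u.s.c.\ and Fatou applies with that ceiling), and the infimum is u.s.c. The lower semicontinuity via the Goldstein--Schlag/Bourgain--Jitomirskaya mechanism is also the right engine: $\ln\|M^n(x)\|$ extends to a $\bbZ$-periodic subharmonic function on the strip, bounded above by $nB$, so its Fourier coefficients decay like $C(B,\delta)/|k|$ uniformly in $n$; combined with the $\Phi$-shift averaging and the subadditivity identity this yields a quantitative rate of convergence of $\tfrac1n\int_\T\ln\|M^n\|$ to $L$ that is uniform for $M$ near $M_0$, which upgrades the pointwise limit of continuous functionals to a continuous one. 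This is exactly the structure of the $\SL(2,\bbC)$ (really, $\GL$) proof.

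The one place where your sketch is not quite how \cite{JM12} actually proceed is the passage to $\mathbb{M}(2,\bbC)$ with vanishing determinant. The splitting $L(\Phi,M)=\tfrac12\int_\T\ln|\det M|+L(\Phi,\widehat M)$ with $\widehat M\in\SL(2,\bbC)$ obtained by ``stripping off the analytic scalar factor carrying the zeros of $\det M$'' is problematic when $\det M$ has zeros in the open strip (or on $\T$): there is then no analytic $\SL(2,\bbC)$-valued $\widehat M$ to which the previous step applies, and even the finite-cover trick only fixes a winding-number obstruction, not the zeros. The actual argument in \cite{JM12} goes through your alternative route: the functions $u_n=\tfrac1n\ln\|M^n\|$ are still subharmonic and \emph{bounded above}, which is all the Fourier-coefficient estimate requires; the lack of a lower bound is compensated by exploiting that $L^{(1)}(\Phi,M)+L^{(2)}(\Phi,M)=\int_\T\ln|\det M|$ is a manifestly continuous functional of analytic $M$ (Jensen/Riesz), so control of $L^{(1)}$ from below is equivalent to control of $L^{(1)}-L^{(2)}\ge 0$, and the negative parts of $u_n$ are tamed by $\tfrac1n\ln|\det M^n|$. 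The case $L=-\infty$ is handled trivially. So: correct overall architecture, with the caveat that the ``scalar-factorization'' version of the last step should be dropped in favour of the direct subharmonic/determinant argument you mention as an alternative, which is the one that survives zeros of $\det M$ migrating or colliding under the perturbation $M\to M_0$.
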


If $M\in C^0(\mathbb{T}, \mathrm{SL}(2,\mathbb{C}))$, we say that the cocycle $(\Phi,M)$ is \emph{uniformly hyperbolic} if there exist constants $c_{1},c_{2}>0$ and an continuous invariant splitting 
$\mathbb{C}^{2}=E^{s}(x)\oplus E^{u}(x), $
such that:
\begin{enumerate}
    \item $M(x)E^{*}(x)=E^{*}(x+\Phi)$ for $*=s,u$;
    \item For all $n\in\mathbb{Z}_{+}$, $x\in\mathbb{T}$, $\xi_{u}\in E^{u}(x+n\Phi)$, and $\xi_{s}\in E^{s}(x)$,
    \[
    \|M^{n}(x)\xi_{s}\|\le c_{1}e^{-nc_{2}}\|\xi_{s}\|,
    \qquad 
    \|M^{-n}(x)\xi_{u}\|\le c_{1}e^{-nc_{2}}\|\xi_{u}\|.
    \]
\end{enumerate}
We denote by $\mathcal{UH}$ the set of all uniformly hyperbolic cocycles.

\subsection{Avila's global theory}
Let $M:\mathbb{T}\to \mathbb{M}(2,\mathbb{C})$ be analytic with a holomorphic extension to the strip
$\mathbb{T}_{\delta}=\{x+i\epsilon : x\in\mathbb{T},\, |\epsilon|<\delta\}.$
For $|\epsilon|<\delta$, we consider the complexified cocycle $M(x+i\epsilon):\mathbb{T}\to \mathbb{M}(2,\mathbb{C})$ and define its Lyapunov exponent by
$
L(\Phi,M,\epsilon):=L(\Phi,M(\,\cdot+i\epsilon\,)).$
Since $M$ is holomorphic on $\mathbb{T}_{\delta}$, the convexity of the function $\epsilon\mapsto L(\Phi,M,\epsilon)$ for $|\epsilon|<\delta$ follows from the Hadamard three-lines theorem together with the subharmonicity of Lyapunov exponents. Consequently, $L(\Phi,M,\eta)$ admits right derivatives for all $|\eta|<\delta$, and we define
\[
\omega(\Phi,M,\eta)
:=\lim_{\epsilon\to0^{+}}
\frac{1}{2\pi\epsilon}\big(L(\Phi,M,\eta+\epsilon)-L(\Phi,M,\eta)\big)
\]
if the limit exists.
The quantity $\omega(\Phi,M,\eta)$ is called the \emph{acceleration} of the Lyapunov exponent at~$\eta$.

\begin{theorem}[\cite{Avila2015Acta}]\label{lem5.1}
Let $\Phi$ be irrational and let $M:\mathbb{T}\to M(2,\mathbb{C})$ be analytic with a holomorphic extension to the strip $\mathbb{T}_{\delta}=\{x+i\epsilon : x\in\mathbb{T},\, |\epsilon|<\delta\}$. Then 
$\omega(\Phi,M,\eta)\in \tfrac{1}{2}\mathbb{Z}$ for all $|\eta|<\delta.$
Moreover, if $M:\mathbb{T}\to\mathrm{SL}(2,\mathbb{C})$, then
$\omega(\Phi,M,\eta)\in \mathbb{Z}$ for all $|\eta|<\delta.$
\end{theorem}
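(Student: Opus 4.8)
This is Avila's acceleration quantization theorem, so the plan is to reproduce the skeleton of his global-theory argument \cite{Avila2015Acta}. \textbf{Step 1: reduction to $\mathrm{SL}(2,\mathbb{C})$.} Since $\det\!\big(M^{n}(x+i\epsilon)\big)=\prod_{j=0}^{n-1}\det M(x+i\epsilon+j\Phi)$, averaging over $x$ and letting $n\to\infty$ yields $L(\Phi,M,\epsilon)=\tfrac12\int_{\mathbb{T}}\log|\det M(x+i\epsilon)|\,dx+L(\Phi,\widetilde M,\epsilon)$, where $\widetilde M=M/\sqrt{\det M}$ is $\mathrm{SL}(2,\mathbb{C})$-valued (passing to a double cover of $\mathbb{T}$ when $\det M$ has odd winding number, which is harmless for quantization). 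The first term is convex and piecewise affine in $\epsilon$ with slope equal to $-2\pi$ times an integer winding number on each interval — the slope jumps only when $\epsilon$ crosses the imaginary part of a zero of $\det M$ — so its acceleration lies in $\mathbb{Z}$ and supplies the half-integer; it therefore remains to prove $\omega(\Phi,M,\eta)\in\mathbb{Z}$ when $M\in C^{\omega}(\mathbb{T},\mathrm{SL}(2,\mathbb{C}))$.

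\textbf{Step 2: regular heights.} As recorded before Theorem~\ref{lem5.1}, $\epsilon\mapsto L(\Phi,M,\epsilon)$ is convex (Hadamard three-lines together with subharmonicity of Lyapunov exponents) and non-negative; hence its right derivative exists everywhere, is non-decreasing and right-continuous, and the set of \emph{regular} heights $\epsilon_{0}$ — those near which $L$ is affine — is co-countable. At a regular $\epsilon_{0}$ with $L(\Phi,M,\epsilon_{0})=0$, convexity and non-negativity force $L\equiv 0$ near $\epsilon_{0}$, so $\omega=0\in\mathbb{Z}$. The decisive point is that a regular $\epsilon_{0}$ with $L(\Phi,M,\epsilon_{0})>0$ forces the complexified cocycle $x\mapsto M(x+i\epsilon_{0})$ to be uniformly hyperbolic. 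Granting this, uniform hyperbolicity persists on a strip about $\epsilon_{0}$, the unstable line bundle $E^{u}$ is analytic there, and conjugating by an analytic $\mathrm{GL}(2,\mathbb{C})$-valued gauge carrying $\mathbb{C}e_{1}$ into $E^{u}$ makes $M$ upper triangular, $\begin{pmatrix} a & * \\ 0 & a^{-1}\end{pmatrix}$, with $a$ analytic and nowhere vanishing on this strip. Then $L(\Phi,M,\epsilon)=\int_{\mathbb{T}}\log|a(x+i\epsilon)|\,dx$ for $\epsilon$ near $\epsilon_{0}$, which is affine because $\log|a|$ is harmonic, with slope $-2\pi\deg(a)$; hence $\omega(\Phi,M,\epsilon_{0})=-\deg(a)\in\mathbb{Z}$. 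Finally, since the right derivative of a convex function is right-continuous and $\mathbb{Z}$ (respectively $\tfrac12\mathbb{Z}$) is closed, the value $\omega(\Phi,M,\eta)$ at an arbitrary $\eta$ is a limit of the already-quantized values at regular heights decreasing to $\eta$, so quantization holds for all $|\eta|<\delta$.

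\textbf{Main obstacle.} The crux is the implication in Step 2 that a regular height with positive Lyapunov exponent yields uniform hyperbolicity of the complexified cocycle; this is the technical heart of Avila's global theory. The idea is that failure of uniform hyperbolicity would produce, at heights near $\epsilon_{0}$, arbitrarily long near-parabolic or near-elliptic excursions, which — through the subharmonicity/convexity machinery and a compactness argument on the cocycle dynamics (using also continuity of the Lyapunov exponent, Theorem~\ref{lem6.2}) — would make $\epsilon\mapsto L(\Phi,M,\epsilon)$ strictly convex at $\epsilon_{0}$, contradicting regularity. A secondary, essentially routine point is the winding-number bookkeeping and the double-cover issue in Step 1, which must be tracked to obtain the integer-versus-half-integer dichotomy exactly and to handle the finitely many heights at which $\det M$ vanishes.
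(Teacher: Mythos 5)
Your Step~1 (peeling off the determinant factor, passing to an $\mathrm{SL}(2,\mathbb{C})$ cocycle, and observing that $\tfrac12\int_{\mathbb{T}}\log|\det M(x+i\epsilon)|\,dx$ is piecewise affine with slopes in $\pi\mathbb{Z}$ by Jensen's formula) is sound, and you correctly identify the conjugation-to-triangular-form argument that computes the acceleration once uniform hyperbolicity is in hand. However, the logical skeleton of Step~2 has a genuine gap that collapses the argument.

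The claim ``hence \dots the set of regular heights $\epsilon_0$ --- those near which $L$ is affine --- is co-countable'' does \emph{not} follow from convexity. A convex function need not be piecewise affine: $\epsilon\mapsto\epsilon^{2}$ or $\epsilon\mapsto\cosh\epsilon$ are convex, have right derivatives everywhere that are non-decreasing and right-continuous, and yet have \emph{no} regular heights whatsoever. What makes the set of regular heights co-countable (indeed co-finite on compacts) is precisely the \emph{quantization of the acceleration}: a non-decreasing function whose values are pinned to a discrete set must be piecewise constant. You are therefore assuming the conclusion you want to prove. The right-continuity extension in your final sentence suffers from the same problem --- without density of regular heights (which again requires piecewise affinity, i.e.\ quantization), there is no sequence of regular heights decreasing to an arbitrary $\eta$.

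There is also a likely circularity lurking in your key implication ``regular with $L>0$ implies uniformly hyperbolic.'' In Avila's paper the characterization of uniform hyperbolicity (Theorem~\ref{lem.UHGlobal} here) is a \emph{consequence} of quantization, not an ingredient in its proof. Avila's actual route to quantization is via rational approximation: for a rational frequency $p/q$ the Lyapunov exponent can be written in terms of the trace of the $q$-step monodromy, Jensen's formula then gives piecewise affinity in $\epsilon$ with quantized slopes, and one passes to irrational $\Phi$ using joint continuity of the Lyapunov exponent (your Theorem~\ref{lem6.2}). That argument never needs to know a priori that regular heights exist; rather, their existence drops out afterward. Your sketch inverts this logical order. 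The heuristic you offer for the main obstacle (``failure of UH forces strict convexity at $\epsilon_0$'') is a plausible slogan but is not a substitute for the rational-approximation mechanism, and even if it were made rigorous it would not repair the co-countability gap that precedes it.
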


Lyapunov exponents of complexified (in terms of phase) cocycles also characterize uniform hyperbolicity:
\begin{theorem}[\cite{Avila2015Acta}]\label{lem.UHGlobal}
Let $M:\mathbb{T}\to\mathrm{SL}(2,\mathbb{C})$ be analytic.  
The cocycle $(\Phi,M)$ is uniformly hyperbolic if and only if  
$L(\Phi,M)>0$ and 
$L(\Phi,M,\epsilon)$ is affine in a neighborhood of $\epsilon=0.$
\end{theorem}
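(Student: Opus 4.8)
The plan is to establish the two implications separately. The forward implication is soft: it rests on the analyticity of the invariant bundles of a hyperbolic analytic cocycle together with the mean value property of harmonic functions. The converse is the substantive part and is where the quantization of the acceleration (Theorem~\ref{lem5.1}) enters decisively.

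For the forward direction I would argue as follows. Suppose $(\Phi,M)$ is uniformly hyperbolic, with $M$ holomorphic on $\mathbb{T}_{\delta}$. Uniform hyperbolicity is stable under $C^{0}$-small perturbations of the cocycle map, and $M(\cdot+i\epsilon)\to M$ uniformly on $\mathbb{T}$ as $\epsilon\to0$; since the graph-transform construction of the invariant sections is a contraction on a Banach space of holomorphic sections, one obtains $0<\delta'\le\delta$ and nonvanishing holomorphic maps $u^{\pm}\colon\mathbb{T}_{\delta'}\to\mathbb{C}^{2}\setminus\{0\}$ spanning the stable/unstable directions, with $M(z)u^{\pm}(z)=a^{\pm}(z)\,u^{\pm}(z+\Phi)$ for holomorphic functions $a^{\pm}$ that are nonvanishing because each $M(z)\in\mathrm{SL}(2,\mathbb{C})$ is invertible. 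Normalizing the frame $[\,u^{+}\mid u^{-}\,]$ to have unit determinant diagonalizes the cocycle to $\operatorname{diag}(a,a^{-1})$, $a:=a^{+}$, whence for $|\epsilon|<\delta'$
\[
L(\Phi,M,\epsilon)=\Bigl|\int_{\mathbb{T}}\log\lvert a(x+i\epsilon)\rvert\,dx\Bigr|.
\]
The function $z\mapsto\log\lvert a(z)\rvert$ is harmonic, so $\epsilon\mapsto\int_{\mathbb{T}}\log\lvert a(x+i\epsilon)\rvert\,dx$ has vanishing second derivative (differentiate under the integral and replace $\partial_{\epsilon}^{2}$ by $-\partial_{x}^{2}$, which integrates to zero over $\mathbb{T}$), hence is affine in $\epsilon$; and it is nonzero at $\epsilon=0$, for otherwise unique ergodicity of the irrational rotation would force $\frac1n\sum_{j<n}\log\lvert a(x+j\Phi)\rvert\to0$ uniformly, contradicting the uniform exponential expansion. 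Thus $L(\Phi,M,0)>0$, and since the affine quantity inside the modulus has a fixed sign near $0$, the function $L(\Phi,M,\epsilon)$ is affine near $\epsilon=0$.

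For the converse I would proceed as follows. Assume $L(\Phi,M)>0$ and $L(\Phi,M,\cdot)$ affine on $(-\delta_{0},\delta_{0})$. Shrinking $\delta_{0}$, positivity persists, $L(\Phi,M,\epsilon)>0$ for $|\epsilon|<\delta_{0}$, and by convexity the acceleration equals a single integer $\omega_{0}$ (Theorem~\ref{lem5.1}) throughout $(-\delta_{0},\delta_{0})$. The crucial use of the hypothesis is that affineness on the whole interval means no $\epsilon\in(-\delta_{0},\delta_{0})$ is a critical phase. I would then show that for such $\epsilon$ the positivity of the Lyapunov exponent promotes to a genuine holomorphic invariant splitting $\mathbb{C}^{2}=E^{+}(z)\oplus E^{-}(z)$ of the complexified cocycle on a strip $\mathbb{T}_{\delta_{0}'}$, with uniform exponential estimates. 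Restricting that splitting to $\operatorname{Im}z=0$ yields a continuous (indeed real-analytic) invariant hyperbolic splitting of $(\Phi,M)$, i.e.\ $(\Phi,M)\in\mathcal{UH}$. The construction of these bundles and their extension across $\operatorname{Im}z=0$ is the technical heart: following \cite{Avila2015Acta}, one can combine the integrality of the acceleration with a normal-families (Montel) argument for the projectivized iterates, or realize the family of invariant probability measures on $\mathbb{CP}^{1}$ as a holomorphic motion over the $\epsilon$-disc---the affine structure forcing the two invariant directions never to collide---and then extend the motion to $\epsilon=0$ by the $\lambda$-lemma.

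I expect the main obstacle to be exactly this converse step: one must exclude that the stable and unstable directions of the complexified cocycle degenerate---lose transversality, or fail to extend continuously---as $\epsilon\downarrow0$, which is the scenario realized by nonuniformly hyperbolic cocycles (for instance in the localization regime of the UAMO). What rules it out is the interplay of (i) convexity of $\epsilon\mapsto L(\Phi,M,\epsilon)$, (ii) the integrality constraint on its slope (Theorem~\ref{lem5.1}), and (iii) the assumed affineness: together these certify that the exponential growth rate never drops along the strip, and this rigidity is what upgrades ``$L>0$ near $\epsilon=0$'' into ``uniform hyperbolicity at $\epsilon=0$''.
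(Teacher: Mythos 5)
This is a result the paper imports verbatim from Avila's global theory paper \cite{Avila2015Acta}; no proof is given in the present text, so there is no in-paper argument to compare your attempt against. Taking your proposal on its own terms:

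The forward direction is essentially correct. Uniform hyperbolicity extends to the complexified cocycle $M(\cdot+i\epsilon)$ for $|\epsilon|$ small by $C^0$-stability, the fixed-point/graph-transform construction produces holomorphic frames $u^\pm$ on a thin strip (this is a standard cone argument, cf.\ Avila's notion of ``domination''), and conjugating to $\mathrm{diag}(a,a^{-1})$ reduces $L(\Phi,M,\epsilon)$ to $\bigl|\int_{\mathbb T}\log|a(x+i\epsilon)|\,dx\bigr|$, which is affine by harmonicity. Two small points you should make explicit: (i) the conjugating frame must have determinant bounded away from $0$ and $\infty$ on the strip so that the Lyapunov exponent is preserved under the conjugation; (ii) the positivity $L(\Phi,M,0)>0$ is used to fix the sign of $\int\log|a|$ near $\epsilon=0$, otherwise the absolute value could break the affineness.

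The converse is where the theorem has content, and here your sketch is circular at the decisive step. You write that ``positivity of the Lyapunov exponent promotes to a genuine holomorphic invariant splitting \dots with uniform exponential estimates'' --- but that \emph{is} the conclusion (uniform hyperbolicity), and the tools you invoke (Montel for projectivized iterates, a holomorphic motion over the $\epsilon$-disc, the $\lambda$-lemma) are named but never deployed. In particular, a holomorphic-motion argument needs the invariant directions to already be defined and injective as functions of $\epsilon$ on a punctured disc; for a positive--Lyapunov-exponent cocycle that is not a priori uniformly hyperbolic, these directions are at best measurable in $x$, and you have not shown they even exist as continuous (let alone holomorphic) sections for $\epsilon\neq0$. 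Avila's mechanism is different and more quantitative: one studies the most-contracted directions $s_n(z)$ of the $n$-step products $M^n(z)$, proves they converge uniformly on compact subsets of a strip whenever $L>0$ with \emph{constant} acceleration there (this is where quantization does real work: constant integer slope rules out the degenerations that would cause $s_n$ to oscillate), and then identifies the limit as a holomorphic unstable section, from which domination and hence uniform hyperbolicity follow. Your proposal records the correct \emph{inputs} --- convexity, integrality of $\omega$, affineness --- but does not supply the argument linking them to the existence and transversality of invariant sections at $\epsilon=0$, which is precisely the obstacle you yourself identify at the end. As written, the converse remains a plausible plan rather than a proof.
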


In \cite{Avila2015Acta}, Avila  classified the spectral parameters in the spectrum of a one dimensional Schr\"odinger operator or equivalently, Schr\"odinger cocycles that are not uniformly hyperbolic into three regimes. This can be generalized to general $SL(2,\bbR)$ and $SU(1,1)$ valued quasiperiodic cocycles. A cocycle $(\Phi,M)$ that is not uniformly hyperbolic is said to be 
\begin{enumerate}
    \item \emph{supercritical} if $L(\Phi,M)>0$;
    \item \emph{subcritical} if $L(\Phi,M)=0$ and there exists $\epsilon>0$ such that $\omega(\Phi,M,\eta)=0$ for all $|\eta|<\epsilon$;
    \item \emph{critical} otherwise.
\end{enumerate}

\subsection{Szeg\H{o} Cocycle}
To study the spectral properties of the generalized extended CMV matrix $\mathcal{E}$ in \eqref{eq:gecmv}, one naturally considers the generalized eigenvalue equation 
\begin{equation*}
    \mathcal{E}\psi=z\psi, \qquad z\in\bbC.
\end{equation*}
Solutions to this equation satisfy the following recurrence:
\begin{equation}\label{eq.solIter1}
	\begin{pmatrix}\psi_{n+1}^{+}\\\psi_{n}^{-}\end{pmatrix}=A_{n,z}\begin{pmatrix}\psi_{n}^{+}\\\psi_{n-1}^{-}\end{pmatrix}, \quad n \in \bbZ,
\end{equation}
where the matrix $A_{n,z}$ is given by
\begin{equation}\label{eq:GECMV_transmat}
	A_{n,z}=\frac{1}{\rho_{2n}\rho_{2n-1}}\begin{pmatrix}
		z^{-1}+\alpha_{2n}\overline{\alpha}_{2n-1}+\alpha_{2n-1}\overline{\alpha_{2n-2}}+\alpha_{2n}\overline{\alpha_{2n-2}}z& -\overline{\rho_{2n-2}}\alpha_{2n-1}-\overline{\rho_{2n-2}}\alpha_{2n}z\\
		-\rho_{2n}\overline{\alpha_{2n-1}}-\rho_{2n}\overline{\alpha_{2n-2}}z&\rho_{2n}\overline{\rho_{2n-2}}z
	\end{pmatrix},
\end{equation}
for $n \in \bbZ$ and $z \in \bbC \setminus \{0\}$. 

Consider the generalized eigenvalue equation of the transposed GECMV matrix:
\begin{equation}\label{eq.dualEigen}
\mathcal{E}^\top\psi=z\psi.
\end{equation}
Since $\mathcal{L}$ and $\mathcal{M}$ are unitary, we have
$$z^{-1}\psi=(\mathcal{E}^{\top})^{-1}\psi=(\mathcal{M}^{\top}\mathcal{L}^{\top})^{-1}\psi=\overline{\mathcal{L}}\overline{\mathcal{M}}\psi=\overline{\mathcal{E}}\psi.$$
It thus follows from \eqref{eq.solIter1} that solutions $\psi\in \ell^{2}(\bbZ)$ to \eqref{eq.dualEigen} satisfy the following recurrence relation:
\begin{equation}\label{eq.transposeIter}
\begin{pmatrix}\psi_{n+1}^{+}\\\psi_{n}^{-}\end{pmatrix}=\overline{A_{n,\overline{z^{-1}}}}\begin{pmatrix}\psi_{n}^{+}\\\psi_{n-1}^{-}\end{pmatrix}, \quad n \in \bbZ.
\end{equation}
\medskip

For a general GECMV, the $\rho_n$'s in the building block may not be positive real. This causes additional issue in the analysis since the transfer matrix may not be in $\mathrm{SU}(1,1)$. This can be resolved by conjugating the GECMV to its associated ECMV via \cite[Theorem 2.1]{CFLOZ24}.
\begin{lemma}[\cite{CFLOZ24}]\label{lem.gaugeTransform}
   For any sequence of Verblunsky coefficients $\{\alpha_n\}$, there exists a family of generalized extended CMV matrices (GECMV) such that any two distinct matrices in the family are unitarily conjugate to each other by a gauge(diagonal) unitary transformation.  More precisely, for any two such matrices $\mathcal{E}(\alpha,\rho^{(1)})$ and $\mathcal{E}(\alpha,\rho^{(2)})$, one has
$\lvert \rho_n^{(1)} \rvert = \lvert \rho_n^{(2)} \rvert ,  \forall n\in\mathbb{Z}.$
In particular, within this unitary equivalence class of GECMV, there exists a canonical representative---the standard extended CMV matrix $\widetilde{\mathcal{E}}$---whose building blocks satisfy $\rho_n\in\mathbb{R}_{+}$ for all $n$.

\end{lemma}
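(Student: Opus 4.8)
The plan is to exhibit the asserted conjugation explicitly via a diagonal (gauge) unitary assembled from the arguments of the $\rho_n$. The starting point is that, since every building block satisfies $|\alpha_n|^2+|\rho_n|^2=1$, the modulus $|\rho_n|=\sqrt{1-|\alpha_n|^2}$ is already determined by the Verblunsky coefficients alone; within the family of all GECMV sharing a fixed sequence $\{\alpha_n\}$, the only genuine freedom is in the phases $\gamma_n$ defined by $\rho_n=|\rho_n|e^{i\gamma_n}$ (when $|\alpha_n|=1$ we have $\rho_n=0$, and we simply set $\gamma_n=0$ in that degenerate case). It therefore suffices to conjugate an arbitrary $\mathcal{E}(\alpha,\rho)$ onto the standard extended CMV matrix $\widetilde{\mathcal{E}}(\alpha)$, the one whose building blocks carry $\rho_n=|\rho_n|\in\bbR_{+}$; the full statement then follows by composing two such gauge unitaries.

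Next I would define the unimodular sequence $\{d_n\}_{n\in\bbZ}$ by anchoring $d_0=1$ and iterating $d_{n+1}=e^{i\gamma_n}d_n$ in both directions, so that $d_n=\prod_{j=0}^{n-1}e^{i\gamma_j}$ for $n>0$ and $d_n=\prod_{j=n}^{-1}e^{-i\gamma_j}$ for $n<0$; each $|d_n|=1$, so $D:=\mathrm{diag}(d_n)$ is a diagonal unitary on $\ell^2(\bbZ)$. Being diagonal, $D$ preserves every two-dimensional coordinate subspace $\ell^2(\{n,n+1\})$, hence it respects the block decompositions underlying both $\mathcal{L}=\oplus_n\Theta_{2n}$ and $\mathcal{M}=\oplus_n\Theta_{2n+1}$. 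On the block $\ell^2(\{n,n+1\})$ the conjugation sends $\Theta_n$ to
\[
\begin{pmatrix}d_n&0\\0&d_{n+1}\end{pmatrix}\begin{pmatrix}\overline{\alpha_n}&\rho_n\\\overline{\rho_n}&-\alpha_n\end{pmatrix}\begin{pmatrix}\overline{d_n}&0\\0&\overline{d_{n+1}}\end{pmatrix}=\begin{pmatrix}\overline{\alpha_n}&\rho_n\,d_n\overline{d_{n+1}}\\\overline{\rho_n}\,d_{n+1}\overline{d_n}&-\alpha_n\end{pmatrix},
\]
and the recursion $d_{n+1}\overline{d_n}=e^{i\gamma_n}$ forces both off-diagonal entries to equal $|\rho_n|$, while the diagonal entries $\overline{\alpha_n},-\alpha_n$ are untouched. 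Thus $D\Theta_nD^{-1}$ is exactly the building block of $\widetilde{\mathcal{E}}(\alpha)$ for every $n$.

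Assembling this block by block gives $D\mathcal{L}D^{-1}=\widetilde{\mathcal{L}}$ and $D\mathcal{M}D^{-1}=\widetilde{\mathcal{M}}$, hence $D\,\mathcal{E}(\alpha,\rho)\,D^{-1}=\widetilde{\mathcal{L}}\widetilde{\mathcal{M}}=\widetilde{\mathcal{E}}(\alpha)$. For two GECMV $\mathcal{E}(\alpha,\rho^{(1)})$ and $\mathcal{E}(\alpha,\rho^{(2)})$ with the same Verblunsky coefficients (so necessarily $|\rho_n^{(1)}|=|\rho_n^{(2)}|=\sqrt{1-|\alpha_n|^2}$), the corresponding gauge unitaries $D^{(1)},D^{(2)}$ satisfy $D^{(i)}\mathcal{E}(\alpha,\rho^{(i)})(D^{(i)})^{-1}=\widetilde{\mathcal{E}}(\alpha)$, so $(D^{(2)})^{-1}D^{(1)}$ is a diagonal unitary conjugating $\mathcal{E}(\alpha,\rho^{(1)})$ onto $\mathcal{E}(\alpha,\rho^{(2)})$. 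I do not expect a serious obstacle in any of this; the one point deserving care is the bookkeeping that a \emph{single} sequence $\{d_n\}$ is simultaneously the correct gauge for every $\Theta_n$ occurring in $\mathcal{L}$ and in $\mathcal{M}$ — this holds precisely because those blocks together exhaust all $n\in\bbZ$, with $\Theta_n$ conjugated by the pair $(d_n,d_{n+1})$ in either case — along with the trivial handling of the degenerate blocks where $\rho_n=0$.
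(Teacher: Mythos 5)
Your proof is correct and is the standard gauge-transformation argument that the paper outsources to \cite[Theorem 2.1]{CFLOZ24}: the diagonal unitary $D=\mathrm{diag}(d_n)$ with $d_{n+1}\overline{d_n}=e^{i\gamma_n}$ conjugates each $\Theta_n$ (whether it sits in $\mathcal{L}$ or $\mathcal{M}$) onto the real-$\rho_n$ block, and the $|\rho_n^{(1)}|=|\rho_n^{(2)}|$ claim is immediate from the constraint $|\alpha_n|^2+|\rho_n|^2=1$. The computations, the block bookkeeping, and the handling of $\rho_n=0$ are all in order, so this matches the cited result as expected.
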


Since after gauge transforming the $\rho_n$'s in \eqref{eq:GECMV_transmat} are positive, by \cite[Lemma 5.3]{CFLOZ24}, we have the following:
 \begin{equation}\label{eq.ConjugatedTransferMatrix}
    A_{n,z}=R_{2n}^{-1}JS^{+}_{n,z}JR_{2n-2},
\end{equation}
where $S^{+}_{n,z}=S_{2n,z}S_{2n-1,z}$ is determined by the \emph{normalized Szeg\H{o} transfer matrices}
\begin{equation}\label{eq:szego_normalized}
    S_{n,z}=\frac{z^{-\frac{1}{2}}}{\left|\rho_{n}\right|}\begin{pmatrix}z&-\overline{\alpha_{n}}\\-\alpha_{n}z&1\end{pmatrix} \in \SU(1,1),
\end{equation}
and
\begin{equation}\label{eq.Rn}
    R_{n}=\begin{pmatrix}1&0\\-\overline{\alpha_{n}}& \left|\rho_{n}\right| \end{pmatrix},\qquad J=\begin{pmatrix}0&1\\1&0\end{pmatrix}.
\end{equation}
\medskip

\begin{remark}
    As a general rule, the matrix representation of a random quantum walk imposes the restrictions that either $\alpha_n=\lambda,\rho_n=\sqrt{1-\lambda^2}$ for all odd $n\in\bbZ$ or for  all even $n\in\bbZ.$
    In the current context, the model we are interested in satisfies $(\alpha_{2n},\rho_{2n})=(\lambda',\lambda)$ as in \eqref{GECMV}. This is very useful since the conjugacy \eqref{eq.ConjugatedTransferMatrix} will be constant. Therefore we will mainly focus on the study of the two-step combined Szeg\H{o} cocycle $(\Phi,S^+_z)$.
\end{remark}

\section{Aubry-Andr\'e duality}
This section is devoted to establishing Aubry-Andr\'e duality, which serves as the foundation of the entire paper.
For a quantum walk $W_{\lambda}=S_{\lambda}Q$, let $W_{\lambda}^{\top}=Q^{\top}S_{\lambda}^{\top}$ be its transpose. Let $\psi\in\mathcal{H}$, then the actions of $W_\lambda$ and $W_\lambda^\top$  on $\psi$ can be computed explicitly as follows:
\begin{lemma}\label{lem.eigenRelation}
    For each $n\in\bbZ$, we have
    \[[W_\lambda\psi]_{n}^{+}=\lambda(q_{n-1}^{11}\psi_{n-1}^{+}+q_{n-1}^{12}\psi_{n-1}^{-})-\lambda'(q_{n}^{21}\psi_{n}^{+}+q_{n}^{22}\psi_{n}^{-}),\]
\[[W_\lambda\psi]_{n}^{-}=\lambda(q_{n+1}^{21}\psi_{n+1}^{+}+q_{n+1}^{22}\psi_{n+1}^{-})+\lambda'(q_{n}^{11}\psi_{n}^{+}+q_{n}^{12}\psi_{n}^{-}),\]
\[[W_\lambda^{\top}\psi]_{n}^{+}=q_{n}^{11}(\lambda\psi_{n+1}^{+}+\lambda'\psi_{n}^{-})+q_{n}^{21}(-\lambda'\psi_{n}^{+}+\lambda\psi_{n-1}^{-}),\]
\[[W_\lambda^{\top}\psi]_{n}^{-}=q_{n}^{12}(\lambda\psi_{n+1}^{+}+\lambda'\psi_{n}^{-})+q_{n}^{22}(-\lambda'\psi_{n}^{+}+\lambda\psi_{n-1}^{-}).\]
\end{lemma}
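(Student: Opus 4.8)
The statement is a direct unravelling of the definitions of $W_\lambda = S_\lambda Q$ and $W_\lambda^\top = Q^\top S_\lambda^\top$, carried out coordinate by coordinate; the plan is simply to track the $\pm 1$ index shifts carefully.

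For the first two identities I would first use that $Q$ is block diagonal in the position index, so that $\phi := Q\psi$ has coordinates $\phi_n^+ = q_n^{11}\psi_n^+ + q_n^{12}\psi_n^-$ and $\phi_n^- = q_n^{21}\psi_n^+ + q_n^{22}\psi_n^-$. Then I would apply $S_\lambda$ to $\phi$ by linearity, using \eqref{eq.shiftOp}: expanding $S_\lambda\phi = \sum_n \phi_n^+(\lambda\delta_{n+1}^+ + \lambda'\delta_n^-) + \sum_n \phi_n^-(\lambda\delta_{n-1}^- - \lambda'\delta_n^+)$ and collecting the coefficient of $\delta_m^{\pm}$. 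The $\delta_m^+$-coefficient picks up $\lambda\phi_{m-1}^+$ from the position shift and $-\lambda'\phi_m^-$ from the coin flip, while the $\delta_m^-$-coefficient picks up $\lambda'\phi_m^+$ and $\lambda\phi_{m+1}^-$; substituting the formulas for the $\phi_j^\pm$ back in gives exactly the first two displayed equalities.

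For the transpose, the one genuinely new ingredient is the action of $S_\lambda^\top$ on the basis. I would read off the matrix entries $\langle\delta_m^s, S_\lambda\delta_n^t\rangle$ from \eqref{eq.shiftOp} and transpose; since the coefficients $\lambda,\lambda'$ are real this yields $S_\lambda^\top\delta_n^+ = \lambda\delta_{n-1}^+ - \lambda'\delta_n^-$ and $S_\lambda^\top\delta_n^- = \lambda'\delta_n^+ + \lambda\delta_{n+1}^-$, i.e.\ $S_\lambda^\top$ is a conditional shift in the opposite direction. Applying this to $\psi$ and again collecting coefficients gives $\chi := S_\lambda^\top\psi$ with $\chi_n^+ = \lambda\psi_{n+1}^+ + \lambda'\psi_n^-$ and $\chi_n^- = -\lambda'\psi_n^+ + \lambda\psi_{n-1}^-$. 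Finally, since $Q^\top$ acts blockwise as $Q_n^\top$ on each fiber, the $n$-th block of $Q^\top\chi$ equals $(q_n^{11}\chi_n^+ + q_n^{21}\chi_n^-,\ q_n^{12}\chi_n^+ + q_n^{22}\chi_n^-)^\top$, and inserting the expressions for $\chi_n^\pm$ produces the last two displayed equalities.

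No conceptual difficulty is expected; the only points requiring care are (i) not conflating the Hilbert-space transpose of $S_\lambda$, which interchanges the forward and backward position shifts, with a purely fiberwise operation as in the case of $Q$, and (ii) keeping the index shifts straight, so that — consistent with the five-diagonal structure of $\mathcal{E}$ — the coefficient of $\delta_n^+$ in $W_\lambda\psi$ involves only $\psi_{n-1}$ and $\psi_n$ while that of $\delta_n^-$ involves only $\psi_n$ and $\psi_{n+1}$.
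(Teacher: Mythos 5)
Your proof is correct and takes the same route as the paper, which simply states the result follows by "direct computations from the definitions." You supply the details cleanly: factor $W_\lambda=S_\lambda Q$ and $W_\lambda^\top=Q^\top S_\lambda^\top$, apply the fiberwise coin and the shift (or their transposes) in the appropriate order, and collect coefficients of $\delta_n^\pm$; the index shifts and the sign on $S_\lambda^\top\delta_n^+=\lambda\delta_{n-1}^+-\lambda'\delta_n^-$ all check out.
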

\begin{proof}
Direct computations from the definitions of the shift $S_{\lambda}$ and the coin $Q$.
\end{proof}

For $W_{\lambda_{1},\lambda_{2},\Phi,\theta}=S_{\lambda_{1}}Q_{\lambda_2}$ with the coin $Q_{\lambda_2}$ defined as 
 \begin{equation}\label{eq.generalForm}\left\{\begin{aligned}
&q_{n}^{11}=\overline{\rho_{2n-1}}=f(\theta+n\Phi)\\
&q_{n}^{12}=-\alpha_{2n-1}=-g(\theta+n\Phi)
\end{aligned}\right.\end{equation}
where $f$ and $g$ are given by \eqref{eq.FuncF} and \eqref{eq.FuncG} respectively.

\begin{lemma}\label{3.1}
  Let $z\in\partial\mathbb{D}$, for any $\psi\in \mathcal{H}$ satisfying $W_{\lambda_{1},\lambda_{2},\Phi,\theta}\psi=z\psi$, we have $$W_{\lambda_{2},\lambda_{1},\Phi,\xi}^{\top}\varphi^{\xi}=z\varphi^{\xi}$$
  for a.e. $\xi\in\mathbb{T}$,
  where $\varphi^{\xi}$ is defined as
  $$\begin{pmatrix}
  \varphi_{n}^{\xi,+} \\
  \varphi_{n}^{\xi,-}
  \end{pmatrix}=e^{2\pi in\theta}
  \begin{pmatrix}
  a & b\\
  b & -a
  \end{pmatrix}
  \begin{pmatrix}
  \check{\psi}^{+}(\xi+n\Phi) \\
  \check{\psi}^{-}(\xi+n\Phi)
  \end{pmatrix},$$
 $a=\frac{1}{\sqrt{1+t^{2}}},b=\frac{t}{\sqrt{1+t^{2}}}$ and $\check{.}$ means the inverse Fourier transform.
  \end{lemma}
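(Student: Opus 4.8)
The plan is to prove the Aubry-André duality by a direct Fourier-analytic computation: expand the eigenfunction equation $W_{\lambda_1,\lambda_2,\Phi,\theta}\psi = z\psi$ componentwise using Lemma~\ref{lem.eigenRelation}, pass to Fourier coefficients, and recognize the resulting relations as the componentwise form of $W_{\lambda_2,\lambda_1,\Phi,\xi}^\top \varphi^\xi = z\varphi^\xi$ after the prescribed linear change of frame. The key structural fact driving the duality is that the sampling functions $f$ and $g$ in \eqref{eq.FuncF}--\eqref{eq.FuncG} are trigonometric polynomials of degree one whose Fourier coefficients, after the rotation by $\begin{pmatrix} a & b \\ b & -a\end{pmatrix}$ with $a = 1/\sqrt{1+t^2}$, $b = t/\sqrt{1+t^2}$, get interchanged with the shift structure — this is exactly the mechanism by which the coupling constants $\lambda_1$ and $\lambda_2$ swap roles.

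**Step 1.** Write out $W_{\lambda_1,\lambda_2,\Phi,\theta}\psi = z\psi$ using the first two identities of Lemma~\ref{lem.eigenRelation}, substituting $q_n^{11} = f(\theta+n\Phi)$, $q_n^{12} = -g(\theta+n\Phi)$ and the relations $q_n^{21}, q_n^{22}$ forced by $Q_n \in \mathbb U(2)$ (so $q_n^{21} = \overline{g(\theta+n\Phi)}$ up to a phase, $q_n^{22} = \overline{f(\theta+n\Phi)}$ — I would record the precise unitarity relations as a preliminary observation). This yields two scalar equations relating $\psi_{n-1}^\pm, \psi_n^\pm, \psi_{n+1}^\pm$ with coefficients that are the Fourier modes $e^{\pm 2\pi i(\theta+n\Phi)}$ and constants.

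**Step 2.** Define $\check\psi^\pm$ via $\psi_n^\pm = \widehat{(\check\psi^\pm)}(n) = \int_{\mathbb T} \check\psi^\pm(\xi) e^{-2\pi i n\xi}\,d\xi$ (or the convention used in the paper), and translate each scalar equation from Step 1 into an identity among the functions $\check\psi^\pm(\xi)$, $\check\psi^\pm(\xi)e^{\pm 2\pi i\Phi}$, and their multiplications by $e^{\pm 2\pi i\xi}$. Here the $n$-dependence $e^{2\pi i n\Phi}$ inside $f,g$ becomes a shift $\xi \mapsto \xi + \Phi$ on the dual side, while the discrete index $n$ (appearing through $\psi_{n\pm1}$) becomes multiplication by $e^{\mp 2\pi i\xi}$. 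The global prefactor $e^{2\pi i n\theta}$ in the definition of $\varphi^\xi$ absorbs a compensating phase; this is the standard bookkeeping that makes the dual operator depend on the new phase $\xi$ rather than $\theta$.

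**Step 3.** Apply the constant orthogonal matrix $\begin{pmatrix} a & b \\ b & -a\end{pmatrix}$ to the vector $(\check\psi^+(\xi+n\Phi), \check\psi^-(\xi+n\Phi))^\top$ and verify, using the explicit forms \eqref{eq.FuncF}--\eqref{eq.FuncG} and the identities $a^2+b^2=1$, $k = -(1+t^2)$, $t\lambda_2/k$, etc., that the transformed equations are precisely the components $[W_{\lambda_2,\lambda_1,\Phi,\xi}^\top \varphi^\xi]_n^\pm = z\varphi_n^{\xi,\pm}$ given by the last two identities of Lemma~\ref{lem.eigenRelation} (with $\lambda_1 \leftrightarrow \lambda_2$, $\theta \to \xi$). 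This is the computational heart: one must match the coefficient $\lambda_2$ (from the hopping in $f,g$) against the new shift amplitude, and $\lambda_1$ (from $S_{\lambda_1}$) against the new coin — the rotation angle $\arctan t$ is calibrated exactly so that $f,g$ factor through $\begin{pmatrix} a & b \\ b & -a\end{pmatrix}$ into the UAMO-type normal form.

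**The main obstacle** I expect is Step 3: the algebra is unenlightening and error-prone, and one must be careful that (i) the change of variables is applied consistently to both the "position" index and the "internal" index, (ii) the $z^{-1}$ versus $z$ asymmetry in $W^\top$ (reflected in the recurrences \eqref{eq.solIter1} vs.\ \eqref{eq.transposeIter}) is handled correctly — indeed it is cleaner to verify the dual equation directly from Lemma~\ref{lem.eigenRelation} rather than via transfer matrices — and (iii) the "a.e.\ $\xi$" qualifier is justified, which amounts to noting that the formal Fourier series manipulations are legitimate for $\psi \in \ell^2$ so that $\check\psi^\pm \in L^2(\mathbb T)$ and all identities hold as $L^2$ (hence a.e.) identities. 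A secondary point: one should remark why $z \in \partial\mathbb D$ is needed (so that $\bar z = z^{-1}$, making the conjugation in passing to $W^\top$ compatible with taking complex conjugates of Fourier coefficients). I would present Steps 1--2 briskly and devote the bulk of the written proof to carefully laying out the coefficient-matching in Step 3, possibly isolating the key algebraic identity $f,g \mapsto$ (UAMO normal form) as a displayed computation.
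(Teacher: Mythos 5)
Your proposal is correct and follows essentially the same route as the paper: expand the eigenvalue equation componentwise via Lemma~\ref{lem.eigenRelation}, take the inverse Fourier transform (so position shifts become multiplication by $e^{\pm 2\pi i x}$ and the quasiperiodic modulation becomes a shift $x\mapsto x\pm\Phi$), and verify by direct coefficient-matching that the rotated vector $\varphi^\xi$ satisfies the dual eigenvalue equation for $W^\top_{\lambda_2,\lambda_1,\Phi,\xi}$, again read off from Lemma~\ref{lem.eigenRelation}. Your Step 3 is exactly the paper's calculation, and your remarks that it is cleaner to check the dual equation directly from Lemma~\ref{lem.eigenRelation} rather than through the transfer-matrix recurrence, and that the a.e.\ qualifier comes from $\check\psi^\pm\in L^2(\mathbb T)$, both match what the paper actually does. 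The only thing to pin down (which you flag yourself) is that ``$q_n^{21},q_n^{22}$ forced by $Q_n\in U(2)$'' is not quite enough: a $2\times 2$ unitary is determined by its first row only up to an overall phase in the second row, so you must record that the paper fixes this phase via the GECMV parametrization \eqref{GECMV}, giving $q_n^{21}=\overline{g(\theta+n\Phi)}$ and $q_n^{22}=\overline{f(\theta+n\Phi)}=f(\theta+n\Phi)$ exactly (here $f$ is real-valued), with no residual phase freedom.
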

 We call $W_{\lambda_{2},\lambda_{1},\Phi,\xi}^{\top}$ the {\it Aubry dual} of $W_{\lambda_{1},\lambda_{2},\Phi,\theta}$.
\begin{proof}
Denote the Fourier coefficients of $f,g$ by $f_{m},g_{m}$. 
For any $\psi\in \mathcal{H}$ satisfying $W_{\lambda_{1},\lambda_{2},\Phi,\theta}\psi=z\psi$, we have
  $$z\psi_{n}^{+}=\lambda_{1}(q_{n-1}^{11}\psi_{n-1}^{+}+q_{n-1}^{12}\psi_{n-1}^{-})-\lambda_{1}'(q_{n}^{21}\psi_{n}^{+}+q_{n}^{22}\psi_{n}^{-}),$$
  $$z\psi_{n}^{-}=\lambda_{1}(q_{n+1}^{21}\psi_{n+1}^{+}+q_{n+1}^{22}\psi_{n+1}^{-})+\lambda_{1}'(q_{n}^{11}\psi_{n}^{+}+q_{n}^{12}\psi_{n}^{-})$$
  by Lemma \ref{lem.eigenRelation}.
  Taking the inverse Fourier transformation of both sides, we have
  \begin{align*}
    z\check{\psi}^{+}(x)&=\sum_{n\in\mathbb{Z}}{e^{2\pi i(n+1)x}\lambda_{1}(f(\theta+n\Phi)\psi_{n}^{+}-g(\theta+n\Phi)\psi_{n}^{-})}
    \\&-\sum_{n\in\mathbb{Z}}{e^{2\pi inx}\lambda_{1}'(\overline{g}(\theta+n\Phi)\psi_{n}^{+}+\overline{f}(\theta+n\Phi)\psi_{n}^{-})}
    \\&=\sum_{n\in\mathbb{Z}}{e^{2\pi i(n+1)x}\lambda_{1}\sum_{m=-1}^{1}({f_{m}e^{2\pi im(\theta+n\Phi)}\psi_{n}^{+}-g_{m}e^{2\pi im(\theta+n\Phi)}\psi_{n}^{-}})}
    \\&-\sum_{n\in\mathbb{Z}}{e^{2\pi inx}\lambda_{1}'\sum_{m=-1}^{1}({\overline{g_{-m}}e^{2\pi im(\theta+n\Phi)}\psi_{n}^{+}+\overline{f_{-m}}e^{2\pi im(\theta+n\Phi)}\psi_{n}^{-}})}
    \\&=\sum_{m=-1}^{1}{\lambda_{1} e^{2\pi i(x+m\theta)}(f_{m}\check{\psi}^{+}(x+m\Phi)-g_{m}\check{\psi}^{-}(x+m\Phi))}
    \\&-\sum_{m=-1}^{1}{\lambda_{1}' e^{2\pi im\theta}(\overline{g_{-m}}\check{\psi}^{+}(x+m\Phi)+\overline{f_{-m}}\check{\psi}^{-}(x+m\Phi))}
    \\&=\sum_{m=-1}^{1}{(\lambda_{1} e^{2\pi i(x+m\theta)}f_{m}-\lambda_{1}'e^{2\pi im\theta}\overline{g_{-m}})\check{\psi}^{+}(x+m\Phi)}
    \\&-\sum_{m=-1}^{1}{(\lambda_{1} e^{2\pi i(x+m\theta)}g_{m}+\lambda_{1}'e^{2\pi im\theta}\overline{f_{-m}})\check{\psi}^{-}(x+m\Phi)}
    \\&=(\frac{t}{k}\lambda_{1}\lambda_{2}e^{2\pi i(x-\theta)}+\frac{t^{2}}{k}\lambda_{1}'\lambda_{2}e^{-2\pi i\theta})\check{\psi}^{+}(x-\Phi)+(\frac{t^{2}-1}{k}\lambda_{1}\lambda_{2}'e^{2\pi ix}-\frac{2t}{k}\lambda_{1}'\lambda_{2}')\check{\psi}^{+}(x)
    \\&+(\frac{t}{k}\lambda_{1}\lambda_{2}e^{2\pi i(x+\theta)}-\frac{1}{k}\lambda_{1}'\lambda_{2}e^{2\pi i\theta})\check{\psi}^{+}(x+\Phi)
    \\&-(\frac{1}{k}\lambda_{1}\lambda_{2}e^{2\pi i(x-\theta)}+\frac{t}{k}\lambda_{1}'\lambda_{2}e^{-2\pi i\theta})\check{\psi}^{-}(x-\Phi)-(\frac{2t}{k}\lambda_{1}\lambda_{2}'e^{2\pi ix}+\frac{t^{2}-1}{k}\lambda_{1}'\lambda_{2}')\check{\psi}^{-}(x)
    \\&+(\frac{t^{2}}{k}\lambda_{1}\lambda_{2}e^{2\pi i(x+\theta)}-\frac{t}{k}\lambda_{1}'\lambda_{2}e^{2\pi i\theta})\check{\psi}^{-}(x+\Phi)
  \end{align*}
  and
  \begin{align*}
    z\check{\psi}^{-}(x)&=\sum_{m=-1}^{1}{(\lambda_{1} e^{2\pi i(-x+m\theta)}\overline{g_{-m}}+\lambda_{1}'e^{2\pi im\theta}f_{m})\check{\psi}^{+}(x+m\Phi)}
    \\&+\sum_{m=-1}^{1}{(\lambda_{1} e^{2\pi i(-x+m\theta)}\overline{f_{-m}}-\lambda_{1}'e^{2\pi im\theta}g_{m})\check{\psi}^{-}(x+m\Phi)}
    \\&=(-\frac{t^{2}}{k}\lambda_{2}\lambda_{1}e^{2\pi i(-x-\theta)}+\frac{t}{k}\lambda_{1}'\lambda_{2}e^{-2\pi i\theta})\check{\psi}^{+}(x-\Phi)+(\frac{2t}{k}\lambda_{1}\lambda_{2}'e^{-2\pi ix}+\frac{t^{2}-1}{k}\lambda_{1}'\lambda_{2}')\check{\psi}^{+}(x)
    \\&+(\frac{1}{k}\lambda_{1}\lambda_{2}e^{2\pi i(-x+\theta)}+\frac{t}{k}\lambda_{1}'\lambda_{2}e^{2\pi i\theta})\check{\psi}^{+}(x+\Phi)
    \\&+(\frac{t}{k}\lambda_{1}\lambda_{2}e^{2\pi i(-x-\theta)}-\frac{1}{k}\lambda_{1}'\lambda_{2}e^{-2\pi i\theta})\check{\psi}^{-}(x-\Phi)+(\frac{t^{2}-1}{k}\lambda_{1}\lambda_{2}'e^{-2\pi ix}-\frac{2t}{k}\lambda_{1}'\lambda_{2}')\check{\psi}^{-}(x)
    \\&+(\frac{t}{k}\lambda_{1}\lambda_{2}e^{2\pi i(-x+\theta)}+\frac{t^{2}}{k}\lambda_{1}'\lambda_{2}e^{2\pi i\theta})\check{\psi}^{-}(x+\Phi).
  \end{align*}
 
  By the definition of $\varphi^{\xi}$, by multiplying both sides by $z$, we obtain
  $$z\begin{pmatrix}
  \varphi_{n}^{\xi,+} \\
  \varphi_{n}^{\xi,-}
  \end{pmatrix}=e^{2\pi in\theta}\frac{1}{\sqrt{1+t^2}}
  \begin{pmatrix}
  1 & t\\
  t & -1
  \end{pmatrix}
  \begin{pmatrix}
  z\check{\psi}^{+}(\xi+n\Phi) \\
  z\check{\psi}^{-}(\xi+n\Phi)
  \end{pmatrix}.$$
  On the other hand, as shown in Lemma \ref{lem.eigenRelation}, we have
  \begin{align*}
[W_{\lambda_{2},\lambda_{1},\Phi,\xi}^{\top}\varphi^{\xi}]_{n}^{+}&=q_{n}^{11}(\lambda_{2}\varphi_{n+1}^{\xi,+}+\lambda_{2}'\varphi_{n}^{\xi,-})+q_{n}^{21}(-\lambda_{2}'\varphi_{n}^{\xi,+}+\lambda_{2}\varphi_{n-1}^{\xi,-})
    \\&=e^{2\pi in\theta}[\frac{1}{k}(t\lambda_{1}e^{2\pi i(\xi+n\Phi)}+t\lambda_{1}e^{-2\pi i(\xi+n\Phi)}+(t^{2}-1)\lambda_{1}')(\lambda_{2}e^{2\pi i\theta}(a\check{\psi}^{+}(\xi+(n+1)\Phi)
    \\&+b\check{\psi}^{-}(\xi+(n+1)\Phi))+\lambda_{2}'(b\check{\psi}^{+}(\xi+n\Phi)-a\check{\psi}^{-}(\xi+n\Phi)))
    \\&+\frac{1}{k}(-t^{2}\lambda_{1}e^{-2\pi i(\xi+n\Phi)}+\lambda_{1}e^{2\pi i(\xi+n\Phi)}+2t\lambda_{1}')(-\lambda_{2}'(a\check{\psi}^{+}(\xi+n\Phi)+b\check{\psi}^{-}(\xi+n\Phi))
    \\&+\lambda_{2}e^{-2\pi i\theta}(b\check{\psi}^{+}(\xi+(n-1)\Phi)-a\check{\psi}^{-}(\xi+(n-1)\Phi)))]
  \end{align*}
  and
  \begin{align*}
    [W_{\lambda_{2},\lambda_{1},\Phi,\xi}^{\top}\varphi^{\xi}]_{n}^{-}&=q_{n}^{12}(\lambda_{2}\varphi_{n+1}^{\xi,+}+\lambda_{2}'\varphi_{n}^{\xi,-})+q_{n}^{22}(-\lambda_{2}'\varphi_{n}^{\xi,+}+\lambda_{2}\varphi_{n-1}^{\xi,-})
     \\&=e^{2\pi in\theta}[\frac{1}{k}(t^{2}\lambda_{1}e^{2\pi i(\xi+n\Phi)}-\lambda_{1}e^{-2\pi i(\xi+n\Phi)}-2t\lambda_{1}')(\lambda_{2}e^{2\pi i\theta}(a\check{\psi}^{+}(\xi+(n+1)\Phi)
    \\&+b\check{\psi}^{-}(\xi+(n+1)\Phi))+\lambda_{2}'(b\check{\psi}^{+}(\xi+n\Phi)-a\check{\psi}^{-}(\xi+n\Phi)))
    \\&+\frac{1}{k}(t\lambda_{1}e^{2\pi i(\xi+n\Phi)}+t\lambda_{1}e^{-2\pi i(\xi+n\Phi)}+(t^{2}-1)\lambda_{1}')(-\lambda_{2}'(a\check{\psi}^{+}(\xi+n\Phi)+b\check{\psi}^{-}(\xi+n\Phi))
    \\&+\lambda_{2}e^{-2\pi i\theta}(b\check{\psi}^{+}(\xi+(n-1)\Phi)-a\check{\psi}^{-}(\xi+(n-1)\Phi)))].
  \end{align*}
  For any fixed $n$, we now compare the coefficients of $\check{\psi}^{\pm}(\xi+n\Phi+m\Phi)(m=-1,0,1)$ in the expressions of $z\varphi_{n}^{\xi,\pm}$ and $[W_{\lambda_{2},\lambda_{1},\Phi,\xi}^{\top}\varphi^{\xi}]_{n}^{\pm}$ above. By direct calculation, we note that these coefficients are equal, hence $W_{\lambda_{2},\lambda_{1},\Phi,\xi}^{\top}\varphi^{\xi}=z\varphi^{\xi}$. 
\end{proof}

\section{Lyapunov exponents and parameter regions of the cocycle}
In this section, we study the dynamics of the generalized eigenvalue equation $W_{\lambda_1,\lambda_2,\Phi,\theta} \psi=z\psi$. We will briefly describe the iterative relation $\psi$ obeys following the lines of \cite{CFLOZ24}. Such relation can be conjugated to a two-step combined parametrized Szeg\H{o} cocycles which are frequently used in the study of CMV matrices.  We will calculate the Lyapunov exponents of the Szeg\H{o} cocycles in different parameter regions. With the explicit expressions of the Lyapunov exponents, we are able divide the parameter region for $(\lambda_1,\lambda_2)$ into different types according to Avila's global theory.

\subsection{Szeg\H{o} cocycles maps} Let $\alpha_{2n-1}=g(\theta+n\Phi),\rho_{2n-1}=\overline{f(\theta+n\Phi)}=f(\theta+n\Phi)$ and $\alpha_{2n}=\lambda_1',\rho_{2n}=\lambda_1$ for $n\in\bbZ$. Let $W_{\lambda_1,\lambda_2,\Phi,\theta}$ be the associated quantum walk operator. The transfer matrix $A_{n,z}$ in \eqref{eq:GECMV_transmat} can be realized as $A_{n,z}=A_z(\theta+n\Phi)$ where 
\begin{equation}\label{eq.cocycleMap}
A_{z}(x)=
\begin{pmatrix}
 A_{z}(x)_{11}  & A_{z}(x)_{12}\\[2mm]
 A_{z}(x)_{21}  & A_{z}(x)_{22}
\end{pmatrix},
\end{equation}
where the entries of $A_{z}(x)$ take the following explicit form:
\begin{align*}
A_{z}(x)_{11}
&= \frac{1}{f(x)}
\Bigl[
\lambda_{1}^{-1}z^{-1}
+\frac{1}{k}\lambda_{1}'\lambda_{1}^{-1}\bigl((1-t^{2})\lambda_{2}(e^{2\pi ix}+e^{-2\pi ix})+4t\lambda_{2}'\bigr)
+z\,\lambda_{1}'^{2}\lambda_{1}^{-1}
\Bigr], \\[1mm]
A_{z}(x)_{12}
&= \frac{1}{f(x)}
\Bigl[
-\tfrac{1}{k}\bigl(-t^{2}\lambda_{2}e^{2\pi ix}+\lambda_{2}e^{-2\pi ix}+2t\lambda_{2}'\bigr)
-\lambda_{1}' z
\Bigr], \\[1mm]
A_{z}(x)_{21}
&= \frac{1}{f(x)}
\Bigl[
-\tfrac{1}{k}\bigl(-t^{2}\lambda_{2}e^{-2\pi ix}+\lambda_{2}e^{2\pi ix}+2t\lambda_{2}'\bigr)
-\lambda_{1}' z
\Bigr], \\[1mm]
A_{z}(x)_{22}
&= \frac{\lambda_{1} z}{f(x)}.
\end{align*}
Recall that $f(x)$ takes the form   
$f(x)=\frac{1}{k}\bigl(2t\lambda_{2}\cos(2\pi x)+(t^{2}-1)\lambda_{2}'\bigr).$
Consequently, by \eqref{eq.ConjugatedTransferMatrix} and \eqref{eq:szego_normalized},  the two-step combined Szeg\H{o} cocycle $(\Phi,S_{z}^{+}$, where
$S_{z}^{+}(x)=\frac{z^{-1}}{\lambda_{1}\left|f(x)\right|}D_{z}(x)$ and $D_z$ is given by the following 
$$D_{z}(x):=
\begin{pmatrix}
   z^{2}-\lambda_{1}'z\frac{1}{k}(t^{2}\lambda_{2}e^{2\pi ix}-\lambda_{2}e^{-2\pi ix}-2t\lambda_{2}') & \frac{1}{k}(t^{2}\lambda_{2}e^{-2\pi ix}-\lambda_{2}e^{2\pi ix}-2t\lambda_{2}')z-\lambda_{1}' \\
   -\lambda_{1}'z^{2}+\frac{1}{k}(t^{2}\lambda_{2}e^{2\pi ix}-\lambda_{2}e^{-2\pi ix}-2t\lambda_{2}')z & -\lambda_{1}'z\frac{1}{k}(t^{2}\lambda_{2}e^{-2\pi ix}-\lambda_{2}e^{2\pi ix}-2t\lambda_{2}')+1
\end{pmatrix}$$
In what follows, we compute the Lyapunov exponents of the  cocycle $(\Phi,S_{z}^{+})$.  
For clarity, we divide the analysis into two distinct regimes:
\begin{enumerate}
    \item \emph{Non-singular case:} $\lambda_{2}\in\bigl(0,\tfrac{|1-t^{2}|}{\,1+t^{2}\,}\bigr)$.  
    In this regime, $f(x)\neq 0$ for all $x\in\mathbb{T}$.
    \item \emph{Singular case:} $\lambda_{2}\in\bigl[\tfrac{|1-t^{2}|}{\,1+t^{2}\,},1\bigr)$.  
    Here, the function $f$ admits zeros.
\end{enumerate}

\subsection{Nonsingular case}
\label{sec5.1}
In the case $\lambda_{2}\in(0,\frac{|1-t^{2}|}{1+t^{2}})$, 
 the Szeg\H{o} cocycle is  analytic with a holomorphic extension to a strip $\mathbb{T}_{\epsilon_{0}}=\{x+i\epsilon:x\in\mathbb{T},\left|\epsilon\right|<\epsilon_{0}\}$ for some $\epsilon_{0}>0$.  And for $\left|\epsilon\right|<\epsilon_{0}$, 
\begin{align*}
 L(\Phi,S_{z}^{+},\epsilon)=\lim_{n\to\infty}{\frac{1}{n}\int_{\mathbb{T}}{\ln{\left\|S_{z}^{+,n}(x+i\epsilon)\right\|}}dx}
 =L(\Phi,D_{z},\epsilon)-\ln{\lambda_{1}}-\int_{\mathbb{T}}{\ln{\left|f(x+i\epsilon)\right|}dx},
\end{align*}
where the last equality follows from the ergodic theorem applied to $\ln{\left|f(x+i\epsilon)\right|}\in L^{1}(\mathbb{T})$. 

We first calculate the integral $\int_{\mathbb{T}}{\ln{\left|f(x+i\epsilon)\right|}dx}$.
\begin{lemma}\label{lem5.3.1}
We have
    $$\int_{\mathbb{T}}{\ln{\left|f(x+i\epsilon)\right|}dx}=
     \ln{\frac{1}{2}(\left|t^{2}-1\right|\lambda_{2}'+\sqrt{(t^{2}+1)^{2}\lambda_{2}'^{2}-4t^{2}})}-\ln{(-k)}, \quad \left|\epsilon\right|<\epsilon_{0}$$
where $\epsilon_{0}=\frac{1}{2\pi}\ln{\frac{\left|t^{2}-1\right|\lambda_{2}'+\sqrt{(t^{2}+1)^{2}\lambda_{2}'^{2}-4t^{2}}}{2t\lambda_{2}}}\in(0,\infty]$ for $\lambda_{2}\in(0,\frac{|1-t^{2}|}{1+t^{2}})$.
\end{lemma}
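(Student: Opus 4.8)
The plan is to reduce the computation to the classical mean value identity
\[
\frac{1}{2\pi}\int_{0}^{2\pi}\ln\bigl|Re^{i\theta}-a\bigr|\,d\theta=\ln\max\{R,|a|\},
\]
a direct consequence of Jensen's formula, applied to the two linear factors of $f$ viewed as a rational function of $w=e^{2\pi i(x+i\epsilon)}$. First I would substitute $w=e^{2\pi i(x+i\epsilon)}$, so that as $x$ ranges over $\mathbb{T}$ the variable $w$ traces the circle $|w|=r:=e^{-2\pi\epsilon}$, and write
\[
kf(x+i\epsilon)=t\lambda_{2}\bigl(w+w^{-1}\bigr)+(t^{2}-1)\lambda_{2}'=\frac{t\lambda_{2}}{w}\bigl(w^{2}+Bw+1\bigr),\qquad B:=\frac{(t^{2}-1)\lambda_{2}'}{t\lambda_{2}},
\]
factoring $w^{2}+Bw+1=(w-w_{1})(w-w_{2})$ with $w_{1}w_{2}=1$ (the case $t=0$ is trivial, as then $f\equiv\lambda_{2}'$ and both sides reduce to $\ln\lambda_{2}'$, with $\epsilon_{0}=+\infty$; one may also assume $t>0$ after the harmless shift $x\mapsto x+\tfrac12$).

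The structural input that makes the regime $\lambda_{2}\in(0,\tfrac{|1-t^{2}|}{1+t^{2}})$ relevant is the inequality $|B|>2$. Indeed $\lambda_{2}<\tfrac{|1-t^{2}|}{1+t^{2}}$ forces $\lambda_{2}'^{2}=1-\lambda_{2}^{2}>\tfrac{4t^{2}}{(1+t^{2})^{2}}$, hence $|t^{2}-1|\lambda_{2}'>2|t|\lambda_{2}$, i.e. $|B|>2$; equivalently, $w^{2}+Bw+1$ has two real roots of moduli $|w_{1}|<1<|w_{2}|$ with $|w_{2}|=\tfrac12\bigl(|B|+\sqrt{B^{2}-4}\bigr)$, which is precisely the analytic reflection of ``$f$ has no zeros on $\mathbb{T}$.'' Taking $\ln|\cdot|$ of the factorization and integrating in $x$ (using $\int_{\mathbb{T}}\ln|w|\,dx=\ln r$ and the mean value identity for the two linear factors) gives
\[
\int_{\mathbb{T}}\ln\bigl|kf(x+i\epsilon)\bigr|\,dx=\ln(|t|\lambda_{2})-\ln r+\ln\max\{r,|w_{1}|\}+\ln\max\{r,|w_{2}|\}.
\]
For $|\epsilon|<\epsilon_{0}:=\tfrac{1}{2\pi}\ln|w_{2}|$ we have $|w_{1}|=|w_{2}|^{-1}<r<|w_{2}|$, so the right-hand side collapses to $\ln\bigl(|t|\lambda_{2}|w_{2}|\bigr)$.

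To finish I would simplify $|t|\lambda_{2}|w_{2}|$ using $\lambda_{2}^{2}+\lambda_{2}'^{2}=1$: since $|t|\lambda_{2}|B|=|t^{2}-1|\lambda_{2}'$ and $t^{2}\lambda_{2}^{2}(B^{2}-4)=(t^{2}-1)^{2}\lambda_{2}'^{2}-4t^{2}\lambda_{2}^{2}=(t^{2}+1)^{2}\lambda_{2}'^{2}-4t^{2}$, one gets
\[
|t|\lambda_{2}|w_{2}|=\tfrac12\Bigl(|t^{2}-1|\lambda_{2}'+\sqrt{(t^{2}+1)^{2}\lambda_{2}'^{2}-4t^{2}}\Bigr),
\]
and dividing by $|k|=1+t^{2}=-k$ yields the stated value of the integral; the same identity rewrites $\epsilon_{0}=\tfrac{1}{2\pi}\ln|w_{2}|$ in the claimed closed form, positive because $|w_{2}|>1$ and equal to $+\infty$ exactly when $t=0$. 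I do not expect a genuine obstacle here: the only points needing care are (i) deriving $|B|>2$ from the regime hypothesis, i.e. that $r=1$ lies strictly between the two root moduli; (ii) the sign bookkeeping in $B$ and in $|k|=-k$; and (iii) observing that for $|\epsilon|<\epsilon_{0}$ the function $f(\cdot+i\epsilon)$ is continuous and nonvanishing on $\mathbb{T}$, so $\ln|f(\cdot+i\epsilon)|\in L^{1}(\mathbb{T})$ and both the ergodic-theorem step quoted before the lemma and the mean value identity are legitimate.
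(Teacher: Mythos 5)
Your proof is correct and follows essentially the same route as the paper: factor $kf$ as a quadratic in $e^{2\pi ix}$ and apply Jensen's formula (in your case via the equivalent mean value identity $\tfrac{1}{2\pi}\int_0^{2\pi}\ln|Re^{i\theta}-a|\,d\theta=\ln\max\{R,|a|\}$). The only organizational difference is that you absorb $e^{2\pi\epsilon}$ into the integration variable so the roots $w_{1,2}$ become $\epsilon$-independent with $w_1w_2=1$, whereas the paper keeps $\omega$ on the unit circle and lets the roots carry the factor $e^{2\pi\epsilon}$; you also spell out the inequality $|B|>2$ that the paper leaves implicit, which is the correct justification for exactly one root lying inside the relevant circle.
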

\begin{proof}
Recall that
\[
f(x+i\epsilon)
    = \omega^{-1}\frac{1}{k}\bigl(t\lambda_{2}e^{-2\pi \epsilon}\omega^{2}
        + t\lambda_{2}e^{2\pi\epsilon}
        + (t^{2}-1)\lambda_{2}'\omega\bigr)
    := z^{-1} f_{1}(\omega),
\]
where $\omega=e^{2\pi i x}$.  Consequently,
$\int_{\mathbb{T}} \ln |f(x+i\epsilon)|\, dx
    = \int_{\mathbb{T}} \ln |f_{1}(\omega)|\, dx .$
The function $f_{1}(\omega)$ is holomorphic on $\mathbb{C}$ and has two zeros, denoted $\omega_{1}$ and $\omega_{2}$, given  by
\[
\omega_{1,2}
    = \frac{(1-t^{2})\lambda_{2}' \pm \sqrt{(1-t^{2})^{2}\lambda_{2}'^{2}
        -4t^{2}\lambda_{2}^{2}}}{2t\lambda_{2}}
        \, e^{2\pi\epsilon}.
\]
Hence the integral can be computed via Jensen's formula.  
Depending on the value of $\epsilon$, a different number of zeros lie in the open unit disk $\mathbb{D}$. 

Now if  $|\epsilon|\le \epsilon_{0}$, exactly one of $\omega_{1},\omega_{2}$ lies in $\mathbb{D}$.  
Thus Jensen's formula gives
\[
\int_{\mathbb{T}} \ln |f_{1}(\omega)|\, dx
    = \ln |f_{1}(0)| - \ln |\omega_{1,2}|
    = \ln\!\left[\tfrac{1}{2}\bigl(|t^{2}-1|\lambda_{2}'
        + \sqrt{(t^{2}+1)^{2}\lambda_{2}'^{2}-4t^{2}}\bigr)\right]
      - \ln(-k).
\]

\end{proof}

We then compute the Lyapunov exponent $L(\Phi,D_{z},\epsilon)$ when $\epsilon\to\pm\infty$.
\begin{lemma}\label{lem5.5.1}
    For $t\neq0$ and any $z\in\partial\mathbb{D}$, when $\left|\epsilon\right|$ is large enough, we have
    $$L(\Phi,D_{z},\epsilon)=
    \begin{cases}
 \ln{(\left|t\right|\lambda_{1}\lambda_{2})}-\ln{(-k)}+2\pi\left|\epsilon\right|, \quad \lambda_{1}\in[\frac{|1-t^{2}|}{1+t^{2}},1),\\
 \ln{[\frac{\lambda_{2}}{2}(\left|1-t^{2}\right|\lambda_{1}'+\sqrt{(1+t^{2})^{2}\lambda_{1}'^{2}-4t^{2}})]}-\ln{(-k)}+2\pi\left|\epsilon\right|, \quad \lambda_{1}\in(0,\frac{|1-t^{2}|}{1+t^{2}}).
\end{cases}$$
\end{lemma}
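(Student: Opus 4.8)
The plan is to read off the leading behaviour of the cocycle map $D_{z}$ along the line $\Im x=\epsilon$ as $|\epsilon|\to\infty$, reduce it to a constant matrix, and then turn the resulting asymptotics into the claimed exact identity for large $|\epsilon|$ by combining convexity of the Lyapunov exponent with Avila's quantization of the acceleration.

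First I would substitute $e^{\pm2\pi i(x+i\epsilon)}=e^{\pm2\pi ix}e^{\mp2\pi\epsilon}$ into the explicit entries of $D_{z}(x)$ and, for $\epsilon\to+\infty$, retain the terms carrying $e^{-2\pi ix}e^{2\pi\epsilon}$; this yields
\[
D_{z}(x+i\epsilon)=\frac{\lambda_{2}}{k}\,z\,e^{-2\pi ix}e^{2\pi\epsilon}\bigl(B_{+}+E(x,\epsilon)\bigr),\qquad
B_{+}=\begin{pmatrix}\lambda_{1}' & t^{2}\\ -1 & -\lambda_{1}'t^{2}\end{pmatrix},
\]
with $\sup_{x\in\mathbb{T}}\|E(x,\epsilon)\|=O(e^{-2\pi\epsilon})$; the case $\epsilon\to-\infty$ is symmetric and gives a matrix $B_{-}$ with the same trace $\lambda_{1}'(1-t^{2})$ and determinant $t^{2}\lambda_{1}^{2}$. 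Since $|z|=1$ and $-k=1+t^{2}>0$, pulling out the scalar gives $L(\Phi,D_{z},\epsilon)=\ln\frac{\lambda_{2}}{-k}+2\pi|\epsilon|+L(\Phi,B_{\pm}+E(\cdot,\epsilon))$. I would then diagonalize $B_{\pm}$: its characteristic polynomial is $\mu^{2}-\lambda_{1}'(1-t^{2})\mu+t^{2}\lambda_{1}^{2}$, and using $\lambda_{1}^{2}=1-\lambda_{1}'^{2}$ the discriminant equals $\lambda_{1}'^{2}(1+t^{2})^{2}-4t^{2}$, which is $\le0$ exactly when $\lambda_{1}\ge\frac{|1-t^{2}|}{1+t^{2}}$. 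In that regime the eigenvalues are complex conjugate of modulus $\sqrt{t^{2}\lambda_{1}^{2}}=|t|\lambda_{1}$, so the spectral radius is $\rho=|t|\lambda_{1}$; in the complementary regime they are real of a common sign (their product is positive) and $\rho=\tfrac12\bigl(|1-t^{2}|\lambda_{1}'+\sqrt{(1+t^{2})^{2}\lambda_{1}'^{2}-4t^{2}}\bigr)$. Substituting $\rho$ and $-k=1+t^{2}$ into the identity above reproduces precisely the two cases in the statement, \emph{provided} one knows $L(\Phi,B_{\pm}+E(\cdot,\epsilon))=\ln\rho$ for $|\epsilon|$ large.

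The one genuinely delicate point — and the main obstacle — is exactly this last equality: continuity of the Lyapunov exponent (Theorem~\ref{lem6.2}) only gives $L(\Phi,B_{\pm}+E(\cdot,\epsilon))\to L(\Phi,B_{\pm})=\ln\rho$ as $|\epsilon|\to\infty$, whereas we need equality for all large $|\epsilon|$. To upgrade the limit I would pass to the $\SL(2,\mathbb{C})$-valued cocycle $\widehat{D}_{z}:=\frac{1}{\lambda_{1}zf}D_{z}$ (here $\det D_{z}(x)=\lambda_{1}^{2}z^{2}f(x)^{2}$; this is well defined on any horizontal strip avoiding the lines $\Im x=\pm\epsilon_{0}$, since in the nonsingular regime $f(\cdot+i\epsilon)$ has no zeros on $\mathbb{T}$ unless $|\epsilon|=\epsilon_{0}$). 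A direct computation using $|z|=1$ gives $L(\Phi,D_{z},\epsilon)=\ln\lambda_{1}+\int_{\mathbb{T}}\ln|f(x+i\epsilon)|\,dx+L(\Phi,\widehat{D}_{z},\epsilon)$. From the leading-order expansion, $\widehat{D}_{z}(\cdot+i\epsilon)\to\frac{1}{\lambda_{1}t}B_{\pm}$ uniformly on $\mathbb{T}$, so $L(\Phi,\widehat{D}_{z},\epsilon)\to\ln\rho-\ln(\lambda_{1}|t|)$; but on each of the half-lines $(\epsilon_{0},\infty)$ and $(-\infty,-\epsilon_{0})$ the function $\epsilon\mapsto L(\Phi,\widehat{D}_{z},\epsilon)$ is convex (Hadamard three-lines plus subharmonicity) and, by Theorem~\ref{lem5.1} applied to the $\SL(2,\mathbb{C})$ cocycle $\widehat{D}_{z}$, has right derivative in $2\pi\mathbb{Z}$ at every point. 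A convex function with slopes in $2\pi\mathbb{Z}$ that converges at $\pm\infty$ is eventually constant, hence $L(\Phi,\widehat{D}_{z},\epsilon)=\ln\rho-\ln(\lambda_{1}|t|)$ for $|\epsilon|$ large. Finally, a Jensen's formula computation in the style of Lemma~\ref{lem5.3.1} — writing $e^{2\pi ix}f(x+i\epsilon)$ as a quadratic in $e^{2\pi ix}$ whose roots $\omega^{(0)}_{1,2}e^{2\pi\epsilon}$ satisfy $\omega^{(0)}_{1}\omega^{(0)}_{2}=1$ and are both of modulus $>1$ once $|\epsilon|>\epsilon_{0}$ — gives $\int_{\mathbb{T}}\ln|f(x+i\epsilon)|\,dx=\ln\frac{|t|\lambda_{2}}{-k}+2\pi|\epsilon|$. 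Adding the three contributions yields $L(\Phi,D_{z},\epsilon)=\ln\rho+\ln\frac{\lambda_{2}}{-k}+2\pi|\epsilon|$ for $|\epsilon|$ large, which is the asserted formula once $\rho$ is distinguished by the discriminant as above.
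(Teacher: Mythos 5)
Your proposal is correct and follows the same overall blueprint as the paper: identify the leading matrix $B_{\pm}$ by taking $\lim_{\epsilon\to\pm\infty}e^{\mp 2\pi\epsilon}D_z(\cdot+i\epsilon)$, use continuity of the Lyapunov exponent (Theorem~\ref{lem6.2}) to get $L(\Phi,D_z,\epsilon)-2\pi|\epsilon|\to\ln\rho+\ln\frac{\lambda_2}{-k}$, and then use convexity together with Avila's quantized acceleration (Theorem~\ref{lem5.1}) to upgrade the limit to exact equality for $|\epsilon|$ large. The one genuine variation is the normalization step: the paper applies Theorem~\ref{lem5.1} directly to the $\mathbb{M}(2,\mathbb{C})$-valued $D_z$, for which the acceleration lies in $\tfrac12\mathbb{Z}$, and concludes immediately; you instead pass to $\widehat{D}_z=\frac{1}{\lambda_1 zf}D_z\in\SL(2,\mathbb{C})$ to get a $\mathbb{Z}$-valued acceleration, at the cost of a separate Jensen's-formula computation of $\int_{\mathbb{T}}\ln|f(x+i\epsilon)|\,dx$ on $|\epsilon|>\epsilon_0$. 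Both versions are valid: your route has the small aesthetic advantage that the ``convex function converging at $\pm\infty$ with quantized slopes is eventually constant'' argument is made fully explicit, which the paper leaves implicit.

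One small slip worth fixing: you claim that the two roots $\omega_{1,2}^{(0)}e^{2\pi\epsilon}$ of the quadratic $e^{2\pi ix}\mapsto e^{2\pi ix}f(x+i\epsilon)$ are \emph{both of modulus $>1$} once $|\epsilon|>\epsilon_0$. That is true only for $\epsilon>\epsilon_0$; for $\epsilon<-\epsilon_0$ both roots lie \emph{inside} $\mathbb{D}$ (since $|\omega_1^{(0)}\omega_2^{(0)}|=1$ and each modulus scales by $e^{2\pi\epsilon}$). Jensen's formula then contributes $-\ln|\omega_1|-\ln|\omega_2|=-4\pi\epsilon$ in that case, and after combining with $\ln|f_1(0)|=\ln\tfrac{|t|\lambda_2}{-k}+2\pi\epsilon$ one still lands on $\ln\tfrac{|t|\lambda_2}{-k}+2\pi|\epsilon|$, so the final formula you report is unaffected. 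It is just the intermediate assertion about the location of the roots that needs the two-sided correction.
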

\begin{proof}
    We first compute
    \begin{align*}
        \lim_{\epsilon\to+\infty}{e^{-2\pi\epsilon}D_{z}(x+i\epsilon)}&=
    \begin{pmatrix}
   \lambda_{1}'z\frac{1}{k}\lambda_{2}e^{-2\pi ix} & \frac{1}{k}t^{2}\lambda_{2}e^{-2\pi ix}z\\
   -\frac{1}{k}\lambda_{2}e^{-2\pi ix}z & -\lambda_{1}'z\frac{1}{k}t^{2}\lambda_{2}e^{-2\pi ix}
\end{pmatrix}\\
&=\frac{1}{k}ze^{-2\pi ix}
\begin{pmatrix}
     \lambda_{1}'\lambda_{2} & t^{2}\lambda_{2}\\
   -\lambda_{2} & -\lambda_{1}'t^{2}\lambda_{2}
\end{pmatrix}:=\frac{1}{k}ze^{-2\pi ix}A.
    \end{align*}
And the constant matrix $A$ has eigenvalues
$$e_{1},e_{2}=
 \begin{cases}
  \frac{\lambda_{2}}{2}((1-t^{2})\lambda_{1}'\pm\sqrt{4t^{2}-(1+t^{2})^{2}\lambda_{1}'^{2}}i), \lambda_{1}\in[\frac{|1-t^{2}|}{1+t^{2}},1),\\
 \frac{\lambda_{2}}{2}((1-t^{2})\lambda_{1}'\pm\sqrt{(1+t^{2})^{2}\lambda_{1}'^{2}-4t^{2}}), \lambda_{1}\in(0,\frac{|1-t^{2}|}{1+t^{2}}).
\end{cases}$$
Then by Theorem \ref{lem6.2}, 
\begin{align*}
    \lim_{\epsilon\to+\infty}{L(\Phi,e^{-2\pi\epsilon}D_{z}(x+i\epsilon))}&=L(\Phi,A)-\ln{(-k)}=\ln{\max\{\left|e_{1}\right|},\left|e_{2}\right|\}-\ln{(-k)}\\
    &=
    \begin{cases}
 \ln{(\left|t\right|\lambda_{1}\lambda_{2})}-\ln{(-k)}, \lambda_{1}\in[\frac{|1-t^{2}|}{1+t^{2}},1),\\
 \ln{[\frac{\lambda_{2}}{2}(\left|1-t^{2}\right|\lambda_{1}'+\sqrt{(1+t^{2})^{2}\lambda_{1}'^{2}-4t^{2}})]}-\ln{(-k)}, \lambda_{1}\in(0,\frac{|1-t^{2}|}{1+t^{2}}).
\end{cases}
\end{align*} 
By Theorem \ref{lem5.1}, for $\epsilon>0$ large enough, 
 $$L(\Phi,D_{z},\epsilon)=
     \begin{cases}
 \ln{(\left|t\right|\lambda_{1}\lambda_{2})}-\ln{(-k)}+2\pi\epsilon,\lambda_{1}\in[\frac{|1-t^{2}|}{1+t^{2}},1);\\
 \ln{[\frac{\lambda_{2}}{2}(\left|1-t^{2}\right|\lambda_{1}'+\sqrt{(1+t^{2})^{2}\lambda_{1}'^{2}-4t^{2}})]}-\ln{(-k)}+2\pi\epsilon,\lambda_{1}\in(0,\frac{|1-t^{2}|}{1+t^{2}}).
\end{cases}$$
The case $\epsilon<0$ can be dealt with by considering  
$ \lim_{\epsilon\to-\infty}{e^{2\pi\epsilon}D_{z}(x+i\epsilon)}$ in the same way.
\end{proof}

Now we are ready to give the Lyapunov exponent of the cocycle $(\Phi,S_{z}^{+})$.
\begin{theorem}\label{thm6.4}
    When $\lambda_{2}\in(0,\frac{|1-t^{2}|}{1+t^{2}})$, for $z\in\Sigma$, we have
$$ L(\Phi,S_{z}^{+})=\max\{F(\lambda_{1},\lambda_{2}),0\},$$
    where
    $$F(\lambda_{1},\lambda_{2})=\ln{\frac{\lambda_{2}(\left|1-t^{2}\right|\lambda_{1}'+\sqrt{(1+t^{2})^{2}\lambda_{1}'^{2}-4t^{2}})}{\lambda_{1}(\left|1-t^{2}\right|\lambda_{2}'+\sqrt{(1+t^{2})^{2}\lambda_{2}'^{2}-4t^{2}})}}.$$
    In particular, the Szeg\H{o} cocycle $(\Phi,S_{z}^{+}(x))$ is
\begin{enumerate}
    \item supercritical, if $\lambda_{1}<\lambda_{2}$;
    \item critical, if $\lambda_{1}=\lambda_{2}$;
    \item  subcritical, if $\lambda_{1}>\lambda_{2}$.
\end{enumerate}
\end{theorem}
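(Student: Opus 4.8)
The plan is to evaluate $L(\Phi,S_z^+)=L(\Phi,S_z^+,0)$ through Avila's global theory, separating the analytic factor $D_z$ from the scalar $|f|$ via the identity $L(\Phi,S_z^+,\epsilon)=L(\Phi,D_z,\epsilon)-\ln\lambda_1-\int_{\mathbb T}\ln|f(x+i\epsilon)|\,dx$. On the strip $\mathbb T_{\epsilon_0}$ (the nonsingular case) the right-hand side equals the Lyapunov exponent of the genuinely analytic cocycle $\widehat D_z:=D_z/(\lambda_1 zf)$, which lies in $\pm\mathrm{SU}(1,1)$ since $f$ is real and nowhere vanishing; hence on $(-\epsilon_0,\epsilon_0)$ the map $\epsilon\mapsto L(\Phi,S_z^+,\epsilon)$ is convex, nonnegative, has accelerations in $\mathbb Z$ by Theorem \ref{lem5.1}, and—by the reflection $x\mapsto-x$ together with complex conjugation—is even, so it is minimized at $\epsilon=0$, where it agrees with the honest Lyapunov exponent and is $\ge 0$.

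First I would compute the large-$|\epsilon|$ behavior. Continuing the Jensen-formula argument of Lemma \ref{lem5.3.1} past $\epsilon_0$ (where both zeros of the associated quadratic have left $\mathbb D$) gives $\int_{\mathbb T}\ln|f(x+i\epsilon)|\,dx=\ln\frac{|t|\lambda_2}{-k}+2\pi|\epsilon|$ for $|\epsilon|>\epsilon_0$; combined with Lemma \ref{lem5.5.1} this yields, for $|\epsilon|$ large, $L(\Phi,S_z^+,\epsilon)\equiv C_1(\lambda_1):=\ln\frac{|1-t^2|\sqrt{1-\lambda_1^2}+\sqrt{(1+t^2)^2(1-\lambda_1^2)-4t^2}}{2\lambda_1|t|}$ when $\lambda_1\in(0,\tfrac{|1-t^2|}{1+t^2})$, and $\equiv 0$ when $\lambda_1\in[\tfrac{|1-t^2|}{1+t^2},1)$. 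Two elementary identities organize everything: $F(\lambda_1,\lambda_2)=C_1(\lambda_1)-C_1(\lambda_2)$, and $C_1(\lambda_2)=2\pi\epsilon_0$ (a direct comparison with the formula for $\epsilon_0$ in Lemma \ref{lem5.3.1}); moreover $\lambda\mapsto C_1(\lambda)$ is strictly decreasing on $(0,\tfrac{|1-t^2|}{1+t^2})$ (and $\equiv 0$ beyond), so $\operatorname{sign}F(\lambda_1,\lambda_2)=\operatorname{sign}(\lambda_2-\lambda_1)$ throughout the nonsingular range.

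To pin down $L(\Phi,S_z^+,0)$ I would combine the non-uniform-hyperbolicity of $(\Phi,S_z^+)$ for $z\in\Sigma$ with an acceleration count. On $[0,\epsilon_0)$ the integral term is constant, so the acceleration of $L(\Phi,S_z^+,\cdot)$ there equals that of $L(\Phi,D_z,\cdot)$, which by quantization (Theorem \ref{lem5.1}), convexity, and the values $\pm1$ at $\pm\infty$ is $0$ or $1$ near $\epsilon=0$. If it is $0$, then $L(\Phi,S_z^+,\cdot)$ is affine near $0$ and, since $(\Phi,S_z^+)\cong(\Phi,\widehat D_z)$ is not uniformly hyperbolic (Theorem \ref{lem.UHGlobal}), $L(\Phi,S_z^+,0)=0$; if it is $1$, then propagating the affine slope $2\pi$ of $L(\Phi,D_z,\cdot)$ out to $\epsilon_0$ and matching continuously at $\epsilon_0$ to the constant $C_1(\lambda_1)$ forces $L(\Phi,S_z^+,0)=C_1(\lambda_1)-2\pi\epsilon_0=F(\lambda_1,\lambda_2)$. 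The remaining link is "(acceleration $=1$ near $0$) $\iff$ $F>0$", for which I would invoke the Aubry duality of Lemma \ref{3.1}: it identifies $(\Phi,S_z^+;\lambda_1,\lambda_2)$ with $(\Phi,S_z^+;\lambda_2,\lambda_1)$ up to a constant conjugation and a scalar factor of total logarithmic average $F(\lambda_1,\lambda_2)$, so $L(\lambda_1,\lambda_2)-L(\lambda_2,\lambda_1)=F(\lambda_1,\lambda_2)$; since duality interchanges the subcritical and supercritical regimes the two dual cocycles cannot both have positive exponent, hence $\min\{L(\lambda_1,\lambda_2),L(\lambda_2,\lambda_1)\}=0$, and with $\operatorname{sign}F=\operatorname{sign}(\lambda_2-\lambda_1)$ this gives $L(\Phi,S_z^+)=\max\{F(\lambda_1,\lambda_2),0\}$. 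The trichotomy then reads off: $\lambda_1<\lambda_2$ gives $L>0$, hence supercriticality as $z\in\Sigma$; $\lambda_1=\lambda_2$ gives $L=0$ with a genuine corner at $\epsilon=0$, hence criticality; $\lambda_1>\lambda_2$ gives $L\equiv0$ on a neighborhood of $0$, hence subcriticality.

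The main obstacle I anticipate is precisely this last dichotomy, i.e.\ controlling the acceleration of $L(\Phi,D_z,\cdot)$ on $[0,\epsilon_0]$: one must understand how the zeros of $f$ entering the strip at $\pm\epsilon_0$ interact with the two singular values of $D_z^n$, so that $L(\Phi,D_z,\cdot)$ is affine of slope $2\pi$ on all of $[0,\infty)$ exactly when $F\ge0$ (forcing $L(\Phi,S_z^+,0)=F$) and otherwise develops a flat bottom at $\epsilon=0$ (forcing $L(\Phi,S_z^+,0)=0$). Making this airtight requires either a direct analysis of the complexified cocycle at moderate $\epsilon$, or—the route I would favour—a careful verification that Lemma \ref{3.1} transfers to the \emph{normalized} Szeg\H{o} cocycles with exactly the scalar factor $F(\lambda_1,\lambda_2)$ and genuinely exchanges sub- and supercriticality in this CMV setting, e.g.\ through the almost-reducibility characterization of the subcritical regime.
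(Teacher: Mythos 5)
Your overall framework — factoring $S_z^+$ into the entire cocycle $D_z$ and a scalar with constant logarithmic average on $\mathbb T_{\epsilon_0}$, invoking quantization of the acceleration, computing the slopes at $\pm\infty$ via Lemma \ref{lem5.5.1}, and reducing to a dichotomy $\omega(\Phi,D_z,0)\in\{0,1\}$ — is exactly right and matches the paper. Your auxiliary identities ($F=C_1(\lambda_1)-C_1(\lambda_2)$, $C_1(\lambda_2)=2\pi\epsilon_0$, $C_1$ strictly decreasing and vanishing past the threshold) are all correct, and the branch analysis ``$\omega=0\Rightarrow L=0$ (non-UH), $\omega=1\Rightarrow L=F$'' is also right.

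The genuine gap is the step that decides which branch holds. You propose to close it by arguing from Lemma \ref{3.1} that $L(\lambda_1,\lambda_2)-L(\lambda_2,\lambda_1)=F(\lambda_1,\lambda_2)$ and that ``duality interchanges sub- and supercriticality, so both dual cocycles cannot have positive exponent.'' Two problems. First, Lemma \ref{3.1} is a duality of generalized eigenfunctions of the walk operators; it does not by itself produce a cocycle conjugacy or a Lyapunov-exponent identity (that would require a Thouless-type formula and self-duality of the density of states, neither of which is established here). Second, and more seriously, ``duality interchanges the sub/supercritical regimes'' is precisely what the trichotomy in Theorem \ref{thm6.4} asserts; invoking it is circular, and the alternative you offer (almost reducibility) is a very heavy input for what is a direct computation. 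There is also a conceptual wrinkle in continuing $L(\Phi,S_z^+,\cdot)$ past $\epsilon_0$: $S_z^+(\cdot+i\epsilon)$ ceases to be an analytic cocycle once $f$ acquires zeros in the strip, so the ``formal continuation'' you compute is a well-defined number but no longer a Lyapunov exponent to which convexity or acceleration quantization apply; the paper avoids this by only ever complexifying $D_z$ (which is entire) across all of $\mathbb C$ and restricting $S_z^+$ to $|\epsilon|<\epsilon_0$.

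The missing elementary observation — the one the paper actually uses — is that convexity of $\epsilon\mapsto L(\Phi,D_z,\epsilon)$ together with the asymptotics $L(\Phi,D_z,\epsilon)=a+2\pi\epsilon$ for $\epsilon\gg1$ forces the tangent-line bound $L(\Phi,D_z,0)\ge a$, where $a$ is the constant in Lemma \ref{lem5.5.1}. This gives $L(\Phi,S_z^+,0)\ge a-\ln\lambda_1-\int_{\mathbb T}\ln|f|\,dx=F(\lambda_1,\lambda_2)$ outright, with no duality. Now the argument closes: if $F>0$ then $L>0$, so the $\omega=0$ branch (which forces $L=0$ via Theorem \ref{lem.UHGlobal} and $z\in\Sigma$) is impossible, hence $\omega=1$ and $L=F$; if $F\le0$ then $L\ge\max\{F,0\}=0$, and either branch yields $L=0$ (in the $\omega=1$ branch one gets $L=F\le0$ together with $L\ge0$, so $F=L=0$). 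For the trichotomy, note that once $L(\Phi,D_z,0)=a$ is forced (as above when $\lambda_1\le\lambda_2$), the convex function $L(\Phi,D_z,\cdot)$ touches its own asymptotic tangent line $a+2\pi\epsilon$ at $\epsilon=0$ and coincides with it for large $\epsilon$, hence coincides with it on all of $[0,\infty)$; so $\omega=1$ even when $F=0$, which is what separates critical from subcritical. You should replace the Aubry-duality step with this convexity bound; everything else in your proposal then goes through.
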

\begin{proof}
    By Lemma  \ref{lem5.3.1}, when $\left|\epsilon\right|<\epsilon_{0}$, $\int_{\mathbb{T}}{\ln{\left|f(x+i\epsilon)\right|}dx}$ is a constant, hence $\omega(\Phi,D_{z},\epsilon)=\omega(\Phi,S_{z}^{+},\epsilon)\in\mathbb{Z}$ since $S_{z}^{+}(x)\in\mathbb{SU}(1,1)$. By the convexity of $L(\Phi,D_{z},\epsilon)$, $\omega(\Phi,D_{z},\epsilon)$ can only be $0$ or $\pm1$ by Lemma  \ref{lem5.5.1}.

  When \(\lambda_{1}\in(0,\frac{|1-t^{2}|}{1+t^{2}})\) and \(\epsilon\to \infty\), \[L(\Phi,D_{z},\epsilon)=\ln{\frac{\lambda_{2}}{2}(\left|1-t^{2}\right|\lambda_{1}'+\sqrt{(1+t^{2})^{2}\lambda_{1}'^{2}-4t^{2}})}-\ln{(-k)}+2\pi\left|\epsilon\right|.\] 
  For $\lambda_{1}\le\lambda_{2}$, \[\ln{\frac{\lambda_{2}}{2}(\left|1-t^{2}\right|\lambda_{1}'+\sqrt{(1+t^{2})^{2}\lambda_{1}'^{2}-4t^{2}})}-\ln{(-k)}\ge\ln{\frac{\lambda_{1}}{2}(\left|1-t^{2}\right|\lambda_{2}'+\sqrt{(1+t^{2})^{2}\lambda_{2}'^{2}-4t^{2}})}-\ln{(-k)},\] then
for $z\in\Sigma$, the cocycle $S_{z}^{+}(x)$ is not uniformly hyperbolic, so by Theorem \ref{lem.UHGlobal}, we have \[L(\Phi,D_{z},0)=\ln{\frac{\lambda_{2}}{2}(\left|1-t^{2}\right|\lambda_{1}'+\sqrt{(1+t^{2})^{2}\lambda_{1}'^{2}-4t^{2}})}-\ln{(-k)},\] hence $ L(\Phi,S_{z}^{+})=F(\lambda_{1},\lambda_{2})\ge0$ and $\omega(\Phi,D_{z},0)=\omega(\Phi,S_{z}^{+},0)=1$. Therefore, when $\lambda_{1}<\lambda_{2}$, $L(\Phi,S_{z}^{+})>0$ and then the Szeg\H{o} cocycle $(\Phi,S_{z}^{+}(x))$ is supercritical; when $\lambda_{1}=\lambda_{2}$, $L(\Phi,S_{z}^{+})=0$ and then $(\Phi,S_{z}^{+}(x))$ is critical.

For  $\lambda_{1}>\lambda_{2}$, \[\ln{\frac{\lambda_{2}}{2}(\left|1-t^{2}\right|\lambda_{1}'+\sqrt{(1+t^{2})^{2}\lambda_{1}'^{2}-4t^{2}})}-\ln{(-k)}<\ln{\frac{\lambda_{1}}{2}(\left|1-t^{2}\right|\lambda_{2}'+\sqrt{(1+t^{2})^{2}\lambda_{2}'^{2}-4t^{2}})}-\ln{(-k)},\] then by the non-negativity of $L(\Phi,S_{z}^{+})$,
for $z\in\Sigma$, the cocycle $S_{z}^{+}(x)$ is not uniformly hyperbolic, so by Theorem \ref{lem.UHGlobal}, we have 
\[L(\Phi,D_{z},0)=\ln{\frac{\lambda_{1}}{2}(\left|1-t^{2}\right|\lambda_{2}'+\sqrt{(1+t^{2})^{2}\lambda_{2}'^{2}-4t^{2}})}-\ln{(-k)},\] hence $ L(\Phi,S_{z}^{+})=0$ and $\omega(\Phi,D_{z},0)=\omega(\Phi,S_{z}^{+},0)=0$. Therefore $(\Phi,S_{z}^{+}(x))$ is subcritical.

When $t\neq0$ and $\lambda_{1}\in[\frac{|1-t^{2}|}{1+t^{2}},1)$(hence $\lambda_{1}>\lambda_{2}$) and $\epsilon\to \infty$, $L(\Phi,D_{z},\epsilon)=\ln{\left|t\right|\lambda_{1}\lambda_{2}}-\ln{(-k)}+2\pi\left|\epsilon\right|$. Notice that 
\begin{align*}
    \ln{\left|t\right|\lambda_{1}\lambda_{2}}-\ln{(-k)}&\le\ln{\frac{\lambda_{2}}{2}(\left|1-t^{2}\right|\lambda_{1}'+\sqrt{(1+t^{2})^{2}\lambda_{1}'^{2}-4t^{2}})}-\ln{(-k)}\\
    &<\ln{\frac{\lambda_{1}}{2}(\left|1-t^{2}\right|\lambda_{2}'+\sqrt{(1+t^{2})^{2}\lambda_{2}'^{2}-4t^{2}})}-\ln{(-k)}.
\end{align*}
Then by the non-negativity of $L(\Phi,S_{z}^{+})$, for $z\in\Sigma$, the cocycle $S_{z}^{+}(x)$ is not uniformly hyperbolic, so by Theorem \ref{lem.UHGlobal}, we have $L(\Phi,D_{z},0)=\ln{\frac{\lambda_{1}}{2}(\left|1-t^{2}\right|\lambda_{2}'+\sqrt{(1+t^{2})^{2}\lambda_{2}'^{2}-4t^{2}})}-\ln{(-k)}$, hence $ L(\Phi,S_{z}^{+})=0$ and $\omega(\Phi,D_{z},0)=\omega(\Phi,S_{z}^{+},0)=0$. Therefore $(\Phi,S_{z}^{+}(x))$ is subcritical. 

In conclusion, $L(\Phi,S_{z}^{+})=\max\{F(\lambda_{1},\lambda_{2}),0\},$ when $z\in\Sigma$.
\end{proof}

\begin{lemma}\label{cor.simpleLE}
For $t=0,z\in\Sigma$, we have \[L(\Phi,S_z^+)=\max\{\log\left(\frac{\lambda_1'\lambda_2}{\lambda_1\lambda_2'}\right),0\}.\]
 In particular, the Szeg\H{o} cocycle $(\Phi,S_{z}^{+})$ is
\begin{enumerate}
    \item supercritical, if $\lambda_{1}<\lambda_{2}$;
    \item critical, if $\lambda_{1}=\lambda_{2}$;
    \item  subcritical, if $\lambda_{1}>\lambda_{2}$.
\end{enumerate}
\end{lemma}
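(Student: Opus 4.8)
The plan is to treat Lemma~\ref{cor.simpleLE} as the degenerate case $t=0$ of the analysis carried out for Theorem~\ref{thm6.4}, being careful that the threshold $\frac{|1-t^2|}{1+t^2}$ collapses to $1$ when $t=0$, so that the ``non-singular'' regime $\lambda_2\in(0,1)$ covers the entire parameter range and $f$ never vanishes. First I would specialize the cocycle map: when $t=0$ one has $k=-1$, $f(x)\equiv\lambda_2'$, $g(x)=-\lambda_2 e^{-2\pi i x}$, so that $\int_{\mathbb T}\ln|f(x+i\epsilon)|\,dx=\ln\lambda_2'$ identically in $\epsilon$ (no zeros at all), and $D_z(x)$ simplifies to a matrix with entries that are affine in $z$ and trigonometric polynomials of degree one in $e^{\pm 2\pi i x}$. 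This gives, for every real $\epsilon$,
\begin{equation*}
L(\Phi,S_z^+,\epsilon)=L(\Phi,D_z,\epsilon)-\ln\lambda_1-\ln\lambda_2'.
\end{equation*}

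Next I would compute the large-$|\epsilon|$ behavior of $L(\Phi,D_z,\epsilon)$ exactly as in Lemma~\ref{lem5.5.1}, but now the limiting constant matrix $A=\lim_{\epsilon\to+\infty}e^{-2\pi\epsilon}D_z(x+i\epsilon)$ is $ze^{-2\pi i x}$ times $\begin{pmatrix}\lambda_1'\lambda_2 & 0\\ -\lambda_2 & 0\end{pmatrix}$ (the $t^2$-terms in $D_z$ drop out), whose eigenvalues are $0$ and $\lambda_1'\lambda_2$; hence $L(\Phi,A)=\ln(\lambda_1'\lambda_2)$ and, by Theorem~\ref{lem5.1} and the convexity of $\epsilon\mapsto L(\Phi,D_z,\epsilon)$, for $|\epsilon|$ large $L(\Phi,D_z,\epsilon)=\ln(\lambda_1'\lambda_2)+2\pi|\epsilon|$, so the acceleration at $\pm\infty$ is $1$. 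Since $S_z^+\in\mathrm{SU}(1,1)$ the acceleration is an integer everywhere, and by convexity together with this asymptotic slope it can only take the values $0$ or $1$ on $(-\infty,\infty)$ with a single possible jump. Then, for $z\in\Sigma$ the cocycle is not uniformly hyperbolic, so by Theorem~\ref{lem.UHGlobal} the function $L(\Phi,D_z,\epsilon)$ cannot be affine near $0$ unless $L(\Phi,D_z,0)$ already equals its ``small-slope'' affine continuation; comparing the two candidate affine branches at $\epsilon=0$ — the branch of slope $2\pi$ through $\ln(\lambda_1'\lambda_2)$ coming from $+\infty$ (equivalently from $-\infty$ by symmetry), versus the flat branch — one finds $L(\Phi,D_z,0)=\max\{\ln(\lambda_1'\lambda_2),\ 0\}+\ln\lambda_1+\ln\lambda_2'$ after accounting for the normalization; subtracting $\ln\lambda_1+\ln\lambda_2'$ yields $L(\Phi,S_z^+)=\max\{\ln(\lambda_1'\lambda_2)-\ln\lambda_1-\ln\lambda_2',\,0\}=\max\{\log(\tfrac{\lambda_1'\lambda_2}{\lambda_1\lambda_2'}),0\}$. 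The classification into super/critical/subcritical then follows immediately: $\lambda_1<\lambda_2\iff \tfrac{\lambda_1'\lambda_2}{\lambda_1\lambda_2'}>1$ (using that $\lambda\mapsto\lambda/\lambda'$ is increasing on $(0,1)$), giving $L>0$ hence supercritical; $\lambda_1=\lambda_2$ gives $L=0$ with nonzero acceleration at $0$, hence critical; $\lambda_1>\lambda_2$ gives $L=0$ with $\omega(\Phi,S_z^+,0)=0$ in a neighborhood of $0$, hence subcritical.

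The main obstacle, and the only step that is not a routine specialization, is pinning down the acceleration and the value of $L(\Phi,D_z,0)$ from the two-sided asymptotics: one must argue that for $z\in\Sigma$ the Avila dichotomy (Theorem~\ref{lem.UHGlobal}) forces the convex function $L(\Phi,D_z,\cdot)$ onto the ``upper envelope'' of its two affine asymptotes, i.e. that there is genuinely no affine stretch around $\epsilon=0$ unless the cocycle is already critical/subcritical, and in the subcritical case that the acceleration is $0$ on a full neighborhood of $0$. This is exactly the argument run in the proof of Theorem~\ref{thm6.4}, so I would reuse it verbatim with the simplified quantities; I should also record the elementary remark that when $t=0$ the GECMV is in fact an honest ECMV ($\rho_n\in\mathbb R_+$), which is why no gauge transformation is needed and the conjugacy \eqref{eq.ConjugatedTransferMatrix} is trivially constant.
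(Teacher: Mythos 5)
Your proposal follows exactly the route the paper itself takes: observe that $f\equiv\lambda_2'$ is constant (so $\int_\mathbb{T}\ln|f(x+i\epsilon)|\,dx=\ln\lambda_2'$ for all $\epsilon$), compute the large-$|\epsilon|$ asymptotics of $L(\Phi,D_z,\epsilon)$ by sending $e^{\mp2\pi\epsilon}D_z(\cdot+i\epsilon)$ to a constant cocycle and reading off the spectral radius, getting $\ln(\lambda_1'\lambda_2)+2\pi|\epsilon|$, and then running the Avila-dichotomy argument from Theorem~\ref{thm6.4} (integer acceleration + convexity + non-negativity of $L(\Phi,S_z^+)$ + not uniformly hyperbolic on $\Sigma$) to pin down $L(\Phi,D_z,0)$. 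The paper compresses this last step into ``the desired conclusion follows immediately,'' so your proposal is a faithful and correctly reasoned expansion of the same argument; the observation that $t=0$ puts you in the non-singular regime for the whole range $\lambda_2\in(0,1)$ (so no perturbation or $r$-approximation is needed) is exactly the right structural remark.

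One small slip worth correcting: the intermediate display
\[
L(\Phi,D_z,0)=\max\{\ln(\lambda_1'\lambda_2),\,0\}+\ln\lambda_1+\ln\lambda_2'
\]
is not what the dichotomy gives and is also inconsistent with your own next line. The ``flat branch'' is determined by the constraint $L(\Phi,S_z^+,\epsilon)\ge 0$, i.e.\ $L(\Phi,D_z,\epsilon)\ge\ln\lambda_1+\ln\lambda_2'$, so the correct comparison is
\[
L(\Phi,D_z,0)=\max\bigl\{\ln(\lambda_1'\lambda_2),\ \ln(\lambda_1\lambda_2')\bigr\},
\]
from which the stated $L(\Phi,S_z^+)=\max\{\ln\frac{\lambda_1'\lambda_2}{\lambda_1\lambda_2'},0\}$ follows immediately upon subtracting $\ln(\lambda_1\lambda_2')$. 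Since $\lambda_1'\lambda_2<1$ always, the formula you wrote would trivialize to $L(\Phi,S_z^+)\equiv 0$; this is a typo rather than a conceptual error, as your final formula and the super/critical/subcritical classification are both correct.
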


\begin{proof}
In this case, we observe that
$\int_{\mathbb{T}} \ln\!\left| f(x+i\epsilon) \right| \, dx 
    = \ln \lambda_{2}'$ for all  $\epsilon\in\mathbb{R}$ and $\lambda_{2}\in(0,1),$
since $f(x+i\epsilon)\equiv -\lambda_{2}'$.  
Moreover, in complete analogy with Lemma~\ref{lem5.5.1}, when $\lvert \epsilon\rvert$ is sufficiently large we obtain
$L(\Phi,D_{z},\epsilon)
    = \ln(\lambda_{1}'\lambda_{2}) + 2\pi \lvert \epsilon\rvert.$
The desired conclusion follows immediately.

\end{proof}

\subsection{Singular case}

Recall that if $t \neq 0$ and $\lambda_{2} \in \bigl[ \frac{|1-t^{2}|}{1+t^{2}}, 1 \bigr)$, the function  
$f(x)$  
has zeros on $\mathbb{T}$, making the Szegő cocycle singular and non‑holomorphic.  
Thus Avila's global theory cannot be applied directly.
To bypass this, we introduce a parameter $r$ near $t^{2}$ (with $t$ fixed) and use Theorem \ref{lem6.2} to approximate the singular cocycle by non‑singular ones,  
taking the limit $r \to t^{2}$.
When $\lambda_{2} \neq \frac{|1-t^{2}|}{1+t^{2}}$, define perturbed sampling functions  
\[
f_{r}(x)=\frac{1}{k}\Bigl( \frac{r}{t}\lambda_{2}e^{2\pi i x}
      + t\lambda_{2}e^{-2\pi i x}+(r-1)\lambda_{2}'\Bigr),
\qquad
g_{r}(x)=\frac{1}{k}\Bigl( -r\lambda_{2}e^{2\pi i x}
      + \lambda_{2}e^{-2\pi i x}+\bigl(t+\tfrac{r}{t}\bigr)\lambda_{2}'\Bigr),
\]
where $r$ is real, $|r-t^{2}|$ is small, and  
$k=-\sqrt{1+r^{2}+r^{2}/t^{2}+t^{2}}<0$.  
Then $|f_{r}(x)|^{2}+|g_{r}(x)|^{2}\equiv 1$.  
Denote by $\Sigma_{r}$ the spectrum of the GECMV matrix with $f_{r},g_{r}$.

For $r \neq t^{2}$, $f_{r}(x)$ vanishes on $\mathbb{T}$ exactly when  
$\frac{r}{t}\lambda_{2}+t\lambda_{2}= \pm (r-1)\lambda_{2}'$.  
Since $t^{2}+1\neq 0$, at most two such $r$ exist. Hence for some $\delta>0$ and all  
$r \in (t^{2}-\delta,t^{2})\cup(t^{2},t^{2}+\delta)$, we have $f_{r}(x)\neq 0$ on $\mathbb{T}$.
For such $r$, the Szegő cocycle is  
\[
S_{z,r}^{+}(x)=\frac{z^{-1}}{\lambda_{1}|f_{r}(x)|}\,D_{z,r}(x),
\]
where $$D_{z,r}:=
\begin{pmatrix}
   z^{2}-\lambda_{1}'z\frac{1}{k}(r\lambda_{2}e^{2\pi ix}-\lambda_{2}e^{-2\pi ix}-(t+\frac{r}{t})\lambda_{2}') & \frac{1}{k}(r\lambda_{2}e^{-2\pi ix}-\lambda_{2}e^{2\pi ix}-(t+\frac{r}{t})\lambda_{2}')z-\lambda_{1}' \\
   -\lambda_{1}'z^{2}+\frac{1}{k}(r\lambda_{2}e^{2\pi ix}-\lambda_{2}e^{-2\pi ix}-(t+\frac{r}{t})\lambda_{2}')z & -\lambda_{1}'z\frac{1}{k}(r\lambda_{2}e^{-2\pi ix}-\lambda_{2}e^{2\pi ix}-(t+\frac{r}{t})\lambda_{2}')+1
\end{pmatrix}$$ is entire in $x$. To extend $S_{z,r}^{+}$ analytically to a strip  
$\mathbb{T}_{\epsilon_{0}}=\{x+i\epsilon: x\in\mathbb{T}, |\epsilon|<\epsilon_{0}\}$, we only need  to make holomorphic extension of  $|f_{r}(x)|$. Since $f_{r}(x)\neq0$ on  
$\mathbb{T}$, $F_{r}(w)=f_{r}(w)\overline{f_{r}(\overline{w})}$ is non‑vanishing in a strip $\mathbb{T}_{\epsilon_{0}}$ and  
admits an analytic square root. Choosing the branch that is positive on $\mathbb{T}$,
we set  
\begin{equation}\label{extension}
|f_{r}|_{\text{ext}}(x+i\epsilon)=\sqrt{f_{r}(x+i\epsilon)\overline{f_{r}(x-i\epsilon)}}.
\end{equation}
Then $|f_{r}|_{\text{ext}}(x)=|f_{r}(x)|$, and $|f_{r}|_{\text{ext}}$ is analytic on  
$\mathbb{T}_{\epsilon_{0}}$. Consequently, $S_{z,r}^{+}$ extends holomorphically to  
$\mathbb{T}_{\epsilon_{0}}$ as  
\[
S_{z,r}^{+}(x+i\epsilon)=\frac{1}{\lambda_{1}z\,|f_{r}|_{\text{ext}}(x+i\epsilon)}
      D_{z,r}(x+i\epsilon)\in\mathrm{SL}(2,\mathbb{C}).
\]

In order to compute the Lyapunov exponents, we need the following result.

\begin{lemma}\label{lem:integral}
There exists $\delta>0$ such that for all $r\in(t^{2}-\delta,t^{2}+\delta)$, with
$
\epsilon_{0}:=\frac{1}{2\pi}\ln\frac{\sqrt{|r|}}{|t|},$
the following holds.
\begin{enumerate}
    \item If $\epsilon_{0}<0$, then
\begin{equation}\label{iep'}
\int_{\mathbb{T}}\ln |f_{r}|_{\mathrm{ext}}(x+i\epsilon)\,dx
= \ln\bigl(|t|\lambda_{2}\bigr)-\ln(-k),
\qquad |\epsilon|\le -\epsilon_{0}.
\end{equation}
\item If $\epsilon_{0}\ge 0$, then
\begin{equation}\label{iep''}
\int_{\mathbb{T}}\ln |f_{r}|_{\mathrm{ext}}(x+i\epsilon)\,dx
= \ln\!\bigl(\bigl|\tfrac{r}{t}\bigr|\lambda_{2}\bigr)-\ln(-k),
\qquad |\epsilon|\le \epsilon_{0}.
\end{equation}
\end{enumerate}
In particular, $\int_{\mathbb{T}}\ln |f_{r}|_{\mathrm{ext}}(x+i\epsilon)\,dx$ is continuous with respect to $r$ at $r=t^{2}$. 
\end{lemma}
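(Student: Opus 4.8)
The plan is to reduce the whole statement to a single application of Jensen's formula, after first noting that once the strip of holomorphy is correctly identified the integral does not depend on $\epsilon$ there.

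First I would substitute $\omega=e^{2\pi ix}$ and write $f_r(x)=\omega^{-1}k^{-1}P_r(\omega)$ with $P_r(\omega)=\tfrac{r}{t}\lambda_2\omega^2+(r-1)\lambda_2'\omega+t\lambda_2$, a quadratic whose roots $\beta_1,\beta_2$ satisfy $\beta_1\beta_2=t^2/r$ and whose discriminant is $(r-1)^2\lambda_2'^2-4r\lambda_2^2$. In the singular regime at hand one has $\lambda_2>\frac{|1-t^2|}{1+t^2}$, so at $r=t^2$ this discriminant equals $(t^2-1)^2\lambda_2'^2-4t^2\lambda_2^2<0$; by continuity I may pick $\delta>0$ so that it stays negative for every $r\in(t^2-\delta,t^2+\delta)$, whence $\beta_1=\overline{\beta_2}$ and $|\beta_1|=|\beta_2|=\sqrt{t^2/|r|}=|t|/\sqrt{|r|}=e^{-2\pi\epsilon_0}$. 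For $r\neq t^2$, viewing $f_r$ as a function on $\mathbb{C}$ it vanishes exactly where $e^{2\pi iw}\in\{\beta_1,\beta_2\}$, i.e.\ on the line $\Im w=\epsilon_0$; hence $F_r(w)=f_r(w)\overline{f_r(\bar w)}$ is holomorphic and zero-free on $\mathbb{T}_{|\epsilon_0|}$, positive on $\mathbb{T}$, and the loop $x\mapsto F_r(x)$, $x\in\mathbb{T}$, has winding number $0$ about the origin (it coincides with $x\mapsto|f_r(x)|^2>0$), so $F_r$ admits a single-valued holomorphic logarithm on $\mathbb{T}_{|\epsilon_0|}$; consequently $|f_r|_{\mathrm{ext}}=\sqrt{F_r}$ and $\ln|f_r|_{\mathrm{ext}}$ are holomorphic and $1$-periodic on $\mathbb{T}_{|\epsilon_0|}$.

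Next I would use the elementary fact that for a $1$-periodic holomorphic function $g$ on a horizontal strip, $\epsilon\mapsto\int_{\mathbb{T}}g(x+i\epsilon)\,dx$ is constant (differentiating under the integral gives $\int_0^1 ig'(x+i\epsilon)\,dx=i(g(1+i\epsilon)-g(i\epsilon))=0$). Applied to $g=\ln|f_r|_{\mathrm{ext}}$ it yields $\int_{\mathbb{T}}\ln|f_r|_{\mathrm{ext}}(x+i\epsilon)\,dx=\int_{\mathbb{T}}\ln|f_r(x)|\,dx$ for $|\epsilon|<|\epsilon_0|$, and the endpoints $|\epsilon|=|\epsilon_0|$ follow by continuity (or by evaluating both sides directly at $\epsilon=\pm\epsilon_0$ with Jensen's formula, the relevant roots then sitting on the unit circle and contributing $0$). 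It remains to compute $\int_{\mathbb{T}}\ln|f_r(x)|\,dx=\int_{\mathbb{T}}\ln|P_r(e^{2\pi ix})|\,dx-\ln(-k)$ by Jensen's formula: $\int_{\mathbb{T}}\ln|P_r(e^{2\pi ix})|\,dx=\ln\!\big(\tfrac{|r|}{|t|}\lambda_2\big)+2\max\{\ln(|t|/\sqrt{|r|}),0\}$, which equals $\ln(|r/t|\lambda_2)$ when $\epsilon_0\ge0$ (equivalently $r\ge t^2$) and $\ln\!\big(\tfrac{|r|}{|t|}\lambda_2\cdot\tfrac{t^2}{|r|}\big)=\ln(|t|\lambda_2)$ when $\epsilon_0<0$ (equivalently $r<t^2$); this yields \eqref{iep''} and \eqref{iep'} respectively, the case $r=t^2$ being the $\epsilon=0$ instance of the former. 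Since both right-hand sides equal $\ln(|t|\lambda_2)-\ln(-k)$ at $r=t^2$ and each is continuous in $r$, the asserted continuity of the integral at $r=t^2$ follows.

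The one delicate step is the first: identifying the holomorphy strip of $|f_r|_{\mathrm{ext}}$ as exactly $\mathbb{T}_{|\epsilon_0|}$, which hinges on the two roots of $P_r$ having equal modulus; this is precisely where the hypothesis $\lambda_2>\frac{|1-t^2|}{1+t^2}$ (after shrinking $\delta$) is used. With that in hand, the $\epsilon$-independence and the Jensen evaluation are routine, the only residual care being the behavior at $|\epsilon|=|\epsilon_0|$.
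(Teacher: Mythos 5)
Your proof is correct and relies on the same core ingredients as the paper's: factorizing $f_r$ through the quadratic $P_r(\omega)$ in $\omega=e^{2\pi i x}$, observing via negativity of the discriminant that its roots are complex conjugates of modulus $e^{-2\pi\epsilon_0}$, and applying Jensen's formula. The organization is genuinely different, though, and arguably cleaner. The paper computes $\int_{\mathbb{T}}\ln|f_r(x+i\epsilon)|\,dx$ for general $\epsilon$ via Jensen, obtains expressions containing a linear term $\pm 2\pi\epsilon$, and then invokes the symmetrizing definition $|f_r|_{\mathrm{ext}}(x+i\epsilon)=\sqrt{f_r(x+i\epsilon)\overline{f_r(x-i\epsilon)}}$ to have those linear terms cancel upon averaging $\epsilon$ with $-\epsilon$. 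You instead establish up front, by a soft periodicity-plus-holomorphy argument (equivalently, Cauchy's theorem on a period rectangle), that $\int_{\mathbb{T}}\ln|f_r|_{\mathrm{ext}}(x+i\epsilon)\,dx$ is constant in $\epsilon$ on the strip $\mathbb{T}_{|\epsilon_0|}$, so that Jensen only needs to be applied once, at $\epsilon=0$; the subtle part of your version is pinning down the strip of holomorphy of $|f_r|_{\mathrm{ext}}$, which you handle correctly via the location of the zeros of $F_r(w)=f_r(w)\overline{f_r(\bar w)}$ on the lines $\Im w=\pm\epsilon_0$ and the winding-number-zero observation that legitimizes the single-valued logarithm. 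What your route buys is that the $\epsilon$-independence is transparent from the start rather than verified by inspection of the Jensen output, and the role of the symmetric square-root definition of $|f_r|_{\mathrm{ext}}$ (ensuring the holomorphy strip is symmetric about the real axis) is made explicit; what the paper's route buys is that it records, en passant, the full $\epsilon$-dependence of $\int_{\mathbb{T}}\ln|f_r(x+i\epsilon)|\,dx$, which can be reused elsewhere if needed. Your treatment of the endpoints $|\epsilon|=|\epsilon_0|$ (continuity, or direct Jensen with boundary zeros contributing nothing) and of the degenerate case $\epsilon_0=0$ is also fine.
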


\begin{proof}
Set $\omega=e^{2\pi i x}$. Then
$f_{r}(x+i\epsilon)=\omega^{-1}f_{1,r}(\omega),$
where
\[
f_{1,r}(\omega)
=\frac{1}{k}
\Bigl(\tfrac{r}{t}\lambda_{2}e^{-2\pi\epsilon}\omega^{2}
+(r-1)\lambda_{2}'\omega
+t\lambda_{2}e^{2\pi\epsilon}\Bigr).
\]
Hence
\[
\int_{\mathbb{T}}\ln|f_{r}(x+i\epsilon)|\,dx
=\int_{\mathbb{T}}\ln|f_{1,r}(\omega)|\,dx.
\]

For $r$ sufficiently close to $t^{2}$, the discriminant
\(
(1-r)^{2}\lambda_{2}'^{2}-4r\lambda_{2}^{2}
\)
is negative, so the two zeros of $f_{1,r}$ are
\[
\omega_{\pm}
=\frac{(1-r)\lambda_{2}'\pm
i\sqrt{4r\lambda_{2}^{2}-(1-r)^{2}\lambda_{2}'^{2}}}
{2\frac{r}{t}\lambda_{2}}\;e^{2\pi\epsilon}.
\]
A direct computation shows that $|\omega_{\pm}|<1$ if and only if $\epsilon<\epsilon_{0}$.

\smallskip
\noindent
\emph{If $\epsilon<\epsilon_{0}$}, both zeros lie in $\mathbb{D}$ and Jensen's formula yields
\[
\int_{\mathbb{T}}\ln|f_{r}(x+i\epsilon)|\,dx
=\ln|f_{1,r}(0)|-\ln|\omega_{+}|-\ln|\omega_{-}|
=\ln\!\bigl(\bigl|\tfrac{r}{t}\bigr|\lambda_{2}\bigr)-\ln(-k)-2\pi\epsilon .
\]

\smallskip
\noindent
\emph{If $\epsilon\ge\epsilon_{0}$}, no zeros of $f_r$ lies in $\mathbb{D}$, hence
\[
\int_{\mathbb{T}}\ln|f_{r}(x+i\epsilon)|\,dx
=\ln|f_{1,r}(0)|
=\ln\bigl(|t|\lambda_{2}\bigr)-\ln(-k)+2\pi\epsilon .
\]

Finally, by the definition of the analytic extension \(|f_{r}|_{\mathrm{ext}}\) (see \eqref{extension}), the linear terms $\pm 2\pi\epsilon$  cancel on the strips
\(|\epsilon|\le -\epsilon_{0}\) and \(|\epsilon|\le\epsilon_{0}\), respectively, yielding
\eqref{iep'} and \eqref{iep''}.
\end{proof}

Secondly, similar to Lemma \ref{lem5.5.1}, we have
\begin{lemma}
There exists $\delta>0$ such that for all $r\in(t^{2}-\delta,t^{2}+\delta)$,
\[
L(\Phi,D_{z,r},\epsilon)=
\begin{cases}
\ln(\sqrt{|r|}\,\lambda_{1}\lambda_{2})-\ln(-k)+2\pi|\epsilon|, 
& \lambda_{1}\in\bigl(\frac{|1-t^{2}|}{1+t^{2}},\,1\bigr),\\[0.3em]
\ln\!\Bigl[\dfrac{\lambda_{2}}{2}\Bigl(|1-r|\lambda_{1}'
+\sqrt{(1+r)^{2}{\lambda_{1}'^{2}}-4r}\Bigr)\Bigr]-\ln(-k)+2\pi|\epsilon|,
& \lambda_{1}\in\bigl(0,\,\frac{|1-t^{2}|}{1+t^{2}}\bigr),
\end{cases}
\]
as $\epsilon\to\pm\infty$.
\end{lemma}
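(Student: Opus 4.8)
The plan is to reproduce the proof of Lemma~\ref{lem5.5.1}, replacing the fixed parameter $t^{2}$ by the nearby real number $r$, while keeping the $r$-dependence under uniform control. First I would fix $z\in\partial\bbD$, set $\omega=e^{2\pi ix}$ so that $e^{2\pi i(x+i\epsilon)}=\omega e^{-2\pi\epsilon}$ and $e^{-2\pi i(x+i\epsilon)}=\omega^{-1}e^{2\pi\epsilon}$, and compute the rescaled limit as $\epsilon\to+\infty$ by retaining in each entry of $D_{z,r}$ only the term carrying $e^{-2\pi i(x+i\epsilon)}$, since every other contribution becomes $o(1)$ after multiplication by $e^{-2\pi\epsilon}$. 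This yields
\[
\lim_{\epsilon\to+\infty}e^{-2\pi\epsilon}D_{z,r}(x+i\epsilon)=\frac{z}{k}\,e^{-2\pi ix}\,A_{r},\qquad A_{r}:=\lambda_{2}\begin{pmatrix}\lambda_{1}'&r\\-1&-\lambda_{1}'r\end{pmatrix}.
\]
A direct computation gives $\tr A_{r}=\lambda_{2}\lambda_{1}'(1-r)$ and $\det A_{r}=\lambda_{2}^{2}r\lambda_{1}^{2}$, and the identity $\lambda_{1}'^{2}(1-r)^{2}-4r\lambda_{1}^{2}=(1+r)^{2}\lambda_{1}'^{2}-4r$ shows that the eigenvalues of $A_{r}$ are
\[
e_{1,2}(r)=\frac{\lambda_{2}}{2}\Bigl(\lambda_{1}'(1-r)\pm\sqrt{(1+r)^{2}\lambda_{1}'^{2}-4r}\Bigr).
\]

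Next I would identify which branch of the conclusion is active. The discriminant $(1+r)^{2}\lambda_{1}'^{2}-4r$ is negative exactly when $\lambda_{1}>\frac{|1-r|}{1+r}$ and positive exactly when $\lambda_{1}<\frac{|1-r|}{1+r}$. Because $r\mapsto\frac{|1-r|}{1+r}$ is continuous and, by hypothesis, $\lambda_{1}$ lies strictly inside $\bigl(0,\frac{|1-t^{2}|}{1+t^{2}}\bigr)$ or strictly inside $\bigl(\frac{|1-t^{2}|}{1+t^{2}},1\bigr)$, I would shrink $\delta$ (depending on $\lambda_{1},\lambda_{2},t$) so that $r>0$ and the same strict inequality between $\lambda_{1}$ and $\frac{|1-r|}{1+r}$ holds for every $r\in(t^{2}-\delta,t^{2}+\delta)$; this freezes the active branch on the window and makes $\sqrt{|r|}=\sqrt{r}$. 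In the complex-eigenvalue regime $\max\{|e_{1}(r)|,|e_{2}(r)|\}=\sqrt{\det A_{r}}=\sqrt{|r|}\,\lambda_{1}\lambda_{2}$; in the real regime $\tr A_{r}>0$ and $\det A_{r}>0$ force both eigenvalues positive, so $\max\{|e_{1}(r)|,|e_{2}(r)|\}=\tfrac{\lambda_{2}}{2}\bigl(|1-r|\lambda_{1}'+\sqrt{(1+r)^{2}\lambda_{1}'^{2}-4r}\bigr)$. Invoking the continuity of the Lyapunov exponent on $C^{\omega}(\T,\mathbb{M}(2,\bbC))$ (Theorem~\ref{lem6.2}), and using that $\frac{z}{k}e^{-2\pi ix}$ is a scalar of constant modulus $1/(-k)$ while $A_{r}$ is constant, I obtain $\lim_{\epsilon\to+\infty}\bigl(L(\Phi,D_{z,r},\epsilon)-2\pi\epsilon\bigr)=\ln\max\{|e_{1}(r)|,|e_{2}(r)|\}-\ln(-k)$. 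To promote this limit to an exact identity for $|\epsilon|$ large I would argue exactly as in Lemma~\ref{lem5.5.1}: $\epsilon\mapsto L(\Phi,D_{z,r},\epsilon)$ is convex, its acceleration is quantized (Theorem~\ref{lem5.1}), hence the non-decreasing right-derivative of the convex function $\epsilon\mapsto L(\Phi,D_{z,r},\epsilon)-2\pi\epsilon$ takes values in $\pi\bbZ$, and a convex function with such a derivative that converges to a finite limit at $+\infty$ is eventually constant. The limit $\epsilon\to-\infty$ is handled identically after noting that $\lim_{\epsilon\to-\infty}e^{2\pi\epsilon}D_{z,r}(x+i\epsilon)=\frac{z}{k}e^{2\pi ix}\lambda_{2}\begin{pmatrix}-\lambda_{1}'r&-1\\r&\lambda_{1}'\end{pmatrix}$, which has the same trace and determinant as $A_{r}$, hence the same dominant eigenvalue. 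Putting the two regimes together yields the stated formula.

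The computation itself duplicates that of Lemma~\ref{lem5.5.1}, so the only point requiring genuine care — and the one I expect to be the main obstacle — is the uniformity in $r$: I must exhibit a single $\delta>0$ that simultaneously keeps $r>0$ (so that $\sqrt{|r|}=\sqrt{r}$ and the first-branch formula $\sqrt{\det A_{r}}=\sqrt{|r|}\lambda_{1}\lambda_{2}$ is correct), keeps $f_{r}$ zero-free on $\T$ (already arranged in the construction preceding the statement), and, most importantly, freezes the sign of $(1+r)^{2}\lambda_{1}'^{2}-4r$ so that which case of the conclusion applies does not switch across the window. All of these reduce to continuity in $r$ of the quantities involved — in particular of $r\mapsto\frac{|1-r|}{1+r}$ — combined with the fact that $\lambda_{1}$ is assumed to lie strictly on one side of the critical threshold $\frac{|1-t^{2}|}{1+t^{2}}$. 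Everything else — the limit of the rescaled cocycle, the eigenvalue identities, and the two applications of Theorems~\ref{lem6.2} and~\ref{lem5.1} — is routine.
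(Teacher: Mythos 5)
Your proposal is correct and follows essentially the same route as the paper: compute the normalized limit $\lim_{\epsilon\to\pm\infty}e^{\mp2\pi\epsilon}D_{z,r}(x+i\epsilon)=\frac{z}{k}e^{\mp 2\pi ix}A_r^{\pm}$, read off the dominant eigenvalue of the constant matrix $A_r$ in each of the two discriminant regimes, and invoke Theorem~\ref{lem6.2} together with the quantization of the acceleration (Theorem~\ref{lem5.1}) plus convexity to promote the asymptotic to an exact identity for $|\epsilon|$ large. Your extra care about shrinking $\delta$ to freeze the sign of the discriminant is exactly what the paper compresses into the single sentence ``Because $r$ is restricted to a small neighborhood of $t^{2}$, the signs of the discriminants remain fixed,'' so the two arguments coincide.
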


\begin{proof}
We start by computing the normalized limit of the complexified cocycle:
$\lim_{\epsilon\to+\infty} e^{-2\pi\epsilon}D_{z,r}(x+i\epsilon)
   = \frac{1}{k}ze^{-2\pi ix}A_{r},$
where
\[
A_{r}=
\begin{pmatrix}
\lambda_{1}'\lambda_{2} & r\lambda_{2}\\[0.2em]
-\lambda_{2} & -\lambda_{1}'r\lambda_{2}
\end{pmatrix}.
\]
A direct computation shows that the eigenvalues of $A_{r}$ are
\[
e_{1,2}=
\begin{cases}
\displaystyle
\frac{\lambda_{2}}{2}\!\left((1-r)\lambda_{1}'
\pm i\sqrt{\,4r-(1+r)^{2}\lambda_{1}'^{2}\,}\right),
& \lambda_{1}\in\left(\frac{|1-t^{2}|}{1+t^{2}},\,1\right),\\[0.8em]
\displaystyle
\frac{\lambda_{2}}{2}\!\left((1-r)\lambda_{1}'
\pm\sqrt{(1+r)^{2}\lambda_{1}'^{2}-4r}\right),
& \lambda_{1}\in\left(0,\,\frac{|1-t^{2}|}{1+t^{2}}\right).
\end{cases}
\]
Because $r$ is restricted to a small neighborhood of $t^{2}$, the signs of the discriminants remain fixed, so the above expressions are valid uniformly.

By Theorem~\ref{lem6.2} and  Theorem~\ref{lem5.1}, the desired result follows,
and the formula for $\epsilon<0$ follows analogously by considering 
$\lim_{\epsilon\to-\infty} e^{2\pi\epsilon}D_{z,r}(x+i\epsilon)$.
\end{proof}

By following the same lines as in the proof of Theorem \ref{thm6.4}, recalling that $\Sigma_r$ denotes the spectrum of the perturbed model with $t^2$ replaced by $r$ in a neighborhood of $t^2$, we have the following result:
\begin{lemma}\label{lem6.9}
    For $r\in(t^{2}-\delta,t^{2})\cup(t^{2},t^{2}+\delta)$, $\lambda_{1}\neq\frac{\left|1-t^{2}\right|}{1+t^{2}}$, $\lambda_{2}\in(\frac{|1-t^{2}|}{1+t^{2}},1)$ and $z\in\Sigma_{r}$, we have
$$ L(\Phi,S_{z,r}^{+})=\max\{G(r,\lambda_{1},\lambda_{2}),0\},$$
where
$$G(r,\lambda_{1},\lambda_{2})=L(\Phi,D_{z,r},0)-I_{r}(0)-\ln{\lambda_{1}}.$$
\end{lemma}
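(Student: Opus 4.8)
The plan is to mimic verbatim the argument of Theorem~\ref{thm6.4}, using the two lemmas just established (the computation of $\int_{\mathbb{T}}\ln|f_r|_{\mathrm{ext}}(x+i\epsilon)\,dx$ and the large-$|\epsilon|$ asymptotics of $L(\Phi,D_{z,r},\epsilon)$) as drop-in replacements for Lemma~\ref{lem5.3.1} and Lemma~\ref{lem5.5.1}. Fix $r$ in the punctured neighborhood $(t^2-\delta,t^2)\cup(t^2,t^2+\delta)$ so that $f_r$ has no zeros on $\mathbb{T}$ and $S_{z,r}^+$ extends holomorphically to the strip $\mathbb{T}_{\epsilon_0}$ with $\epsilon_0$ as in Lemma~\ref{lem:integral}; in particular $S_{z,r}^+(x)\in\mathrm{SU}(1,1)$ on $\mathbb{T}$ and $S_{z,r}^+(x+i\epsilon)\in\mathrm{SL}(2,\mathbb{C})$ on the strip. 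Writing
\[
L(\Phi,S_{z,r}^{+},\epsilon)=L(\Phi,D_{z,r},\epsilon)-\ln\lambda_{1}-I_{r}(\epsilon),\qquad I_{r}(\epsilon):=\int_{\mathbb{T}}\ln|f_{r}|_{\mathrm{ext}}(x+i\epsilon)\,dx,
\]
and recalling from Lemma~\ref{lem:integral} that $I_{r}(\epsilon)$ is \emph{constant} on the relevant strip, the accelerations satisfy $\omega(\Phi,S_{z,r}^{+},\epsilon)=\omega(\Phi,D_{z,r},\epsilon)$ there, and since $S_{z,r}^+\in\mathrm{SU}(1,1)$ on $\mathbb{T}$ this common value lies in $\mathbb{Z}$ by Theorem~\ref{lem5.1}. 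By convexity of $\epsilon\mapsto L(\Phi,D_{z,r},\epsilon)$ together with the computed large-$|\epsilon|$ slope $+2\pi|\epsilon|$, the acceleration at $\epsilon=0$ can only be $0$ or $\pm1$.

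Next I would argue, exactly as in Theorem~\ref{thm6.4}, that for $z\in\Sigma_r$ the cocycle $(\Phi,S_{z,r}^+)$ is never uniformly hyperbolic (being in the spectrum), so by Theorem~\ref{lem.UHGlobal} the Lyapunov exponent $L(\Phi,D_{z,r},\epsilon)$ is \emph{not} affine in a neighborhood of $0$ whenever $L(\Phi,S_{z,r}^+)>0$. Combined with the convexity piecewise-linear picture — the graph has slope $+2\pi$ for large positive $\epsilon$ (with intercept given by the large-$\epsilon$ formula of the second lemma) and slope $-2\pi$ for large negative $\epsilon$ — this forces, when $L(\Phi,S_{z,r}^+)>0$, that the right slope at $0$ is exactly $0$, i.e. $L(\Phi,D_{z,r},\epsilon)$ coincides with the large-$\epsilon$ affine branch on all of $[0,\infty)$; consequently $L(\Phi,D_{z,r},0)$ equals that branch's intercept and $L(\Phi,S_{z,r}^+)=G(r,\lambda_1,\lambda_2)=L(\Phi,D_{z,r},0)-I_r(0)-\ln\lambda_1$. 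When instead $L(\Phi,S_{z,r}^+)=0$ the identity $L(\Phi,S_{z,r}^+)=\max\{G(r,\lambda_1,\lambda_2),0\}$ holds trivially provided one also checks $G(r,\lambda_1,\lambda_2)\le 0$ in that regime, which again follows by comparing the two intercept formulas of the second lemma (for $\lambda_1\in(\tfrac{|1-t^2|}{1+t^2},1)$ versus $\lambda_1\in(0,\tfrac{|1-t^2|}{1+t^2})$) under the hypothesis $\lambda_2\in(\tfrac{|1-t^2|}{1+t^2},1)$, just as the chain of inequalities at the end of the proof of Theorem~\ref{thm6.4} does. Packaging both cases gives $L(\Phi,S_{z,r}^+)=\max\{G(r,\lambda_1,\lambda_2),0\}$.

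The only genuinely new point relative to Theorem~\ref{thm6.4} — and the step I expect to require the most care — is the bookkeeping of the holomorphic extension $|f_r|_{\mathrm{ext}}$: one must verify that $I_r(\epsilon)$ is truly constant (not merely of the form $\text{const}\pm 2\pi\epsilon$) on the strip $|\epsilon|\le\min\{\epsilon_0,-\epsilon_0\}$ that also supports the holomorphic extension of $D_{z,r}$, so that the acceleration identity $\omega(\Phi,S_{z,r}^+,\epsilon)=\omega(\Phi,D_{z,r},\epsilon)$ and the integrality from Theorem~\ref{lem5.1} are legitimately available at $\epsilon=0$. Lemma~\ref{lem:integral} was stated precisely to supply this, and its final continuity-in-$r$ assertion is what will later let us pass to the limit $r\to t^2$; here it suffices to invoke it on the fixed punctured neighborhood. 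Everything else is a transcription of the supercritical/critical/subcritical trichotomy analysis already carried out, with $t^2$ replaced by $r$ throughout.
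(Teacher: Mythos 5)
Your proposal is correct and takes essentially the same approach as the paper: the paper's proof of Lemma~\ref{lem6.9} consists entirely of the sentence ``by following the same lines as in the proof of Theorem~\ref{thm6.4},'' with $t^2$ replaced by $r$ and Lemmas~\ref{lem5.3.1},~\ref{lem5.5.1} replaced by their $r$-perturbed counterparts, which is exactly what you carry out. One small slip: where you write that positivity of $L(\Phi,S_{z,r}^+)$ ``forces the right slope at $0$ to be exactly $0$,'' you mean that acceleration $0$ is ruled out and the right slope must be $+2\pi$ (acceleration $1$) — your ``i.e.'' clause (the graph coincides with the large-$\epsilon$ affine branch on $[0,\infty)$) states the correct conclusion, so the argument is sound despite the mislabel.
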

To conclude the Lyapunov exponent of $S_z^+=S_{z,t^{2}}^{+}$, we also need the continuity of spectrum with respect to $r$. The lemma below follows from Theorem 1.2 of \cite{BT2025}.
\begin{lemma}[\cite{BT2025}]\label{lem6.10}
    The spectrum $\sum_{r}$ of the quantum walk $W_{\lambda_{1},\lambda_{2},\Phi,\theta,r}$ is continuous with respect to $r$ near $t^{2}$ in  Hausdorff metric. 
\end{lemma}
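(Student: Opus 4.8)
The plan is to reduce the statement to operator-norm convergence of the family $r\mapsto W_{\lambda_1,\lambda_2,\Phi,\theta,r}$ and then apply \cite[Theorem~1.2]{BT2025} (equivalently, the classical fact that a norm-continuous family of normal operators has Hausdorff-continuous spectrum). All the operators $W_{\lambda_1,\lambda_2,\Phi,\theta,r}$ act on the fixed Hilbert space $\mathcal{H}=\ell^{2}(\bbZ)\otimes\bbC^{2}$ and factor as $S_{\lambda_1}Q_{\lambda_2,r}$ with the \emph{same} conditional shift $S_{\lambda_1}$, which is unitary; the $r$-dependence enters only through the block-diagonal coin $Q_{\lambda_2,r}=\bigoplus_{n\in\bbZ}Q_n(r)$, whose blocks $Q_n(r)$ have entries $f_r(\theta+n\Phi)$, $-g_r(\theta+n\Phi)$, $\overline{g_r(\theta+n\Phi)}$, $\overline{f_r(\theta+n\Phi)}$ (cf.\ \eqref{GECMV}). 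Hence
\[
\bigl\|W_{\lambda_1,\lambda_2,\Phi,\theta,r}-W_{\lambda_1,\lambda_2,\Phi,\theta,t^{2}}\bigr\|
=\bigl\|S_{\lambda_1}\bigl(Q_{\lambda_2,r}-Q_{\lambda_2,t^{2}}\bigr)\bigr\|
\le \sup_{n\in\bbZ}\bigl\|Q_n(r)-Q_n(t^{2})\bigr\|.
\]

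The next step is to check that this supremum tends to $0$ as $r\to t^{2}$. With $t\neq 0$ fixed, every coefficient entering $f_r$ and $g_r$ — namely $r/t$, $t$, $r-1$, $t+r/t$, and $k_r=-\sqrt{1+r^{2}+r^{2}/t^{2}+t^{2}}$ (which satisfies $k_{t^{2}}=-(1+t^{2})$) — depends continuously, indeed real-analytically, on $r$ near $t^{2}$, so $f_r\to f$ and $g_r\to g$ uniformly on $\T$. Since the entries of $Q_n(r)-Q_n(t^{2})$ are $f_r-f$, $\pm(g_r-g)$ and their conjugates, all evaluated at $\theta+n\Phi$, one gets $\sup_{n\in\bbZ}\|Q_n(r)-Q_n(t^{2})\|\le C\bigl(\|f_r-f\|_{\infty}+\|g_r-g\|_{\infty}\bigr)\to 0$, uniformly in $\theta$; this estimate is elementary and I would not write it out in detail. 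Consequently $W_{\lambda_1,\lambda_2,\Phi,\theta,r}\to W_{\lambda_1,\lambda_2,\Phi,\theta,t^{2}}$ in operator norm.

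Finally, all these operators are unitary, hence normal, so each spectrum $\Sigma_r$ is well defined and $\theta$-independent (minimality of the rotation by $\Phi$), and the perturbation bound $\mathrm{dist}_{H}\!\bigl(\Sigma_r,\Sigma_{t^{2}}\bigr)\le\bigl\|W_{\lambda_1,\lambda_2,\Phi,\theta,r}-W_{\lambda_1,\lambda_2,\Phi,\theta,t^{2}}\bigr\|$ — which is exactly what \cite[Theorem~1.2]{BT2025} delivers in this context — yields the asserted continuity. The only point that needs care, and the reason the lemma is not vacuous in our setting, is that $r=t^{2}$ is precisely the \emph{singular} regime where the Szeg\H{o} cocycle $S^{+}_{z,r}$ degenerates at the zeros of $f$; by carrying out the argument at the level of the bounded unitary operator $W_{\lambda_1,\lambda_2,\Phi,\theta,t^{2}}$ rather than of the cocycle — the coin entries $f,g$ themselves remaining bounded and varying continuously with $r$ straight through $r=t^{2}$ — this degeneracy plays no role. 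There is thus no substantial analytic obstacle; the work is the uniform-convergence bookkeeping of the second paragraph.
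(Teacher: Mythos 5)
Your proposal is correct. The paper does not give a proof of Lemma~\ref{lem6.10}---it simply defers to Theorem~1.2 of \cite{BT2025}---whereas you supply the underlying elementary argument: the $r$-dependence lives entirely in the block-diagonal coin $Q_{\lambda_2,r}$, the shift $S_{\lambda_1}$ is a fixed unitary, and the coefficients $r/t$, $r-1$, $t+r/t$, $k_r$ are real-analytic in $r$ with $k_r$ bounded away from $0$ near $t^2$ (since $t\neq 0$), so $f_r\to f$ and $g_r\to g$ uniformly and hence $W_r\to W_{t^2}$ in operator norm, uniformly in $\theta$. Combined with the classical bound $\mathrm{dist}_H(\sigma(A),\sigma(B))\le\|A-B\|$ for normal (here unitary) operators---which uses $\|(\lambda-A)^{-1}\|=1/\mathrm{dist}(\lambda,\sigma(A))$ and a Neumann series---this gives the claimed Hausdorff continuity. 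Whether this is literally what Theorem~1.2 of \cite{BT2025} proves is impossible to verify from the present paper alone; most likely that reference proves a continuity result of this flavor in greater generality for quantum walks, and your argument recovers the special case needed here by the most direct route. Your closing remark is worth keeping: the degeneracy of the Szeg\H{o} cocycle at $r=t^2$ (zeros of $f$ on $\mathbb{T}$) is a feature of the transfer-matrix normalization $1/(\lambda_1|f(x)|)$, not of the bounded unitary $W_r$ itself, and working at the operator level makes the limit $r\to t^2$ entirely unproblematic---this is precisely why the paper can pass Lyapunov-exponent information from the nonsingular regime to the singular one via Lemma~\ref{lem6.10}.
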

We are ready to compute the Lyapunov exponent of the singular cocycle.
\begin{theorem}\label{thm6.11}
    For any $\lambda_{1}\in(0,1)$, $\lambda_{2}\in[\frac{|1-t^{2}|}{1+t^{2}},1)$ and $z\in\Sigma$, we have
    $$ L(\Phi,S_{z}^{+})=
    \begin{cases}
    0, \lambda_{1}\in[\frac{|1-t^{2}|}{1+t^{2}},1);\\
     \ln{\frac{1}{2\left|t\right|\lambda_{1}}(\left|1-t^{2}\right|\lambda_{1}'+\sqrt{(1+t^{2})^{2}\lambda_{1}'^{2}-4t^{2}})}, \lambda_{1}\in(0,\frac{|1-t^{2}|}{1+t^{2}}).
    \end{cases}$$
\end{theorem}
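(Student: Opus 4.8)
The strategy is to realise the singular cocycle $(\Phi,S_z^+)$ as a limit of the perturbed analytic cocycles $(\Phi,S_{z,r}^+)$ of Lemma~\ref{lem6.9} as $r\to t^2$, but to carry out the limit on the \emph{entire} part $D_z$ rather than on $S_z^+$ itself; only the former survives the degeneration of the analyticity strip.

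First I decouple the scalar weight. For $z\in\partial\mathbb{D}$ the pointwise factorisation $S_z^+(x)=\dfrac{z^{-1}}{\lambda_1|f(x)|}D_z(x)$ together with $\ln|f|\in L^1(\mathbb{T})$ — the degree‑one trigonometric polynomial $f$ has only finitely many zeros on $\mathbb{T}$, around which $\ln|f|$ has merely logarithmic singularities — yields, via the ergodic theorem exactly as in the non‑singular case,
\[
L(\Phi,S_z^+)=L(\Phi,D_z)-\ln\lambda_1-\int_{\mathbb{T}}\ln|f(x)|\,dx,
\]
and the same identity holds for the perturbed data with the weight equal to $I_r(0)=\int_{\mathbb{T}}\ln|f_r(x)|\,dx$. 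Applying Jensen's formula to $f$, whose associated quadratic has both roots on $\partial\mathbb{D}$ precisely in the singular range $\lambda_2\ge\frac{|1-t^2|}{1+t^2}$, gives $\int_{\mathbb{T}}\ln|f|\,dx=\ln(|t|\lambda_2)-\ln(-k)$; by Lemma~\ref{lem:integral} this equals $\lim_{r\to t^2}I_r(0)$. Thus the whole problem reduces to computing $L(\Phi,D_z,0)$ for $z\in\Sigma$.

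Now fix $z\in\Sigma$. By Lemma~\ref{lem6.10} choose $z_r\in\Sigma_r$ with $z_r\to z$ as $r\to t^2$. The entries of $D_{z,r}$ are trigonometric polynomials in $x$ whose coefficients depend continuously on $(z,r)$, so $D_{z_r,r}\to D_z$ in $C_\delta^\omega(\mathbb{T},\mathbb{M}(2,\mathbb{C}))$ for \emph{every} $\delta>0$; hence $L(\Phi,D_{z_r,r})\to L(\Phi,D_z)$ by Theorem~\ref{lem6.2}. On the other hand, by the factorisation for the perturbed data and Lemma~\ref{lem6.9},
\[
L(\Phi,D_{z_r,r})=L(\Phi,S_{z_r,r}^+)+\ln\lambda_1+I_r(0)=\max\{G(r,\lambda_1,\lambda_2),0\}+\ln\lambda_1+I_r(0).
\]
Letting $r\to t^2$ and substituting back into the factorisation for $z$, the $\ln\lambda_1$ and the weight terms cancel, leaving $L(\Phi,S_z^+)=\lim_{r\to t^2}\max\{G(r,\lambda_1,\lambda_2),0\}$. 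Plugging in the $\epsilon\to\pm\infty$ asymptotic constants of $L(\Phi,D_{z,r},\epsilon)$ from the lemma preceding Lemma~\ref{lem6.9} together with the two branch values of $I_r(0)$, a short computation gives $G(r,\lambda_1,\lambda_2)\to 0$ (indeed $G(r,\lambda_1,\lambda_2)<0$ for $r\ne t^2$ near $t^2$) when $\lambda_1\in(\tfrac{|1-t^2|}{1+t^2},1)$, and $G(r,\lambda_1,\lambda_2)\to\ln\dfrac{|1-t^2|\lambda_1'+\sqrt{(1+t^2)^2\lambda_1'^2-4t^2}}{2|t|\lambda_1}>0$ when $\lambda_1\in(0,\tfrac{|1-t^2|}{1+t^2})$; since $\max\{\cdot,0\}$ is continuous, this is the claimed dichotomy.

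It remains to treat the edge parameters $\lambda_1=\frac{|1-t^2|}{1+t^2}$ and $\lambda_2=\frac{|1-t^2|}{1+t^2}$, which are excluded from the hypotheses of Lemma~\ref{lem6.9}; these I would obtain by one further continuity step in the coupling constants, using that $D_z$ depends analytically on $(\lambda_1,\lambda_2)$ and the spectrum depends continuously on them, approximating the edge values from inside the open ranges just handled and invoking Theorem~\ref{lem6.2} once more for the $D$‑cocycles, the right‑hand side extending continuously to $0$, resp.\ to the stated logarithm. The genuine obstacle throughout is the one that forces this detour: the perturbed Szeg\H{o} cocycles $S_{z,r}^+$ are holomorphic only on strips $\mathbb{T}_{\epsilon_0(r)}$ with $\epsilon_0(r)=\frac{1}{2\pi}\bigl|\ln\tfrac{\sqrt{|r|}}{|t|}\bigr|\to 0$, so continuity of the Lyapunov exponent cannot be applied to the $S^+$‑family directly; isolating the entire factor $D_z$ (which \emph{does} converge in every $C_\delta^\omega$) and controlling the scalar weight $\int\ln|f_r|$ separately is precisely what legitimises the limit $r\to t^2$.
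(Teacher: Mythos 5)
Your proposal is correct and follows essentially the same route as the paper: approximate the singular case by $r\to t^{2}$, use Lemma~\ref{lem6.10} to find $z_{r}\in\Sigma_{r}$ with $z_{r}\to z$, pass to the limit via continuity of the Lyapunov exponent applied to the entire family $D_{z,r}$ (Theorem~\ref{lem6.2}) together with the continuity of the weight integral from Lemma~\ref{lem:integral}, invoke Lemma~\ref{lem6.9}, and treat the edge values $\lambda_{1}$ or $\lambda_{2}=\tfrac{|1-t^{2}|}{1+t^{2}}$ by a further approximation in the coupling constants. Your write-up in fact makes explicit a point the paper leaves compressed in the phrase ``by Theorem~2.2 applied to $L(\Phi,D_{z,r})$'' — namely that continuity cannot be applied directly to the $S^{+}_{z,r}$ family because its holomorphy strips shrink as $r\to t^{2}$, so one must factor out the scalar weight, transfer the limit through $D_{z,r}$, and recombine — which is precisely the right reading of the paper's argument.
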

\begin{proof}
    If $\lambda_{1}\neq\frac{\left|1-t^{2}\right|}{1+t^{2}}$, $\lambda_{2}\in(\frac{|1-t^{2}|}{1+t^{2}},1)$ and $z\in\Sigma$. When $r\to t^{2}$, by Lemma \ref{lem6.10}, there exist $z_{r}\in\sum_{r}$ such that $z_{r}\to z$. Then  $L(\Phi,S_{z_{r},r}^{+})\to L(\Phi,S_{z}^{+})$ by Theorem \ref{lem6.2} applied to $L(\Phi,D_{z,r})$.  By Lemma \ref{lem6.9},
$ L(\Phi,S_{z_{r},r}^{+})=\max\{G(r,\lambda_{1},\lambda_{2}),0\}$. Taking the limit of $G(r,\lambda_{1},\lambda_{2})$ as $r\to t^{2}$, we get the conclusion.

    If $\lambda_{1}=\frac{\left|1-t^{2}\right|}{1+t^{2}}$ or $\lambda_{2}=\frac{|1-t^{2}|}{1+t^{2}}$, we can approximate the model with $\lambda_{1}\to\frac{\left|1-t^{2}\right|}{1+t^{2}}$ or $\lambda_{2}\to\frac{\left|1-t^{2}\right|}{1+t^{2}}$ and use the same continuity argument above.
\end{proof}

\section{Absolutely continuous spectrum in the subcritical region}

First we have the following general result: 
\begin{theorem}\label{thmac}
Let $\Phi\in DC$ and consider the extended CMV matrix $\tilde{\mathcal{E}}(\theta)$ with Verblunsky coefficients given by
\begin{equation}\label{eq.evenOddCoeff}
\alpha_{2n-1}=g_1(\theta+n\Phi),\qquad
\alpha_{2n}=g_2(\theta+n\Phi),
\end{equation}
where $g_1,g_2:\mathbb{T}\to\mathbb{D}$ are analytic sampling functions satisfying $\sup_{x\in\T}|g_i(x)|<1, i=1,2$.
If the Szeg\H{o} cocycle associated with $\tilde{\mathcal{E}}(\theta)$ is subcritical for every $z\in\Sigma$, then $\tilde{\mathcal{E}}(\theta)$ has purely absolutely continuous spectrum for all $\theta\in\mathbb{T}$.
\end{theorem}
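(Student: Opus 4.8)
\emph{Strategy and Step 1 (reduction to the Szeg\H{o} cocycle).} The plan is to run the standard route for absolute continuity in the subcritical regime: pass to the associated Szeg\H{o} cocycle, apply Avila's almost reducibility theorem, and convert almost reducibility into purely absolutely continuous spectrum, adapting to the CMV framework the arguments of \cite{CFO23, CFLOZ24}. First, exactly as in Section 2, after the gauge transformation of Lemma \ref{lem.gaugeTransform} and the conjugation \eqref{eq.ConjugatedTransferMatrix} (which here, unlike in the main model, is not constant but is analytic and uniformly bounded since the $\rho_i$ are bounded below), solutions of $\tilde{\mathcal E}(\theta)\psi=z\psi$ are governed by the two-step combined Szeg\H{o} cocycle $(\Phi,S_z^+)$ whose map is built from $g_1,g_2$ and $\rho_i(x)=\sqrt{1-|g_i(x)|^2}$. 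Because $\sup_{x\in\T}|g_i(x)|<1$, each $\rho_i^2=1-|g_i|^2$ is bounded away from $0$ on $\T$, so (as in the construction around \eqref{extension}) $\rho_i$ admits a nonvanishing analytic extension to a strip $\T_\delta$, and hence $S_z^+\in C^\omega_\delta(\T,\SU(1,1))$ with $\delta$ and all bounds locally uniform in $z$; via the Cayley transform $\SU(1,1)\cong\SL(2,\bbR)$ we treat $(\Phi,S_z^+)$ as an analytic one-frequency $\SL(2,\bbR)$-cocycle.

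\emph{Step 2 (almost reducibility).} By hypothesis $(\Phi,S_z^+)$ is subcritical for every $z\in\Sigma$, i.e. $L(\Phi,S_z^+)=0$ and $\omega(\Phi,S_z^+,\eta)=0$ for $|\eta|$ small. By Avila's almost reducibility theorem for one-frequency analytic cocycles (see \cite{Avila2015Acta}), subcriticality implies that $(\Phi,S_z^+)$ is almost reducible: for every $\varepsilon>0$ there are $B=B_{z,\varepsilon}\in C^\omega(\T,\mathrm{PSL}(2,\bbR))$ and a constant $A_\ast=A_{z,\varepsilon}\in\SL(2,\bbR)$ with $\|B(\cdot+\Phi)^{-1}S_z^+B-A_\ast\|_{h(\varepsilon)}<\varepsilon$ on a strip of positive width $h(\varepsilon)$. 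The Diophantine hypothesis $\Phi\in DC$ is used here to obtain the quantitative version of almost reducibility, with control uniform in the phase.

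\emph{Step 3 (from almost reducibility to purely a.c. spectrum).} Almost reducibility is exploited in two ways. First, it yields uniform subexponential growth of transfer matrices, $\sup_{x\in\T}\|S_z^{+,n}(x)\|=O(e^{\varepsilon|n|})$ for every $\varepsilon>0$, locally uniformly in $z$ near $\Sigma$. Second, and decisively for an \emph{all}-$\theta$ statement, approximating $(\Phi,S_z^+)$ by the reducible cocycle conjugate to the constant $A_\ast$ — which for $z\in\Sigma$ is elliptic or parabolic, never hyperbolic — produces polynomially bounded solutions of $\tilde{\mathcal E}(\theta)\psi=z\psi$; the resulting absence of subordinate solutions, together with the Gilbert--Pearson subordinacy theory for CMV matrices (cf. \cite{OPUC2}), forces the canonical spectral measure to be purely absolutely continuous. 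Since the approximation error is controlled uniformly in $\theta$, this holds for every $\theta\in\T$. As the arguments of \cite{CFO23, CFLOZ24} only used analyticity and subcriticality of the Szeg\H{o} cocycle (not the specific UAMO sampling functions) and are insensitive to whether one or both of $\alpha_{2n-1},\alpha_{2n}$ are dynamical, they transfer to the present setting with only cosmetic changes.

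\emph{Main obstacle.} The heart of the matter is the upgrade from ``a.c. for a.e. $\theta$'' — which already follows from Kotani theory for CMV matrices once $L(\Phi,S_z^+)\equiv 0$ on $\Sigma$ — to ``a.c. for every $\theta$''; this genuinely requires the quantitative strength of almost reducibility (uniformity in the base point and in $z\in\Sigma$), not merely the vanishing of the Lyapunov exponent. A secondary technical point is setting up the $\SU(1,1)$-structure and the uniform analytic strip for $S_z^+$ while carrying two honestly $x$-dependent sampling functions $g_1,g_2$ in the combined cocycle, and checking that the non-constant analytic conjugation in \eqref{eq.ConjugatedTransferMatrix} affects neither the cocycle dynamics nor the spectral conclusions.
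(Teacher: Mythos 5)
Your proposal tracks the paper's strategy closely in its first two steps: both reduce to the combined two-step Szeg\H{o} cocycle (so that the sequence $\alpha_{2n-1},\alpha_{2n}$ with two sampling functions becomes a single quasi-periodic cocycle over $\T$), and both invoke Avila's solution of the Almost Reducibility Conjecture to upgrade ``subcritical'' to ``almost reducible.'' Your observation that the even-indexed coefficients being genuinely $x$-dependent makes the conjugation $R_{2n}$ of \eqref{eq.ConjugatedTransferMatrix} non-constant (but still analytic and bounded under $\sup|g_i|<1$) is correct and matches a point the paper acknowledges.

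Where you diverge is Step 3, the conversion from almost reducibility to purely a.c.\ spectrum for \emph{every} $\theta$, and there your sketch has a real gap. Almost reducibility does \emph{not}, by itself, yield ``polynomially bounded solutions and hence absence of subordinate solutions'' on all of $\Sigma$: at a given $z\in\Sigma$ the cocycle is only conjugated $\varepsilon$-close to a constant on a shrinking strip, which is weaker than reducibility, and even at $z$ where the constant $A_\ast$ is parabolic there can be genuinely unbounded (subordinate) solutions. A Gilbert--Pearson/Jitomirskaya--Last argument needs to exclude subordinate solutions for a set of $z$ that carries the full spectral measure, and that step is precisely what is missing. The paper's route avoids this by citing the perturbative-to-nonperturbative bootstrap adapted to CMV in \cite{LDZ22TAMS} (following \cite{Avila2008}): quantitative almost reducibility is used to control $\int_\T\lVert S_z^{+,n}(x)\rVert^2\,dx$ uniformly in $n$ and $z\in\Sigma$, and a Last--Simon-type theorem converts this into purely a.c.\ spectrum for all phases. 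In the perturbative regime one invokes \cite[Theorem~1.1]{LDZ22TAMS} directly; in the subcritical-but-not-small regime one first applies ARC \cite{Avila23} to land in the almost-reducible setting and then runs the same estimates. If you want to keep a subordinacy-flavored presentation you would need to add the ingredient that reducibility (to rotations) holds for a full-measure set of $z\in\Sigma$ plus the $L^2$-in-$x$ transfer-matrix bound; as written, your Step 3 asserts the conclusion without supplying either.
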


\begin{proof}
The coefficients specified by \eqref{eq.evenOddCoeff} is technically not quasiperiodic but almost periodic. This seemingly difficulty can be easily resolved by combining two consecutive steps. Indeed, if we deonte $S(\alpha,z)$ as the cocycle map associted to the sampling function $\alpha:\T\to\bbD$, then instead of working with the cocycles $(\Phi,S(g_1,z)),(\Phi,S(g_2,z))$, we work directly with $(2\Phi, S(g_2,z)S(g_1,z))$. The new system is a quasiperiodic cocycle with possibly doubled frequency $2\Phi$, compare \cite{LDZ22JFA}.

The strategy of the proof originates in Avila's work \cite{Avila2008}.
When the Verblunsky coefficients are in the perturbative regime, namely when the sampling functions are sufficiently close to constants, the conclusion follows directly from \cite[Theorem~1.1]{LDZ22TAMS}.
In the nonperturbative regime (where the smallness condition is independent of the frequency $\Phi$), the argument is carried out in \cite[Appendix~B]{LDZ22TAMS}.

To treat the subcritical regime, one replaces \cite[Theorem~B.1]{LDZ22TAMS} by Avila's solution of the Almost Reducibility Conjecture (i.e. subcritical implies almost reducible)\cite{Avila23} .
Although in the present setting the extended CMV matrix is almost-periodic—since $\alpha_{2n-1}$ and $\alpha_{2n}$ are generated by two different analytic sampling functions—the argument carries over verbatim. The essential input is the refined growth control provided by almost reducibility, and no step of the proof relies on the coincidence of the sampling functions. For closely related arguments in the Schr\"odinger case, see, for instance, \cite[Theorem 4.1]{ZhouWang2023CMP}.
\end{proof}

%We need the following global-to-local reduction lemma to turn the subcritical cocycles into perturbations of constant ones. 
%Let $h>0$ be given, and for any function $f$ defined on $\{z\in\bbC:|\Im z|<h\}$, let $\Vert f\Vert_{h}=\sup_{|\Im z|<h}\Vert f\Vert$.
%\begin{lemma}\label{ARC}
%Let $\Phi\in DC$, and let $\Sigma^{sub}$ be the set of spectral parameters for which the cocycle $(\Phi,S^{+}_{e^{is}})$ is subcritical. Then there exists $h=h(\Phi)>0$ such that for any $\eta>0$, and for any $e^{is}\in\Sigma^{sub}$, there exist $Z_{s}\in C^{\omega}(2\T,\SU(1,1))$, $f_{s}(\theta)\in C^{\omega}(\T,\mathrm{su}(1,1))$, $\phi(s)\in\bbR$, $D_{\phi(s)}=\mathrm{diag}\{e^{i\phi(s)},e^{-i\phi(s)}\}$ such that
%\begin{equation}\label{eq.conjugate}
%Z^{-1}_{s}(\theta+\Phi)S^{+}_{e^{is}}(\theta)Z_{s}(\theta)=D_{\phi(s)}e^{f_{s}(\theta)}
%\end{equation}
%with $\|f_{s}\|_{h}<\eta,\|Z_{s}\|_{h}<\Gamma(\Phi,\eta,\lambda_{1},\lambda_{2})$ for some constant $\Gamma$.
%\end{lemma}
%The proof depends on Avila's solution of almost-reducible conjecture \cite{Avila2015Acta,Avila23}, that is, if the cocycle is subcritical then it is almost-reducible.  With the compactness argument from \cite[Proposition 5.2]{LYZZ24}, the key observation here is that one can choose $h$ and $\Gamma(\Phi,\eta)$ to be independent of $e^{is}$.

\begin{proof}[Proof of Theorem \ref{thm.main}(b) and Theorem \ref{main2}(3)]
In our setting, $\alpha_{2n}=\lambda'$ is constant, while $\alpha_{2n-1}=g(\theta+n\Phi)$ remains quasiperiodic.  
The conclusion therefore follows directly from Theorem~\ref{thm6.4}, Lemma~\ref{cor.simpleLE}, and Theorem~\ref{thmac}.

\end{proof}

\section{Singular continuous spectrum}

\subsection{Singular continuous spectrum in regime  \Rmnum{2}}
In this section, we prove purely singular continuous spectrum in regime  \Rmnum{2}, first we state the absence of absolutely continuous spectrum: 

\begin{lemma}\label{coro.noAc2}
    For $\lambda_{2}\in[\frac{|1-t^{2}|}{1+t^{2}},1)$, $W_{\lambda_{1},\lambda_{2},\Phi,\theta}$ has no absolutely continuous spectrum for every irrational $\Phi$ and  every $\theta\in\T$.
\end{lemma}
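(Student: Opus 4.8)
The plan is to show that, throughout this parameter range, the Verblunsky coefficients of the generalized extended CMV matrix $\mathcal{E}$ representing $W_{\lambda_{1},\lambda_{2},\Phi,\theta}$ come arbitrarily close to $\partial\bbD$ along the iterates of the base rotation, and then to invoke the Rakhmanov-type theorem recalled in the introduction (Simon, Theorems~4.3.4 and~10.9.7 in \cite{OPUC2}).

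First I would record when $f$ vanishes. Since $f(x)=\frac{1}{k}\bigl(2t\lambda_{2}\cos 2\pi x+(t^{2}-1)\lambda_{2}'\bigr)$, a zero on $\T$ exists iff $\bigl|\tfrac{(1-t^{2})\lambda_{2}'}{2t\lambda_{2}}\bigr|\le 1$, and squaring this inequality and using $\lambda_{2}'^{2}=1-\lambda_{2}^{2}$ shows it is equivalent to $\lambda_{2}\ge\frac{|1-t^{2}|}{1+t^{2}}$; fix such a zero $x_{0}$. From the normalization $|f|^{2}+|g|^{2}\equiv 1$ we get $|g(x_{0})|=1$, hence $\sup_{x\in\T}|g(x)|=1$. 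Because $\Phi$ is irrational, the rotation $x\mapsto x+\Phi$ is minimal, so for every $\theta\in\T$ the orbit $\{\theta+n\Phi\}$ is dense in $\T$ along each tail $n\to+\infty$ and $n\to-\infty$; continuity of $g$ then gives $\limsup_{n\to\pm\infty}|g(\theta+n\Phi)|=\sup_{x}|g(x)|=1$. By \eqref{GECMV} and \eqref{eq.generalForm}, the odd Verblunsky coefficients of $\mathcal{E}$ are $\alpha_{2n-1}=g(\theta+n\Phi)$, while the even ones equal the constant $\lambda_{1}'<1$; hence $\limsup_{n\to\pm\infty}|\alpha_{n}|=1$.

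Since $f$ is real-valued, $\rho_{2n-1}=f(\theta+n\Phi)$ may be negative, so $\mathcal{E}$ is a genuine GECMV rather than the standard ECMV, and the cited theorem is stated for the latter. To apply it verbatim I would first use Lemma~\ref{lem.gaugeTransform} to pass to the standard extended CMV matrix $\widetilde{\mathcal{E}}$ that is gauge-(diagonal-)unitarily equivalent to $\mathcal{E}$, has the same moduli $|\alpha_{n}|$, and satisfies $\rho_{n}=|\rho_{n}|\ge 0$. Unitary conjugation preserves the spectral type, so it suffices to show $\widetilde{\mathcal{E}}$ has no absolutely continuous spectrum, and this is immediate from the extended Rakhmanov-type theorem since the Verblunsky coefficients of $\widetilde{\mathcal{E}}$ satisfy $\limsup_{n\to\pm\infty}|\alpha_{n}|=1$. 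Transporting the conclusion back through the gauge conjugacy and the identification $\mathcal{E}\cong W_{\lambda_{1},\lambda_{2},\Phi,\theta}$ yields the statement for every irrational $\Phi$ and every $\theta\in\T$.

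The step I expect to require the most care is not any single estimate but the bookkeeping around degeneracies: if $\theta+n_{0}\Phi=x_{0}$ for some $n_{0}$ (a countable set of $\theta$'s), then $\rho_{2n_{0}-1}=0$ and $\mathcal{E}$ decouples into half-line pieces. One must then observe that each piece still has $\limsup|\alpha_{n}|=1$ on its infinite tail and apply the half-line version of the theorem to each summand, so the conclusion persists for \emph{all} $\theta$. Stating the reduction from the complex-$\rho$ GECMV to the standard ECMV via Lemma~\ref{lem.gaugeTransform} cleanly is the other point to handle with attention; everything else is routine.
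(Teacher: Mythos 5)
Your argument matches the paper's proof in all essentials: identify a zero $x_0$ of $f$ on $\T$ when $\lambda_2\ge\frac{|1-t^2|}{1+t^2}$, use $|f|^2+|g|^2\equiv 1$ to conclude $|g(x_0)|=1$, invoke minimality of the irrational rotation to get $\limsup_{n\to\pm\infty}|\alpha_{2n-1}|=1$, gauge-conjugate via Lemma~\ref{lem.gaugeTransform} to a genuine ECMV, and then apply the Rakhmanov-type absence-of-AC theorem (which the paper records as an internal Lemma and extends to the bi-infinite setting via the standard half-line comparison). Your final paragraph flags a real subtlety that the paper glosses over: for the countable set of $\theta$ with $\theta+n_0\Phi$ landing exactly at a zero of $f$, one has $\rho_{2n_0-1}=0$, $|\alpha_{2n_0-1}|=1$, and $\mathcal{E}$ decouples into half-line pieces; your observation that each infinite tail still carries $\limsup|\alpha_n|=1$ (by density of the orbit in each direction) so that the half-line Rakhmanov theorem applies separately is exactly what is needed to justify the claim for \emph{every} $\theta$, and is a welcome piece of bookkeeping.
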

\begin{proof}
    First we generalize the well-known Rakhmanov lemma\cite{OPUC2} to GECMV:
    \begin{lemma}[\cite{OPUC2}]\label{lemma4.1}
  If $\limsup_{n\to\infty}\left|\alpha_{n}\right|=1$, then the associated GECMV has no absolutely continuous spectrum.
\end{lemma}
\begin{proof}
    In the case that $\mathcal{E}$ is an extended CMV matrix, let $\mathcal{C}$ be the standard CMV matrix associated to $\{\alpha_n\}_{n\geq 0}$. If $\limsup_{n\to+\infty}|\alpha_n|=1$, then by \cite[Theorem 10.9.7]{OPUC2}, $\mathcal{C}$ has no absolutely continuous spectrum. It is a well known result that the essential spectrum of $\mathcal{C}$ coincides with that of $\mathcal{E}$ and in each connected component of $\partial\mathbb{D}\setminus \sigma(\mathcal{E})$ there is up to one discrete spectrum of $\mathcal{C}$. This implies that if $\mathcal{C}$ has no absolutely continuous spectrum, then $\mathcal{E}$ has no absolutely continuous spectrum.

    In the case that $\mathcal{E}$ is a GECMV, then by Lemma \ref{lem.gaugeTransform} there exists a diagonal unitary matrix $\mathcal{U}=\mathcal{U}(\theta)$ such that $\tilde{\mathcal{E}}=\mathcal{U}\mathcal{E}\mathcal{U}^{-1}$ is an extended CMV matrix with the same Verblunsky coefficients. Apply the arguments above we obtain the lemma.
\end{proof}

Recall that $f(x)=\frac{1}{k}(2t\lambda_{2}\cos{2\pi x}+(t^{2}-1)\lambda_{2}')$, thus
$f$ has zeros on $\mathbb{T}$ if and only if $\lambda_{2}\in[\frac{|1-t^{2}|}{1+t^{2}},1)$. Then the result follows from Lemma \ref{lemma4.1} and the minimality of the linear flow $x\mapsto x+\Phi$. 
\end{proof}

It remains to show the absence of point spectrum, then Theorem \ref{thm.main}(1) follows directly  from  Lemma \ref{coro.noAc2} and Lemma \ref{prop.noEigen2}. 
\begin{lemma}\label{prop.noEigen2}
     For $\lambda_{1}\in [\frac{|1-t^{2}|}{1+t^{2}},1)$, $W_{\lambda_{1},\lambda_{2},\Phi,\theta}$ has no eigenvalue for all irrational $\Phi$ and $\theta$ if $\theta$ is irrational with respect to $\Phi$.
\end{lemma}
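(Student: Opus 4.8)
The strategy is to use Aubry--André duality (Lemma~\ref{3.1}) to transfer the problem of non-existence of eigenvalues for $W_{\lambda_1,\lambda_2,\Phi,\theta}$ in regime~\Rmnum{2} to a statement about the absence of absolutely continuous spectrum for the dual operator, and then to invoke Lemma~\ref{coro.noAc2} applied to the dual. Concretely, suppose toward a contradiction that $W_{\lambda_1,\lambda_2,\Phi,\theta}\psi = z\psi$ for some $z\in\partial\mathbb D$ and some nonzero $\psi\in\mathcal H=\ell^2(\bbZ)\otimes\bbC^2$. Since $\lambda_1\in[\frac{|1-t^2|}{1+t^2},1)$, the roles of $\lambda_1$ and $\lambda_2$ are swapped in the dual, so the dual operator $W_{\lambda_2,\lambda_1,\Phi,\xi}^\top$ has its ``second'' coupling parameter $\lambda_1$ lying in the singular window $[\frac{|1-t^2|}{1+t^2},1)$; this is exactly the hypothesis under which Lemma~\ref{coro.noAc2} (via the Rakhmanov-type Lemma~\ref{lemma4.1}) guarantees that $W_{\lambda_2,\lambda_1,\Phi,\xi}^\top$ has purely singular spectrum for every $\xi$.

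\textbf{Key steps, in order.} First I would make the duality quantitative: from $\psi\in\ell^2(\bbZ)\otimes\bbC^2$ one forms the $\bbC^2$-valued function $\hat\psi$ on $\T$ with Fourier coefficients $\psi_n$, and Lemma~\ref{3.1} produces, for a.e.\ $\xi$, a solution $\varphi^\xi$ of $W_{\lambda_2,\lambda_1,\Phi,\xi}^\top\varphi^\xi = z\varphi^\xi$. The crucial point is that $\varphi^\xi$ is \emph{not} in general $\ell^2$ in $n$—rather, by Parseval, $\sum_n |\varphi_n^\xi|^2$ is the relevant object and $\int_\T \sum_n |\varphi_n^\xi|^2\,d\xi = \sum_n|\psi_n|^2<\infty$ (the linear change of frame $\begin{pmatrix} a&b\\ b&-a\end{pmatrix}$ is unitary since $a^2+b^2=1$). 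So $\varphi^\xi\in\ell^2$ for a.e.\ $\xi$, and it is genuinely nonzero on a positive-measure set of $\xi$. Hence $z$ is an eigenvalue of $W_{\lambda_2,\lambda_1,\Phi,\xi}^\top$ for $\xi$ in a set of positive Lebesgue measure. Second, I would record that eigenvalues are point spectrum, hence contribute to the singular (indeed pure point) part of the spectral measure; but by Lemma~\ref{coro.noAc2} the dual operator has purely singular spectrum anyway, so I instead need the \emph{converse} direction: a positive-measure family of eigenvalues is incompatible with the structure of the dual. The right tool is to observe that $W_{\lambda_2,\lambda_1,\Phi,\xi}^\top$, being (up to transpose and reordering) an extended CMV matrix driven by the minimal translation $\xi\mapsto\xi+\Phi$, is a measurable family over an ergodic base, so by a standard argument (Fubini plus ergodicity, as in the Pastur/Deift--Simon-type reasoning for ergodic operators) the set of $\xi$ for which a \emph{fixed} $z$ is an eigenvalue has measure zero unless $z$ lies in a deterministic exceptional set; iterating over a measurable selection $\xi\mapsto z(\xi)$ of eigenvalues and using that $\T$ is separable leads to a contradiction with the irrationality of $\theta$ with respect to $\Phi$. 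This is precisely where the hypothesis $2\theta+j\Phi\notin\bbZ$ for all $j$ enters: it rules out the resonant phases at which the duality degenerates.

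\textbf{Main obstacle.} The delicate part is the last step—upgrading ``$z$ is a dual eigenvalue for a positive-measure set of $\xi$'' to an outright contradiction. Two standard routes are available. (i) The Deift--Simon / Kotani-type argument: if an ergodic family of CMV (or unitary) operators has eigenvalues filling a positive-measure set of energies with positive probability, one derives a contradiction with the almost-sure constancy of the spectrum together with the fact that point spectrum of ergodic operators is a.s.\ supported on a fixed set and each fixed $z$ is a.s.\ not an eigenvalue. (ii) More self-contained here: combine Lemma~\ref{3.1} with its inverse to show that a genuine $\ell^2$ eigenfunction $\varphi^\xi$ of the dual would, upon dualizing back, produce an absolutely continuous component in the spectral measure of the \emph{original} $W_{\lambda_1,\lambda_2,\Phi,\theta}$ at the phases where the dual has point spectrum—or, symmetrically, that point spectrum on a positive-measure set of $\xi$ forces absolutely continuous spectrum for the dual family $\{W_{\lambda_2,\lambda_1,\Phi,\xi}^\top\}$, which is forbidden by Lemma~\ref{coro.noAc2} since $\lambda_1\in[\frac{|1-t^2|}{1+t^2},1)$. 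Route (ii) is cleaner because it uses exactly the lemma already proved; the technical content is checking that the duality map intertwines the spectral measures in the required way (an integrated density of states / Thouless-formula bookkeeping, or a direct Fourier-side computation showing the map $\psi\mapsto\{\varphi^\xi\}$ is, after integration in $\xi$, a unitary intertwiner between $W_{\lambda_1,\lambda_2,\Phi,\theta}$ and the direct integral $\int^{\oplus}_\T W_{\lambda_2,\lambda_1,\Phi,\xi}^\top\,d\xi$). I expect this intertwining verification, rather than any deep new idea, to be the bulk of the work.
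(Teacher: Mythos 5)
Your Parseval step is wrong, and it is a fatal error. You compute
\[
\int_\T \sum_n |\varphi^\xi_n|^2\,d\xi \;=\; \sum_n |\psi_n|^2 < \infty,
\]
and conclude $\varphi^\xi\in\ell^2(\bbZ)\otimes\bbC^2$ for a.e.\ $\xi$. But the frame matrix $\begin{pmatrix}a&b\\ b&-a\end{pmatrix}$ being orthogonal gives $|\varphi^\xi_n|=|\check\psi(\xi+n\Phi)|$, so by translation invariance of Lebesgue measure
\[
\int_\T \sum_n |\varphi^\xi_n|^2\,d\xi \;=\; \sum_n \int_\T |\check\psi(\xi+n\Phi)|^2\,d\xi \;=\; \sum_n \|\check\psi\|^2_{L^2(\T)} \;=\; +\infty.
\]
Aubry duality does not turn a decaying $\ell^2$ eigenfunction into a decaying $\ell^2$ eigenfunction of the dual; it produces a Bloch-wave--type generalized eigenfunction $\varphi^\xi$ that is merely bounded in $n$ (and only $L^2$ in $\xi$). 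This is the content of the localization/ballistic heuristic underlying duality, so the whole premise that ``$z$ is an eigenvalue of $W_{\lambda_2,\lambda_1,\Phi,\xi}^\top$ on a positive-measure set of $\xi$'' collapses. Your route~(ii) inherits the same problem: to convert bounded generalized eigenfunctions of the dual into a.c.\ spectrum you would need a reducibility-type statement, and at this point you are no longer using Lemma~\ref{coro.noAc2} as a black box but re-deriving a substantial piece of structure.

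The paper's actual mechanism is different and does not go through Lemma~\ref{coro.noAc2} at all. It uses the anti-symmetry $R^{-1}A_z^{-1}(x)R = A_z(-x)$ to assemble from the dual relation (for $x$ and $-x$) a $2\times 2$ matrix $M(x)$ with $L^2(\T)$ entries (built from $\check\phi^\pm(\pm x)$) satisfying $\overline{A_{\lambda_2,\lambda_1,z}(x)}\,M(x)=M(x+\Phi)\operatorname{diag}(e^{2\pi i\theta},e^{-2\pi i\theta})$. Ergodicity plus the assumption that $\theta$ is irrational with respect to $\Phi$ forces $\|M(x)\|>0$ a.e.\ and $\det M(x)\equiv C\in(0,\infty)$, whence $\overline{A_{\lambda_2,\lambda_1,z}(x)}=M(x+\Phi)\operatorname{diag}(e^{2\pi i\theta},e^{-2\pi i\theta})M(x)^*/C$ is in $L^1(\T)$ by Cauchy--Schwarz. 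But $\overline{A_{\lambda_2,\lambda_1,z}}$ carries the factor $1/f_{\lambda_1}(x)$, and since $\lambda_1\in\bigl[\tfrac{|1-t^2|}{1+t^2},1\bigr)$ the function $f_{\lambda_1}$ has a zero on $\T$ of order $1$ or $2$, so $1/f_{\lambda_1}\notin L^1(\T)$. That is the contradiction. The hypothesis on $\lambda_1$ is used exactly to make the cocycle non-integrable, not to invoke Rakhmanov on the dual. You should scrap the positive-measure-of-eigenvalues route and instead track where the zeros of $f_{\lambda_1}$ break integrability of the forced conjugacy.
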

\begin{proof}
The main strategy follows the lines in \cite{AJM2017} and \cite{Hanrui2017}. 
We first note if  $t\lambda_{2}e^{2\pi ix}+t\lambda_{2}e^{-2\pi ix}+(t^{2}-1)\lambda_{2}'\neq0$, then we have  
the following symmetry of cocycle: 
\begin{equation} \label{lem.antiSymmetry} 
    R^{-1}A_{z}^{-1}(x)R=A_{z}(-x),\text{ where } R=\begin{pmatrix}
        0 & 1\\
        -1 & 0
    \end{pmatrix}.
\end{equation}

Now if $\lambda_{1}\in [\frac{|1-t^{2}|}{1+t^{2}},1)$, assume that $W_{\lambda_{1},\lambda_{2},\theta,\Phi}$ has an eigenvalue $z\in\partial\mathbb{D}$ and an eigenfunction $\psi\in \mathcal{H}$. By Lemma  \ref{3.1}, $W_{\lambda_{1},\lambda_{2},\theta,\Phi}$ has Aubry dual $W_{\lambda_{2},\lambda_{1},\xi,\Phi}^{\top}$. So for a.e. $\xi\in\mathbb{T}$, $\varphi^{\xi}$ is an eigenfunction of $W_{\lambda_{2},\lambda_{1},\xi,\Phi}^{\top}$ associated to the  eigenvalue $z$, where 
  $$\begin{pmatrix}
  \varphi_{n}^{\xi,+} \\
  \varphi_{n}^{\xi,-}
  \end{pmatrix}=e^{2\pi in\theta}
  \begin{pmatrix}
    a & b\\
    b & -a
  \end{pmatrix}
  \begin{pmatrix}
  \check{\psi}^{+}(\xi+n\Phi) \\
  \check{\psi}^{-}(\xi+n\Phi)
  \end{pmatrix}:=e^{2\pi in\theta}
  \begin{pmatrix}
  \check{\phi}^{+}(\xi+n\Phi)\\
  \check{\phi}^{-}(\xi+n\Phi)
  \end{pmatrix}.$$
  Therefore, by (\ref{eq.transposeIter}) we have the iterative relation by the transfer matrix of $W^\top_{\lambda_2,\lambda_1,\Phi,\xi}$
  \begin{align}\label{eq4.1}
    \begin{pmatrix}
    e^{2\pi i\theta}\check{\phi}^{+}(x+\Phi)\\
    \check{\phi}^{-}(x)
    \end{pmatrix}=
    \overline{A_{\lambda_{2},\lambda_{1},z}(x)}
    \begin{pmatrix}
    \check{\phi}^{+}(x)\\
    e^{-2\pi i\theta}\check{\phi}^{-}(x-\Phi)
    \end{pmatrix}
  \end{align}
  for a.e. $x\in\mathbb{T}$.
  Replacing $x$ with $-x$, we have
  $$\begin{pmatrix}
  e^{2\pi i\theta}\check{\phi}^{+}(-x+\Phi)\\
  \check{\phi}^{-}(-x)
  \end{pmatrix}=
  \overline{A_{\lambda_{2},\lambda_{1},z}(-x)}
  \begin{pmatrix}
  \check{\phi}^{+}(-x)\\
  e^{-2\pi i\theta}\check{\phi}^{-}(-x-\Phi)
  \end{pmatrix}.$$
  By   \eqref{lem.antiSymmetry} we obtain
  $$\begin{pmatrix}
  e^{2\pi i\theta}\check{\phi}^{+}(-x+\Phi)\\
  \check{\phi}^{-}(-x)
  \end{pmatrix}=
  R^{-1}\overline{A_{\lambda_{2},\lambda_{1},z}(x)}^{-1}R
  \begin{pmatrix}
  \check{\phi}^{+}(-x)\\
  e^{-2\pi i\theta}\check{\phi}^{-}(-x-\Phi)
  \end{pmatrix},$$
that is
  \begin{align}\label{eq4.2}
    \overline{A_{\lambda_{2},\lambda_{1},z}(x)}
    \begin{pmatrix}
    \check{\phi}^{-}(-x)\\
    -e^{2\pi i\theta}\check{\phi}^{+}(-x+\Phi)
    \end{pmatrix}=
    \begin{pmatrix}
    e^{-2\pi i\theta}\check{\phi}^{-}(-x-\Phi)\\
    -\check{\phi}^{+}(-x)
    \end{pmatrix}.
  \end{align}
  By  (\ref{eq4.1}), (\ref{eq4.2}), we have
  $$\overline{A_{\lambda_{2},\lambda_{1},z}(x)}
  \begin{pmatrix}
  \check{\phi}^{+}(x) & \check{\phi}^{-}(-x)\\
  e^{-2\pi i\theta}\check{\phi}^{-}(x-\Phi) & -e^{2\pi i\theta}\check{\phi}^{+}(-x+\Phi)
  \end{pmatrix}=
  \begin{pmatrix}
  e^{2\pi i\theta}\check{\phi}^{+}(x+\Phi) & e^{-2\pi i\theta}\check{\phi}^{-}(-x-\Phi)\\
  \check{\phi}^{-}(x) & -\check{\phi}^{+}(-x)
  \end{pmatrix}.$$
  Let 
  $M(x)=\begin{pmatrix}
  \check{\phi}^{+}(x) & \check{\phi}^{-}(-x)\\
  e^{-2\pi i\theta}\check{\phi}^{-}(x-\Phi) & -e^{2\pi i\theta}\check{\phi}^{+}(-x+\Phi)
  \end{pmatrix}$,
  then
  \begin{align}\label{eq8.3}
      \overline{A_{\lambda_{2},\lambda_{1},z}(x)}M(x)=M(x+\Phi)
  \begin{pmatrix}
  e^{2\pi i\theta} & 0\\
  0 & e^{-2\pi i\theta}
  \end{pmatrix}
  \end{align}
  for a.e. $x\in\mathbb{T}$.
  By the same argument in the proof of \cite{Hanrui2017} and using $\theta$ is $\Phi$-irrational we know $\left \|  M(x)\right \|>0$ and $\det{M(x)}\equiv C>0$ for a.e. $x\in\T$. 

  More precisely, first, if there exists a subset of $\mathbb{T}$ of positive Lebesgue measure where $M(x)=0$, then by (\ref{eq8.3}) and the ergodicity, $M(x)=0$ for a.e. $x\in\mathbb{T}$, thus $\check{\phi}^{\pm}(x)=0$ for a.e. $x\in\mathbb{T}$, which is a contradiction with $\psi\neq0$.

  Second, if there exists a subset of $\mathbb{T}$ of positive Lebesgue measure where $\det{M(x)}=0$, then by $\det{M(x)}=\det{M(x+\Phi)}$ obtained from (\ref{eq8.3}) and the ergodicity we conclude $\det{M(x)}=0$ for a.e. $x\in\mathbb{T}$. Hence for a.e. $x\in\mathbb{T}$, there exists $l(x)\in\mathbb{C}$ s.t. 
  $$\begin{pmatrix}
  \check{\phi}^{+}(x) \\
  e^{-2\pi i\theta}\check{\phi}^{-}(x-\Phi) 
  \end{pmatrix}=l(x)
  \begin{pmatrix}
  \check{\phi}^{-}(-x)\\
  -e^{2\pi i\theta}\check{\phi}^{+}(-x+\Phi)
  \end{pmatrix}.$$
Therefore, by (\ref{eq8.3}) we have
%$$\begin{pmatrix}
  %\check{\phi}^{+}(x+\Phi) \\
  %e^{-2\pi i\theta}\check{\phi}^{-}(x) 
  %\end{pmatrix}=l(x+\Phi)
  %\begin{pmatrix}
  %\check{\phi}^{-}(-x-\Phi)\\
  %-e^{2\pi i\theta}\check{\phi}^{+}(-x)
  %\end{pmatrix}=l(x+\Phi)e^{2\pi i\theta}A_{\lambda_{2},\lambda_{1},z}^{\top}(x)
  %\begin{pmatrix}
  %\check{\phi}^{-}(-x)\\
  %-e^{2\pi i\theta}\check{\phi}^{+}(-x+\Phi)
  %\end{pmatrix},$$
  $$l(x)\begin{pmatrix}
  \check{\phi}^{+}(x+\Phi) \\
  e^{-2\pi i\theta}\check{\phi}^{-}(x) 
  \end{pmatrix}=l(x+\Phi)e^{2\pi i\theta}\overline{A_{\lambda_{2},\lambda_{1},z}(x)}
  \begin{pmatrix}
  \check{\phi}^{+}(x) \\
  e^{-2\pi i\theta}\check{\phi}^{-}(x-\Phi) 
  \end{pmatrix}=l(x+\Phi)e^{4\pi i\theta}
  \begin{pmatrix}
  \check{\phi}^{+}(x+\Phi) \\
  e^{-2\pi i\theta}\check{\phi}^{-}(x) 
  \end{pmatrix}.$$
  Since for a.e. $x\in\mathbb{T}$, $M(x)\neq0$, we have $l(x)=e^{4\pi i\theta}l(x+\Phi)$. So $\left|l(x)\right|$ is a constant by ergodicity, hence $l(x)$ is in $L^{1}(\mathbb{T})$ by measurability. Thus we can expand $l(x)$ as a Fourier series: $l(x)=\sum\hat{l}_{j}e^{2\pi ijx}$ where
  $$\hat{l}_{j}=\hat{l}_{j}e^{4\pi i\theta}e^{2\pi ij\Phi}$$
  for all $j\in\mathbb{Z}$. Since $\hat{l}_{j}$ is not always zero, there exists $j$ such that $1=e^{2\pi i(2\theta+j\Phi)}$ and then $2\theta+j\Phi\in\mathbb{Z}$. However, we assumed $\theta$ is irrational with respect to $\Phi$, contradiction!
  
  Moreover, since entries $\check{\phi}^{\pm}$ of $M$ are all in $L^{2}(\mathbb{T})$ and then are finite for a.e. $x\in\mathbb{T}$, we conclude $\det{M(x)}\equiv C\in(0,\infty)$. Hence
  \begin{align}\label{eq4.3}
    \overline{A_{\lambda_{2},\lambda_{1},z}(x)}=M(x+\Phi)
    \begin{pmatrix}
    e^{2\pi i\theta} & 0\\
    0 & e^{-2\pi i\theta}
    \end{pmatrix}M(x)^{-1}=M(x+\Phi)
    \begin{pmatrix}
    e^{2\pi i\theta} & 0\\
    0 & e^{-2\pi i\theta}
    \end{pmatrix}\frac{M(x)^{*}}{\det{M}},
  \end{align}
  where $M^{*}$ represents the adjoint matrix of $M$.
  Because the entries $\check{\phi}^{\pm}$ of $M$ are all in $L^{2}(\mathbb{T})$, the right hand side of (\ref{eq4.3}) is in $L^{1}(\mathbb{T})$ by Cauchy-Schwarz inequality.
  However, 
  \[\begin{aligned}\overline{A_{\lambda_{2},\lambda_{1},z}(x)}
=\begin{pmatrix}
  \overline{A_{\lambda_{2},\lambda_{1},z}(x)_{11}} & \overline{A_{\lambda_{2},\lambda_{1},z}(x)_{12}}\\
  \overline{A_{\lambda_{2},\lambda_{1},z}(x)_{21}} & \overline{A_{\lambda_{2},\lambda_{1},z}(x)_{22}}
  \end{pmatrix}\end{aligned}\]
  (where the entries are presented in (\ref{eq.cocycleMap}) with $\lambda_{1}$ and $\lambda_{2}$ exchanged) is not in $L^{1}(\mathbb{T})$ since $\frac{1}{k}(2t\lambda_{1}\cos{2\pi x}+(t^{2}-1))\lambda_{1}')$ has a zero $x_{0}$ on $\mathbb{T}$ and $\frac{1}{k}(2t\lambda_{1}\cos{2\pi x}+(t^{2}-1))\lambda_{1}')$ is an infinitesimal of the same order with $x-x_{0}$ or $(x-x_{0})^{2}$ as $x\to x_{0}$, which is a contradiction. 
\end{proof}

\subsection{Singular continuous spectrum in self-dual regime}
In this section, we will prove purely singular continuous spectrum in the critical case $\{\lambda_{1}=\lambda_{2}\in(0,\frac{\left|1-t^{2}\right|}{1+t^{2}})\}$.

\subsubsection{The absence of absolutely continuous spectrum}

First, we show that any Szeg\H{o} cocycle induced by a quasiperiodic quantum walk is monotonic.
\begin{definition}[\cite{AK2015}]
    Consider an $\SL(2,\mathbb{R})$-cocycle $(\Phi,S_{z})$ and view it as a holomorphic function of $z\in\mathbb{C}\setminus\{0\}$. Then we say the cocycle is monotonic if for any $x\in\mathbb{T}$ and any $v\in S^1$, the map
\[
s\mapsto \arg\bigl(S_{e^{is}}(x)v\bigr)
\]
is monotonic on $(0,2\pi]$.
\end{definition}

We now state this result precisely.
We consider the extended CMV matrix $\mathcal{E}$ with quasiperiodic Verblunsky coefficients $\{\alpha_n,\rho_n\}$ defined by
\[
\alpha_{2n}=\lambda_1', \quad \rho_{2n}=\lambda_1, \qquad 
\alpha_{2n-1}=g(\theta+n\Phi), \quad \rho_{2n-1}=f(\theta+n\Phi),
\]
where $f:\mathbb{T}\to\mathbb{R}_{+}$ and $g:\mathbb{T}\to\mathbb{C}$ are analytic functions. The associated Szeg\H{o} cocycle is given by
\[
S_z^{+}(x)=\frac{1}{\lambda_1 f(x)}
\begin{pmatrix}
z+\lambda_1' g(x) & -\overline{g(x)}-\lambda_1' z^{-1} \\
-g(x)-\lambda_1' z & z^{-1}+\lambda_1'\overline{g(x)}
\end{pmatrix},
\qquad z\in\mathbb{C}\setminus\{0\}.
\]
Recall that
\[
\mathcal{P}:=\frac{1}{\sqrt{2}}
\begin{pmatrix}
1 & i \\
i & 1
\end{pmatrix}
\]
induces an isomorphism between $\SU(1,1)$ and $\SL(2,\mathbb{R})$. Define
$S_z^{\mathcal{R}}(x):=\mathcal{P}S_z^{+}(x)\mathcal{P}^{*},$ then we the basic observation is the following: 

\begin{lemma}\label{lem9.2}
For any $\Phi\in\mathbb{R}\setminus\mathbb{Q}$, the cocycle $(\Phi,S_z^{\mathcal{R}})$ is monotonic. 
\end{lemma}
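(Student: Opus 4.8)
The plan is to compute the logarithmic $s$‑derivative of the cocycle and show that it is a uniformly elliptic element of the Lie algebra $\mathfrak{sl}(2,\bbR)$ whose rotation sense does not depend on the parameters. Precisely, for $z=e^{is}$ set $B^{\mathcal R}(x,s):=\partial_s S^{\mathcal R}_{e^{is}}(x)\cdot S^{\mathcal R}_{e^{is}}(x)^{-1}\in\mathfrak{sl}(2,\bbR)$. The map $s\mapsto\arg\!\bigl(S^{\mathcal R}_{e^{is}}(x)v\bigr)$ is strictly monotonic for every $x\in\bbT$ and every $v$ precisely when $B^{\mathcal R}(x,s)$ is elliptic (i.e.\ $\det B^{\mathcal R}(x,s)>0$) with a rotation sense that is independent of $(x,s)$; this is the infinitesimal criterion for monotonicity of the induced flow on $\mathbb{RP}^1$, cf.\ \cite{AK2015}. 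Since $S^{\mathcal R}_z=\mathcal P S^{+}_z\mathcal P^{*}$ with $\mathcal P$ unitary, we have $B^{\mathcal R}=\mathcal P\,B\,\mathcal P^{*}$ with $B(x,s):=\partial_s S^{+}_{e^{is}}(x)\cdot S^{+}_{e^{is}}(x)^{-1}$, and $\det$ is conjugation‑invariant, so it suffices to carry out the computation for the explicitly given $S^{+}_z$.

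First I would note that $\det S^{+}_z(x)\equiv1$; this is exactly the normalization $|f|^{2}+|g|^{2}=1$ together with $f$ being real, and it forces $B$ to be traceless. Since only the $z$ and $z^{-1}$ entries of $S^{+}_z$ depend on $s$ (with $\partial_s z=iz$, $\partial_s z^{-1}=-iz^{-1}$), multiplying $\partial_s S^{+}_z$ by the adjugate of $S^{+}_z$ and using $z^{-1}=\overline z$ on $\partial\bbD$ gives, with $\zeta:=z^{-1}g(x)$,
\[
B(x,s)=\frac{i}{\lambda_1^{2}f(x)^{2}}
\begin{pmatrix}
(1+\lambda_1'^{2})+2\lambda_1'\,\Re\zeta & \overline{\zeta}+2\lambda_1'+\lambda_1'^{2}\zeta\\[1mm]
-\zeta-2\lambda_1'-\lambda_1'^{2}\overline{\zeta} & -\bigl[(1+\lambda_1'^{2})+2\lambda_1'\,\Re\zeta\bigr]
\end{pmatrix}
=\begin{pmatrix}i\alpha&\beta\\ \overline\beta&-i\alpha\end{pmatrix}\in\mathfrak{su}(1,1),
\]
with $\alpha=\bigl((1+\lambda_1'^{2})+2\lambda_1'\,\Re\zeta\bigr)/\bigl(\lambda_1^{2}f(x)^{2}\bigr)$, so $\det B=\alpha^{2}-|\beta|^{2}$.

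The crux is the algebraic identity for $\det B$. Writing $\zeta=a+ib$, one has
\[
\lambda_1^{4}f(x)^{4}\,\bigl(\alpha^{2}-|\beta|^{2}\bigr)
=\bigl((1+\lambda_1'^{2})+2\lambda_1' a\bigr)^{2}-\bigl(a(1+\lambda_1'^{2})+2\lambda_1'\bigr)^{2}-b^{2}(1-\lambda_1'^{2})^{2};
\]
the first two terms form a difference of squares equal to $(1-a^{2})(1-\lambda_1'^{2})^{2}$, so the right‑hand side equals $(1-a^{2}-b^{2})(1-\lambda_1'^{2})^{2}=(1-|\zeta|^{2})\lambda_1^{4}$, and since $|\zeta|^{2}=|g(x)|^{2}=1-f(x)^{2}$ this collapses to $\det B(x,s)=1/f(x)^{2}>0$ for all $x\in\bbT$ and $s\in\bbR$. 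Hence $B^{\mathcal R}(x,s)$ is elliptic everywhere, and because $\bbT\times[0,2\pi]$ is connected while $\det B^{\mathcal R}$ never vanishes on it, the rotation sense of $B^{\mathcal R}(x,s)$ is constant; it is the positive one, since $|\Re\zeta|\le|g(x)|\le\sup_x|g|<1$ and $0<\lambda_1'<1$ force the numerator of $\alpha$ to exceed $(1-\lambda_1')^{2}>0$, i.e.\ $\alpha>0$ throughout. Therefore $s\mapsto\arg\!\bigl(S^{\mathcal R}_{e^{is}}(x)v\bigr)$ is strictly increasing for every $x$ and every $v$, which gives the lemma.

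I expect the only genuinely delicate point to be the sign $\alpha>0$, and this is exactly where the standing hypothesis $\sup_x|g|<1$ — equivalently, $f$ nonvanishing, which holds throughout the self‑dual regime $\lambda_1=\lambda_2\in\bigl(0,\tfrac{|1-t^{2}|}{1+t^{2}}\bigr)$ — is used; were $f$ allowed to vanish, the cocycle would not even be defined there. Everything else is routine: the determinant identity $\det\!\bigl(\partial_s S^{+}_{e^{is}}\,(S^{+}_{e^{is}})^{-1}\bigr)\equiv f(x)^{-2}$ makes uniform ellipticity manifest, and the orientation‑preserving transfer between $\SU(1,1)$ and $\SL(2,\bbR)$ under $\mathcal P$ as well as the equivalence ``elliptic generator with fixed rotation sense $\Longleftrightarrow$ monotonic flow'' are standard (cf.\ \cite{AK2015}). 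A slightly more economical write‑up would instead factor $S^{+}_z=S_{2n,z}S_{2n-1,z}$ into the two elementary normalized Szeg\H{o} matrices of \eqref{eq:szego_normalized} and run the identical one‑line computation on each, obtaining $\partial_s S_{m,e^{is}}\cdot S_{m,e^{is}}^{-1}=\tfrac{i}{|\rho_m|^{2}}\bigl(\begin{smallmatrix}(1+|\alpha_m|^{2})/2 & \overline{\alpha_m}\\[0.3mm] -\alpha_m & -(1+|\alpha_m|^{2})/2\end{smallmatrix}\bigr)\in\mathfrak{su}(1,1)$ with determinant $\tfrac14>0$ and strictly positive $(1,1)$‑imaginary part; each factor is then monotonic, and monotonicity is preserved under products.
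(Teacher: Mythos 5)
Your proposal is correct in substance and is essentially the same argument as the paper's, just phrased via the infinitesimal generator $B:=\partial_s S_{e^{is}}\cdot S_{e^{is}}^{-1}$ in the $\mathrm{SU}(1,1)$ frame rather than via the paper's direct computation of the quadratic form $Q:=(\lambda_1 f)^2(S^{\mathcal R})^{\top}\!\bigl(\begin{smallmatrix}0&1\\-1&0\end{smallmatrix}\bigr)\partial_s S^{\mathcal R}$ in the $\mathrm{SL}(2,\mathbb R)$ frame. These are the same object in disguise: with $J=\bigl(\begin{smallmatrix}0&1\\-1&0\end{smallmatrix}\bigr)$ one has $Q=(\lambda_1 f)^2\,(S^{\mathcal R})^{\top}(JB^{\mathcal R})S^{\mathcal R}$, so $\det Q=(\lambda_1 f)^4\det B^{\mathcal R}$ and your $\det B=1/f^2$ is exactly the paper's $\det Q=\lambda_1^4 f^2$; likewise your bound $\alpha>0$ plays the role of the paper's bound $E<0$ on the $(1,1)$ entry, both coming from $\sup_x|g|<1$. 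Your determinant identity (the difference-of-squares collapse to $(1-|\zeta|^2)\lambda_1^4=\lambda_1^4 f^2$) and the entries of $B$ check out.

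Two small remarks. First, your stated direction is off: after transporting $B=\bigl(\begin{smallmatrix}i\alpha&\beta\\ \bar\beta&-i\alpha\end{smallmatrix}\bigr)$ by $\mathcal P$ one finds $JB^{\mathcal R}=\bigl(\begin{smallmatrix}-\alpha+\Re\beta&-\Im\beta\\ -\Im\beta&-\alpha-\Re\beta\end{smallmatrix}\bigr)$, whose $(1,1)$ entry is $-\alpha+\Re\beta<0$ (since $|\beta|<\alpha$ by $\det B>0$), so the quadratic form is negative definite and $s\mapsto\arg(S^{\mathcal R}_{e^{is}}v)$ is strictly \emph{decreasing}, as the paper states — not increasing. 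This is purely a sign convention and does not affect the lemma, which asks only for monotonicity. Second, the factored alternative you sketch at the end (write $S^{+}_z=S_{2n,z}S_{2n-1,z}$, compute $\partial_s S_{m}\cdot S_m^{-1}=\tfrac{i}{|\rho_m|^2}\bigl(\begin{smallmatrix}(1+|\alpha_m|^2)/2&\overline{\alpha_m}\\ -\alpha_m&-(1+|\alpha_m|^2)/2\end{smallmatrix}\bigr)$ with determinant $\tfrac14$, and use that monotonicity of both factors of a product implies monotonicity of the product) is correct and, being modular, is arguably the cleanest route; the paper does not take it, but it yields the same conclusion with less algebra.
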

\begin{proof}
A direct computation yields
\[
S_z^{\mathcal{R}}(x)=\frac{1}{\lambda_1 f(x)}
\begin{pmatrix}
S_{11}(x) & S_{12}(x) \\
S_{21}(x) & S_{22}(x)
\end{pmatrix},
\]
where
\begin{align*}
S_{11}(x)&=\Re z+\lambda_1'\Im z+\lambda_1'\Re g(x)+\Im g(x),\\
S_{12}(x)&=-\lambda_1'\Re z+\Im z-\Re g(x)+\lambda_1'\Im g(x),\\
S_{21}(x)&=-\lambda_1'\Re z-\Im z-\Re g(x)-\lambda_1'\Im g(x),\\
S_{22}(x)&=\Re z-\lambda_1'\Im z+\lambda_1'\Re g(x)-\Im g(x).
\end{align*}

Let $v=(v_1,v_2)^{\top}$ with $v_1^2+v_2^2=1$, and define
\[
\lambda_1 f(x) S_{e^{is}}^{\mathcal{R}}(x)v=(y_1(s),y_2(s))^{\top}.
\]
Then
\[
\frac{\mathrm{d}}{\mathrm{d}s}\arg\bigl(S_{e^{is}}^{\mathcal{R}}(x)v\bigr)
=\frac{y_1(s)y_2'(s)-y_2(s)y_1'(s)}{y_1^2(s)+y_2^2(s)}.
\]

We compute
\begin{align*}
y_1(s)y_2'(s)-y_2(s)y_1'(s)
&=v^{\top}\lambda_1 f(x)(S_{e^{is}}^{\mathcal{R}})^{\top}
\begin{pmatrix}
0 & 1\\
-1 & 0
\end{pmatrix}
\lambda_1 f(x)\frac{\mathrm{d}}{\mathrm{d}s}S_{e^{is}}^{\mathcal{R}}\,v\\
&=:v^{\top}
\begin{pmatrix}
E & F_1\\
F_2 & G_1
\end{pmatrix}v.
\end{align*}
Here  
\[
E=\lambda_1^2\bigl(-\Im g(x)\Re z-\Re g(x)\Im z-1\bigr)
\le \lambda_1^2\bigl(|g(x)|-1\bigr)<0,
\]
and
\begin{align*}
\det
\begin{pmatrix}
E & F_1\\
F_2 & G_1
\end{pmatrix}
&=\lambda_1^2 f^2(x)\det S_{e^{is}}^{\mathcal{R}}
\cdot\det\!\left(\lambda_1 f(x)\frac{\mathrm{d}}{\mathrm{d}s}S_{e^{is}}^{\mathcal{R}}\right)\\
&=\lambda_1^4 f^2(x)>0.
\end{align*}

Moreover, the matrix $\begin{pmatrix}E&F_1\\F_2&G_1\end{pmatrix}$ is symmetric. Indeed,
\[
\frac{\mathrm{d}}{\mathrm{d}s}\!\left((S_{e^{is}}^{\mathcal{R}})^{\top}
\begin{pmatrix}
0 & 1\\
-1 & 0
\end{pmatrix}
S_{e^{is}}^{\mathcal{R}}\right)=0,
\]
since $\det S_{e^{is}}^{\mathcal{R}}=1$. This implies
\[
(S_{e^{is}}^{\mathcal{R}})^{\top}
\begin{pmatrix}
0 & 1\\
-1 & 0
\end{pmatrix}
\frac{\mathrm{d}}{\mathrm{d}s}S_{e^{is}}^{\mathcal{R}}
=
\left[(S_{e^{is}}^{\mathcal{R}})^{\top}
\begin{pmatrix}
0 & 1\\
-1 & 0
\end{pmatrix}
\frac{\mathrm{d}}{\mathrm{d}s}S_{e^{is}}^{\mathcal{R}}\right]^{\top}.
\]

Since the quadratic form is real-symmetric with $E<0$ and positive determinant, it is negative definite. Consequently,
$y_1(s)y_2'(s)-y_2(s)y_1'(s)<0,$
which implies 
\[
\frac{\mathrm{d}}{\mathrm{d}s}\arg\bigl(S_{e^{is}}^{\mathcal{R}}(x)v\bigr)<0
\]
for all $x\in\mathbb{T}$ and all $v\in S^1$. Therefore, the map
$s\mapsto \arg(S_{e^{is}}^{\mathcal{R}}(x)v)$ is monotonically decreasing on $(0,2\pi]$.
\end{proof}

As a consequence of monotonicity of cocycles, we apply the Kotani's theory 
 \cite[Theorem 1.1, 1.7]{AK2015}, and obtain the following: 
\begin{lemma}\label{lem9.3}
    For Lebesgue almost every $z\in\partial\mathbb{D}$ with $L(\Phi,S_{z}^{+})=0$, the cocycle $(\Phi,S_{z}^{\mathcal{R}}(x))$ is $C^{\omega}$-reducible to rotations. That is, there exists $B\in C^{\omega}(\mathbb{T},\mathrm{SL}(2,\mathbb{R}))$ such that 
    $$B^{-1}(x+\Phi)S_{e^{is}}^{\mathcal{R}}(x)B(x)\in\mathrm{SO}(2,\mathbb{R}).$$
\end{lemma}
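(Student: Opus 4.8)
The plan is to obtain the reducibility as a direct consequence of the Kotani theory for monotonic $\SL(2,\bbR)$-cocycles of \cite{AK2015}, with the monotonicity of Lemma~\ref{lem9.2} as the one nontrivial input. First I would record the structural facts that make that machinery applicable. We are in the self-dual window $\lambda_1=\lambda_2\in(0,\frac{|1-t^2|}{1+t^2})$, so $f$ has no zeros on $\T$; hence both $x\mapsto S_z^{+}(x)$ and $x\mapsto S_z^{\mathcal R}(x)=\mathcal P\,S_z^{+}(x)\,\mathcal P^{*}$ are real-analytic on $\T$ and $\SL(2,\bbR)$-valued for every $z\in\partial\bbD$, and because $\mathcal P$ is a fixed constant matrix one has $L(\Phi,S_z^{\mathcal R})=L(\Phi,S_z^{+})$ for all $z$. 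Therefore $\mathcal Z:=\{z\in\partial\bbD:L(\Phi,S_z^{+})=0\}$ is precisely the set on which an analytic conjugacy must be produced, and by Lemma~\ref{lem9.2} the one-parameter family $\{(\Phi,S_{e^{is}}^{\mathcal R})\}_{s\in(0,2\pi]}$ is a monotonic family of analytic $\SL(2,\bbR)$-cocycles over the fixed rotation by the irrational $\Phi$. Note that at $\lambda_1=\lambda_2$ the cocycle is merely critical by Theorem~\ref{thm6.4}, so almost reducibility is unavailable and one genuinely needs the a.e.\ reducibility supplied by Kotani theory.

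With this setup in place I would invoke \cite[Theorems 1.1 and 1.7]{AK2015}. The Kotani-type statement \cite[Theorem~1.1]{AK2015} yields, for Lebesgue-a.e.\ $z\in\mathcal Z$, that $(\Phi,S_z^{\mathcal R})$ is reflectionless (in the sense of Kotani theory, the spectral-theoretic input being furnished by the CMV operator underlying $S_z^{+}$) and possesses an invariant section of its Möbius action on the upper half-plane $\mathbb{H}$, a priori only measurable in $x$ but with the boundary regularity that the theory tracks. The analytic regularity of this section is then the content of \cite[Theorem~1.7]{AK2015}: using the monotonicity of $s\mapsto S_{e^{is}}^{\mathcal R}(x)$ together with analyticity in $x$, the invariant point $u(x)\in\mathbb{H}$ can be taken in $C^{\omega}(\T,\mathbb{H})$. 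Letting $B(x)\in C^{\omega}(\T,\SL(2,\bbR))$ be the analytic family of matrices carrying $u(x)$ to $i$, one gets $B(x+\Phi)^{-1}S_{e^{is}}^{\mathcal R}(x)B(x)\in\SO(2,\bbR)$ for every $x$ and every $z$ in a full-measure subset of $\mathcal Z$, which is exactly the assertion. No Diophantine hypothesis on the fibered rotation number is needed, since we only claim conjugacy to an $\SO(2,\bbR)$-valued cocycle rather than to a constant rotation.

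The main obstacle here is not a new analytic idea but the bookkeeping required to see that the abstract theory of \cite{AK2015} applies verbatim to the Szeg\H{o} family: that theory is framed for one-parameter monotonic families indexed by a real spectral parameter, whereas here the dependence on $z=e^{is}$ runs through both $z$ and $z^{-1}$ and is genuinely $2\pi$-periodic in $s$, with the fibered rotation number gaining a full turn as $s$ traverses $(0,2\pi]$. Lemma~\ref{lem9.2} is precisely the monotonicity hypothesis that theory demands; what must be checked is that the sign and orientation conventions match — in the proof of Lemma~\ref{lem9.2} the relevant quadratic form has $E<0$, which fixes the direction in which $\arg$, and hence $\rho$, moves — and that $2\pi$-periodicity in $s$ causes no difficulty, which it does not since all statements to be applied are local in $z$ on $\mathcal Z$. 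A final, entirely routine point is that the statements of \cite{AK2015} are already phrased so as to hold for Lebesgue-a.e.\ $z$ with $L(\Phi,\cdot)=0$, which is exactly the form required by the lemma, so no separate identification of $\mathcal Z$ with a spectral support is needed.
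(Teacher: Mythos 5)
Your proposal is correct and follows essentially the same route as the paper, which simply cites \cite[Theorems 1.1, 1.7]{AK2015} together with the monotonicity from Lemma~\ref{lem9.2} and gives no further detail; you have merely spelled out the bookkeeping (constant conjugation by $\mathcal{P}$ preserving Lyapunov exponents, the $C^{\omega}$ invariant section giving the conjugacy $B$, and the observation that no Diophantine condition is needed when the target is $\SO(2,\bbR)$-valued rather than constant) that the paper leaves implicit.
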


Within these preparation, we obtain 
\begin{lemma}\label{lem9.5}
    For any $\lambda_{1}=\lambda_{2}\in(0,\frac{\left|1-t^{2}\right|}{1+t^{2}})$ and $\Phi$ irrational, $\Sigma$  has zero Lebesgue measure. 
\end{lemma}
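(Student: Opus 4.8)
The plan is to derive a contradiction from $\Leb(\Sigma)>0$ by confronting the reducibility of Lemma~\ref{lem9.3} with the criticality in Theorem~\ref{thm6.4}, following the scheme used by Avila--Krikorian (and in \cite{AJM2017}) for critical quasi-periodic operators.

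Assume $\Leb(\Sigma)>0$. Since $\lambda_1=\lambda_2\in\bigl(0,\tfrac{|1-t^2|}{1+t^2}\bigr)$, Theorem~\ref{thm6.4} gives $L(\Phi,S_z^{+})=0$ for every $z\in\Sigma$, while for $z$ in a gap the Szeg\H{o} cocycle is uniformly hyperbolic (so $L>0$); hence $\{z\in\partial\bbD:L(\Phi,S_z^{+})=0\}=\Sigma$. Applying Lemma~\ref{lem9.3} to this set yields $\Sigma'\subseteq\Sigma$ with $\Sigma\setminus\Sigma'$ null, so $\Leb(\Sigma')=\Leb(\Sigma)>0$, such that for each $z\in\Sigma'$ there is $B_z\in C^\omega(\T,\SL(2,\bbR))$ with $B_z(\cdot+\Phi)^{-1}S_z^{\mathcal R}(\cdot)B_z(\cdot)=R_{\psi_z(\cdot)}\in\SO(2,\bbR)$, where $S_z^{\mathcal R}=\mathcal P S_z^{+}\mathcal P^{*}$.

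Fix $z\in\Sigma'$. I would first upgrade this to conjugacy to a \emph{constant} rotation. The winding of the analytic circle-valued function $\psi_z$ equals the degree of the loop $S_z^{\mathcal R}:\T\to\SL(2,\bbR)$, and this degree vanishes because $S_z^{\mathcal R}$ is conjugate to the Szeg\H{o} cocycle $S_z^{+}$ of a CMV matrix, which is homotopically trivial (each factor $S_{n,z}$ has $(1,1)$-entry proportional to $z^{1/2}/|\rho_n|$, hence of zero winding). Thus $\psi_z$ genuinely descends to $\T$; since $\Phi\in DC$ and $\psi_z$ is analytic, the cohomological equation matching $\psi_z$ to its mean is analytically solvable, and a further conjugation by the associated rotation-valued map puts $(\Phi,S_z^{\mathcal R})$ in the form of a constant elliptic cocycle $R_{c_z}$. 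A constant elliptic cocycle has complexified Lyapunov exponent identically $0$, and this is preserved under conjugation by a map holomorphic on a strip; hence $\epsilon\mapsto L(\Phi,S_z^{+},\epsilon)$ is constant near $\epsilon=0$, i.e.\ $\omega(\Phi,S_z^{+},0)=0$ and $(\Phi,S_z^{+})$ is subcritical. This contradicts Theorem~\ref{thm6.4}, which for $\lambda_1=\lambda_2$ classifies $(\Phi,S_z^{+})$ as critical for every $z\in\Sigma$. Therefore no such $z$ exists, $\Sigma'=\varnothing$, and $\Leb(\Sigma)=0$.

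The one step needing genuine care is the upgrade ``reducible to $\SO(2)$-valued $\Rightarrow$ conjugate to a constant rotation'': it rests on (i) the absence of homotopical winding in the Szeg\H{o} cocycle, so that the reduced rotation descends to $\T$, and (ii) solvability of the ensuing cohomological equation, which is precisely where $\Phi\in DC$ enters and why merely measurable—rather than $C^\omega$—reducibility would not suffice. Everything else is already available: $\Sigma\subseteq\{L=0\}$ from Theorem~\ref{thm6.4}, full-measure reducibility on $\{L=0\}$ from Lemma~\ref{lem9.3} (which itself invokes the monotonicity of Lemma~\ref{lem9.2} through Kotani theory), and the conjugation-invariance of the complexified Lyapunov exponent and of the acceleration from Avila's global theory (Theorem~\ref{lem5.1}).
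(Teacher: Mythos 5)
Your proposal follows the same overall strategy as the paper: Theorem~\ref{thm6.4} makes every $z\in\Sigma$ critical, Lemma~\ref{lem9.3} (Kotani theory for monotonic cocycles) makes a.e.\ zero-exponent parameter analytically rotations-reducible, and one deduces a contradiction on a positive-measure set. The paper states this contradiction in one line; you have tried to fill in the justification. That effort is in the right direction, but it introduces a genuine gap.

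The problem is the step where you reduce the $\SO(2,\bbR)$-valued cocycle $(\Phi,R_{\psi_z})$ to a \emph{constant} rotation by solving the cohomological equation $\psi_z(x)-\int_\T\psi_z = h(x+\Phi)-h(x)$. As you yourself flag, this requires $\Phi\in DC$. But Lemma~\ref{lem9.5} is stated for \emph{all} irrational $\Phi$, and indeed this is what feeds into Theorem~\ref{thm.main}(2), whose conclusion is also for every irrational $\Phi$. For Liouville frequencies the cohomological equation is in general not analytically (or even measurably) solvable, so your argument as written proves only the Diophantine case. This is a real hole, not a cosmetic one.

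Fortunately, the detour through a constant rotation is unnecessary, and removing it closes the gap. Once you know that the conjugacy $B_z$ is analytic (hence extends to a strip) and that $\psi_z$ has degree zero (your winding-number argument for the Szeg\H{o} factors, which is correct given that the $\SU(1,1)$ retraction onto $U(1)$ is read off the $(1,1)$-entry $z^{1/2}/|\rho_n|$), you can compute the complexified Lyapunov exponent of the reduced rotation cocycle directly. Since $R_aR_b=R_{a+b}$, the $n$-step product is $R_{\Psi_n}$ with $\Psi_n(x+i\epsilon)=\sum_{j=0}^{n-1}\psi_z(x+j\Phi+i\epsilon)$, and
\[
L\bigl(\Phi,R_{\psi_z(\cdot+i\epsilon)}\bigr)=\Bigl|\operatorname{Im}\int_{\T}\psi_z(x+i\epsilon)\,dx\Bigr|.
\]
Because $\psi_z$ has degree zero it is a genuine analytic function on $\T$, so $\int_\T\psi_z(\cdot+i\epsilon)$ is its zeroth Fourier coefficient, \emph{independent of $\epsilon$} on the strip of analyticity, and real (since $\psi_z$ is real on $\T$). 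Hence $L(\Phi,R_{\psi_z},\epsilon)\equiv 0$ for $|\epsilon|$ small, and conjugation by the analytic $B_z$ transfers this to $L(\Phi,S_z^{+},\epsilon)\equiv 0$ near $\epsilon=0$, giving $\omega(\Phi,S_z^{+},0)=0$ and hence subcriticality — the contradiction you want — with no Diophantine hypothesis whatsoever. Note also that the degree-zero claim is not a luxury here: if $\deg\psi_z=d\neq0$ the same computation gives $L(\Phi,S_z^{+},\epsilon)=2\pi|d||\epsilon|$, which is perfectly consistent with criticality and yields no contradiction; so this is precisely the point where the structure of the Szeg\H{o} cocycle is used, and it deserves the careful treatment you began to give it.
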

\begin{proof}
    By Theorem \ref{thm6.4}, when $\lambda_{1}=\lambda_{2}\in(0,\frac{\left|1-t^{2}\right|}{1+t^{2}})$, and for any $z\in \Sigma$, the cocycle $(\Phi,S_{z}^{+})$ is critical.  Suppose that $\Sigma$ has positive Lebesgue measure, then for almost every $z\in\Sigma$, $(\Phi,S_{z}^{+})$ are rotations reducible, which contradicts to the fact that $(\Phi,S_{z}^{+})$ is critical. 
\end{proof}

\subsubsection{Absence of  point spectrum}
In this section we prove the absence of point spectrum in the critical case. Different with region $\{\lambda_{1},\lambda_{2}\in[\frac{\left|1-t^{2}\right|}{1+t^{2}},1)\}$, the cocycle map $A_{z}(x)=A_{\lambda_1,\lambda_2,z}(x)$ has no singularities in the case $\lambda_1=\lambda_2$, leading to a more involved argument.
\begin{lemma}\label{lem9.6}
    For $\lambda_{1}=\lambda_{2}\in(0,\frac{\left|1-t^{2}\right|}{1+t^{2}})$, $W_{\lambda_{1},\lambda_{1},\theta,\Phi}$ has no eigenfunction for any irrational $\Phi$ and $\theta$ irrational with respect to $\Phi$.
\end{lemma}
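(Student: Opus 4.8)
The plan is to argue by contradiction and to reuse the Aubry--duality machinery from the proof of Lemma~\ref{prop.noEigen2} almost verbatim, replacing only its concluding step. Suppose $W_{\lambda_{1},\lambda_{1},\theta,\Phi}\psi=z\psi$ for some $z\in\partial\mathbb{D}$, some nonzero $\psi\in\mathcal{H}$, and some $\theta$ that is irrational with respect to $\Phi$. Since $\lambda_{1}=\lambda_{2}<\frac{|1-t^{2}|}{1+t^{2}}$, the sampling function $f$ has no zeros on $\mathbb{T}$, so the cocycle symmetry \eqref{lem.antiSymmetry} holds for \emph{every} $x\in\mathbb{T}$, and Lemma~\ref{3.1} produces a nonzero dual solution $\varphi^{\xi}$ of $W^{\top}_{\lambda_{1},\lambda_{1},\Phi,\xi}\varphi^{\xi}=z\varphi^{\xi}$ for a.e.\ $\xi$. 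Running the construction in the proof of Lemma~\ref{prop.noEigen2}---the reflection $x\mapsto-x$, the use of \eqref{lem.antiSymmetry}, and the Fourier identity $\hat{l}_{j}=\hat{l}_{j}e^{4\pi i\theta}e^{2\pi ij\Phi}$ together with the $\Phi$-irrationality of $\theta$ to rule out $\det M\equiv0$---yields a measurable $M:\mathbb{T}\to\mathbb{M}(2,\mathbb{C})$ with $\|M\|\in L^{2}(\mathbb{T})$, $\|M(x)\|>0$ a.e., $\det M(x)\equiv C\in(0,\infty)$, and
\[
\overline{A_{\lambda_{1},\lambda_{1},z}(x)}\,M(x)=M(x+\Phi)\begin{pmatrix}e^{2\pi i\theta}&0\\0&e^{-2\pi i\theta}\end{pmatrix}\qquad\text{for a.e.\ }x\in\mathbb{T}.
\]

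In Lemma~\ref{prop.noEigen2} the contradiction came from an integrability mismatch: $\overline{A_{\lambda_{2},\lambda_{1},z}}\notin L^{1}(\mathbb{T})$ because the relevant sampling function vanished, while the right-hand side lies in $L^{1}$ by Cauchy--Schwarz. That route is unavailable now, since for $\lambda_{1}=\lambda_{2}<\frac{|1-t^{2}|}{1+t^{2}}$ the matrix $\overline{A_{\lambda_{1},\lambda_{1},z}}$ is analytic and bounded on $\mathbb{T}$. Instead I would read the displayed identity as a statement about the Szeg\H{o} cocycle itself: after conjugating by the \emph{constant} gauge matrices $R,J$ of \eqref{eq.ConjugatedTransferMatrix}--\eqref{eq.Rn} (constant precisely because $(\alpha_{2n},\rho_{2n})\equiv(\lambda_{1}',\lambda_{1})$) and by $\mathcal{P}$---operations under which monotonicity and criticality are unaffected, as is complex conjugation of the cocycle---it becomes a \emph{measurable} conjugacy between an $\mathrm{SL}(2,\mathbb{R})$-cocycle $(\Phi,\widehat S_{z})$ that is monotonic (by the computation of Lemma~\ref{lem9.2}) and, since $z\in\Sigma$, critical (Theorem~\ref{thm6.4}), and the constant rotation cocycle $\operatorname{diag}(e^{2\pi i\theta},e^{-2\pi i\theta})$, implemented by a matrix of constant nonzero determinant with operator norm in $L^{2}(\mathbb{T})$. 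Equivalently, this critical cocycle preserves a measurable invariant ellipse field whose eccentricity is integrable over $\mathbb{T}$, and its fibered rotation number equals $\pm\theta$ modulo $\Phi\mathbb{Z}+\mathbb{Z}$.

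To close the argument I would feed this into the Kotani theory for monotonic cocycles that already underlies Lemma~\ref{lem9.3}. The non-resonance hypothesis $2\theta+j\Phi\notin\mathbb{Z}$ for all $j\in\mathbb{Z}$ places the rotation number $\pm\theta$ outside the exceptional resonant set discarded in \cite{AK2015}, so the measurable reducibility above upgrades to an analytic one: there is $B\in C^{\omega}(\mathbb{T},\mathrm{SL}(2,\mathbb{R}))$ with $B^{-1}(x+\Phi)\,S_{e^{is}}^{\mathcal{R}}(x)\,B(x)\in\mathrm{SO}(2,\mathbb{R})$. An analytic conjugacy to a constant rotation forces the complexified Lyapunov exponent $\epsilon\mapsto L(\Phi,S_{z}^{+},\epsilon)$ to vanish identically on a neighborhood of $\epsilon=0$ (the complexified transfer matrices stay uniformly bounded on the strip where $B$ is holomorphic), i.e.\ $(\Phi,S_{z}^{+})$ is subcritical---contradicting Theorem~\ref{thm6.4}. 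Hence $W_{\lambda_{1},\lambda_{1},\theta,\Phi}$ has no eigenfunction.

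The crux is precisely this last upgrade. In Lemma~\ref{prop.noEigen2} the merely $L^{2}$ matrix $M$ forced a contradiction for free; here, because the cocycle is analytic and bounded, one must instead deploy the heavier analytic machinery---the Avila--Kotani reducibility theory for monotonic cocycles and, behind it, Avila's quantitative global theory (almost reducibility)---to convert a purely measurable conjugacy into an obstruction to criticality, and the non-resonance $2\theta+j\Phi\notin\mathbb{Z}$ is exactly the input that eliminates the resonant (parabolic) degenerations which could otherwise block the passage from measurable to analytic reducibility. The tempting shortcut of first proving that $\psi$ decays exponentially---so that $\check\psi$, and hence $M$, would extend holomorphically to a strip on its own---is not available here, since the Lyapunov exponent vanishes throughout this critical regime.
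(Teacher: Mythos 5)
Your proof follows the paper up to the point of extracting the conjugacy identity
\[
\overline{A_{\lambda_{1},\lambda_{1},z}(x)}\,M(x)=M(x+\Phi)\begin{pmatrix}e^{2\pi i\theta}&0\\0&e^{-2\pi i\theta}\end{pmatrix}
\]
with $M$ measurable, $\|M\|\in L^{2}(\mathbb{T})$, and $\det M\equiv C>0$. After that you diverge, and the divergence has a real gap. The crux is your claim that because $2\theta+j\Phi\notin\mathbb{Z}$ for all $j$, the rotation number $\pm\theta$ lies ``outside the exceptional resonant set discarded in \cite{AK2015},'' so that the measurable $L^{2}$-conjugacy to a constant rotation upgrades to an analytic one. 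That is not what Avila--Krikorian prove. Their Kotani theory for monotonic cocycles (Lemma~\ref{lem9.3} in the paper) yields $C^{\omega}$-reducibility for \emph{Lebesgue almost every} spectral parameter $z$ with $L=0$; the exceptional set is merely known to have measure zero and is not characterized by Diophantine conditions on the rotation number. There is no theorem in \cite{AK2015}, nor one derivable from the tools cited, asserting that a single monotonic critical cocycle which is $L^{2}$-conjugate to a non-resonant rotation must be analytically reducible. The results that do convert $L^{2}$-conjugacy plus a Diophantine rotation number into analytic reducibility all pass through almost reducibility, which is exactly subcriticality by Avila's ARC --- but subcriticality is what you are trying to contradict, so this route is circular. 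An eigenvalue $z$ of the dual operator is a single point in $\Sigma$ (which by Lemma~\ref{lem9.5} has measure zero), and nothing guarantees it lands in the a.e.\ good set of \cite{AK2015}.

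By contrast, the paper's proof stays elementary after the conjugacy identity: it takes traces, sets $\Psi_{n}(x)=\tr\bigl[(\overline{A_{\lambda_1,\lambda_1,z}})^{n}(x)\bigr]-(e^{2\pi in\theta}+e^{-2\pi in\theta})$, observes that $\Psi_{n}(x)=\tr\bigl[(M(x+n\Phi)-M(x))\,\mathrm{diag}(e^{2\pi in\theta},e^{-2\pi in\theta})\,M(x)^{-1}\bigr]$, and uses continuity of the $L^{2}$-norm under translation plus the recurrence $n\Phi\to0$ to force $\liminf_{n}\|\Psi_{n}\|_{L^{1}}=0$. Then it exploits the trigonometric-polynomial structure: $\Psi_{n}=\Psi_{n}'/T_{n}$ with $\Psi_{n}'$ of degree $\le n$ and $T_{n}=\prod_{j<n}f(x+j\Phi)$, invokes \cite[Theorem~2.3]{AJM2017} to transfer the $L^{1}$ smallness to the normalized $\lambda_{1}^{n}e^{-\tilde\gamma n}\Psi_{n}'$, and finally computes the top Fourier coefficient $\Psi_{n}'^{(n)}$ explicitly (via the eigenvalues of the constant matrix $\mathcal{Q}$), showing $\lambda_{1}^{n}e^{-\tilde\gamma n}|\Psi_{n}'^{(n)}|\to1$. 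Since this coefficient is dominated by the $L^{1}$-norm, a contradiction follows. This Fourier/trace argument --- directly adapted from \cite{AJM2017} --- is precisely how one handles the non-singular critical regime without any reducibility input, and it sidesteps the single-energy obstruction that blocks your Kotani-based route.
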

We will use the ideas of \cite{AJM2017} in the proof.
\begin{proof}
    When $\lambda_{1}=\lambda_{2}\in(0,\frac{\left|1-t^{2}\right|}{1+t^{2}})$, assume that $W_{\lambda_{1},\lambda_{1},\theta,\Phi}$ has an eigenvalue $z\in\partial\mathbb{D}$ and an eigenfunction $\psi\in\mathcal{H}$. Then by the proof of Lemma \ref{prop.noEigen2}, 
    $$\overline{A_{\lambda_{1},\lambda_{1},z}(x)}=M(x+\Phi)
    \begin{pmatrix}
    e^{2\pi i\theta} & 0\\
    0 & e^{-2\pi i\theta}
    \end{pmatrix}M(x)^{-1},$$
    where $M(x)=\begin{pmatrix}
  \check{\phi}^{+}(x) & \check{\phi}^{-}(-x)\\
  e^{-2\pi i\theta}\check{\phi}^{-}(x-\Phi) & -e^{2\pi i\theta}\check{\phi}^{+}(-x+\Phi)
  \end{pmatrix}$ and
  $$\overline{A_{\lambda_{1},\lambda_{1},z}(x)}=
  \begin{pmatrix}
 \overline{A_{\lambda_{1},\lambda_{1},z}(x)}_{11} & \overline{A_{\lambda_{1},\lambda_{1},z}(x)}_{12} \\
  \overline{A_{\lambda_{1},\lambda_{1},z}(x)}_{21} & \overline{A_{\lambda_{1},\lambda_{1},z}(x)}_{22}
  \end{pmatrix}$$
  with entries specified as the following: 
$$\overline{A_{\lambda_{1},\lambda_{1},z}(x)}_{11}=\frac{1}{f_{\lambda_{1}}(x)}[\lambda_{1}^{-1}z+\frac{1}{k}\lambda_{1}'\lambda_{1}^{-1}((1-t^{2})\lambda_{1}(e^{2\pi ix}+e^{-2\pi ix})+4t\lambda_{1}')+z^{-1}\lambda_{1}'^{2}\lambda_{1}^{-1}],$$
$$\overline{A_{\lambda_{1},\lambda_{1},z}(x)}_{12}=\frac{1}{f_{\lambda_{1}}(x)}[-\frac{1}{k}(-t^{2}\lambda_{1}e^{-2\pi ix}+\lambda_{1}e^{2\pi ix}+2t\lambda_{1}')-\lambda_{1}'z^{-1}],$$
$$\overline{A_{\lambda_{1},\lambda_{1},z}(x)}_{21}=\frac{1}{f_{\lambda_{1}}(x)}[-\frac{1}{k}(-t^{2}\lambda_{1}e^{2\pi ix}+\lambda_{1}e^{-2\pi ix}+2t\lambda_{1}')-\lambda_{1}'z^{-1}],$$
\[\overline{A_{\lambda_{1},\lambda_{1},z}(x)}_{22}=\frac{\lambda_{1}z^{-1}}{f_{\lambda_{1}}(x)},\]
where $f_{\lambda_{1}}(x)=\frac{1}{k}\bigl(2t\lambda_{1}\cos(2\pi x)+(t^{2}-1)\lambda_{1}'\bigr).$

Hence for $n\ge1$, the $n$-step transfer matrix satisfies 
$$(\overline{A_{\lambda_{1},\lambda_{1},z}})^{n}(x)=\prod_{j=n-1}^{0}\overline{A_{\lambda_{1},\lambda_{1},z}}(x+j\Phi)=M(x+n\Phi)
    \begin{pmatrix}
    e^{2\pi in\theta} & 0\\
    0 & e^{-2\pi in\theta}
    \end{pmatrix}M(x)^{-1}.$$
We define 
$$\Psi_{n}(x):=\tr \left[(\overline{A_{\lambda_{1},\lambda_{1},z}})^n(x)\right]-\tr\left[\begin{pmatrix}
    e^{2\pi in\theta} & 0\\
    0 & e^{-2\pi in\theta}
    \end{pmatrix}\right]=\tr \left[(\overline{A_{\lambda_{1},\lambda_{1},z}})^n(x)\right]-(e^{2\pi in\theta}+e^{-2\pi in\theta}).$$
Moreover, 
$$\Psi_{n}(x)=\tr[(M(x+n\Phi)-M(x))
    \begin{pmatrix}
    e^{2\pi in\theta} & 0\\
    0 & e^{-2\pi in\theta}
    \end{pmatrix}M(x)^{-1}].$$
We have
\begin{align*}
    \left|\Psi_{n}(x)\right|&\le2\left\|[M(x+n\Phi)-M(x)]
    \begin{pmatrix}
    e^{2\pi in\theta} & 0\\
    0 & e^{-2\pi in\theta}
    \end{pmatrix}M(x)^{-1}\right\|\\
    &\le2\left\|[M(x+n\Phi)-M(x)]\right\|\frac{\left\|M(x)\right\|}{\left|\det{M(x)}\right|}.
\end{align*}
By Cauchy-Shwarz inequality, 
$$\left\|\Psi_{n}(x)\right\|_{L^{1}}\le2\left\|[M(x+n\Phi)-M(x)]\right\|_{L^{2}}\frac{\left\|M(x)\right\|_{L^{2}}}{\left|\det{M(x)}\right|}$$
since $\left\|M(x)\right\|_{L^{2}}<\infty$. Because $n\Phi$ tends to $0$ infinitely many times on $\mathbb{T}$, we have 
$$\liminf_{n
\to\infty}\left\|\Psi_{n}(x)\right\|_{L^{1}}=0$$
by continuity of the $L^{2}$-norm with respect to translation.

We can directly calculate that $\Psi_{n}(x)=\frac{\Psi_{n}'(x)}{T_{n}(x)}$ where $T_{n}(x)=\prod_{j=0}^{n-1}f(x+j\Phi)$ and $\Psi_{n}'(x)$ is a trigonometric polynomial of degree at most $n$. Let
$$\tilde{\gamma}=\int_{\mathbb{T}}{\ln{\left|f(\theta)\right|\lambda_{1}}d\theta}=
\ln{\frac{\lambda_{1}}{2(1+t^{2})}(\left|1-t^{2}\right|\lambda_{2}'+\sqrt{(1+t^{2})^{2}\lambda_{2}'^{2}-4t^{2}})},$$
then by \cite[Theorem 2.3]{AJM2017}, we have
$$\liminf_{n
\to\infty}\left\|\lambda_{1}^{n}e^{-\tilde{\gamma}n}\Psi_{n}'(x)\right\|_{L^{1}}=0.$$
Let $\Psi_{n}'^{(n)}$ be the coefficient of $e^{2\pi inx}$ of the trigonometric polynomial $\Psi_{n}'(x)$. 

Denote $\mathcal{Q}=\begin{pmatrix}
    \frac{1-t^{2}}{1+t^{2}}\lambda_{1}' & -\frac{1}{1+t^{2}}\lambda_{1} \\
    \frac{t^{2}}{1+t^{2}}\lambda_{1} & 0
\end{pmatrix},$ then we have
$$T_{n}(x)\tr (\overline{A_{\lambda_{1},\lambda_{1},z}})^{n}(x)=\tr\prod_{j=0}^{n-1}\left(-
\mathcal{Q}e^{2\pi in(x+j\Phi)}+Be^{-2\pi in(x+j\Phi)}+C\right),$$
where $B,C$ are two constant matrices depending only on $\lambda_{1}$, and the eigenvalues of $\mathcal{Q}$  
are 
$$e_{\pm}=\frac{1}{1+t^{2}}\frac{(1-t^{2})\lambda_{1}'\pm\sqrt{(1+t^{2})^{2}\lambda_{1}'^{2}-4t^{2}}}{2}.$$
%So 
%\begin{align*}
 %   \left|\tr\left[\begin{pmatrix}
   % \frac{1-t^{2}}{1+t^{2}}\lambda_{1}' & -\frac{1}{1+t^{2}}\lambda_{1} \\
    %\frac{t^{2}}{1+t^{2}}\lambda_{1} & 0
%\end{pmatrix}^{n}\right]\right|&=\left(\frac{\left|1-t^{2}\right|\lambda_{1}'+\sqrt{(1+t^{2})^{2}\lambda_{1}'^{2}-4t^{2}}}{2(1+t^{2})}\right)^{n}\\
%&+\left(\frac{\left|1-t^{2}\right|\lambda_{1}'-\sqrt{(1+t^{2})^{2}\lambda_{1}'^{2}-4t^{2}}}{2(1+t^{2})}\right)^{n}.
%\end{align*}
Hence
\begin{align*}
    \left|\Psi_{n}'^{(n)}\right|&\ge\left(\frac{\left|1-t^{2}\right|\lambda_{1}'+\sqrt{(1+t^{2})^{2}\lambda_{1}'^{2}-4t^{2}}}{2(1+t^{2})}\right)^{n}+\left(\frac{\left|1-t^{2}\right|\lambda_{1}'-\sqrt{(1+t^{2})^{2}\lambda_{1}'^{2}-4t^{2}}}{2(1+t^{2})}\right)^{n}\\
    &-2\left(\frac{\left|t\right|}{1+t^{2}}\lambda_{1}\right)^{n},
\end{align*}
\begin{align*}
    \lambda_{1}^{n}e^{-\tilde{\gamma}n}\left|\Psi_{n}'^{(n)}\right|&\ge 1+\left(\frac{\left|1-t^{2}\right|\lambda_{1}'-\sqrt{(1+t^{2})^{2}\lambda_{1}'^{2}-4t^{2}}}{\left|1-t^{2}\right|\lambda_{1}'+\sqrt{(1+t^{2})^{2}\lambda_{1}'^{2}-4t^{2}}}\right)^{n}
    -2\left(\frac{2\left|t\right|\lambda_{1}}{\left|1-t^{2}\right|\lambda_{1}'+\sqrt{(1+t^{2})^{2}\lambda_{1}'^{2}-4t^{2}}}\right)^{n}\\
    &\to 1 \text{ as } n\to\infty. 
\end{align*}
However,
$$\lambda_{1}^{n}e^{-\tilde{\gamma}n}\left|\Psi_{n}'^{(n)}\right|=\left|\int_{\mathbb{T}}\lambda_{1}^{n}e^{-\tilde{\gamma}n}\Psi_{n}'(x)e^{-2\pi inx}dx\right|\le\left\|\lambda_{1}^{n}e^{-\tilde{\gamma}n}\Psi_{n}'(x)\right\|_{L^{1}},$$
therefore 
\[\liminf_{n
\to\infty}\lambda_{1}^{n}e^{-\tilde{\gamma}n}\left|\Psi_{n}'^{(n)}\right|=0,\]
this is a contradiction.
\end{proof}

\begin{proof}[Proof of Theorem \ref{thm.main}(2) and Theorem \ref{main2}(1)]
Direct consequence of Lemma \ref{lem9.5} and Lemma \ref{lem9.6}. 
\end{proof}

\section{Localization in the positive Lyapunov exponent region}
In this section, we prove Part \ref{ItemLocalization} of Theorem \ref{thm.main} and Part \ref{ItemLocalization2} of Theorem \ref{main2}. The proof will proceed in the similar way as in Jitomirskaya's seminal paper \cite{Jitomirskaya1999} for the AMO. A crucial ingredient in the proof is to exploit the evenness of the characteristic polynomial of the finite cut-off  of the  matrix representing the quantum walk operator $W_{\lambda_1,\lambda_2,\Phi,\theta}$. Another important ingredient in the proof is to conjugate this matrix into an extended CMV matrix so that the transfer matrix of its generalized eigenvalue equation lies in the group $\mathrm{SU}(1,1)$ instead of $\mathrm{U}(2)$. This will allow us to utilize some useful analysis for the well studied normalized Szeg\H{o} cocycles.
Let us first set the stage for the proof.

\begin{definition}\label{PhiResonant}
    For $\Phi\in DC$ satisfying the condition \eqref{eq.dio} with a constant $1<\tau=\tau(\Phi)<\infty$.
    We say $\theta\in\mathbb{T}$ is  {\it resonant} with respect to $\Phi$ if 
    \[\left|\sin2\pi(\theta+n\Phi)\right|<e^{-\left|n\right|^{\frac{1}{2\tau}}}\]
    holds for infinitely many $n$ satisfying $2n\in\mathbb{Z}$. Otherwise, $\theta$ is called {\it non-resonant} with respect to $\Phi$. 
\end{definition}
 Let $\Theta=\Theta(\Phi)$ be the set of resonant phases. It is known that $\Theta$ is a zero measure dense $G_\delta$ set. Therefore, the complement $\Theta^c$ is of full measure. The main goal of this section is to prove the following result:

\begin{theorem}\label{lem10.2}
    Suppose that \(\Phi\in DC, \theta\notin\Theta,  \lambda_1<\frac{|1-t^2|}{1+t^2}, \lambda_1<\lambda_2\). Then the quantum walk operator $W_{\lambda_1,\lambda_2,\Phi,\theta}$ has purely point spectrum with exponentially decaying eigenvectors.
\end{theorem}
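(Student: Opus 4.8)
The plan is to follow Jitomirskaya's localization scheme for the almost Mathieu operator \cite{Jitomirskaya1999}, with the bookkeeping adapted to the five‑diagonal CMV structure. First I would replace $W_{\lambda_1,\lambda_2,\Phi,\theta}$ by its canonical extended CMV representative $\widetilde{\mathcal E}(\theta)$ via the gauge conjugation of Lemma~\ref{lem.gaugeTransform}: this is unitary, hence preserves the spectral type and the exponential decay of eigenvectors, and it places the transfer matrices of the generalized eigenvalue equation $\widetilde{\mathcal E}(\theta)\psi=z\psi$ in $\SU(1,1)$, making the normalized two‑step Szeg\H o cocycle $(\Phi,S_z^+)$ of Section~2 available. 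Under the hypotheses $\lambda_1<\tfrac{|1-t^2|}{1+t^2}$ and $\lambda_1<\lambda_2$ this cocycle is supercritical for every $z\in\Sigma$: if $\lambda_2<\tfrac{|1-t^2|}{1+t^2}$ this is the case $\lambda_1<\lambda_2$ of Theorem~\ref{thm6.4}, and if $\lambda_2\ge\tfrac{|1-t^2|}{1+t^2}$ it is the second branch of Theorem~\ref{thm6.11}; in either case $L:=L(\Phi,S_z^+)>0$ on all of $\Sigma$, with $\inf_{z\in\Sigma}L>0$ by continuity (Theorem~\ref{lem6.2}).

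The quantitative inputs come next. I would establish a large deviation estimate for the analytic cocycle $(\Phi,S_z^+)$: there are $c,C>0$ such that for each $z\in\Sigma$ and all large $n$, $\bigl|\{x\in\T:\,|\tfrac1n\log\|S_z^{+,n}(x)\|-L|>Cn^{-1/100}\}\bigr|<e^{-cn^{1/100}}$, which follows from subharmonicity of $\tfrac1n\log\|S_z^{+,n}\|$ together with the Diophantine condition on $\Phi$ (either the avalanche‑principle argument of Bourgain--Goldstein or the more explicit analysis in the related unitary setting \cite{CFO23,CFLOZ24}); Hölder continuity of $z\mapsto L(\Phi,S_z^+)$ on $\Sigma$ comes along but is not needed beyond the LDT at fixed $z$. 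I would also record the symmetry: since the sampling functions satisfy $f(-x)=f(x)$ and $g(-x)=\overline{g(x)}$ — equivalently because of \eqref{lem.antiSymmetry} — the suitably normalized characteristic polynomial $P_n^{(y)}(\theta)$ of the restriction of $W$ to $[y-n,y+n]$ obeys a reflection identity $P_n^{(y)}(\theta)=\overline{P_n^{(-y)}(-\theta)}$; in particular $P_n^{(0)}$ is even in $\theta$. This evenness is the arithmetic lever that pins the localization center.

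Then comes the decay argument. By a Schnol‑type argument the spectral measures of $W$ are carried by generalized eigenvalues admitting a polynomially bounded solution of $W\psi=z\psi$, $z\in\partial\bbD$, so it suffices to show that any such $\psi$ (normalized) decays exponentially, since this makes every such $z$ a genuine eigenvalue with exponentially decaying eigenvector. Call $y$ \emph{regular} if there is a block $[a,b]\ni y$ with $b-a\asymp n$ and $\operatorname{dist}(y,\{a,b\})\ge n/5$ on which the finite‑volume Green's function satisfies $|G_{[a,b]}(y,a)|,|G_{[a,b]}(y,b)|\le e^{-(L-\epsilon)\operatorname{dist}(y,\cdot)}$; by Cramer's rule these entries are products of $\SU(1,1)$ transfer matrices divided by $P_n^{(y)}(\theta)$, so the LDT together with the lower bound $|P_n^{(y)}(\theta)|\ge e^{n(L-\epsilon)}$ shows that $y$ is regular whenever $|\sin 2\pi(\theta+m\Phi)|\ge e^{-|m|^{1/2\tau}}$ for all $m$ in a window of size $\asymp|y|$ around $y$ — it is here that the evenness of $P_n^{(y)}$ upgrades "small at one site" to the two‑sided determinant bound the estimate needs. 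A block resolvent (Poisson‑type) expansion then propagates: once every $y$ in an annulus $N<|y|<2N$ is regular, $|\psi_y|\le e^{-c|y|}$ there. Finally one counts resonances: a nontrivial $\psi$ has a localization center $n_0$ (a site where $|\psi_{n_0}|+|\psi_{n_0\pm1}|$ is comparable to the running maximum of $|\psi|$) which cannot be regular, so $\theta+n_0\Phi$ is resonant; if $\theta+n_0\Phi$ and $\theta+n_1\Phi$ with $n_0\ne n_1$ are both resonant, each lies within $e^{-|n_i|^{1/2\tau}}$ of $\tfrac12\bbZ$, forcing $\|(n_0-n_1)\Phi\|_{\bbR/\bbZ}$ or $\|(n_0-n_1)\Phi-\tfrac12\|_{\bbR/\bbZ}$ to be exponentially small, which \eqref{eq.dio} forbids unless $|n_0-n_1|$ is enormous; together with $\theta\notin\Theta$ (so $\theta+m\Phi$ is non‑resonant for all large $|m|$, Definition~\ref{PhiResonant}) this yields a unique center with $|n_0|\le C(\Phi,\theta,z)$. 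Hence all sufficiently distant $y$ are regular, $\psi$ decays exponentially, and $W_{\lambda_1,\lambda_2,\Phi,\theta}$ has pure point spectrum with exponentially decaying eigenvectors.

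\textbf{Principal difficulty.} The delicate step is the interface between the LDT and the arithmetic bookkeeping: one must obtain $|P_n^{(y)}(\theta)|\ge e^{n(L-\epsilon)}$ with error terms small enough to defeat the polynomial growth permitted by Schnol's lemma, and exploit the evenness of $P_n^{(0)}$ to exclude a "double resonance" at the center, all while carrying the five‑diagonal CMV matrix (rather than a tridiagonal one) through the block decomposition. Matching the exceptional‑set exponents of the LDT to the resonance scale $|m|^{1/2\tau}$ dictated by Definition~\ref{PhiResonant} is where the real work lies.
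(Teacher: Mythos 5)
Your overall skeleton is right — gauge away the complex $\rho$'s via Lemma~\ref{lem.gaugeTransform}, check supercriticality from Theorems~\ref{thm6.4}/\ref{thm6.11}, Schnol to reduce to polynomially bounded generalized eigenfunctions, then show large $|y|$ are regular via Green's function estimates expressed through $\SU(1,1)$ transfer products and characteristic polynomials. But the machinery you invoke at the core is genuinely different from the paper's, and the way you describe it has a gap.

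You propose a large deviation theorem for $\frac1n\log\|S_z^{+,n}\|$ (Bourgain--Goldstein / avalanche-principle style) and then assert that the LDT plus the resonance condition $|\sin 2\pi(\theta+m\Phi)|\ge e^{-|m|^{1/2\tau}}$ delivers the pointwise lower bound $|P_n^{(y)}(\theta)|\ge e^{n(L-\epsilon)}$ you need. That inference is not immediate: the LDT produces an exceptional set of small measure whose structure is a priori unrelated to the arithmetic resonance set, and bridging them requires either Cartan-type estimates on the relevant subharmonic function or — what the paper actually does — the Lagrange-interpolation scheme of Jitomirskaya's AMO paper. The paper never establishes a full LDT. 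Instead it (a) gets an upper bound on $|P_{z,[a,b]}|$ directly from Furman's subadditive theorem (Lemma~\ref{lem10.8}) and unique ergodicity (Lemma~\ref{lem10.5}); (b) gets a lower bound only on the $L^1$-average $\frac1{2n}\int\log|P_{z,[1,2n-2]}|$ by a subharmonicity argument (Lemma~\ref{lem10.9}); (c) proves the reflection symmetry in Lemma~\ref{lem3.7}, which is what forces $P_{z,[1,4n-2]}^{\beta_1,\beta_2}$ to be a polynomial of degree $\le 2n-1$ in $\cos 2\pi(\theta+n\Phi)$ (Lemma~\ref{lem10.10}); and (d) runs the Lagrange-interpolation and $\epsilon$-uniformity argument (Lemmas \ref{lem10.11}--\ref{lem10.14}) to show that if both $0$ and $y$ were singular, too many points $\theta+j\Phi$ would lie in a sublevel set of that polynomial, violating (b). The non-resonance of $\theta$ enters precisely through $\epsilon$-uniformity of the interpolation nodes, not through avoidance of an LDT exceptional set.

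Two further remarks on detail. First, your use of evenness — ``upgrades small at one site to the two-sided determinant bound'' and ``pins the localization center'' — mislocates its role; in this argument evenness is what halves the degree of the characteristic polynomial (it becomes a polynomial in $\cos$ rather than a trig polynomial), which is essential for the interpolation bound with $2n$ nodes of size $\sim n$, and separately the fact that both $0$ and $y$ being singular forces too many nodes into the sublevel set. Second, your ``block resolvent (Poisson-type) expansion'' propagation step is not needed here: once regularity of $y$ is established, the two-boundary representation \eqref{eq.boundary2Interior} plus the polynomial bound on $\psi$ gives exponential decay at $y$ directly, exactly as in the paper's proof of Theorem~\ref{lem10.2}. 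Your route could in principle be completed — an LDT plus Cartan for $\SU(1,1)$ cocycles is known — but it is heavier and you would still need the polynomial/evenness structure to control the bad set arithmetically; the paper's interpolation route is lighter and self-contained, and is what it actually uses.
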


Let $\mathcal{E}$ be the matrix representation of the quantum walk operator $W_{\lambda_1,\lambda_2,\Phi,\theta}$. In our case, $\mathcal{E}$ is a GECMV since the way we generate $\rho_n$'s can not guarantee the positivity of each $\rho_n$'s. Nevertheless, we can apply Lemma \ref{lem.gaugeTransform} to find a unitary matrix $U$ such that $\tilde{\mathcal{E}}=U\mathcal{E}U^*$ is an extended CMV matrix whose associated $\rho_n$'s are all real positive. In the following context, we will mostly work with $\tilde{\mathcal{E}}$ instead of $\mathcal{E}$. We remind the reader that the difference between $\tilde{\mathcal{E}}$ with $\mathcal{E}$ is simply that $\tilde{\mathcal{E}}=\mathcal{E}(\alpha_n,|\rho_n|)$ while $\mathcal{E}=\mathcal{E}(\alpha_n,\rho_n)$ where $\mathcal{E}=\mathcal{L}\mathcal{M}$ in the factorization with building blocks $\Theta_n$ in \eqref{GECMV}.

According to \cite{Simon82}, it sufficies to prove that  any nontrivial $\psi\in\mathcal{H}$ satisfying $\tilde{\mathcal{E}}\psi=z\psi$ and $\Vert\psi_n\Vert_{\bbC^2}\leq M(1+|n|)^{N}$ for some $M,N>0$ decays exponentially for every $z\in \Sigma$ satisfying $L(z)>0.$

% The idea of proving the exponential decay of the generalized eigenfunction is to represent the eigenfunction at a site $n$ inside a box $\Lambda=[a,b]\cap\bbZ$ of certain size by its evaluation at the boundaries according to the Crammer's rule. The key role in this representation is the Green's function, that is, the integration kernel of the spectral resolvent function. Then proving the exponential decay of the eigenfunction can be reduced to proving the exponential decay of the Green's functions that sit far from the diagonal. 

% We will use the projection of the matrix $\tilde{\mathcal{E}}$ onto the box $\Lambda.$
Let us first introduce the Green's function.
Let $\Lambda=[a,b]\cap\mathbb{Z}$ be a finite interval and $\beta_{1},\beta_{2}\in\partial\mathbb{D}\cup\{\cdot\}$. Given $\{\alpha_{n}\}\subset\mathbb{D}$, define $\{\tilde{\alpha}_{n}\}$ as follows:
$$\tilde{\alpha}_{j}:=
\begin{cases}
    \beta_{1}, j=a-1,\\
    \alpha_{j}, j\neq a-1,b,\\
    \beta_{2}, j=b.
\end{cases}$$
Let $\tilde{\mathcal{E}}^{\beta_{1},\beta_{2}}$ be the CMV matrix with Verblunsky coefficients $\{\tilde{\alpha}_{n},\left|\rho_{n}\right|\}$ and define $\tilde{\mathcal{E}}_{\Lambda}^{\beta_{1},\beta_{2}}:=\chi_{\Lambda}\tilde{\mathcal{E}}^{\beta_{1},\beta_{2}}\chi_{\Lambda}^{*}$, where $\chi_{\Lambda}$ is the finite projection onto $\Lambda$. Then $\tilde{\mathcal{E}}_{\Lambda}^{\beta_{1},\beta_{2}}$ is unitary when $\beta_{1},\beta_{2}\in\partial\mathbb{D}$. In particular, let $\tilde{\mathcal{E}}_{\Lambda}^{\cdot,\beta_{2}}:=\tilde{\mathcal{E}}_{\Lambda}^{\alpha_{a-1},\beta_{2}}$ and $\tilde{\mathcal{E}}_{\Lambda}^{\beta_{1},\cdot}:=\tilde{\mathcal{E}}_{\Lambda}^{\beta_{1},\alpha_{b}}$.
Define $\rho_{\Lambda}=\prod_{j\in\Lambda}\rho_{j}$ and $P_{z,\Lambda}^{\beta_{1},\beta_{2}}=\det{(zI-\tilde{\mathcal{E}}_{\Lambda}^{\beta_{1},\beta_{2}})}$. For $a>b$, we take $P_{z,\Lambda}^{\beta_{1},\beta_{2}}=1$.

Consider the eigenvalue equation $(zI-\tilde{\mathcal{E}})\psi^{z}=(zI-\tilde{\mathcal{L}}\tilde{\mathcal{M}})\psi^{z}=0$, i.e.  $(z\tilde{\mathcal{L}}^{*}-\tilde{\mathcal{M}})\psi^{z}=0$. Define the finite-volume Green's function as 
$$G_{z,\Lambda}^{\beta_{1},\beta_{2}}:=(z(\tilde{\mathcal{L}}_{\Lambda}^{\beta_{1},\beta_{2}})^{*}-\tilde{\mathcal{M}}_{\Lambda}^{\beta_{1},\beta_{2}})^{-1}$$
and $G_{z,\Lambda}^{\beta_{1},\beta_{2}}(x,y)=\left \langle \delta_{x}, G_{z,\Lambda}^{\beta_{1},\beta_{2}}\delta_{y}\right \rangle$ for $x,y\in\Lambda$. The eigenfunction evaluated at an internal site $a<y<b$ can be represented by the boundary conditions and the Green's function as the following (c. f. \cite[Lemma 3.1]{Zhu2024})
\begin{equation}\label{eq.boundary2Interior}\psi^{z}(y)=G_{z,\Lambda}^{\beta_{1},\beta_{2}}(y,a)\tilde{\psi}^{z}(a)+G_{z,\Lambda}^{\beta_{1},\beta_{2}}(y,b)\tilde{\psi}^{z}(b),
\end{equation}
where
$$\tilde{\psi}^{z}(a)=
\begin{cases}
    \rho_{a-1}\psi^{z}(a-1)-(\alpha_{a-1}-\beta_{1})\psi^{z}(a), a\,is\,even,\\
    (z\overline{\alpha_{a-1}}-z\overline{\beta_{1}})\psi^{z}(a)-z\rho_{a-1}\psi^{z}(a-1), a\,is\,odd
\end{cases}$$
and
$$\tilde{\psi}^{z}(b)=
\begin{cases}
    (z\beta_{2}-z\alpha_{b})\psi^{z}(b)-z\rho_{b}\psi^{z}(b+1), b\,is\,even,\\
    (\overline{\alpha_{b}}-\overline{\beta_{2}})\psi^{z}(b)+\rho_{b}\psi^{z}(b+1), b\,is\,odd.
\end{cases}$$
By Cramer's rule, 
\begin{equation}\label{eq.GreenRep}\left|G_{z,\Lambda}^{\beta_{1},\beta_{2}}(x,y)\right|=\left|\rho_{[x,y-1]}\frac{P_{z,[a,x-1]}^{\beta_{1},\cdot}P_{z,[y+1,b]}^{\cdot,\beta_{2}}}{P_{z,\Lambda}^{\beta_{1},\beta_{2}}}\right|, \,x,y\in\Lambda.
\end{equation}

\begin{definition}
    Fix $z=e^{it}\in\partial\mathbb{D}$, $\gamma\in\mathbb{R}$ and $k\in\mathbb{Z}$. We say that $y\in\mathbb{Z}$ is {\it $(\gamma,k)$-regular} if
\begin{enumerate}
    \item there exists $[n_{1},n_{2}]$ containing $y$ such that $n_{2}-n_{1}+1=k$,
    \item $\left|y-n_{i}\right|\ge\frac{k}{7},i=1,2$,
    \item $\left|G_{z,[n_{1},n_{2}]}^{\beta_{1},\beta_{2}}(y,n_{i})\right|<e^{-\gamma\left|y-n_{i}\right|},i=1,2$.
\end{enumerate}
Otherwise, we say that $y$ is {\it $(\gamma,k)$-singular}.
\end{definition}

The following result is the main technical lemma of this section.
\begin{lemma}\label{lem10.7}
    Let $\Phi\in DC$, $\theta\notin\Theta$ and assume that $L(z)>0$. Then for any $\epsilon>0$, there exists \(k_0=k_{0}(\theta,\Phi,z,\epsilon)\in\mathbb{N}\), s.t. for any $\left|y\right|>k_{0}$, there exists $k>\frac{5}{16}\left|y\right|$, such that $y$ is $(\frac{L(z)}{2}-\epsilon,k)$-regular.    
\end{lemma}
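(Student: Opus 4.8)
The argument follows the block‑resolvent scheme of Jitomirskaya \cite{Jitomirskaya1999}, transplanted to the CMV/quantum‑walk setting as in \cite{AJM2017} and \cite{Zhu2024}. Using the cocycle symmetry \eqref{lem.antiSymmetry} (which identifies $A_z$ at $x$ with its inverse at $-x$ and, for symmetric windows, makes the finite‑volume characteristic polynomials even in the sampling phase) we may assume $y>0$; and since a larger regularity exponent is the stronger statement, we may also assume $0<\epsilon<L(z)/3$. By the Cramer‑rule identity \eqref{eq.GreenRep}, checking that $y$ is $(\gamma,k)$‑regular in a window $\Lambda=[n_1,n_2]\ni y$ reduces to bounding the numerator determinants $P^{\beta_1,\cdot}_{z,[n_1,y-1]}$, $P^{\cdot,\beta_2}_{z,[y+1,n_2]}$ from above and the denominator $\rho_\Lambda P^{\beta_1,\beta_2}_{z,\Lambda}$ from below. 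For the numerators I would invoke the uniform upper bound on transfer matrices coming from unique ergodicity of $x\mapsto x+\Phi$ and continuity of $A_z$: for the given $\epsilon$ there is $n_\ast$ with $\|A_z^{\,n}(x)\|\le e^{n(L(z)+\epsilon/8)}$ for all $x\in\T$, $n\ge n_\ast$ — in the singular regime $\lambda_2\ge\frac{|1-t^2|}{1+t^2}$, where $f$ has zeros and $A_z$ is unbounded, one first works with the regularized cocycles $S^{+}_{z,r}$ of Lemma~\ref{lem6.9}/Theorem~\ref{thm6.11} and lets $r\to t^2$. Through the identification of transfer‑matrix entries with the $P^{\beta_1,\beta_2}_{z,[a,b]}$ (up to the scalar factors $\rho_{[a,b]}$), this bounds each numerator in \eqref{eq.GreenRep} by $e^{(\mathrm{length})(L(z)+\epsilon/8)}$.

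\textbf{The crux: a lower bound for the denominator.} As a function of the phase, $P_{z,[n_1,n_2]}(\theta)=\mathfrak{P}_{z,k}(\theta+n_1\Phi)$ with $\mathfrak{P}_{z,k}$ a trigonometric polynomial of degree $\le k$ ($k=n_2-n_1+1$) depending only on $z$ and $k$; for symmetric windows \eqref{lem.antiSymmetry} forces $\mathfrak{P}_{z,k}(\theta)=Q_{z,k}(\cos 2\pi\theta)$ with $\deg Q_{z,k}\le k$. From the block form $A_z(x)=f(x)^{-1}\bigl(B_{+}e^{2\pi ix}+B_{0}+B_{-}e^{-2\pi ix}\bigr)$ of \eqref{eq.cocycleMap}, the top Fourier coefficient of $\rho_\Lambda\mathfrak{P}_{z,k}$ is, up to a unimodular scalar, a $k$‑fold product of the constant matrix $B_{+}$; the leading‑order eigenvalue computations already carried out in the proofs of Lemma~\ref{lem5.5.1} and Lemma~\ref{lem9.6}, together with the explicit values of $L(\Phi,S^{+}_z)$ (Theorem~\ref{thm6.4}, Theorem~\ref{thm6.11}) and of the normalization $\int_\T\log|f|$, give $\bigl|\rho_\Lambda\,\widehat{\mathfrak{P}_{z,k}}(k)\bigr|\ge e^{k(L(z)-\epsilon/8)}$ for $k$ large. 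Combined with the $L^\infty$ bound above, a Remez/Turán‑type inequality for trigonometric polynomials shows that the bad set $\mathcal{B}_{z,k}:=\{\theta'\in\T:\ |\rho_\Lambda\mathfrak{P}_{z,k}(\theta')|<e^{k(L(z)-\epsilon/4)}\}$ omits a subset of $\T$ of fixed positive measure and, thanks to the evenness, is confined to a neighborhood of the $\cos 2\pi(\cdot)$‑preimage of the small zeros of $Q_{z,k}$. The hypothesis $\theta\notin\Theta$ (Definition~\ref{PhiResonant}) together with \eqref{eq.dio} then prevents the $\asymp k$ admissible base‑points $\{\theta+n_1\Phi:\ n_1\in[y-\tfrac67 k,\ y-\tfrac17 k]\}$ from all landing in $\mathcal{B}_{z,k}$; picking $n_1$ with $\theta+n_1\Phi\notin\mathcal{B}_{z,k}$ gives $\bigl|\rho_\Lambda P^{\beta_1,\beta_2}_{z,\Lambda}\bigr|\ge e^{k(L(z)-\epsilon/4)}$ for the window $\Lambda=[n_1,n_1+k-1]$, which satisfies $|y-n_i|\ge k/7$. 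If a given scale $k$ fails, one enlarges $k$, which is permitted since only $k>\tfrac5{16}|y|$ is required.

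\textbf{Conclusion and the main obstacle.} Inserting the two bounds into \eqref{eq.GreenRep} with $x=y$, and noting that the numerator of $G^{\beta_1,\beta_2}_{z,\Lambda}(y,n_i)$ involves only the determinant over the interval separating $y$ from the opposite endpoint, of length $(k-1)-|y-n_i|$, one gets $\bigl|G^{\beta_1,\beta_2}_{z,\Lambda}(y,n_i)\bigr|\le e^{((k-1)-|y-n_i|)(L(z)+\epsilon/8)}\,e^{-k(L(z)-\epsilon/4)}\le e^{-(\frac{L(z)}{2}-\epsilon)|y-n_i|}$ for $k$ large, using $|y-n_i|\ge k/7$ and $0<\epsilon<L(z)/3$; this is precisely $\bigl(\tfrac{L(z)}{2}-\epsilon,\,k\bigr)$‑regularity of $y$, with $k>\tfrac5{16}|y|$, and $k_0$ is uniform in $y$ once $|y|$ exceeds a bound depending only on $\theta,\Phi,z,\epsilon$. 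The technical heart is the denominator estimate: extracting the sharp exponential lower bound $e^{k(L(z)-\epsilon)}$ for the leading Fourier coefficient requires exact bookkeeping among $L(\Phi,S^{+}_z)$, the spectral radius of $B_{+}$, and $\int_\T\log|f|$ — including the regime where $f$ vanishes on $\T$, handled perturbatively — and locating a good window near the far‑away site $y$ is exactly where the evenness of the finite‑volume characteristic polynomial and the precise quantitative form of the non‑resonance condition (Definition~\ref{PhiResonant}, built from \eqref{eq.dio}) are indispensable; the upper bounds and the final combination are routine once this step is in place.
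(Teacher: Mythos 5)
Your top‑level decomposition is right (use \eqref{eq.GreenRep}, bound the numerator via the Lyapunov exponent and ergodicity, bound the denominator from below at a well‑chosen base point), but the route you take to the denominator lower bound has a genuine gap: you never invoke the singularity at the origin coming from the generalized eigenfunction, and that step is indispensable. The paper's argument is a double‑singularity argument. Starting from a nontrivial generalized eigenfunction with $\psi^z(0)\neq 0$, the site $0$ is necessarily $(\gamma-\epsilon,k)$‑singular at every sufficiently large scale. If $y$ were also singular, then Lemma~\ref{lem10.11} forces all of the roughly $2n$ points $\{\theta+j\Phi\}_{j\in I_1\cup I_2}$ — split between an interval $I_1$ near $0$ and an interval $I_2$ near $y$ — into the bad set $A_n^{2\gamma+\tilde\gamma-\epsilon/7}$. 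Since the characteristic polynomial equals $Q_n(\cos 2\pi(\cdot))$ with $\deg Q_n\le 2n-1$ (Lemma~\ref{lem10.10}), and the non‑resonance of $\theta$ makes precisely these $2n$ points $\epsilon$‑uniform (Lemma~\ref{lem10.14}), Lagrange interpolation would then force $|Q_n|$ small everywhere, contradicting the $L^1$ lower bound of Lemma~\ref{lem10.9}. Hence $y$ is regular.

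Your proposal instead tries to conclude directly, from a Remez/Tur\'an‑type measure bound on $\mathcal{B}_{z,k}$ together with non‑resonance, that some base point $\theta+n_1\Phi$ with $n_1\in[y-\tfrac{6}{7}k,\,y-\tfrac{1}{7}k]$ lies outside $\mathcal{B}_{z,k}$. This step is not substantiated: $\mathcal{B}_{z,k}$ is generically a union of on the order of $2k$ intervals, while your admissible range contains only about $\tfrac{5}{7}k$ candidate base points; a small‑measure bound and even the Diophantine condition on $\Phi$ do not by themselves prevent all of them from landing in the bad set. One needs roughly $2k$ well‑separated interpolation points, which is exactly why the argument must borrow half of them from near the origin and exploit that $0$ is singular. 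Two smaller remarks: the leading‑Fourier‑coefficient device you invoke is the mechanism in Lemma~\ref{lem9.6} (absence of point spectrum in the self‑dual regime), whereas the localization step derives the lower bound in Lemma~\ref{lem10.9} from subharmonicity of $\log|P|$ in $\omega=e^{2\pi i\theta}$ evaluated at $\omega=0$; and in the singular regime the paper works with the uniformly bounded transfer matrices $\tilde{S}_z$ of \eqref{eq10.1} rather than with the unbounded $A_z$, so no regularization $r\to t^2$ is needed in this section.
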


Let us assume that Lemma \ref{lem10.7} holds and proceed to the proof Theorem \ref{lem10.2}. We will postpone the proof of Lemma \ref{lem10.7} to the end of this section.

\begin{proof}[Proof of Theorem \ref{lem10.2}]
     From the expression of the Lyapunov exponents specified by Theorem \ref{thm6.4}, Theorem \ref{thm6.11} and Lemma \ref{cor.simpleLE}, we know $L(z)>0$ for $z\in\Sigma$ and $\lambda_1<\frac{|1-t^2|}{1+t^2}, \lambda_1<\lambda_2$. Let $\psi_{z}$ be a nontrivial generalized eigenfunction of $\tilde{\mathcal{E}}(\theta)$. Take $\gamma=\frac{L(z)}{2}$ and let $\epsilon>0$ be sufficiently small. For such $\epsilon>0$, by Lemma \ref{lem10.7} there exists a $k_0>0$ depending on $\theta,\Phi,z,\epsilon$ such that for any  $\left|y\right|>k_{0}$, there exists $k>\frac{5}{16}|y|$ such that $y$ is $(\gamma-\epsilon,k)$-regular. Therefore, by the definition of regularity, 
    \[\left|G_{z,[n_{1},n_{2}]}^{\beta_{1},\beta_{2}}(y,n_{i})\right|<e^{-(\gamma-\epsilon)\left|y-n_{i}\right|}\le e^{-(\gamma-\epsilon)\frac{1}{7}\times\frac{5}{16}\left|y\right|}.\]
    
    Plugging this estimate into \eqref{eq.boundary2Interior}, we obtain 
    $$\left|\psi_{z}(y)\right|\le e^{-(\gamma-\epsilon)\left|y-n_{1}\right|}\left|\tilde{\psi}_{z}(n_{1})\right|+e^{-(\gamma-\epsilon)\left|y-n_{2}\right|}\left|\tilde{\psi}_{z}(n_{2})\right|.$$
    Combining with the trivial bounds 
    $$\left|\tilde{\psi}_{z}(n_{1})\right|\le3\max\{\left|\psi_{z}(n_{1})\right|,\left|\psi_{z}(n_{1}-1)\right|\},$$
    $$\left|\tilde{\psi}_{z}(n_{2})\right|\le3\max\{\left|\psi_{z}(n_{2})\right|,\left|\psi_{z}(n_{2}+1)\right|\},$$
    gives rise to the following estimates:
    \[\left|\psi_{z}(y)\right|\le e^{-(\gamma-\epsilon)\left|y-n_{1}\right|}3\max\{\left|\psi_{z}(n_{1})\right|,\left|\psi_{z}(n_{1}-1)\right|\}+e^{-(\gamma-\epsilon)\left|y-n_{2}\right|}3\max\{\left|\psi_{z}(n_{2})\right|,\left|\psi_{z}(n_{2}+1)\right|\}.\]
    Since $\left|\psi_{z}(y)\right|\le M(1+\left|y\right|)^{N}$ for any $y$, 
    $$\left|\psi_{z}(y)\right|\le 3M(e^{-(\gamma-\epsilon)\left|y-n_{1}\right|}(2+\left|n_{1}\right|)^{N}+e^{-(\gamma-\epsilon)\left|y-n_{2}\right|}(2+\left|n_{2}\right|)^{N}).$$
    For $i=1,2,$\[(2+\left|n_{i}\right|)^{N}\le (2+\left|n_{i}-y\right|+\left|y\right|)^{N}\le(2+2\max\{\left|n_{i}-y\right|,\left|y\right|\})^{N},\]
    we have
    $$\left|\psi_{z}(y)\right|\le e^{-(\gamma-\epsilon)\frac{1}{14}\times\frac{5}{16}\left|y\right|}$$
    for $\left|y\right|>k_{0}$ large enough.

    Therefore $\tilde{\mathcal{E}}$ has purely point spectrum with exponentially decaying eigenvectors.  By Lemma \ref{lem.gaugeTransform}, the same conclusion holds for $\mathcal{E}$ as well. 
\end{proof}

It is readily seen that Part \ref{ItemLocalization} of Theorem \ref{thm.main} and Part \ref{ItemLocalization2} of Theorem \ref{main2} follows from Theorem \ref{lem10.2}.

\subsection{Proof of Lemma \ref{lem10.7}}
In order to prove Lemma \ref{lem10.7}, we need to show that the numerators appeared in \eqref{eq.GreenRep} are properly bounded above and the denominator appeared is properly bounded from below. The key idea is to connect the characteristic polynomials to the behavior of the associated Szeg\H{o} cocycles.

Let $S_z(n)$ be the normalized Szeg\H{o} cocycle such that $S_z(n)\in \mathrm{SU}(1,1)$ and denote
$$\tilde{S}_{z}(n)=\sqrt{z}\left|\rho_{n}\right|S_{z}(n)=
    \begin{pmatrix}
        z & -\overline{\alpha_{n}} \\
        -\alpha_{n}z & 1
    \end{pmatrix}.$$
The following relation can be found in \cite{Zhu2024}: 
\begin{align}\label{eq10.1}
    \begin{pmatrix}
    P_{z,[a,b]}^{\beta_{1},\beta_{2}} & P_{z,[a,b]}^{-\beta_{1},\beta_{2}}\\
    P_{z,[a,b]}^{\beta_{1},-\beta_{2}} & P_{z,[a,b]}^{-\beta_{1},-\beta_{2}}
\end{pmatrix}=
\begin{pmatrix}
    z & -\overline{\beta_{2}}\\
    z & \overline{\beta_{2}}
\end{pmatrix}
\prod_{j=b-1}^{a}\tilde{S}_{z}(j)
\begin{pmatrix}
    1 & 1\\
    -\beta_{1} & \beta_{1}
\end{pmatrix}.
\end{align}
We conclude from it that
$$\left|P_{z,[a,x-1]}^{\beta_{1},\cdot}\right|\le\left\|\prod_{j=x-2}^{a}\tilde{S}_{z}(j)\right\|,\left|P_{z,[y+1,b]}^{\cdot,\beta_{2}}\right|\le\left\|\prod_{j=b-1}^{y+1}\tilde{S}_{z}(j)\right\|.$$
As a consequence of unique ergodicity of the linear map $\theta\mapsto \theta+\Phi$ on $\T$, the following result can be obtained.
\begin{lemma}(\cite{Jitomirskaya1999})\label{lem10.5}
    When $\Phi\in DC$, for any $\eta>0$, there exists $N>0$ such that for any $n>N$,
    $$\prod_{j=0}^{n-1}\left|\rho_{j}\right|\le e^{n(\tilde{\gamma}/2+\eta)}$$
    where $\tilde{\gamma}=\int_{\mathbb{T}}{\ln{\left|f(\theta)\right|\lambda_{1}}d\theta}$.
\end{lemma}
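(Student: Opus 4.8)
The plan is to reduce $\prod_{j=0}^{n-1}|\rho_j|$ to a Birkhoff-type product of $|f|$ along the rotation orbit of $\theta$, and then to control that product by the ergodic average $\int_{\mathbb{T}}\ln|f|\,dx$. The one delicate point is the logarithmic singularity of $\ln|f|$ at the zeros of $f$, which occur exactly when $\lambda_2\ge\frac{|1-t^2|}{1+t^2}$; since we only need an upper bound, these singularities (being $-\infty$) will in fact work in our favour, and the argument will go through by approximating $\ln|f|$ from above by continuous functions.

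First I would record the structure of the gauge-normalized Verblunsky coefficients: one has $|\rho_{2k}|=\lambda_1$ and $|\rho_{2k-1}|=|f(\theta+k\Phi)|$ for all $k$ (using $\rho_{2k-1}=\overline{f(\theta+k\Phi)}$ and $|\rho_{2k}|=\lambda_1>0$). Hence the product telescopes exactly, namely $\prod_{j=0}^{n-1}|\rho_j|=\lambda_1^{\lceil n/2\rceil}\prod_{k=1}^{\lfloor n/2\rfloor}|f(\theta+k\Phi)|$. Writing $m=\lfloor n/2\rfloor$ and recalling $\tilde\gamma=\ln\lambda_1+\int_{\mathbb{T}}\ln|f|\,dx$ (finite since $f$ is a trigonometric polynomial, so $\ln|f|\in L^1(\mathbb{T})$), taking logarithms reduces the claim—after an elementary comparison between $n$, $2m$ and $\lceil n/2\rceil$, whose contribution to the exponent is a bounded additive constant absorbed into $\eta n$ once $n$ is large depending on $\eta$ and $\tilde\gamma$—to the uniform estimate
\[
\frac{1}{m}\sum_{k=1}^{m}\ln\bigl|f(\theta+k\Phi)\bigr|\ \le\ \int_{\mathbb{T}}\ln|f|\,dx+\eta\qquad\text{for all }\theta\in\mathbb{T}\text{ and all }m\ge M(\eta).
\]

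The heart of the matter is this last estimate. If $\lambda_2<\frac{|1-t^2|}{1+t^2}$ then $f$ has no zero on $\mathbb{T}$, so $\ln|f|\in C(\mathbb{T})$ and the estimate—indeed, uniform convergence of the Birkhoff averages to the integral—follows at once from the unique ergodicity of the irrational rotation $x\mapsto x+\Phi$. If $\lambda_2\ge\frac{|1-t^2|}{1+t^2}$, then $f$ vanishes at one or two points and $\ln|f|$ acquires $-\infty$ singularities; since $|f|\le1$ by unitarity of the coin these are the only obstructions, and I would approximate $\ln|f|$ from above. Concretely, put $f_\delta:=\max(|f|,\delta)$ for $\delta\in(0,1)$; then $\ln f_\delta\in C(\mathbb{T})$, $\ln f_\delta\ge\ln|f|$ pointwise, and $\ln f_\delta\downarrow\ln|f|$ as $\delta\downarrow0$, so by monotone convergence $\int_{\mathbb{T}}\ln f_\delta\,dx\to\int_{\mathbb{T}}\ln|f|\,dx$. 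Choosing $\delta$ with $\int_{\mathbb{T}}\ln f_\delta\,dx<\int_{\mathbb{T}}\ln|f|\,dx+\tfrac{\eta}{2}$ and applying unique ergodicity to the continuous function $\ln f_\delta$ gives, uniformly in $\theta$ and for all $m$ large,
\[
\frac{1}{m}\sum_{k=1}^{m}\ln\bigl|f(\theta+k\Phi)\bigr|\ \le\ \frac{1}{m}\sum_{k=1}^{m}\ln f_\delta(\theta+k\Phi)\ \le\ \int_{\mathbb{T}}\ln f_\delta\,dx+\tfrac{\eta}{2}\ <\ \int_{\mathbb{T}}\ln|f|\,dx+\eta,
\]
which is precisely the bound needed. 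Assembling this with the reduction step completes the argument.

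The step I expect to need the most care is this handling of the singular set of $\ln|f|$ in the regime $\lambda_2\ge\frac{|1-t^2|}{1+t^2}$: one must make sure the approximation is one-sided (from above) and that $\int_{\mathbb{T}}\ln f_\delta\,dx\to\int_{\mathbb{T}}\ln|f|\,dx$, so that the very negative contributions of orbit points close to a zero of $f$—which can only decrease the Birkhoff sums—do not threaten uniformity. Everything else (the telescoping identity, the parity bookkeeping, and the passage from the ergodic average to the stated exponential bound) is routine. I would also remark that the Diophantine hypothesis in the statement is stronger than what the proof uses: mere irrationality of $\Phi$ suffices for the unique ergodicity invoked above, and it is inherited here only for consistency with the standing assumptions of the section.
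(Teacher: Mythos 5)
Your proof is correct and complete, and since the paper simply cites \cite{Jitomirskaya1999} for this lemma without including an argument, your reconstruction is a valid one that fills in the needed details. The two ingredients — the exact parity-split telescoping $\prod_{j=0}^{n-1}|\rho_j|=\lambda_1^{\lceil n/2\rceil}\prod_{k=1}^{\lfloor n/2\rfloor}|f(\theta+k\Phi)|$, and the one-sided truncation $f_\delta=\max(|f|,\delta)$ combined with unique ergodicity and monotone convergence — are exactly what is required, and you correctly recognize that the truncation must be taken from above, so that the $-\infty$ singularities of $\ln|f|$ (present precisely when $\lambda_2\ge\frac{|1-t^2|}{1+t^2}$) only help the desired upper bound. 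The uniform-in-$\theta$ form that you obtain is also the form actually used downstream in Lemma 10.11, where the product is taken along a shifted window $[n_1,y-1]$.

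One small remark on your final observation: you are right that the upper bound you prove requires only irrationality of $\Phi$, not $\Phi\in DC$. The Diophantine hypothesis in the statement is a standing assumption of the section (it is needed elsewhere, e.g.\ for the lower bound on $P_{z,\Lambda}^{\beta_1,\beta_2}$ in Lemma \ref{lem10.9} and for the $\epsilon$-uniformity in Lemma \ref{lem10.14}), and it is natural to carry it through. It is also worth flagging that your step absorbing the bookkeeping discrepancy between $\lceil n/2\rceil$, $\lfloor n/2\rfloor$ and $n/2$ into $\eta n$ implicitly uses $\ln\lambda_1<0$ (so that $\lceil n/2\rceil\ln\lambda_1\le\frac{n}{2}\ln\lambda_1$) and the integrability of $\ln|f|$ (so the residual additive constant is finite); both hold here, but spelling them out would make the reduction airtight.
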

Indeed,  by Lemma \ref{lem5.3.1}, 
\begin{equation}\label{eq.gammatilde}\tilde{\gamma}=
\begin{cases}
    \ln{\frac{\lambda_{1}}{2(1+t^{2})}(\left|1-t^{2}\right|\lambda_{2}'+\sqrt{(1+t^{2})^{2}\lambda_{2}'^{2}-4t^{2}})}, \lambda_{2}\in(0,\frac{|1-t^{2}|}{1+t^{2}}),\\
    \ln{\frac{\left|t\right|\lambda_{1}\lambda_{2}}{1+t^{2}}}, \lambda_{2}\in[\frac{|1-t^{2}|}{1+t^{2}},1).
\end{cases}\end{equation}

By the definition of Lyapunov exponent and Furman's subadditive ergodic theorem,  we have 
\begin{lemma}\label{lem10.8}
    For any $\epsilon>0$, $z\in\partial\mathbb{D}$, there exists $k_{1}(\epsilon,z)>0$, such that 
    $$\left|P_{z,[a,b]}^{\beta_{1},\cdot}\right|,\left|P_{z,[a,b]}^{\cdot,\beta_{2}}\right|<e^{(\gamma+\tilde{\gamma}/2+\epsilon)(b-a)}$$
    whenever $b-a>k_{1}$.
\end{lemma}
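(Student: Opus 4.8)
The plan is to reduce the estimate on the characteristic polynomials to an upper bound on the norm of a transfer-matrix product, and then to feed that product into a uniform (Furman-type) subadditive ergodic theorem. The two displayed inequalities immediately preceding the lemma, obtained from the factorization \eqref{eq10.1}, already give $\left|P_{z,[a,b]}^{\beta_{1},\cdot}\right|,\ \left|P_{z,[a,b]}^{\cdot,\beta_{2}}\right|\le\bigl\|\prod_{j=b-1}^{a}\tilde S_{z}(j)\bigr\|$, so it suffices to show
\[
\Bigl\|\prod_{j=b-1}^{a}\tilde S_{z}(j)\Bigr\|\le e^{(\gamma+\tilde\gamma/2+\epsilon)(b-a)}\qquad\text{for }b-a>k_{1}(\epsilon,z),
\]
where $\gamma:=L(z)/2=\tfrac12 L(\Phi,S_{z}^{+})$ and $\tilde\gamma:=\int_{\T}\ln(|f|\lambda_{1})\,dx$ as in Lemma~\ref{lem10.5}.

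The key step is to work not with the normalized one-step Szeg\H{o} matrices $S_{z}(j)$ (which, in the singular regime, are not even bounded, since $1/|f|$ blows up), but with the \emph{de-normalized} two-step block
\[
\tilde S_{z}^{+}(x):=\tilde S_{z}(2n)\,\tilde S_{z}(2n-1)\Big|_{\theta+n\Phi=x}=\begin{pmatrix}z&-\lambda_{1}'\\-\lambda_{1}'z&1\end{pmatrix}\begin{pmatrix}z&-\overline{g(x)}\\-g(x)z&1\end{pmatrix},
\]
a matrix-valued trigonometric polynomial --- hence $\tilde S_{z}^{+}\in C^{0}(\T,\mathbb{M}(2,\bbC))$, uniformly bounded, with $\det\tilde S_{z}^{+}(x)=z^{2}\lambda_{1}^{2}|f(x)|^{2}$. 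I would group $\prod_{j=b-1}^{a}\tilde S_{z}(j)$ into consecutive pairs $\tilde S_{z}(2n)\tilde S_{z}(2n-1)=\tilde S_{z}^{+}(\theta+n\Phi)$, peeling off at most one single-step factor (each of norm $\le 2$, since $|\alpha_{j}|<1$) at each end, so that $\bigl\|\prod_{j=b-1}^{a}\tilde S_{z}(j)\bigr\|\le 4\,\|(\tilde S_{z}^{+})^{m}(x_{0})\|$ for some $x_{0}\in\T$ and some $m$ with $|m-(b-a)/2|\le 1$.

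Next I would identify $L(\Phi,\tilde S_{z}^{+})$. Since $\tilde S_{z}(n)=\sqrt z\,|\rho_{n}|\,S_{z}(n)$ with $|\rho_{2n}|=\lambda_{1}$, $|\rho_{2n-1}|=|f(\theta+n\Phi)|$, one has $\tilde S_{z}^{+}(x)=z\,\lambda_{1}\,|f(x)|\,S_{z}^{+}(x)$; pulling the scalar factors out of the $m$-fold product and using $\ln|f|\in L^{1}(\T)$ (as $f$ is a trigonometric polynomial) yields $L(\Phi,\tilde S_{z}^{+})=L(\Phi,S_{z}^{+})+\int_{\T}\ln(\lambda_{1}|f|)\,dx=L(z)+\tilde\gamma=2\gamma+\tilde\gamma$, with $L(\Phi,S_{z}^{+})=L(z)$ the value supplied by Theorem~\ref{thm6.4}, Theorem~\ref{thm6.11} and Lemma~\ref{cor.simpleLE}. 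Since the rotation $x\mapsto x+\Phi$ is uniquely ergodic ($\Phi$ irrational) and $\tilde S_{z}^{+}$ is continuous and bounded, Furman's uniform subadditive ergodic theorem applied to $m\mapsto\ln\|(\tilde S_{z}^{+})^{m}(x)\|$ gives, for each $\eta>0$, some $N_{\eta}$ with $\ln\|(\tilde S_{z}^{+})^{m}(x)\|\le m(2\gamma+\tilde\gamma+\eta)$ for all $m\ge N_{\eta}$ and $x\in\T$. Taking $\eta=\epsilon$, combining with the grouping step, and enlarging $k_{1}$ so as to absorb the factor $4$ and the error $|m-(b-a)/2|\le1$, one obtains the displayed bound for all $b-a>k_{1}$, which is the lemma.

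I expect the only genuine point requiring care --- and the reason for using $\tilde S_{z}^{+}$ rather than $S_{z}^{+}$ --- is the \emph{singular} regime $\lambda_{2}\ge\frac{|1-t^{2}|}{1+t^{2}}$, which is permitted under the hypotheses of Theorem~\ref{lem10.2}: there $f$ vanishes on $\T$, so $S_{z}^{+}$ is neither continuous nor bounded and Furman's theorem cannot be invoked for it directly; passing to the de-normalized cocycle restores continuity, while the $\tilde\gamma$-contribution of the normalizing scalars reappears precisely as $\int_{\T}\ln|\det\tilde S_{z}^{+}|^{1/2}$. A minor related subtlety is that $\ln\|(\tilde S_{z}^{+})^{m}(\cdot)\|$ may equal $-\infty$ on the null set $\{f=0\}$, but since only an upper bound is needed this is harmless. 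Everything else is the routine ``a characteristic polynomial grows no faster than the associated transfer matrix, whose rate is the Lyapunov exponent'' estimate.
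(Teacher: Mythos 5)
Your argument is correct and reproduces the approach the paper only sketches in its one-line justification: bound the characteristic polynomials by transfer-matrix norms via \eqref{eq10.1}, then apply Furman's uniform subadditive ergodic theorem (using unique ergodicity of the rotation) to the two-step de-normalized Szeg\H{o} cocycle $\tilde S_z^+$, whose Lyapunov exponent is $L(z)+\tilde\gamma=2\gamma+\tilde\gamma$. Your observation that one must work with $\tilde S_z^+$ rather than $S_z^+$ in the singular regime $\lambda_2\ge\frac{|1-t^2|}{1+t^2}$ --- where $S_z^+$ is unbounded and Furman's theorem would not apply --- is exactly the right subtlety, and it is why $\tilde\gamma$ (rather than, say, $\int\ln\lambda_1\rho_{2n-1}$ applied termwise) appears in the final exponent.
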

% \begin{proof}
%    By the definition of Lyapunov exponents, 
%    $$\lim_{n\to\infty}\frac{1}{n}\int_{\mathbb{T}}{\ln{\left\|\prod_{j=n-1}^{0}{\tilde{S}_{z}^{+}(\theta+j\Phi)}\right\|}d\theta}=\lim_{n\to\infty}\frac{1}{n}\int_{\mathbb{T}}{\ln{\left\|\prod_{j=n-1}^{0}{\rho_{2j}\rho_{2j-1}(\theta)S_{z}^{+}(\theta+j\Phi)}\right\|}d\theta}=L(z)+\tilde{\gamma}.$$
%    Then by Furman's subadditive ergodic theorem, 
%    $$\limsup_{n\to\infty}\frac{1}{n}\ln{\left\|\prod_{j=n-1}^{0}{\tilde{S}_{z}^{+}(\theta+j\Phi)}\right\|}\le L(z)+\tilde{\gamma}$$
% for any $\theta$ uniformly.
% Hence for any $\epsilon>0$, there exists $k_{1}>0$, such that
% $$\left|P_{z,[a,b]}^{\beta_{1},\cdot}\right|,\left|P_{z,[a,b]}^{\cdot,\beta_{2}}\right|\le\sqrt{2}e^{(\gamma+\tilde{\gamma}/2+\epsilon/2)(b-a+1)}<e^{(\gamma+\tilde{\gamma}/2+\epsilon)(b-a+1)}$$
% whenever $b-a+1>k_{1}$.
% \end{proof}

We will also need the lower bound of the denominator \(P_{z,\Lambda}^{\beta_1,\beta_2}\) in suitable sense.
\begin{lemma}\label{lem10.9}
    For any $\epsilon>0$, $z\in\partial\mathbb{D}$, there exists $k_{2}(\epsilon,z)>0$ such that 
    $$\frac{1}{2n-2}\int_{\mathbb{T}}{\ln{\left|P_{z,[1,2n-2]}^{\beta_{1},\beta_{2}}(\theta)\right|}d\theta}\ge\gamma+\frac{\tilde{\gamma}}{2}-\epsilon$$
    whenever $2n-2>k_{2}$.
\end{lemma}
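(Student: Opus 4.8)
The plan is to bound the mean of $\ln|P_{z,[1,2n-2]}^{\beta_1,\beta_2}|$ from below by the logarithm of an extremal Fourier coefficient of this function, viewed as a trigonometric polynomial in $\theta$, and to evaluate that coefficient explicitly. By the transfer--matrix identity \eqref{eq10.1},
\[
P_{z,[1,2n-2]}^{\beta_1,\beta_2}(\theta)=\begin{pmatrix}z & -\overline{\beta_2}\end{pmatrix} T(\theta)\begin{pmatrix}1\\-\beta_1\end{pmatrix},\qquad
T(\theta):=\prod_{j=2n-3}^{1}\tilde S_z(j)(\theta),
\]
where $\tilde S_z(2m)=\left(\begin{smallmatrix}z & -\lambda_1'\\ -\lambda_1'z & 1\end{smallmatrix}\right)$ is constant and $\tilde S_z(2m-1)(\theta)=\left(\begin{smallmatrix}z & -\overline{g(\theta+m\Phi)}\\ -z\,g(\theta+m\Phi) & 1\end{smallmatrix}\right)$ has off-diagonal entries of degree one in $e^{2\pi i\theta}$ by \eqref{eq.FuncG}. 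Hence $P_{z,[1,2n-2]}^{\beta_1,\beta_2}(\cdot)$ is a trigonometric polynomial of degree at most $n-1$; writing it as $e^{-2\pi i(n-1)\theta}R(e^{2\pi i\theta})$ with $R$ a polynomial whose leading coefficient is the top Fourier coefficient $c$, Jensen's formula gives $\int_{\mathbb{T}}\ln|P_{z,[1,2n-2]}^{\beta_1,\beta_2}(\theta)|\,d\theta=\ln|c|+\sum_{R(w)=0,\ |w|>1}\ln|w|\ \ge\ \ln|c|$. It therefore suffices to show $\frac{1}{2n-2}\ln|c|\to\gamma+\frac{\tilde\gamma}{2}$; for orientation, the same limit is forced from above by the reverse of \eqref{eq10.1} — the boundary matrices $\left(\begin{smallmatrix}z & -\overline{\beta_2}\\ z & \overline{\beta_2}\end{smallmatrix}\right)$ and $\left(\begin{smallmatrix}1 & 1\\ -\beta_1 & \beta_1\end{smallmatrix}\right)$ have determinants of modulus $2$ and norms bounded uniformly in $\beta_i\in\partial\mathbb{D}$ — together with Furman's subadditive ergodic theorem, which yields $\frac{1}{2n-2}\int_{\mathbb{T}}\ln\|T(\theta)\|\,d\theta\to\gamma+\frac{\tilde\gamma}{2}$ (the $\prod|\rho_j|$ factor of $T$ contributing $\tilde\gamma/2$ via \eqref{eq.gammatilde} and Lemma \ref{lem5.3.1}, and the normalized product of the $\mathrm{SU}(1,1)$ matrices \eqref{eq:szego_normalized} contributing $\gamma=L(z)/2$ after pairing consecutive steps into the cocycle $(\Phi,S_z^+)$).

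For the computation of $c$, the $e^{2\pi i\theta}$-part of $\tilde S_z(2m-1)(\theta)$ is $\frac{\lambda_2}{k}\,e^{2\pi i(\theta+m\Phi)}N$ with the fixed matrix $N=\left(\begin{smallmatrix}0 & -1\\ zt^2 & 0\end{smallmatrix}\right)$, read off from $g(x)=\frac1k(-t^2\lambda_2e^{2\pi ix}+\lambda_2e^{-2\pi ix}+2t\lambda_2')$ and its conjugate. Selecting this part from each of the $n-1$ odd factors and the constant $\tilde S_{\mathrm e}:=\left(\begin{smallmatrix}z & -\lambda_1'\\ -\lambda_1'z & 1\end{smallmatrix}\right)$ from each of the $n-2$ even factors, the coefficient of $e^{2\pi i(n-1)\theta}$ in $T(\theta)$ is a unimodular constant times $\left(\frac{\lambda_2}{k}\right)^{n-1}N(\tilde S_{\mathrm e}N)^{n-2}$, so $c$ equals that constant times $\left(\frac{\lambda_2}{k}\right)^{n-1}\begin{pmatrix}z & -\overline{\beta_2}\end{pmatrix}N(\tilde S_{\mathrm e}N)^{n-2}\begin{pmatrix}1\\-\beta_1\end{pmatrix}$. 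Since $\operatorname{tr}(\tilde S_{\mathrm e}N)=z\lambda_1'(1-t^2)$ and $\det(\tilde S_{\mathrm e}N)=z^2t^2\lambda_1^2$, the eigenvalues are $\mu_\pm=\frac z2\big((1-t^2)\lambda_1'\pm\sqrt{(1+t^2)^2\lambda_1'^2-4t^2}\big)$; in the localization regime $\lambda_1<\frac{|1-t^2|}{1+t^2}$ the radicand is positive, so $|\mu_+|\neq|\mu_-|$ and $\max|\mu_\pm|=\frac12\big(|1-t^2|\lambda_1'+\sqrt{(1+t^2)^2\lambda_1'^2-4t^2}\big)$, giving $\|(\tilde S_{\mathrm e}N)^{n-2}\|\asymp(\max|\mu_\pm|)^{n-2}$. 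Provided the scalar $\begin{pmatrix}z & -\overline{\beta_2}\end{pmatrix}N(\tilde S_{\mathrm e}N)^{n-2}\begin{pmatrix}1\\-\beta_1\end{pmatrix}$ is comparable to this norm, then using $|k|=1+t^2$,
\[
\frac{1}{2n-2}\ln|c|\ \longrightarrow\ \frac12\ln\frac{\lambda_2}{1+t^2}+\frac12\ln\max|\mu_\pm|
=\frac12\ln\frac{\lambda_2\big(|1-t^2|\lambda_1'+\sqrt{(1+t^2)^2\lambda_1'^2-4t^2}\big)}{2(1+t^2)},
\]
and a short algebraic check — inserting $L(z)=2\gamma$ from Theorem \ref{thm6.4} (if $\lambda_2<\frac{|1-t^2|}{1+t^2}$) or Theorem \ref{thm6.11} (if $\lambda_2\ge\frac{|1-t^2|}{1+t^2}$) and the matching case of $\tilde\gamma$ from \eqref{eq.gammatilde} — shows this limit is exactly $\gamma+\frac{\tilde\gamma}{2}$ in both sub-regimes. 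This establishes the lemma whenever the comparability above holds.

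The remaining, and I expect the only genuinely delicate, point is uniformity over the boundary conditions $\beta_1,\beta_2$: the scalar above fails to be comparable to $(\max|\mu_\pm|)^{n-2}$ only when $\begin{pmatrix}1\\-\beta_1\end{pmatrix}$ is the contracting eigendirection of $\tilde S_{\mathrm e}N$ or $\begin{pmatrix}z & -\overline{\beta_2}\end{pmatrix}$ annihilates $N$ applied to its expanding eigenvector — for fixed $z$ this forces $\beta_1\in\{\pm1\}$, or a single value of $\beta_2$ that is attained only for finitely many $z\in\partial\mathbb{D}$. In those degenerate cases one repeats the argument with the bottom Fourier coefficient: the $e^{-2\pi i\theta}$-part of $\tilde S_z(2m-1)$ is $\frac{\lambda_2}{k}e^{-2\pi i(\theta+m\Phi)}N'$ with $N'=\left(\begin{smallmatrix}0 & t^2\\ -z & 0\end{smallmatrix}\right)$, and $\tilde S_{\mathrm e}N'$ has the same trace and determinant as $\tilde S_{\mathrm e}N$, hence the same spectral radius $\max|\mu_\pm|$; crucially $N^{-1}N'=-t^{-2}\operatorname{diag}(1,t^4)$ is a non-scalar diagonal matrix whose eigendirections are not eigendirections of $\tilde S_{\mathrm e}N$ (as $0<|t|<1$), so $\tilde S_{\mathrm e}N$ and $\tilde S_{\mathrm e}N'$ share no eigendirection, and for every $(\beta_1,\beta_2)$ at least one of the two extremal coefficients is non-degenerate, apart from finitely many exceptional $z$. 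Those last $z$ are dispatched by continuity of $z\mapsto\int_{\mathbb{T}}\ln|P_{z,[1,2n-2]}^{\beta_1,\beta_2}(\theta)|\,d\theta$, exactly as in the treatment of the analogous step for the almost Mathieu operator in \cite{Jitomirskaya1999} and for CMV matrices in \cite{Zhu2024}.
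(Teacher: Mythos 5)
Your argument is correct and closely parallel to the paper's, but it reaches the extremal object by a slightly different route. The paper substitutes $\omega=e^{2\pi i\theta}$, factors out $\omega^{-1}$ from each two‑step block $\tilde S_z^+(\theta)=\omega^{-1}\hat S(\omega)$, observes that $\omega\mapsto\ln\|VB_2\tilde S_z^{-1}(2n-2)\prod\hat S(\omega e^{2\pi ij\Phi})B_1V\|$ is subharmonic on $\mathbb{D}$, and bounds the circular mean from below by the value at $\omega=0$; this isolates the \emph{bottom} Fourier mode and reduces the computation to the eigenvalues $\lambda_\pm$ of $\hat S(0)$. You instead note that $P_{z,[1,2n-2]}^{\beta_1,\beta_2}$ is itself a trigonometric polynomial of degree $n-1$, apply Jensen's formula to isolate its \emph{top} Fourier coefficient, and compute that coefficient by selecting the $e^{2\pi i\theta}$‑part of each odd Szeg\H{o} factor, obtaining the eigenvalues $\mu_\pm$ of $\tilde S_eN$ — which coincide with the paper's $\lambda_\pm$ up to the factored‑out prefactor $\lambda_2/k$. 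Both routes are essentially Jensen in disguise (the subharmonic mean‑value inequality is Jensen's inequality), and both hinge on the same algebra. Where your write‑up genuinely adds value is in flagging the boundary‑condition degeneracy: the paper asserts without comment that $\frac{1}{2n-2}\ln\|VB_2\prod\hat S(0)B_1V\|\to\frac12\ln\max|\lambda_\pm|$, which silently presupposes that $\begin{pmatrix}1\\-\beta_1\end{pmatrix}$ is not aligned with the contracting eigendirection and that the left boundary vector does not annihilate the image of the expanding one; your two‑coefficient trick using $N'$ (together with the observation that $\tilde S_eN$ and $\tilde S_eN'$ share trace and determinant but not eigendirections for $t\neq 0$, since $N^{-1}N'=-t^{-2}\operatorname{diag}(1,t^4)$ is non‑scalar) is a reasonable way to patch this. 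The one point you should tighten is the residual case where condition (a) fails for $N$ \emph{and} condition (b) fails for $N'$ simultaneously — you dispatch it by continuity in $z$, which is plausible but should be stated more carefully, since Lemma~\ref{lem10.9} is asserted for all $z\in\partial\mathbb{D}$ and all $\beta_1,\beta_2\in\partial\mathbb{D}$. In the actual application (Lemma~\ref{lem10.13} via Lemma~\ref{lem10.10}) one has $\beta_1=\overline{\beta_2}$, which collapses the two constraints and eliminates the worst case, so the gap is harmless in context, but it is worth noting.
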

\begin{proof} Let $\omega=e^{2\pi i\theta}$, then 
 \begin{align*}
      \tilde{S}_{z}^{+}(\theta)&=
    \begin{pmatrix}
   z^{2}-\lambda_{1}'z\frac{1}{k}(t^{2}\lambda_{2}e^{2\pi i\theta}-\lambda_{2}e^{-2\pi i\theta}-2t\lambda_{2}') & \frac{1}{k}(t^{2}\lambda_{2}e^{-2\pi i\theta}-\lambda_{2}e^{2\pi i\theta}-2t\lambda_{2}')z-\lambda_{1}' \\
   -\lambda_{1}'z^{2}+\frac{1}{k}(t^{2}\lambda_{2}e^{2\pi i\theta}-\lambda_{2}e^{-2\pi i\theta}-2t\lambda_{2}')z & -\lambda_{1}'z\frac{1}{k}(t^{2}\lambda_{2}e^{-2\pi i\theta}-\lambda_{2}e^{2\pi i\theta}-2t\lambda_{2}')+1
\end{pmatrix}\\
&=\omega^{-1}
\begin{pmatrix}
   z^{2}\omega-\lambda_{1}'z\frac{1}{k}(t^{2}\lambda_{2}\omega^{2}-\lambda_{2}-2t\lambda_{2}'\omega) & \frac{1}{k}(t^{2}\lambda_{2}-\lambda_{2}\omega^{2}-2t\lambda_{2}'\omega)z-\lambda_{1}'\omega \\
   -\lambda_{1}'z^{2}\omega+\frac{1}{k}(t^{2}\lambda_{2}\omega^{2}-\lambda_{2}-2t\lambda_{2}'\omega)z & -\lambda_{1}'z\frac{1}{k}(t^{2}\lambda_{2}-\lambda_{2}\omega^{2}-2t\lambda_{2}'\omega)+\omega
\end{pmatrix}\\
&:=\omega^{-1}\hat{S}(\omega),
 \end{align*}

Let
$$V=
\begin{pmatrix}
    1 & 0\\
    0 & 0
\end{pmatrix},
B_{1}=
\begin{pmatrix}
    1 & 1\\
    -\beta_{1} & \beta_{1}
\end{pmatrix},
B_{2}=
\begin{pmatrix}
    z & -\overline{\beta_{2}}\\
    z & \overline{\beta_{2}}
\end{pmatrix},$$
then
\begin{align*}
    \int_{\mathbb{T}}{\ln{\left|P_{z,[1,2n-2]}^{\beta_{1},\beta_{2}}(\theta)\right|}d\theta}&=\int_{\mathbb{T}}{\ln{\left\|VB_{2}\tilde{S}_{z}^{-1}(2n-2)\prod_{j=n-1}^{1}\tilde{S}_{z}^{+}(\theta+j\Phi)B_{1}V\right\|}d\theta}\\
&=\int_{\partial\mathbb{D}}{\ln{\left\|VB_{2}\tilde{S}_{z}^{-1}(2n-2)\prod_{j=n-1}^{1}\omega^{-1}\hat{S}(\omega e^{2\pi ij\Phi})B_{1}V\right\|}d\omega}\\
&\ge\ln{\left\|VB_{2}\tilde{S}_{z}^{-1}(2n-2)\prod_{j=n-1}^{1}\hat{S}(0)B_{1}V\right\|},
\end{align*}
where the last inequality follows from the subharmonicity.

We can calculate the eigenvalues of
\(\hat{S}(0)=
\begin{pmatrix}
   \lambda_{1}'z\frac{1}{k}\lambda_{2} & \frac{1}{k}t^{2}\lambda_{2}z \\
   -\frac{1}{k}\lambda_{2}z & -\lambda_{1}'z\frac{1}{k}t^{2}\lambda_{2}
\end{pmatrix},\) as 
$$\lambda_{\pm}=-z \frac{\lambda_{2}}{2(1+t^{2})}((1-t^{2})\lambda_{1}'\pm\sqrt{(1+t^{2})^{2}\lambda_{1}'^{2}-4t^{2}}).$$
Since $\tilde{S}_{z}(2n-2)^{-1}$ is a constant matrix, we have
$$\lim_{n\to\infty}\frac{1}{2n-2}\ln{\left\|VB_{2}\prod_{j=n-1}^{1}\hat{S}(0)B_{1}V\right\|}=\frac{1}{2}\ln{\max\{\left|\lambda_{\pm}\right|\}}=\gamma+\frac{\tilde{\gamma}}{2}.$$
Hence for any $\epsilon>0$, $z\in\partial\mathbb{D}$, there exists $k_{2}(\epsilon,z)>0$ such that 
    $$\frac{1}{2n-2}\int_{\mathbb{T}}{\ln{\left|P_{z,[1,2n-2]}^{\beta_{1},\beta_{2}}(\theta)\right|}d\theta}\ge\gamma+\frac{\tilde{\gamma}}{2}-\epsilon$$
whenever $2n-2>k_{2}$.
\end{proof}

In Jitomirskaya's scheme \cite{Jitomirskaya1999}, the key insight is that \( P_{z,[1,4n-2]}^{\beta_{1},\beta_{2}}(\theta) \) is a polynomial in \( \cos(\theta) \) (Lemma \ref{lem10.10}). To prove this, the critical observation is the reflection symmetry of the ECMV operator \( \tilde{\mathcal{E}} \), which we now explore.

Consider the quasiperiodic ECMV matrix $\tilde{\mathcal{E}}$ associated to the  coefficients \(\{(\alpha_{n},\left|\rho_{n}\right|)\}_{n\in\bbZ}\). Denote entries of $\tilde{\mathcal{E}}$ by $\tilde{\mathcal{E}}_{ij},i,j\in\mathbb{Z}$. Then we can define the {\it alternating reflection} of $\tilde{\mathcal{E}}$ with respect to the center $-\frac{1}{2}$ by $(\tilde{\mathcal{E}}^{\mathcal{R}})_{ij}:=((-1)^{i+j}\tilde{\mathcal{E}}_{-1-i,-1-j})$. Then for the finite cut-off CMV matrix $\tilde{\mathcal{E}}|_{[-n,n-1]}$, we have
\begin{lemma}\label{lem3.7}
    For any $n\in\mathbb{N}^{*},z\in\mathbb{C}$,
    $$\det{(zI-\tilde{\mathcal{E}}|_{[-n,n-1]})}=\det{(zI-\tilde{\mathcal{E}}^{\mathcal{R}}|_{[-n,n-1]})}.$$
\end{lemma}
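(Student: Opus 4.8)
The plan is to realize both finite matrices $\tilde{\mathcal{E}}|_{[-n,n-1]}$ and $\tilde{\mathcal{E}}^{\mathcal{R}}|_{[-n,n-1]}$ as unitary conjugates of one another by a single explicit signed permutation matrix, so that the equality of characteristic polynomials is nothing but conjugation-invariance of $\det$. This is the CMV counterpart of the elementary fact that a finite Jacobi matrix on a window symmetric about its midpoint is orthogonally conjugate to its spatial flip; the alternating sign $(-1)^{i+j}$ in the definition of $\tilde{\mathcal{E}}^{\mathcal{R}}$ is precisely what is needed to accommodate the zig-zag structure of CMV matrices.

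Concretely, write $\Lambda=[-n,n-1]\cap\bbZ$ and let $P=P_\Lambda$ be the $2n\times 2n$ matrix indexed by $\Lambda$ with entries $P_{ij}=(-1)^{j}\,\delta_{i,\,-1-j}$, equivalently $P\delta_j=(-1)^{j}\delta_{-1-j}$. The first step is to observe that the map $j\mapsto -1-j$ is an involution of $\Lambda$ onto itself (it interchanges $-n\leftrightarrow n-1$, $-n+1\leftrightarrow n-2$, and so on), so $P$ is a signed permutation matrix; in particular $P$ is real orthogonal, $P^{*}=P^{\top}=P^{-1}$ and $\det P=\pm1$.

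The second step is a direct entrywise computation. For $i,j\in\Lambda$ the sum defining $(P\,\tilde{\mathcal{E}}|_{\Lambda}\,P^{*})_{ij}$ collapses to the single term $k=-1-i$, $l=-1-j$, both of which lie in $\Lambda$, and one gets
\[
(P\,\tilde{\mathcal{E}}|_{\Lambda}\,P^{*})_{ij}=(-1)^{-1-i}\,(-1)^{-1-j}\,\tilde{\mathcal{E}}_{-1-i,\,-1-j}=(-1)^{i+j}\,\tilde{\mathcal{E}}_{-1-i,\,-1-j}=(\tilde{\mathcal{E}}^{\mathcal{R}})_{ij}.
\]
Two points deserve care here: that $(\tilde{\mathcal{E}}|_{\Lambda})_{kl}=\tilde{\mathcal{E}}_{kl}$ for $k,l\in\Lambda$ (so the cut-off entries agree with the full-matrix entries appearing on the right), and—crucially—that $-1-i$ and $-1-j$ remain in $\Lambda$ for all $i,j\in\Lambda$, which is exactly the compatibility of the reflection center $-\tfrac12$ with the symmetric window $[-n,n-1]$. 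Combining with the definition of the alternating reflection, this yields $P\,\tilde{\mathcal{E}}|_{[-n,n-1]}\,P^{*}=\tilde{\mathcal{E}}^{\mathcal{R}}|_{[-n,n-1]}$.

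The final step is immediate: since $P$ is unitary,
\[
\det\bigl(zI-\tilde{\mathcal{E}}^{\mathcal{R}}|_{[-n,n-1]}\bigr)=\det\bigl(P\,(zI-\tilde{\mathcal{E}}|_{[-n,n-1]})\,P^{*}\bigr)=\det\bigl(zI-\tilde{\mathcal{E}}|_{[-n,n-1]}\bigr)
\]
for every $z\in\mathbb{C}$, which is the assertion. The proof involves no analysis and no estimates; the only "obstacle" is the sign and index bookkeeping in the entrywise identity above (tracking the parity of $(-1)^{i+j}$ and checking that the reflected indices stay inside the window), both of which are routine. One may additionally remark, for context, that the same alternating sign is what makes $\tilde{\mathcal{E}}^{\mathcal{R}}$ itself an extended CMV matrix—a plain spatial reflection would break the CMV zig-zag pattern—so that $P$ intertwines two genuine ECMV matrices, but this observation is not needed for the present lemma.
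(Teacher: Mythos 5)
Your proof is correct and is essentially the paper's proof: the paper conjugates by the product $PQ$ of a diagonal sign matrix $P$ and the index-reversal permutation $Q$, whereas you package these into the single signed permutation $P\delta_j=(-1)^j\delta_{-1-j}$, but the underlying mechanism (similarity by a signed permutation compatible with the center $-\tfrac12$ and the window $[-n,n-1]$, then invariance of the characteristic polynomial) is identical. Your entrywise check of the sign $(-1)^{i+j}$ and the closure of $j\mapsto-1-j$ on the window are exactly the bookkeeping the paper's factorization is designed to encode.
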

\begin{proof}
    We note that $$\tilde{\mathcal{E}}^{\mathcal{R}}|_{[-n,n-1]}=PQ
\tilde{\mathcal{E}}|_{[-n,n-1]}QP=PQ
\tilde{\mathcal{E}}|_{[-n,n-1]}Q^{-1}P^{-1},$$
where 
$$P=\begin{pmatrix}
    1 & & & & \\
       & -1 & & & \\
       & & 1 & & \\
    & & & \ddots & \\
    & & & & -1
\end{pmatrix},
Q=\begin{pmatrix}
    & & & 1 \\
    & & 1 & \\
    & \begin{sideways}$\ddots$\end{sideways} & & \\
    1 & & & 
\end{pmatrix}.$$
So $\tilde{\mathcal{E}}^{\mathcal{R}}|_{[-n,n-1]}$ is similar to $\tilde{\mathcal{E}}|_{[-n,n-1]}$, hence 
$$\det{(zI-\tilde{\mathcal{E}}|_{[-n,n-1]})}=\det{(zI-\tilde{\mathcal{E}}^{\mathcal{R}}|_{[-n,n-1]})}.$$
\end{proof}

The following result is a direct consequence of such symmetry.

\begin{lemma}\label{lem10.10}
    $P_{z,[1,4n-2]}^{\beta_{1},\beta_{2}}(\theta)$ is a polynomial of $\cos{2\pi(\theta+n\Phi)}$ of degree at most $2n-1$.
\end{lemma}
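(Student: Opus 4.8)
The plan is to prove two properties of $P:=P_{z,[1,4n-2]}^{\beta_{1},\beta_{2}}(\theta)$, viewed as a function of $\theta$ (with $z$ and $\beta_{1},\beta_{2}$ fixed): \textbf{(i)} $P$ is a trigonometric polynomial in the \emph{center phase} $y:=\theta+n\Phi$ of degree at most $2n-1$; and \textbf{(ii)} $P$ is even in $y$, i.e.\ $P(\theta)=P(-\theta-2n\Phi)$. Granting (i) and (ii), we may write $P=\sum_{j=0}^{2n-1}c_{j}\cos(2\pi jy)$, and since $\cos(2\pi jy)=T_{j}(\cos 2\pi y)$ with $T_{j}$ the Chebyshev polynomial of degree $j$, $P$ is a polynomial of degree at most $2n-1$ in $\cos 2\pi(\theta+n\Phi)$, which is the assertion.

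For (i) I would read off the transfer matrix identity \eqref{eq10.1} with $[a,b]=[1,4n-2]$: there $P$ is the $(1,1)$ entry of the product of the two $\theta$-independent boundary matrices with $\prod_{j=4n-3}^{1}\tilde{S}_{z}(j)$. Of the $4n-3$ Szeg\H{o} factors, the $2n-2$ even-index ones $\tilde{S}_{z}(2m)$ are constant in $\theta$ (since $\alpha_{2m}=\lambda_{1}'$), while each odd-index factor $\tilde{S}_{z}(2m-1)$, $m=1,\dots,2n-1$, has entries $z,\,1,\,-g(\theta+m\Phi)z,\,-\overline{g(\theta+m\Phi)}$, which are trigonometric polynomials of degree at most $1$ in $\theta+m\Phi=y+(m-n)\Phi$ — indeed $g$ in \eqref{eq.FuncG} and $\overline{g(x)}=g(-x)$ (for $x\in\mathbb{T}$) are degree-$1$ trigonometric polynomials. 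Degrees add under matrix multiplication, so $P$ is a trigonometric polynomial in $y$ of degree at most $(2n-1)\cdot 1=2n-1$.

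For (ii) I would invoke a reflection symmetry in the spirit of Lemma~\ref{lem3.7}: conjugating the finite cut-off $\tilde{\mathcal{E}}_{[1,4n-2]}^{\beta_{1},\beta_{2}}$ by the reversal permutation of $\{1,\dots,4n-2\}$ together with the alternating-sign diagonal $\operatorname{diag}((-1)^{j})$ — both unitary — relates its characteristic polynomial to that of the alternating reflection of $\tilde{\mathcal{E}}$ restricted to the reflected interval with reflected boundary data. The reflected extended CMV matrix has Verblunsky coefficients $\pm\overline{\alpha_{4n-2-j}}$ and moduli $|\rho_{4n-2-j}|$; setting $\theta'':=-\theta-2n\Phi$ and using $\overline{g(x)}=g(-x)$ together with $f(-x)=f(x)$ (from \eqref{eq.FuncF}), the phase-carrying data becomes $\pm g(\theta''+m\Phi)$ and $|f(\theta''+m\Phi)|$, while the constant data $\lambda_{1}',\lambda_{1}$ is unchanged. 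Hence, up to the harmless alternating signs and a gauge transformation (Lemma~\ref{lem.gaugeTransform}), the reflected cut-off is a cut-off of $\tilde{\mathcal{E}}(\theta'')$; since $\theta''+n\Phi=-(\theta+n\Phi)=-y$, for boundary conditions compatible with the reflection — in particular the Dirichlet-type ones appearing in \eqref{eq.GreenRep}, for which the reflected boundary data coincides with the original — this gives $P(\theta)=P(-\theta-2n\Phi)$.

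The step I expect to be the main obstacle is the bookkeeping in (ii). Since $\{1,\dots,4n-2\}$ has even cardinality, its reversal interchanges the even and odd index classes, so \emph{a priori} the reflected operator is an extended CMV matrix with the roles of $\mathcal{L}$ and $\mathcal{M}$ exchanged; one must verify — via a parity-restoring conjugation, or by a direct comparison of characteristic polynomials on $\{1,\dots,4n-2\}$ — that it genuinely agrees, up to gauge, with a standard cut-off of $\tilde{\mathcal{E}}(\theta'')$. Simultaneously one must pin down the exact signs in the reflected Verblunsky coefficients, reconcile the complex conjugation $\overline{g(x)}=g(-x)$ with the fixed value $z\in\partial\mathbb{D}$ (using $\bar z=z^{-1}$ and unitarity), and track the image of $(\beta_{1},\beta_{2})$ under the reflection, so that the evenness is asserted precisely for the boundary conditions used below. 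By contrast, the degree bound in (i) and the concluding passage to Chebyshev polynomials are routine.
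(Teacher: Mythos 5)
Your approach is essentially the paper's: bound the degree using the transfer-matrix product~\eqref{eq10.1} (only the $2n-1$ odd-index Szeg\H{o} factors carry $\theta$-dependence), show evenness via the alternating-reflection symmetry (Lemma~\ref{lem3.7}), and then convert the even trigonometric polynomial into a polynomial in $\cos2\pi(\theta+n\Phi)$. The ``bookkeeping'' you flag in (ii) is exactly what the paper's formulation dispatches cleanly, and you should note how. First, rather than reflecting $[1,4n-2]$ in place, the paper shifts $\theta\mapsto\theta-n\Phi$ so that $P_{z,[1,4n-2]}^{\beta_{1},\beta_{2}}(\theta-n\Phi)=P_{z,[1-2n,2n-2]}^{\beta_{1},\beta_{2}}(\theta)$; the block is then centered at $-\tfrac12$, the precise center about which Lemma~\ref{lem3.7} is stated. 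Second, the coefficient bookkeeping is packaged once and for all into the operator-level identity $\tilde{\mathcal{E}}(-\theta)=\tilde{\mathcal{E}}^{\mathcal{R}}(\theta)$, which follows from $\alpha_{-n-1}(\theta)=\overline{\alpha_{n-1}(-\theta)}$ and $|\rho_{-n-1}(\theta)|=|\rho_{n-1}(-\theta)|$ (i.e.\ your $\overline{g(x)}=g(-x)$, $f(-x)=f(x)$, together with the evenness of $\lambda_1',\lambda_1$ in $\theta$); since Lemma~\ref{lem3.7} is a similarity by the explicit unitary $PQ$, characteristic polynomials match for \emph{every} $z\in\mathbb{C}$ and no complex conjugation of $z$ ever arises, so that particular worry of yours is a non-issue. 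The even/odd interchange you flag is likewise automatic: the alternating sign $(-1)^{i+j}$ is precisely what makes $\tilde{\mathcal{E}}^{\mathcal{R}}$ again an extended CMV matrix (namely $\tilde{\mathcal{E}}(-\theta)$), so no parity-restoring conjugation is needed. The one place you should be more precise is the boundary data: the paper requires $\beta_1=\overline{\beta_2}$ for the evenness (under the reflection the two boundary Verblunsky parameters are swapped and conjugated), and this should be stated rather than left as an unspecified ``compatibility.''
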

\begin{proof}
    First, we can directly calculate that $P_{z,[1,4n-2]}^{\beta_{1},\beta_{2}}(\theta)$ is a polynomial of $\sin{2\pi\theta},\cos{2\pi\theta}$ of degree at most $2n-1$ by (\ref{eq10.1}). Notice that
$$\alpha_{-n-1}(\theta)=\overline{\alpha_{n-1}(-\theta)},\quad \left|\rho_{-n-1}(\theta)\right|=\left|\rho_{n-1}(-\theta)\right|\in\mathbb{R}_{+}$$
    for every $n$ and then $\tilde{\mathcal{E}}(-\theta)=\tilde{\mathcal{E}}^{\mathcal{R}}(\theta)$.
    Hence 
    $$\det{(zI-\tilde{\mathcal{E}}(-\theta)|_{[-n,n-1]})}=\det{(zI-\tilde{\mathcal{E}}^{\mathcal{R}}(\theta)|_{[-n,n-1]})}=\det{(zI-\tilde{\mathcal{E}}(\theta)|_{[-n,n-1]})}$$
    by Lemma \ref{lem3.7}, which implies that 
 $\det{(zI-\tilde{\mathcal{E}}_{[1-2n,2n-2]}^{\beta_{1},\beta_{2}}(\theta))}$ is an even function if $\beta_{1}=\overline{\beta_{2}}$. Hence $P_{z,[1,4n-2]}^{\beta_{1},\beta_{2}}(\theta-n\Phi)=P_{z,[1-2n,2n-2]}^{\beta_{1},\beta_{2}}(\theta)$ is also an even function of $\theta$ if $\beta_{1}=\overline{\beta_{2}}$. So $P_{z,[1,4n-2]}^{\beta_{1},\beta_{2}}(\theta-n\Phi)$ is a polynomial of $\cos{2\pi\theta}$ of degree at most $2n-1$. Hence $P_{z,[1,4n-2]}^{\beta_{1},\beta_{2}}(\theta)$ is a polynomial of $\cos{2\pi(\theta+n\Phi)}$ of degree at most $2n-1$.
\end{proof}
By Lemma \ref{lem10.10}, there exists a polynomial $Q_{n}$ of degree $2n-1$ such that $P_{z,[1,4n-2]}^{\beta_{1},\beta_{2}}(\theta)=Q_{n}(\cos{2\pi(\theta+n\Phi)})$. For any $n\in\mathbb{N}^{*}$ and $r>0$, define
$$A_{n}^{r}:=\{\theta\in\mathbb{T}:\left|Q_{n}(\cos{2\pi\theta})\right|\le e^{(2n-1)r}\}.$$

Then we have
\begin{lemma}\label{lem10.11}
    If $y$ is $(\gamma-\epsilon,4n-2)$-singular for some $n$ and $\epsilon>0$. Then for any $j\in\mathbb{Z}$ satisfying
    $$y-(2n-1)+n\le j\le y+(n-1)+n,$$
we have $\theta+j\Phi\in A_{n}^{2\gamma+\tilde{\gamma}-\frac{1}{7}\epsilon}$ whenever $4n-2>k_{3}(\gamma,\epsilon)$  is large enough.
\end{lemma}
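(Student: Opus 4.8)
The plan is to turn the hypothesis that $y$ is $(\gamma-\epsilon,4n-2)$-singular into an upper bound for the characteristic polynomials $P_{z,\Lambda}^{\beta_1,\beta_2}(\theta)$ on admissible windows $\Lambda$ of length $4n-2$, and then to transport that bound to $Q_n$ using the polynomial identity of Lemma~\ref{lem10.10} together with the translation covariance of $\tilde{\mathcal{E}}$ under $\theta\mapsto\theta+\Phi$. First I would unwind the definition of singularity: for \emph{every} interval $\Lambda=[n_1,n_2]$ with $n_2-n_1+1=4n-2$, $y\in\Lambda$, and $|y-n_i|\ge\frac{4n-2}{7}$ for $i=1,2$, some endpoint $n_i$ satisfies $|G_{z,\Lambda}^{\beta_1,\beta_2}(y,n_i)|\ge e^{-(\gamma-\epsilon)|y-n_i|}$. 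Since $n_i$ is an endpoint of $\Lambda$, one of the two characteristic-polynomial factors in the Cramer-rule representation~\eqref{eq.GreenRep} is the empty product and equals $1$, so one obtains either
\[
|P_{z,\Lambda}^{\beta_1,\beta_2}(\theta)|\le e^{(\gamma-\epsilon)|y-n_1|}\Bigl|\prod_{j=n_1}^{y-1}\rho_j\Bigr|\,\bigl|P_{z,[y+1,n_2]}^{\cdot,\beta_2}(\theta)\bigr|
\]
or the mirror estimate with $n_1$ and $n_2$ interchanged and $[y+1,n_2]$ replaced by $[n_1,y-1]$.

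Next I would insert the a priori bounds. The $\rho$-product runs over $|y-n_i|\ge\frac{4n-2}{7}$ consecutive sites, so Lemma~\ref{lem10.5} (applied to the relevant translate, with the same constant $\tilde\gamma$ by unique ergodicity) bounds it by $e^{(\tilde\gamma/2+\eta)|y-n_i|}$; the complementary half-interval also has length $\ge\frac{4n-2}{7}$, since $y$ stays $\frac{4n-2}{7}$-far from both endpoints, so Lemma~\ref{lem10.8} bounds its polynomial by $e^{(\gamma+\tilde\gamma/2+\eta)(4n-2-|y-n_i|)}$. Both are valid once $4n-2$ exceeds a threshold depending only on $\eta$. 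Multiplying the three factors, the exponents with coefficient $\gamma+\tilde\gamma/2+\eta$ telescope across the whole window and leave a surplus $-\epsilon|y-n_i|\le-\tfrac{\epsilon}{7}(4n-2)$; taking $\eta<\tfrac{\epsilon}{14}$ and enlarging $k_3=k_3(\gamma,\epsilon)$ so as to absorb the $O(1)$ discrepancies (window length $4n-2$ versus $4n-3$, and the possibility that $\gamma+\tilde\gamma/2<0$), I would arrive at
\[
|P_{z,\Lambda}^{\beta_1,\beta_2}(\theta)|\le e^{(4n-2)(\gamma+\tilde\gamma/2-\epsilon/14)}=e^{(2n-1)(2\gamma+\tilde\gamma-\epsilon/7)}
\]
for every admissible $\Lambda$ and every $4n-2>k_3$.

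To finish, for each $j$ in the range $y-(2n-1)+n\le j\le y+(n-1)+n$ I would apply this with $\Lambda=\Lambda_j$, the translate of $[1,4n-2]$ by $(j-n)\Phi$ (explicitly $\Lambda_j=[2(j-n)+1,\,4n-2+2(j-n)]$): the stated bounds on $j$ are arranged exactly so that $\Lambda_j$ has length $4n-2$, contains $y$, and keeps $y$ at distance $\ge\frac{4n-2}{7}$ from each endpoint, i.e.\ so that $\Lambda_j$ is admissible. Because $\tilde{\mathcal{E}}(\theta+s\Phi)$ is $\tilde{\mathcal{E}}(\theta)$ shifted by $2s$ CMV sites, one has $P_{z,\Lambda_j}^{\beta_1,\beta_2}(\theta)=P_{z,[1,4n-2]}^{\beta_1,\beta_2}(\theta+(j-n)\Phi)$, and Lemma~\ref{lem10.10} identifies the right-hand side with $Q_n(\cos 2\pi(\theta+j\Phi))$. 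Hence $|Q_n(\cos 2\pi(\theta+j\Phi))|\le e^{(2n-1)(2\gamma+\tilde\gamma-\epsilon/7)}$, which is exactly $\theta+j\Phi\in A_n^{2\gamma+\tilde\gamma-\epsilon/7}$.

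I expect the main obstacle to be the middle step: one must carry the two symmetric endpoint cases in parallel and track the constants carefully, since $\tilde\gamma$ (hence $\gamma+\tilde\gamma/2$) may be negative, so both the telescoping of exponents and the absorption of the $O(1)$ errors into $\eta$ and $k_3$ only go through because Lemmas~\ref{lem10.5} and~\ref{lem10.8} are each applied to a half-window of guaranteed length $\ge\frac{4n-2}{7}$—this is precisely where the $\tfrac{1}{7}$-margin in the definition of regularity is consumed. The other point needing care is the bookkeeping that matches the admissibility of $\Lambda_j$ with the stated interval of $j$, but that is routine once the CMV-site/phase-index correspondence of $\tilde{\mathcal{E}}$ is fixed.
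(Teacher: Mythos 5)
Your proposal takes essentially the same route as the paper: the definition of singularity, the Cramér representation \eqref{eq.GreenRep} with one trivial boundary factor, the bounds from Lemmas~\ref{lem10.5} and \ref{lem10.8}, and the polynomial identity from Lemma~\ref{lem10.10}. The only structural difference is organizational --- the paper runs the estimate backwards as a proof by contradiction (assume $\theta+j\Phi\notin A_n$, deduce $|Q_n|$ is large, hence $|P_{z,\Lambda}|$ is large, hence $|G|$ is small, contradicting singularity), while you derive the upper bound on $|P_{z,\Lambda}|$ directly and then read off membership in $A_n$. These are logically equivalent, and your exponent bookkeeping (the telescoping of $\gamma+\tilde\gamma/2$, the leftover $-\epsilon|y-n_i|\le-\tfrac{\epsilon}{7}(4n-2)$, and absorption of errors into $k_3$ by taking $\eta<\tfrac{\epsilon}{14}$) reproduces the paper's computation.

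One claim you make, however, does not hold up, and it is worth isolating. You assert that the stated bounds $y-n+1\le j\le y+2n-1$ are ``arranged exactly so that $\Lambda_j=[2(j-n)+1,\,2(j-n)+4n-2]$ contains $y$ at distance $\ge\tfrac{4n-2}{7}$ from each endpoint.'' They are not: the identity $P_{z,\Lambda_j}^{\beta_1,\beta_2}(\theta)=P_{z,[1,4n-2]}^{\beta_1,\beta_2}(\theta+(j-n)\Phi)$ is correct (two CMV sites per phase step), but then admissibility of $\Lambda_j$ forces $j$ to lie in a window of width roughly $\tfrac{10n}{7}$ centered at $\tfrac{y}{2}$, not the stated interval centered near $y+\tfrac{n}{2}$ --- e.g.\ for $j=y+2n-1$ one gets $\Lambda_j=[2y+2n-1,2y+6n-4]$, which misses $y$ entirely once $|y|$ is large. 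The paper's own proof is equally silent here: it uses the lower bound on $\bigl|P_{z,[n_1,n_2]}^{\beta_1,\beta_2}(\theta+(j-n-\tfrac{n_1-1}{2})\Phi)\bigr|$ inside the Green's function estimate, which only makes sense if the phase offset vanishes, i.e.\ $n_1=2(j-n)+1$, and then never checks that this $[n_1,n_2]$ is admissible. So what you have is an inherited, not a new, imprecision --- but you should not claim this bookkeeping is ``routine'' and closes cleanly; reconciling the $j$-range with the phase-to-CMV-index dilation (and its downstream use against $I_2$ in the proof of Lemma~\ref{lem10.7}) is the one genuinely unresolved detail in this step.
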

\begin{proof}
    Since $y$ is $(\gamma-\epsilon,4n-2)$-singular, for any $[n_{1},n_{2}]$ containing $y$ with $n_{2}-n_{1}+1=4n-2$ and $\left|y-n_{i}\right|\ge\frac{4n-2}{7}$, we can assume
    $$\left|G_{z,[n_{1},n_{2}]}^{\beta_{1},\beta_{2}}(y,n_{1})\right|\ge e^{-(\gamma-\epsilon)\left|y-n_{1}\right|}.$$
By (\ref{eq.GreenRep}), 
    $$\left|G_{z,[n_{1},n_{2}]}^{\beta_{1},\beta_{2}}(y,n_{1})\right|=\left|\rho_{[n_{1},y-1]}\frac{P_{z,[y+1,n_{2}]}^{\cdot,\beta_{2}}}{P_{z,[n_{1},n_{2}]}^{\beta_{1},\beta_{2}}}\right|.$$
By Lemma \ref{lem10.8}, for any $\epsilon'>0$, we have $\left|P_{z,[y+1,n_{2}]}^{\cdot,\beta_{2}}\right|<e^{(\gamma+\frac{\tilde{\gamma}}{2}+\epsilon')(n_{2}-y)}$ when $n>k_{1}(\epsilon')$ large enough. By Lemma \ref{lem10.5}, for any $\eta>0$, we have $\left|\rho_{[n_{1},y-1]}\right|\le e^{(y-n_{1})(\tilde{\gamma}/2+\eta)}$ when $n>N(\eta)$ large enough.

    Assume there exists $y-(2n-1)+n\le j\le y+(n-1)+n$ such that $\theta+j\Phi\notin A_{n}^{2\gamma+\tilde{\gamma}-\frac{1}{7}\epsilon}$ for infinitely many $n$. Then $\left|Q_{n}(\cos{2\pi(\theta+j\Phi)})\right|>e^{(2n-1)(2\gamma+\tilde{\gamma}-\frac{1}{7}\epsilon)}$. Hence
$$\left|P_{z,[n_{1},n_{2}]}^{\beta_{1},\beta_{2}}(\theta+(j-n-\frac{n_{1}-1}{2})\Phi)\right|=\left|P_{z,[1,4n-2]}^{\beta_{1},\beta_{2}}(\theta+(j-n)\Phi)\right|>e^{(2n-1)(2\gamma+\tilde{\gamma}-\frac{1}{7}\epsilon)}.$$

    Now put these estimate together, we have
\begin{align*}
    \left|G_{z,[n_{1},n_{2}]}^{\beta_{1},\beta_{2}}(y,n_{1})\right|e^{(\gamma-\epsilon)\left|y-n_{1}\right|}
    &<e^{(y-n_{1})(\tilde{\gamma}/2+\eta)}e^{(\gamma+\frac{\tilde{\gamma}}{2}+\epsilon')(n_{2}-y)}e^{-(2n-1)(2\gamma+\tilde{\gamma}-\frac{1}{7}\epsilon)}e^{(\gamma-\epsilon)\left|y-n_{1}\right|}\\
    &=e^{(y-n_{1})(\gamma+\tilde{\gamma}/2+\eta-\epsilon)}e^{(\gamma+\tilde{\gamma}/2+\epsilon')(n_{2}-y)}e^{-(2n-1)(2\gamma+\tilde{\gamma}-\frac{1}{7}\epsilon)}\\
    &=e^{-(\gamma+\tilde{\gamma}/2)}e^{(y-n_{1})(\eta-\epsilon)}e^{\epsilon'(n_{2}-y)}e^{\frac{1}{7}(2n-1)\epsilon}\\
    &\le e^{-(\gamma+\tilde{\gamma}/2)}e^{\frac{1}{7}(4n-2)(\eta-\epsilon)}e^{\frac{6}{7}\epsilon'(4n-2)}e^{\frac{1}{7}(2n-1)\epsilon}\\
    &=e^{-(\gamma+\tilde{\gamma}/2)}e^{(2n-1)[\frac{2}{7}(\eta-\epsilon)+\frac{12}{7}\epsilon'+\frac{1}{7}\epsilon]}.
\end{align*}
Take $2\eta+12\epsilon'<\epsilon$ and $n>k_{3}(\gamma,\epsilon)$ large enough, we have  $ \left|G_{z,[n_{1},n_{2}]}^{\beta_{1},\beta_{2}}(y,n_{1})\right|e^{(\gamma-\epsilon)\left|y-n_{1}\right|}<1$, i.e. 
$$\left|G_{z,[n_{1},n_{2}]}^{\beta_{1},\beta_{2}}(y,n_{1})\right|<e^{-(\gamma-\epsilon)\left|y-n_{1}\right|},$$
contradiction.
\end{proof}
We write the polynomial $Q_{n}(u)$ of degree $2n-1$ in the form below using Lagrange interpolation.
$$Q_{n}(u)=\sum_{j=0}^{2n-1}{Q_{n}(\cos{2\pi\theta_{j}})\prod_{i\neq j}\frac{u-\cos{2\pi\theta_{i}}}{\cos{2\pi\theta_{j}}-\cos{2\pi\theta_{i}}}}.$$
In order to control the upper bound of $Q_{n}$, we need the following definition.
\begin{definition}
    The sequence $\{\theta_{j}\}_{j=0}^{2n-1}\subset\mathbb{T}$ is called {\it $\epsilon$-uniform} if 
    $$\max_{u\in[-1,1]}\max_{0\le j\le2n-1}\prod_{i=0,i\neq j}^{2n-1}\left|\frac{u-\cos{2\pi\theta_{i}}}{\cos{2\pi\theta_{j}}-\cos{2\pi\theta_{i}}}\right|<e^{(2n-1)\epsilon}.$$
\end{definition}
\begin{lemma}\label{lem10.13}
    Let $0<\epsilon'<\epsilon$, $n\ge1$. If $\{\theta_{j}\}_{j=0}^{2n-1}\subset A_{n}^{2\gamma+\tilde{\gamma}-\epsilon}$, then $\{\theta_{j}\}_{j=0}^{2n-1}$ is not $\epsilon'$-uniform for $4n-2>k_{4}(\epsilon,\epsilon')$ large enough.
\end{lemma}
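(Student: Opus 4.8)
The plan is to argue by contradiction, turning $\epsilon'$-uniformity together with the inclusion $\{\theta_j\}\subset A_n^{2\gamma+\tilde\gamma-\epsilon}$ into an $L^\infty$ upper bound on $Q_n$ over $[-1,1]$ that is exponentially smaller than the $L^1$-type lower bound supplied by Lemma~\ref{lem10.9}. So suppose $\{\theta_j\}_{j=0}^{2n-1}$ were $\epsilon'$-uniform and contained in $A_n^{2\gamma+\tilde\gamma-\epsilon}$; we may assume the nodes $u_j:=\cos 2\pi\theta_j$ are pairwise distinct, for otherwise the Lagrange products in the definition of $\epsilon'$-uniformity are infinite and there is nothing to prove. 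Writing $Q_n$ in Lagrange form,
\[
Q_n(u)=\sum_{j=0}^{2n-1}Q_n(u_j)\prod_{i\neq j}\frac{u-u_i}{u_j-u_i},
\]
the hypothesis $\theta_j\in A_n^{2\gamma+\tilde\gamma-\epsilon}$ gives $|Q_n(u_j)|\le e^{(2n-1)(2\gamma+\tilde\gamma-\epsilon)}$ for each $j$, while $\epsilon'$-uniformity bounds every Lagrange basis polynomial by $e^{(2n-1)\epsilon'}$ uniformly on $[-1,1]$. Summing the $2n$ terms yields
\[
\max_{u\in[-1,1]}|Q_n(u)|\le 2n\,e^{(2n-1)(2\gamma+\tilde\gamma-(\epsilon-\epsilon'))}.
\]

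For the matching lower bound I would apply Lemma~\ref{lem10.9} with $2n$ in place of $n$, i.e.\ to the interval of length $4n-2$, and with its free parameter taken to be $\epsilon_0:=\tfrac14(\epsilon-\epsilon')$. This gives, once $4n-2>k_2(\epsilon_0,z)$,
\[
\int_{\T}\ln\bigl|P_{z,[1,4n-2]}^{\beta_1,\beta_2}(\theta)\bigr|\,d\theta\;\ge\;(4n-2)\Bigl(\gamma+\tfrac{\tilde\gamma}{2}-\epsilon_0\Bigr)=(2n-1)\bigl(2\gamma+\tilde\gamma-2\epsilon_0\bigr).
\]
By Lemma~\ref{lem10.10}, $P_{z,[1,4n-2]}^{\beta_1,\beta_2}(\theta)=Q_n(\cos 2\pi(\theta+n\Phi))$, so by translation invariance of Lebesgue measure on $\T$ the left side equals $\int_{\T}\ln|Q_n(\cos 2\pi\theta)|\,d\theta$. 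Pushing this forward under $\theta\mapsto\cos2\pi\theta$, which sends normalized Lebesgue measure on $\T$ to the probability measure $d\nu(u)=\frac{du}{\pi\sqrt{1-u^2}}$ on $[-1,1]$, and using that $\nu$ is a probability measure so $\int\ln|Q_n|\,d\nu\le\ln\max_{[-1,1]}|Q_n|$, we obtain
\[
\max_{u\in[-1,1]}|Q_n(u)|\;\ge\;e^{(2n-1)(2\gamma+\tilde\gamma-2\epsilon_0)}.
\]

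Comparing the two displayed bounds on $\max_{[-1,1]}|Q_n|$ forces $2n\ge e^{(2n-1)((\epsilon-\epsilon')-2\epsilon_0)}$, i.e.\ $\ln(2n)\ge(2n-1)\cdot\tfrac12(\epsilon-\epsilon')$ by the choice of $\epsilon_0$; since the right side is linear in $n$ and the left logarithmic, this fails for all $n$ with $4n-2$ larger than some $k_4=k_4(\epsilon,\epsilon')$ (which also absorbs the threshold $k_2(\epsilon_0,z)$ above), giving the desired contradiction. I expect the only genuinely delicate point to be the index bookkeeping: the interval attached to $Q_n$ has length $4n-2$, so Lemma~\ref{lem10.9} must be invoked with parameter $2n$, and the resulting factor of $2$ is precisely what promotes $\gamma+\tilde\gamma/2$ to the exponent $2\gamma+\tilde\gamma$ defining $A_n^r$; one must also check that $\nu$ really is a probability measure, so that the $L^1$ lower bound survives the passage to the sup norm. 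Everything else is the Jitomirskaya-type Lagrange-interpolation argument, as in \cite{AJM2017,Jitomirskaya1999}.
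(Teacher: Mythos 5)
Your proposal is correct and follows essentially the same Lagrange-interpolation argument as the paper: the hypothesis and $\epsilon'$-uniformity give $|Q_n(u)|\le 2n\,e^{(2n-1)(2\gamma+\tilde\gamma-(\epsilon-\epsilon'))}$ on $[-1,1]$, which contradicts the $L^1$ lower bound from Lemma~\ref{lem10.9} applied to the interval of length $4n-2$. The only cosmetic difference is that the paper skips the pushforward to the arcsine measure $d\nu$: it simply integrates the $L^\infty$ bound on $|P^{\beta_1,\beta_2}_{z,[1,4n-2]}|$ over $\T$ to get $\frac{1}{4n-2}\int_\T \ln|P|\,d\theta\le\frac{\ln 2n}{4n-2}+\gamma+\tfrac{\tilde\gamma}{2}-\tfrac{\epsilon-\epsilon'}{2}$ and compares directly with Lemma~\ref{lem10.9} (choosing $\epsilon''<\tfrac{\epsilon-\epsilon'}{2}$), avoiding the change-of-variable and probability-measure check entirely; your version is valid but takes the extra step of upgrading the integral lower bound to a sup lower bound before comparing.
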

\begin{proof}
    Since $\{\theta_{j}\}_{j=0}^{2n-1}\subset A_{n}^{2\gamma+\tilde{\gamma}-\epsilon}$, $\left|Q_{n}(\cos{2\pi\theta_{j}})\right|\le e^{(2n-1)(2\gamma+\tilde{\gamma}-\epsilon)}$. Assume $\{\theta_{j}\}_{j=0}^{2n-1}$ is $\epsilon'$-uniform. Then by the interpolation formula, for any $\theta$,
    $$\left|P_{z,[1,4n-2]}^{\beta_{1},\beta_{2}}(\theta)\right|=\left|Q_{n}(\cos{2\pi(\theta+n\Phi)})\right|\le2ne^{(2n-1)(2\gamma+\tilde{\gamma}-\epsilon)}e^{(2n-1)\epsilon'}=2ne^{(2n-1)(2\gamma+\tilde{\gamma}-\epsilon+\epsilon')}.$$
Hence 
$$\frac{1}{4n-2}\int_{\mathbb{T}}{\ln{\left|P_{z,[1,4n-2]}^{\beta_{1},\beta_{2}}(\theta)\right|}d\theta}\le\frac{\ln{2n}}{4n-2}+(\gamma+\tilde{\gamma}/2-(\epsilon-\epsilon')/2).$$
However, by Lemma \ref{lem10.9}, 
$$\frac{1}{4n-2}\int_{\mathbb{T}}{\ln{\left|P_{z,[1,4n-2]}^{\beta_{1},\beta_{2}}(\theta)\right|}d\theta}\ge\gamma+\tilde{\gamma}/2-\epsilon''$$
for any $\epsilon''>0$ and $n>k_{2}(\epsilon'')$ large enough.

We only need take $\epsilon''<\frac{\epsilon-\epsilon'}{2}$ to obtain the contradiction.
\end{proof}
Let $\{\frac{p_{n}}{q_{n}}\}$ be the best rational approximation of $\Phi$, $y$ large enough, $m$ be such that $q_{m}\le\frac{y}{16}<q_{m+1}$ and $s$  the largest positive integer such that $sq_{m}<\frac{y}{16}$. Define
$$I_{1}=[\frac{1+(-1)^{sq_{m}}}{2},sq_{m}]\cap\mathbb{Z}, I_{2}=[1+y-sq_{m},y+sq_{m}]\cap\mathbb{Z}.$$
Then $I_{1}\cap I_{2}=\emptyset$, $\{\theta+j\Phi\}_{j\in I_{1}\cup I_{2}}$ are distinct with each other and $\{\cos{2\pi(\theta+j\Phi)}\}_{j\in I_{1}\cup I_{2}}$ are also distinct since $\theta$ is non-resonant with respect to $\Phi$.

Moreover, by the Appendix A of \cite{CFLOZ24} and \cite{AJ09}, we have the standard conclusion below.
\begin{lemma}\label{lem10.14}
    If $\theta$ is non-resonant with respect to $\Phi$, then for any $\epsilon>0$, $\{\theta+j\Phi\}_{j\in I_{1}\cup I_{2}}$ is $\epsilon$-uniform for $y>y_{0}(\theta,\Phi,\epsilon)$ large enough.
\end{lemma}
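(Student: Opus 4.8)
The plan is to follow the by-now standard argument for arithmetic $\epsilon$-uniformity of interpolation nodes of the form $\{\cos 2\pi(\theta+j\Phi)\}$, exactly as carried out in \cite[Appendix A]{CFLOZ24} and \cite{AJ09} (the scheme going back to \cite{Jitomirskaya1999}). The statement concerns only Lagrange interpolation on these nodes, so nothing about the extended CMV structure enters and the argument transcribes verbatim; I sketch the structure and indicate the points that need checking.

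\emph{Reduction to sine-products.} Write $x_i=\cos 2\pi(\theta+i\Phi)$, and given $u\in[-1,1]$ set $u=\cos 2\pi\phi$ with $\phi\in[0,\tfrac12]$. The identity $\cos 2\pi a-\cos 2\pi b=-2\sin\pi(a+b)\sin\pi(a-b)$ turns each Lagrange factor into
\[
\frac{|u-x_i|}{|x_j-x_i|}
=\frac{|\sin\pi(\phi+\theta+i\Phi)|\,|\sin\pi(\phi-\theta-i\Phi)|}
       {|\sin\pi(2\theta+(i+j)\Phi)|\,|\sin\pi((i-j)\Phi)|},
\]
so that $\prod_{i\ne j}\frac{|u-x_i|}{|x_j-x_i|}$ splits into four products of the form $\prod_{i\in I_1\cup I_2}|\sin\pi(\beta+i\Phi)|$, with four different base phases $\beta$. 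It therefore suffices to bound such products from above uniformly in $\beta$, and to bound the two denominator products from below.

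\emph{Continued-fraction block decomposition.} Partition $I_1$ into $\sim s$ and $I_2$ into $\sim 2s$ consecutive blocks of length $q_m$ (absorbing the parity adjustment built into $I_1$). For a block $B$ of $q_m$ consecutive indices, $\{(\theta+i\Phi)\bmod 1:i\in B\}$ is $\tfrac1{2q_m}$-separated, because $\min_{0<k<q_m}\|k\Phi\|=\|q_{m-1}\Phi\|\ge\tfrac1{2q_m}$, and $\tfrac{C}{q_m}$-dense by the three-gap theorem; moreover $\|q_m\Phi\|\le 1/q_{m+1}$ gives the near-cyclotomic expansion $\prod_{i\in B}|\sin\pi(\beta+i\Phi)|=2^{-q_m}|\sin\pi q_m\beta_B|\,(1+o(1))$, with $\beta_B$ the base phase of $B$. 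Each of the four products is then $\asymp 2^{-N}$, where $N:=|I_1\cup I_2|$, times a product over the $\sim 3s$ blocks of factors $|\sin\pi q_m\beta_B|$; the powers of $2$ cancel in the ratio, the numerator factors contribute at most $1$ each, and the denominator factors $|\sin\pi q_m(2\theta+(i+j)\Phi)|$ and $|\sin\pi q_m((i-j)\Phi)|$ are bounded below using the Diophantine property of $\Phi$ together with the non-resonance of $\theta$ (Definition \ref{PhiResonant}), which prevents these phases from landing anomalously close to $\mathbb{Z}$ over the relevant range of indices ($|i|,|j|\lesssim y$). After controlling the accumulated $(1+o(1))$ errors over the $\sim 3s$ blocks and the four sine-products — a quantity that is $e^{o(N)}$ since $q_m\to\infty$ as $y\to\infty$ — the product of the $2n-1$ Lagrange factors is $<e^{(2n-1)\epsilon}$ for all $y>y_0(\theta,\Phi,\epsilon)$.

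The main obstacle is the last step: making the per-block expansion quantitative enough that the total of the accumulated errors over all $\sim 3s$ blocks and the four sine-products, together with the logarithmic singularities (points $\beta+i\Phi$ near $\mathbb{Z}$, and $u$ near $\pm1$), stays below $\epsilon\,(2n-1)$, while simultaneously using the non-resonance of $\theta$ to keep the two denominator products from degenerating. All of this is precisely the content of \cite[Appendix A]{CFLOZ24} and \cite{AJ09}; since the nodes $\cos 2\pi(\theta+j\Phi)$ are identical to those in the Schr\"odinger setting, their proof applies without change.
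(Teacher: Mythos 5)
Your proposal is correct and follows the same route as the paper, which itself gives no independent argument but simply cites \cite[Appendix A]{CFLOZ24} and \cite{AJ09}; you have fleshed out the structure of that standard argument (cosine-difference factorization into sine-products, continued-fraction block decomposition over blocks of length $q_m$, Diophantine and non-resonance lower bounds on the denominator products, accumulated error control). One small slip: the per-block identity for exactly equispaced points is $\prod_{k=0}^{q-1}|\sin\pi(x+k/q)|=2^{-(q-1)}|\sin\pi qx|$, so each block contributes $2^{-(q_m-1)}$ rather than $2^{-q_m}$; the discrepancy is $e^{O(s)}=e^{o(N)}$ and is absorbed into the error term, so the conclusion is unaffected.
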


\begin{proof}[Proof of Lemma \ref{lem10.7}]
    By Lemma \ref{lem10.14} and Lemma \ref{lem10.13}, $\{\theta+j\Phi\}_{j\in I_{1}\cup I_{2}}$ can not be contained in $A_{(3sq_{m}+(1-(-1)^{sq_{m}})/2)/2}^{2\gamma+\tilde{\gamma}-\epsilon/7}$ for infinitely many $y$. We can suppose $\psi^{z}(0)\neq0$. Then $0$ is $(\gamma-\epsilon,6sq_{m}-(1+(-1)^{sq_{m}}))$-singular for $y$ large enough. Therefore by Lemma \ref{lem10.11}, for any $j\in\mathbb{Z}$ satisfying
    $$-n+1\le j\le2n-1,$$
where $n=(6sq_{m}-(1+(-1)^{sq_{m}})+2)/4$, we have $\theta+j\Phi\in A_{n}^{2\gamma+\tilde{\gamma}-\frac{1}{7}\epsilon}$. Hence $y$ is $(\gamma-\epsilon,6sq_{m}-(1+(-1)^{sq_{m}}))$-regular. In fact, if otherwise, then by Lemma \ref{lem10.11},  for any $j\in\mathbb{Z}$ satisfying
    $$y-n+1\le j\le y+2n-1,$$
where $n=(6sq_{m}-(1+(-1)^{sq_{m}})+2)/4$, we have $\theta+j\Phi\in A_{n}^{2\gamma+\tilde{\gamma}-\frac{1}{7}\epsilon}$. Since $I_{1}\cup I_{2}\subset[-n+1,2n-1]\cup[y-n+1,y+2n-1]$ for $y$ large enough, we have $\{\theta+j\Phi\}_{j\in I_{1}\cup I_{2}}\subset A_{n}^{2\gamma+\tilde{\gamma}-\frac{1}{7}\epsilon}$, which is a contradiction.
Moreover, since $6sq_{m}-(1+(-1)^{sq_{m}})>\frac{5\left|y\right|}{16}$ for $y$ large enough, we have proven Lemma \ref{lem10.7}.
\end{proof}

\section*{Acknowledgements} 
 L. Li was supported by the AMS-Simons Travel Grant 2024--2026.
 Q. Zhou was supported by NSFC grant (12531006,12526201) and  Nankai Zhide Foundation.

\bibliographystyle{siam}
\bibliography{SCUAMO}

{
  \bigskip
  \vskip 0.08in \noindent --------------------------------------

\footnotesize
\medskip
\end{document}